\documentclass[11pt,letterpaper,english,letter, 11pt]{article}
\usepackage[T1]{fontenc}
\usepackage[latin9]{inputenc}
\pagestyle{plain}
\usepackage{color}
\usepackage{float}
\usepackage{calc}
\usepackage{amsmath}
\usepackage{amssymb}
\usepackage{graphicx}
\usepackage{esint}

\makeatletter


\newcommand{\noun}[1]{\textsc{#1}}


\usepackage{graphics}\usepackage{amsthm}\usepackage{amsopn}\usepackage{amstext}\usepackage{amsfonts}\@ifundefined{definecolor}
 {\usepackage{color}}{}
\usepackage{fullpage}\usepackage{hyperref}

\thispagestyle{empty}

\topmargin -1.41mm 
\textheight         232.0mm \oddsidemargin -8.0mm
\textwidth 175.80mm

\newcounter{algo}\newenvironment{algo}[2]{\refstepcounter{algo}\label{#2}   \begin{center}
\begin{minipage}{0.9\textwidth}   \hrule\smallskip
\textbf{Algorithm \thealgo: #1}
\par\smallskip\hrule\smallskip\ignorespaces}{\par\smallskip\hrule
\end{minipage}
\end{center}}

\newcommand{\beq}{\begin{equation}}
\newcommand{\eeq}{\end{equation}}

\newcommand{\epc}{\hspace{1pc}}

\newcommand{\wh}{\widehat}

\newtheorem{theorem}{Theorem}
\newtheorem{proposition}[theorem]{Proposition}
\newtheorem{lemma}[theorem]{Lemma}
\newtheorem{corollary}[theorem]{Corollary}

\makeatother

\usepackage{babel}
\begin{document}

\title{{\Huge {\vspace{-2cm}}}\textbf{\Huge Joint Sensing and Power Allocation
in Nonconvex Cognitive Radio Games: Nash Equilibria and Distributed
Algorithms}}

\author{Gesualdo Scutari$^{1}$ and Jong-Shi Pang$^{2}$ \\
 {\small $^{1}$Department of Electrical Engineering, State University
of New York at Buffalo, Buffalo, NY 14051, U.S.A.}\\
 $^{2}${\small Department of Industrial and Enterprise Systems Engineering,
University of Illinois, Urbana, IL 61801, U.S.A. }\\
 Emails:\texttt{ gesualdo@buffalo.edu, jspang@illinois.edu}.%
\thanks{The work of Pang is based on research supported by the U.S.A. National
Science Foundation grant CMMI 0969600 and by the Air Force Office
of Sponsored Research award No. FA9550-09-10329. The work of Scutari
was supported by U.S.A. National Science Foundation grant CSM 1218717.{\small }\protect \\
Copyright (c) 2012 IEEE. Personal use of this material is permitted.
However, permission to use this material for any other purposes must
be obtained from the IEEE by sending a request to pubs-permissions@ieee.org.%
}}

\date{{\normalsize Submitted to }\emph{\normalsize IEEE Transactions on}{\normalsize{}
}\emph{\normalsize Information Theory}{\normalsize , March 27, 2011.
Revised August 8, 2012. \vspace{-0.5cm}}}
\maketitle
\begin{abstract}
\noindent In this paper, we propose a novel class of Nash problems
for Cognitive Radio (CR) networks, modeled as Gaussian frequency-selective
interference channels, wherein each secondary user (SU) competes against
the others to maximize his own opportunistic throughput by choosing
\emph{jointly} the sensing duration, the detection thresholds, \emph{and}
the vector power allocation. The proposed general formulation allows
to accommodate several (transmit) power and (deterministic/probabilistic)
interference constraints, such as constraints on the maximum individual
and/or aggregate (probabilistic) interference tolerable at the primary
receivers. To keep the optimization as decentralized as possible,
global (coupling) interference constraints are imposed by penalizing
each SU with a set of time-varying prices based upon his contribution
to the total interference; the prices are thus additional variable
to optimize. The resulting players' optimization problems are \emph{nonconvex};
moreover, there are possibly price clearing conditions associated
with the global constraints to be satisfied by the solution. All this
makes the analysis of the proposed games a challenging task; none
of classical results in the game theory literature can be successfully
applied. 

\noindent The main contribution of this paper is to develop a novel
optimization-based theory for studying the proposed nonconvex games;
we provide a comprehensive analysis of the existence and uniqueness
of a standard Nash equilibrium, devise alternative best-response based
algorithms, and establish their convergence. Some of the proposed
algorithms are totally distributed and asynchronous, whereas some
others require limited signaling among the SUs (in the form of consensus
algorithms) in favor of better performance; overall, they are thus
applicable to a variety of CR scenarios, either cooperative or noncooperative,
which allows the SUs to explore the existing trade-off between signaling
and performance. 
\end{abstract}

\section{Introduction {\normalsize \vspace{-0.2cm}}}

Over the past decade, there has been a growing interest in Cognitive
Radio (CR) as an emerging paradigm to address the \emph{de jure} shortage
of allocated spectrum that contrasts with the \emph{de facto} abundance
of unused spectrum in virtually any spatial location at almost any
given time. The paradigm posits that so-called cognitive radios {[}also
termed as secondary users (SUs){]} would use licensed spectrum in
an ad-hoc fashion in such a way as to cause no harmful interference
to the primary spectrum license holders {[}also termed as primary
users (PUs){]}. Evidently, such an opportunistic spectrum access is
intertwined with the design of multiple secondary system components,
such as (but not limited to) spectrum sensing and transmission parameters
adaptation. Indeed, the choice of the sensing parameters (e.g., the
detection thresholds and the sensing duration) as well as the consequent
design of the physical layer transmission strategies (e.g., the transmission
rate, the power allocation) have both a direct impact on the performance
of primary and secondary systems. The interplay between these two
interacting components calls for a\emph{ joint optimization of the
sensing and transmission parameters} of the SUs, which is the main
focus of this paper.

\subsection{Motivation and related work}

The joint optimization of the sensing and transmission strategies
has been only partially addressed in the literature, even for simple
CR scenarios composed of one PU and one SU. For example, in \cite{Quan-Cui-Poor-Sayed,Quan-Cui-Poor-Sayed_TSP09},
the authors proposed alternative centralized schemes that optimize
the detection thresholds for a bank of energy detectors, in order
to maximize the\textit{\emph{ opportunistic throughput}}\textit{ }\textit{\emph{of
a SU}}\textit{, }\textit{\emph{for a given sensing time and constant-rate/power
transmissions}}. The optimization of the sensing time and the sensing
time/detection thresholds for a given missed detection probability
and constant rate of one SU was addressed in \cite{Liang-Zeng-Peh-Hoang_TWC08,Paysarvi-Beaulieu_IEEETSP11}
and \cite{Rongfei-Hai_TWC10}, respectively. A throughput-sensing
trade-off for a fixed transmission rate was studied in \cite{Kim-Giannakis08}.
In \cite{Barbarossa-Sardellitti-Scutari_CAMSAP09} (or \cite{Pei-Liang-Teh-Li_TWC09})
the authors focused on the joint optimization of the power allocation
and the equi-false alarm rate (or the sensing time) of a SU over multi-channel
links, for a fixed sensing time (or detection probability). All the
aforementioned schemes however are not applicable to scenarios composed
of multiple SUs (and PUs). The case of multiple SUs and one PU was
considered in \cite{Barbarossa-Sardellitti-Scutari_Cogis09} (and
more recently in \cite{Huang-Lozano_ICASSP11}), under the same assumptions
of \cite{Barbarossa-Sardellitti-Scutari_CAMSAP09}; however no formal
analysis of the proposed formulation was provided.

The transceiver design of OFDM-based CR systems composed of \emph{multiple}
primary and secondary users have been largely studied in the literature
of power control problems over the interference channel, and have
been traditionally approached from two very different perspectives:
a holistic design of the system and an individual selfish design of
each of the users. The former is also referred to as Network Utility
Maximization (NUM) (other approaches within this perspective are based
on Nash bargaining formulations) and has the potential of obtaining
the best of the network at the expense of a centralized computation
or heavy signaling/cooperation among the users; examples are \cite{Xing-Chetan-Mathur-Haleem-Chandramouli-Subbalakshmi_TranMoBComp07_IntTempLim,Lu-WWang-TWang-Peng_GLOB08,WangPengWang_WCNC07,SchmidtShiBerryHonigUtschick-SPMag,Luo-Zhang,YuLui_TCOM06centr,Rossi-Tulino-Simeone-Haimovich_ICASSP11}.
The latter fits perfectly within the mathematical framework of Game
Theory and usually leads to distributed algorithms at the expense
of a loss of global performance; related papers are \cite{Luo-Pang_IWFA-Eurasip,Scutari-Palomar-Barbarossa_SP08_PI,Scutari-Palomar-Barbarossa_AIWFA_IT08,Pang-Scutari-Palomar-Facchinei_SP_10,ScutariPalomarFacchineiPang-Monotone_bookCh,Mohsenian-Rad-Huang-Chiang-Wong_TW09},
and two recent overviews are \cite{Leshem-Zehavip_SPMag_sub09,Larsson-Jorswieck-Lindblom-Mochaourab_SPMag_sub09}.
In both the aforementioned approaches and classes of papers the \emph{sensing
process is not considered as part of the optimization}; in fact the
SUs do not perform any sensing but they are allowed to transmit over
the licensed spectrum provided that they satisfy interference constraints
imposed by the PUs, no matter if the PUs are active of not.

When the sensing comes explicitly into the system design, the application
of the holistic approach mentioned above leads to nonconvex NP hard
optimization problems. These cases cannot be globally solved by efficient
algorithms in polynomial time; one typically can design (centralized)
sub-optimal algorithms that converge just to a stationary solution.
Their implementation however would require heavy signaling among the
users (or the presence of a centralized network controller having
the knowledge of all the system parameters); which strongly limits
the range of applicability of such formulations to practical CR networks.
For these reasons, in this paper, we attack the multi-agent decision
making problem from a different perspective; we concentrate on optimization
strategies where the SUs are able to self-enforce the negotiated agreements
on the usage of the licensed spectrum either in a totally decentralized
way or by requiring limited and local signaling among the SUs (in
the form of consensus algorithms). 
Aiming at exploring the trade-off between signaling and performance,
the proposed approach is then expected to be more flexible than classical
optimization techniques and applicable to a wider range of CR scenarios.

\subsection{Main contributions }

This paper along with our companion work \cite{Pang-Scutari-NNConvex_PI}
advances the current approaches (based on the optimization of specific
components of a CR system in isolation), in the direction of a\emph{
joint }and distributed design of sensing and transmission parameters
of a CR network, composed of\emph{ multiple} PUs and SUs. 

We study a novel class of Nash equilibrium problems as proposed in
\cite{Pang-Scutari-NNConvex_PI}, wherein each SU aims at maximizing
his own opportunistic throughput by \emph{jointly} optimizing the
sensing parameters$-$the sensing time and the false alarm rate (and
thus the decision thresholds) of a bank of energy detectors$-$and
the power allocation over the multi-channel links. Because of sensing
errors, the SUs might access the licensed spectrum when it is still
occupied by active PUs, thus causing harmful interference. This motivates
the introduction of \emph{probabilistic} interference constraints
that are imposed to control the power radiated over the licensed spectrum
\emph{whenever a missed detection event occurs} (in a probabilistic
sense). The proposed formulation accommodates alternative combinations
of power/interference constraints. For instance, on top of classical
(deterministic) transmit power (and possibly spectral masks) constraints,
we envisage the use of average \emph{individual} (i.e., on each SU)
and/or\emph{ global} (i.e., over all the SUs) interference tolerable
at the primary receivers. The former class of constraints is more
suitable for scenarios where the SUs are not willing to cooperate;
whereas the latter constraints, which are less conservative, seem
more realistic in settings where SUs may want to trade some limited
signaling for better performance. By imposing a coupling among the
transmit and sensing strategies of the SUs, global interference constraints
introduce a new challenge in the system design: how to enforce global
interference constraints without requiring a centralized optimization
but possibly only limited signaling among the SUs? We address this
issue by introducing a pricing mechanism in the game, through a penalization
in the players\textquoteright{} objective functions. The prices need
to be chosen so that the interference constraints are satisfied at
any solution of the game and a clearing condition holds; they are
thus additional variables to be determined.

The resulting class of games is nonconvex (because of the nonconvexity
of the players' payoff functions and constraints), lacks boundedness
in the price variables, and there are side constraints with associated
price equilibration that are required to be satisfied by the equilibrium;
all these features make the analysis a challenging task. The convexity
of the players' individual optimization problems is, in fact, one
indispensable assumption under which noncooperative games have traditionally
been studied and analyzed. The classical case where a NE exists is
indeed when the players' objective functions are (quasi-)convex in
their own variables with the other players' strategies fixed, and
the players' constraint sets are compact and convex and independent
of their rivals' strategies (see, e.g., \cite{Rosen_econ65,Osborne-Rubinstein_GT_book}).
Without such convexity, a NE may not exist (as in the well-known case
of a matrix game with pure strategies); analytically, abstract mathematical
theories granting its existence, like those in \cite{Baye-Tian-Zhou_RES93,CornetaCzarnecki01},
are difficult to be applied to games arising from realistic applications
such as those occurred in the present paper.

The main contribution of this work is to develop a novel optimization-based
theory for the solution analysis of the proposed class of nonconvex
games (possibly) with side constraints and price clearing conditions,
and to design distributed best-response based algorithms for computing
the Nash equilibria, along with their convergence properties. Building
on \cite{PScutari10}, the solution analysis is addressed by introducing
a ``best-response'' map (including price variables) defined on a
proper \emph{convex} and \emph{compact} set, whose fixed-points, if
they exist, are Nash equilibria of the original nonconvex games; the
obtained conditions are in fact sufficient for such a map to be a
\emph{single-valued continuous} map; this enables the application
of the Brouwer fixed-point theorem to deduce the existence of a fixed-point
of the best-response map, thus of a NE of the whole class of proposed
games. While seemingly very simple, the technical details lie in deriving
(reasonable) conditions for which the best-response map is single-valued
and for the boundedness of the prices in order for the existence of
a compact set on which the Brouwer result can be based. Interestingly,
the obtained conditions have the same physical interpretation of those
obtained for the convergence of the renowned iterative waterfilling
algorithm solving the power control game over interference channels
\cite{Luo-Pang_IWFA-Eurasip,Scutari-Palomar-Barbarossa_SP08_PI,Scutari-Palomar-Barbarossa_AIWFA_IT08,Pang-Scutari-Palomar-Facchinei_SP_10,ScutariPalomarFacchineiPang-Monotone_bookCh}.
We then focus on solutions schemes for the proposed class of games;
we design  alternative distributed (possibly) asynchronous best-response
based algorithms that differ in performance, level of protection of
the PUs, computational effort and degree of cooperation/signaling
among the SUs, and convergence speed; which makes them applicable
to a variety of CR scenarios (either cooperative or noncooperative).
For each algorithm, we establish its convergence and also quantify
the time and communication costs for its implementation. Our numerical
results show that: i) the proposed joint sensing/transmission optimization
outperforms current \emph{centralized and decentralized} state-of-the-art
results based on separated optimization of the sensing and the transmission
parts; ii) our algorithms exhibit a fast convergence behavior; and
iii) as expected, some (limited) cooperation among the SUs (in the
form of consensus algorithms) yields a significant improvement in
the system performance. The proposed solution schemes can also be
used to compute the so-called Quasi-NE of the associated games, a
relaxed equilibrium concept introduced and studied in our companion
paper \cite{Pang-Scutari-NNConvex_PI}.

The paper is organized as follows. Sec. \ref{sec:System-Model} briefly
introduces the system model, as proposed in \cite{Pang-Scutari-NNConvex_PI};
Sec. \ref{sec:Game-Theoretical-Formulation} focuses on the system
design and formulates the joint optimization of the sensing parameters
and the power allocation of the SUs within the framework of game theory;
several games are introduced. The solution analysis of the proposed
games is addressed in Sec. \ref{sec:Solution-Analysis:-Nash}, where
sufficient conditions for the existence and uniqueness of a standard
NE along with their interpretation are derived. Distributed algorithms
solving the proposed games along with their convergence properties
and computational/communication complexity are studied in Sec. \ref{sub:Game_with_fixed_prices}.
Numerical experiments are reported in Sec. \ref{sec:Numerical-Results},
whereas Sec. \ref{sec:Conclusions} draws the conclusions. Proofs
of our results are given in Appendix \ref{app:Proof-of-Proposition_uniqueness_NE}-\ref{app:Convergence-of-Best-Response}.
The paper requires a background on Variational Inequalities (VIs);
we refer to \cite{Scutari-Palomar-Facchinei-Pang_SPMag10,ScutariFacchineiPangPalomar-MonotoneIT}
for an introductory overview of the subject and its application to
equilibrium problems in signal processing and communications. A comprehensive
treatment of VIs can be found in the two monographs \cite{Cottle-Pang-Stone_bookLCP92,Facchinei-Pang_FVI03};
a detailed study of convex games based on the VI and complementarity
approach is addressed in \cite{Facchinei_Pang_VI-NE_bookCh_09,ScutariPalomarFacchineiPang-Monotone_bookCh}.
The main properties of Z and P matrices, which are widely used in
the paper, can be found in \cite{Cottle-Pang-Stone_bookLCP92,Berman-Plemmons_bookNonNegMat87}.\vspace{-0.2cm}

\section{System Model \label{sec:System-Model}{\normalsize \vspace{-0.3cm}}}

We consider a scenario composed of $Q$ active SUs, each consisting
of a transmitter-receiver pair, coexisting in the same area and sharing
the same band with PUs. The network of the SUs is modeled as an $N$-frequency-selective
SISO Interference Channel (IC), where $N$ is the number of subcarriers
available to the cognitive users. We focus on multicarrier block-transmissions
without loss of generality. In order not to interfere with on-going
PU transmissions, before transmitting, the SUs sense periodically
the licensed spectrum looking for the subcarriers that are temporarily
not occupied by the PUs. A brief description of the sensing mechanism
and transmission phase performed by the SUs as proposed in the companion
paper \cite{Pang-Scutari-NNConvex_PI} is given in the following,
where we introduce the basic definitions and notation used throughout
the paper; we refer the reader to \cite{Pang-Scutari-NNConvex_PI}
for details and the assumptions underlying the proposed model.

\subsection{The spectrum sensing phase \label{sub:The-spectrum-sensing}}

In  \cite{Pang-Scutari-NNConvex_PI}, we formulated the sensing problem
as a binary hypothesis testing; the decision rule of SU $q$ over
carrier $k=1,\ldots,N$ based on the energy detector is 
\begin{equation}
D_{q,k}\triangleq\dfrac{{1}}{K_{q}}\sum_{n=1}^{K_{q}}\left|y_{q,k}[n]\right|^{2}\begin{array}{c}
\overset{\mathcal{H}_{1,k}}{>}\vspace{-0.4cm}\\
\underset{\mathcal{H}_{0,k}}{<}
\end{array}\gamma_{q,k}\label{eq:energy_detector_test}
\end{equation}
where $y_{q,k}[n]$ is the received baseband complex signal over carrier
$k$; $K_{q}=\left\lfloor \tau_{q}\, f_{q}\right\rfloor \backsimeq\tau_{q}\, f_{q}$
is the number of samples, with $\tau_{q}$ and $f_{q}$ denoting the
sensing time and the sampling frequency, respectively; $\gamma_{q,k}$
is the decision threshold for the carrier $k$; $\mathcal{H}_{0,k}$
represents the absence of any primary signal over the subcarrier $k$,
whereas $\mathcal{H}_{1,k}$ represents the presence of the primary
signaling. 

The performance of the energy detection performed by SU $q$ over
carrier $k$ is measured in terms of the detection probability $P_{q,k}^{\text{\,{d}}}(\gamma_{q,k},\tau_{q})\triangleq\text{{Prob}}\left\{ D_{q,k}>\gamma_{q,k}\,|\,\mathcal{H}_{1,k}\right\} $
and false alarm probability $P_{q,k}^{\,\text{{fa}}}(\gamma_{q,k},\tau_{q})\triangleq\text{{Prob}}\{D_{q,k}>\gamma_{q,k}\,|\,\mathcal{H}_{0,k}\}$.
Under standard assumptions in decision theory, these probabilities
are given by \cite{Pang-Scutari-NNConvex_PI}\vspace{-0.4cm}

\textcolor{black}{
\begin{equation}
P_{q,k}^{\,\text{{fa}}}\left(\gamma_{q,k},\,\tau_{q}\right)=\mathcal{Q}\left(\sqrt{\tau_{q}\, f_{q}}\,\dfrac{{\gamma_{q,k}\,-\mu_{q,k|0}}}{{\sigma_{q,k|0}}}\right)\quad\mbox{and}\quad P_{q,k}^{\text{\,{d}}}\left(\gamma_{q,k},\,\tau_{q}\right)=\mathcal{Q}\left(\sqrt{\tau_{q}\, f_{q}}\,\dfrac{{\gamma_{q,k}\,-\mu_{q,k|1}}}{{\sigma_{q,k|1}}}\right),\label{eq:Pfa-worst_case}
\end{equation}
}where $\mathcal{\mathcal{Q}}(x)\triangleq(1/\sqrt{{2\pi}})\int_{x}^{\infty}e^{-t^{2}/2}dt$
is the Q-function, and $\mu_{q,k|0}$, $\mu_{q,k|1}$, ${\sigma_{q,k|0}}$,
and ${\sigma_{q,k|1}}$ are constant parameters, whose explicit expressions
are given in \cite{Pang-Scutari-NNConvex_PI}. The detection probability
$P_{q,k}^{\text{\,{d}}}$ can also be rewritten as a function of the
false alarm rate $P_{q,k}^{\,\text{{fa}}}$ as: 
\begin{equation}
P_{q,k}^{\text{\,{d}}}\left(P_{q,k}^{\,\text{{fa}}},\,\tau_{q}\right)=\mathcal{Q}\left(\dfrac{{\sigma_{q,k|0}}\,}{{\sigma_{q,k|1}}}\mathcal{Q}^{-1}\left(P_{q,k}^{\,\text{{fa}}}\right)-\sqrt{\tau_{q}\, f_{q}}\,\dfrac{{\mu_{q,k|1}-\mu_{q,k|0}}}{{\sigma_{q,k|1}}}\right)\triangleq1-P_{q,k}^{\text{{miss}}}\left({\tau}_{q},\, P_{q,k}^{\,\text{{fa}}}\right),\label{eq:Pd-worst_case_2}
\end{equation}
where we also introduced the definition of the missed detection probability
$P_{q,k}^{\text{{miss}}}({\tau}_{q},P_{q,k}^{\,\text{{fa}}})\triangleq1-P_{q,k}^{\text{{d}}}({\tau}_{q},P_{q,k}^{\,\text{{fa}}})$.

The interpretation of $P_{q,k}^{\,\text{{fa}}}\left(\gamma_{q,k},\,\tau_{q}\right)$
and $P_{q,k}^{\text{\,{d}}}\left(\gamma_{q,k},\,\tau_{q}\right)$
within the CR scenario is the following: $1-P_{q,k}^{\,\text{{fa}}}$
signifies the probability of successfully identifying from the SU
$q$ a spectral hole over carrier $k$, whereas the missed detection
probability $P_{q,k}^{\text{\,{miss}}}$ represents the probability
of SU $q$ failing to detect the presence of the PUs on the subchannel
$k$ and thus generating interference against the PUs. The free variables
to optimize are the detection thresholds $\gamma_{q,k}$'s and the
sensing times $\tau_{q}$'s; ideally, we would like to choose $\gamma_{q,k}$'s
and $\tau_{q}$'s in order to minimize both $P_{q,k}^{\,\text{{fa}}}$
and $P_{\text{{miss}}}^{(q,k)}$, but (\ref{eq:Pd-worst_case_2})
shows that there exists a trade-off between these two quantities that
will affect both primary and secondary performance. It turns out that,
$\gamma_{q,k}$'s and $\tau_{q}$'s can not be chosen by focusing
only on the detection problem (as in classical decision theory), but
the optimal choice of $\gamma_{q,k}$ and $\tau_{q}$ must be the
result of a \emph{joint} optimization of the sensing and transmission
strategies over the two phases; such an optimization is introduced
in Sec. \ref{sec:Game-Theoretical-Formulation}. \medskip{}

\noindent\textbf{Robust sensing model.} The proposed sensing model
can be generalized in several directions; see \cite{Scutari-Pang_DSP11,Pang-Scutari-NNConvex_PI}.
For instance, one can explicitly take into account device-level uncertainties
(e.g., uncertainty in the power spectral density of the PUs' signals
and thermal noise) as well as system level uncertainties (e.g., the
current number of active PUs) by modeling the detection process of
the primary signals as a \emph{composite} hypothesis testing. This
leads to a \emph{uniformly most-powerful} detector scheme that is
robust against device-level and system-level uncertainties; detailed
can be found in \cite{Scutari-Pang_DSP11,Pang-Scutari-NNConvex_PI}
and are omitted here. It is important however to remark that the resulting
detection probability and false alarm rate of the aforementioned robust
scheme are still given by (\ref{eq:Pfa-worst_case}) and (\ref{eq:Pd-worst_case_2}),
but with a different expression for $\mu_{q,k|i}$'s and $\sigma_{q,k|i}^{2}$'s
\cite{Scutari-Pang_DSP11}. This means that analysis and results developed
in the next sections are valid also for this more general model.

\subsection{The transmission phase }

The transmission strategy of each SU $q$ is the power allocation
vector $\mathbf{p}_{q}=\{p_{q,k}\}_{k=1}^{N}$ over the $N$ subcarriers,
subject to the following (local) transmit power constraints
\begin{equation}
{\mathcal{P}}_{q}\triangleq\left\{ \mathbf{p}_{q}\triangleq(p_{q,k})_{k=1}^{N}\in\mathbb{R}^{N}\,:\,\sum_{k=1}^{N}p_{q,k}\leq P_{q},\quad\mathbf{0}\leq\mathbf{p}_{q}\leq\mathbf{p}_{q}^{\max}\right\} ,\label{set_P_q}
\end{equation}
where $\mathbf{p}_{q}^{\max}=(p_{q,k}^{\max})_{k=1}^{N}$ denotes
possibly spectral mask {[}the vector inequality in (\ref{set_P_q})
is component-wise{]}.

According to the opportunistic transmission paradigm, each subcarrier
$k$ is available for the transmission of SU $q$ if no primary signal
is detected over that frequency band, which happens with probability
$1-P_{q,k}^{\,\text{{fa}}}$. This motivates the use of the \emph{aggregate
opportunistic throughput }as a measure of the spectrum efficiency
of each SU $q$. Given the power allocation profile $\mathbf{p}=(\mathbf{p}_{q})_{q=1}^{Q}$
of the SUs, the target false alarm rate $P_{q}^{\,\text{{fa}}}$ (assumed
to be equal over the whole licensed spectrum), the sensing time ${\tau}_{q}$,
and taking the log of the opportunistic throughput, the payoff function
of each SU $q$ is then (see \cite{Pang-Scutari-NNConvex_PI} for
more details) 
\begin{equation}
R_{q}\left(\tau_{q},\,\mathbf{p},\, P_{q}^{\,\text{{fa}}}\right)=\log\left(\left(1-\dfrac{\tau_{q}}{T_{q}}\right)\,\left(1-P_{q}^{\,\text{{fa}}}\right)\,\sum_{k=1}^{N}\, r_{q,k}\left(\mathbf{p}\right)\right)\label{eq:Opportunistic-throughtput}
\end{equation}
where $1-{\tau}_{q}/T_{q}$, with ${\tau}_{q}\leq T_{q}$, is the
portion of the frame duration $T_{q}$ available for opportunistic
transmissions and $r_{q,k}(\mathbf{p})$ is the maximum information
rate achievable on link $q$ over carrier $k$ \emph{when no primary
signal is detected} and the power allocation profile of the SUs is
$p_{1,k},\ldots,p_{Q,k}$: 
\begin{equation}
r_{q,k}(\mathbf{p})=\log\left(1+\dfrac{p_{q,k}}{\hat{{\sigma}}_{q,k}^{2}+\sum_{r\neq q}|\hat{{H}}{}_{qr}(k)|^{2}p_{r,k}}\right),\label{eq:rate_FSIC}
\end{equation}
with $\hat{{H}}{}_{qr}(k)\triangleq H{}_{qr}(k)/H{}_{qq}(k)$ and
$\hat{{\sigma}}_{q,k}^{2}\triangleq\sigma_{q,k}^{2}/|H{}_{qq}(k)|^{2}$,
where $\{H_{qq}(k)\}_{k=1}^{N}$ is the channel transfer function
of the direct link $q$ and $\{H_{qr}(k)\}_{k=1}^{N}$ is the cross-channel
transfer function between the secondary transmitter $r$ and the secondary
receiver $q$; and $\sigma_{q,k}^{2}$ is the power spectral density
(PSD) of the background noise over carrier $k$ at the receiver $q$
(assumed to be Gaussian zero-mean distributed). 

As a final remark note that the throughput defined in (\ref{eq:Opportunistic-throughtput})
is not the average throughput experienced by the SUs, which instead
would include an additional rate contribution resulting from the erroneous
decision of the SUs to transmit over the licensed spectrum still occupied
by the PUs. We have not included this contribution in the objective
functions of the SUs because in maximizing the function we do not
want to ``incentivize'' the undue usage of the licensed spectrum.
Moreover, differently from the opportunistic throughput in (\ref{eq:Opportunistic-throughtput}),
the maximization of the average throughput would require the knowledge
from the SUs of the a-priori probabilities of the PUs' spectrum occupancy,
which is in general not available.

\subsection{Probabilistic interference constraints }

Due to the inherent trade-off between $P_{q}^{\,\text{{fa}}}$ and
$P_{q,k}^{\text{{miss}}}(P_{fa}^{(q)})$ {[}see (\ref{eq:Pfa-worst_case})
and (\ref{eq:Pd-worst_case_2}){]}, maximizing the aggregate opportunistic
throughput (\ref{eq:Opportunistic-throughtput}) of SUs will result
in low $P_{q}^{\,\text{{fa}}}$ and thus large $P_{q,k}^{\text{{miss}}}$,
hence causing harmful interference to PUs. To allow the SUs' transmissions
while preserving the QoS of the PUs, we envisage the use of probabilistic
interference constraints that limit the interference generated by
the SUs whenever they misdetect the presence of a PU. Examples of
these constraints are the following: 
\begin{description}
\item [{-}] \emph{Individual overall bandwidth interference constraint}:
for each SU $q,$ 
\begin{equation}
\sum_{k=1}^{N}P_{q,k}^{\text{{miss}}}\left({\tau}_{q},\, P_{q}^{\,\text{{fa}}}\right)\cdot w_{q,k}\cdot p_{q,k}\leq I_{q}^{\text{{max}}},\vspace{-0.2cm}\label{eq:individual_overal_interference_constraint}
\end{equation}

\item [{\emph{-}}] \emph{Global overall bandwidth interference constraints}:
\begin{equation}
\sum_{q=1}^{Q}\sum_{k\in\mathcal{K}_{p}}P_{q,k}^{\text{{miss}}}\left({\tau}_{q},\, P_{q}^{\,\text{{fa}}}\right)\cdot w_{q,k}\cdot p_{q,k}\leq I^{\text{{max}}},\vspace{-0.2cm}\label{eq:global_interference_constraints_2}
\end{equation}

\end{description}
where $I_{q}^{\text{{max}}}$ {[}or $I^{\text{\ensuremath{\max}}}${]}
are the maximum average interference allowed to be generated by the
SU $q$ {[}or all the SU's{]} that is tolerable at the primary receiver;
and $w_{q,k}$'s are a given set of positive weights. If an estimate
of the cross-channel transfer functions $\{G_{P,q}(k)\}_{k=1}^{N}$
between the secondary transmitters and the primary receiver is available,
then the natural choice for $w_{q,k}$ is $w_{q,k}=|G_{P,q}(k)|^{2}$,
so that (\ref{eq:individual_overal_interference_constraint}) and
(\ref{eq:global_interference_constraints_2}) become the average interference
experienced at the primary receiver. Methods to obtain the interference
limits along with some implementation aspects related to this issue
and alternative interference constraints are discussed in Sec. \ref{Rmk_implementation_issues}. 

We wish to point out that other interference constraints, like per-carrier
interference constraints, as well as multiple PUs can be readily accommodated,
without affecting the analysis and results that will be presented
in the forthcoming sections. For notational simplicity, we stay within
the above setting.

\section{System Design based on Game Theory \label{sec:Game-Theoretical-Formulation}}

We focus now on the system design and formulate the joint optimization
of the sensing parameters and the power allocation of the SUs within
the framework of game theory. We consider next two classes of equilibrium
problems: i) games with \emph{ individual }constraints only (Sec.
\ref{sub:GT_local} below); and ii) games with \emph{individual and
global} constraints (Sec. \ref{sub:GT_local} and Sec. \ref{sub:GT_global_equi_sensing}
below). The former formulation is suitable for modeling scenarios
where the SUs are selfish users who are not willing to cooperate,
whereas the latter class of games is applicable to the design of systems
where the SUs can exchange limited signaling in favor of better performance.
Indeed, being less conservative than individual interference constraints,
global interference constraints are expected to yield better performance
of the SUs at the cost of more signaling. The aforementioned formulations
are thus applicable to complementary CR scenarios.

\subsection{Game with local interference constraints\label{sub:GT_local}}

In the proposed game, each SU is modeled as a player who aims to maximize
his own opportunistic throughput $R_{q}\left(\tau_{q},\,\mathbf{p},\, P_{q}^{\,\text{{fa}}}\right)$
by choosing \emph{jointly} a proper power allocation strategy $\mathbf{p}_{q}=(p_{q,k})_{k=1}^{N}$,
sensing time $\tau_{q}$, and false alarm rate $P_{q}^{\,\text{{fa}}}$,
subject to power and individual probabilistic interference constraints.
Stated in mathematical terms we have the following formulation. \emph{\vspace{.3cm}}

\framebox{\begin{minipage}[t]{0.93\columnwidth}%
\textbf{Player $q$'s optimization problem} is to determine, for given
$\mathbf{p}_{-q}\triangleq((p_{r}(k)_{k=1}^{N})_{q\neq r=1}^{Q}\geq\mathbf{0}$,
a tuple $\left({\tau}_{q},\,\mathbf{p}_{q},\, P_{q}^{\,\text{{fa}}}\right)$
in order to
\begin{equation}
\begin{array}{ll}
{\displaystyle {\operatornamewithlimits{\mbox{maximize}}_{\tau_{q},\mathbf{p}_{q},P_{q}^{\,\text{{fa}}}}}} & R_{q}\left(\tau_{q},\,\mathbf{p},\, P_{q}^{\,\text{{fa}}}\right)\vspace{-0.2cm}\\[0.25in]
\mbox{subject to} & \vspace{-.1cm}\\[5pt]
\mbox{{\bf (a)}} & \begin{array}{l}
{\displaystyle {\sum_{k=1}^{N}}\, P_{q,k}^{\text{{miss}}}(P_{q}^{\,\text{{fa}}},\tau_{q})\cdot w_{q,k}\cdot p_{q,k}\,\leq\, I_{q}^{\text{{max}}}},\\[0.3in]\end{array}\vspace{-1cm}\\[0.5in]
\mbox{{\bf (b)}} & \begin{array}{l}
P_{q}^{\,\text{{fa}}}\,\leq\,\beta_{q},\quad\mbox{and}\quad P_{q,k}^{\text{{miss}}}(P_{q}^{\,\text{{fa}}},\tau_{q})\,\leq\,\alpha_{q,k},\quad\forall k\,=\,1,\cdots,N,\\[0.15in]\end{array}\vspace{-1.2cm}\\[0.5in]
\mbox{{\bf (c)}} & \,\,{\displaystyle \mathbf{p}_{q}\in\mathcal{P}_{q}\epc\mbox{and}\epc\tau_{q}^{\min}\,\leq\,\tau_{q}\,\leq\,\tau_{q}^{\max}.}\vspace{.1cm}
\end{array}\label{eq:player q_individual_interference_constraints}
\end{equation}
\end{minipage}}\emph{\vspace{.3cm}}

In (\ref{eq:player q_individual_interference_constraints}) we also
included additional lower and upper bounds of $\tau_{q}$ satisfying
$0<\tau_{q}^{\min}<\tau_{q}^{\max}<T_{q}$ and upper bounds on detection
and missed detection probabilities $0<\alpha_{q,k}\leq1/2$ and $0<\beta_{q}\leq1/2$,
respectively. These bounds provide additional degrees of freedom to
limit the probability of interference to the PUs as well as to maintain
a certain level of opportunistic spectrum utilization from the SUs
{[}$1-P_{q}^{\,\text{{fa}}}\geq1-\beta_{q}${]}. Note that the constraints
$\alpha_{q,k}\leq1/2$ and $\beta_{q}\leq1/2$ do not represent a
real loss of generality, because practical CR systems are required
to satisfy even stronger constraints on false alarm and detection
probabilities; for instance, in the WRAN standard, $\alpha_{q,k}=\beta_{q,k}=0.1$.

\subsection{Game with global interference constraints\label{sub:GT_global}}

We add now global interference constraints to the game theoretical
formulation in (\ref{eq:player q_individual_interference_constraints}).
This introduces a new challenge: how to enforce global interference
constraints in a distributed way? By imposing a coupling among the
transmissions and the sensing strategies of all the SUs, global interference
constraints in principle would call for a centralized optimization.
To overcome this issue, we introduce a pricing mechanism in the game,
based on the relaxation of the coupling interference constraints as
penalty term in the SUs' objective functions, so that the interference
generated by all the SUs will depend on these prices. Prices are thus
addition variables to be optimized (there is one common price associated
with any of the global interference constraints); they must be chosen
so that any solution of the game will satisfy the global interference
constraints, which requires the introduction of additional constraints
on the prices, in the form of price clearance conditions. Denoting
by $\pi$ the price variable associated with the global interference
constraint (\ref{eq:global_interference_constraints_2}), we have
the following formulation.\vspace{-0.6cm}

\begin{center}
\framebox{\begin{minipage}[t]{1\columnwidth}%
\textbf{Player $q$'s optimization problem} is to determine, for given
$\mathbf{p}_{-q}\geq\mathbf{0}$ and $\pi$, a tuple $\left({\tau}_{q},\,\mathbf{p}_{q},\, P_{q}^{\,\text{{fa}}}\right)$
such that\vspace{-0.2cm}
\begin{equation}
\begin{array}{ll}
{\displaystyle {\operatornamewithlimits{\mbox{maximize}}_{\tau_{q},\mathbf{p}_{q},P_{q}^{\,\text{{fa}}}}}} & R_{q}\left(\tau_{q},\,\mathbf{p},\, P_{q}^{\,\text{{fa}}}\right)-\pi\cdot{\displaystyle {\displaystyle {\sum_{k=1}^{N}}\, P_{q,k}^{\text{{miss}}}(P_{q}^{\,\text{{fa}}},\tau_{q})\cdot w_{q,k}\cdot p_{q,k}\vspace{-0.8cm}}}\\[0.25in]
\mbox{subject to} & \mbox{\mbox{constraints (a), (b), (c) as in} (\ref{eq:player q_individual_interference_constraints}).}\\[5pt]
\end{array}\label{eq:paler_q_global}
\end{equation}
\textbf{Price equilibrium}: The price $\pi$ obeys the following complementarity
condition:\vspace{-0.2cm} 
\begin{equation}
0\,\leq\,\pi\,\perp\, I^{\text{{max}}}-{\sum_{k=1}^{N}}\,{\displaystyle {\sum_{q=1}^{Q}}{\displaystyle \, P_{q,k}^{\text{{miss}}}(P_{q}^{\,\text{{fa}}},\tau_{q})\cdot w_{q,k}\cdot p_{q,k}}\,\geq\,0}.\label{eq:price equilibrium_pricing_global}
\end{equation}
\end{minipage}}
\par\end{center}

In (\ref{eq:price equilibrium_pricing_global}), the compact notation
$0\leq a\perp b\geq0$ means $a\geq0$, $b\geq0$, and $a\,\cdot\, b=0$.
The price clearance conditions (\ref{eq:price equilibrium_pricing_global})
state that global interference constraints (\ref{eq:global_interference_constraints_2})
must be satisfied together with nonnegative price; in addition, they
imply that if the global interference constraint holds with strict
inequality then the price should be zero (no penalty is needed). Thus,
at any solution of the game, the optimal price is such that the global
interference constraint is satisfied.

\subsection{The equi-sensing case\label{sub:GT_global_equi_sensing}}

The decision model proposed in Sec. \ref{sub:The-spectrum-sensing}
is based on the assumption that the SUs are somehow able to distinguish
between primary and secondary signaling. This can be naturally accomplished
if there is a \emph{common} sensing time (still to optimize) during
which \emph{all} the SUs stay silent while sensing the spectrum. However,
the formulation (\ref{eq:paler_q_global}), in general, leads to different
optimal sensing times of the SUs, implying that some SU may start
transmitting while some others are still in the sensing phase. To
overcome this issue, several directions have been explored in the
companion paper \cite{Pang-Scutari-NNConvex_PI}, under the model
(\ref{eq:paler_q_global})-(\ref{eq:price equilibrium_pricing_global}).
Here we follow the approach of modifying the formulation in (\ref{eq:paler_q_global})
in order to ``force'' in a \emph{distributed way} the same \emph{optimal}
sensing time for all the SUs. Roughly speaking, the idea is to perturb
the payoff functions of the players by a penalty term that discourages
the players to deviate from equi-sensing strategies. Stated in mathematical
terms, we have the following formulation. \vspace{-0.6cm}

\begin{center}
\framebox{\begin{minipage}[t]{1\textwidth}%
\begin{flushleft}
\textbf{Player $q$'s optimization problem} is to determine, for given
$c\geq0,$ $\mathbf{p}_{-q}\geq\mathbf{0}$, $(\tau_{r})_{q\neq r=1}^{Q}\geq\mathbf{0}$
and $\pi\geq0$, a tuple $\left({\tau}_{q},\,\mathbf{p}_{q},\, P_{q}^{\,\text{{fa}}}\right)$
in order to 
\begin{equation}
\begin{array}{ll}
{\displaystyle {\operatornamewithlimits{\mbox{maximize}}_{\tau_{q},\mathbf{p}_{q},P_{fa}^{\, q}}}} & R_{q}\left(\tau_{q},\,\mathbf{p},\, P_{q}^{\,\text{{fa}}}\right)-\pi\cdot{\displaystyle {\displaystyle {\sum_{k=1}^{N}}\, P_{q,k}^{\text{{miss}}}(P_{q}^{\,\text{{fa}}},\tau_{q})\cdot w_{q,k}\cdot p_{q,k}\,-\,\dfrac{{c}}{2}\cdot\left({\tau_{q}}-\dfrac{{1}}{Q}\,{\displaystyle {\sum_{r=1}^{Q}}}\,{\tau_{r}}\right)^{2}\vspace{-0.8cm}}}\\[0.25in]
\mbox{subject to} & \mbox{\mbox{constraints (a), (b), (c) as in} (\ref{eq:player q_individual_interference_constraints})}.\\[5pt]
\end{array}\label{eq:player q_equi-sensing}
\end{equation}
\textbf{Price equilibrium}: The price $\pi$ obeys the complementarity
condition (\ref{eq:price equilibrium_pricing_global}).
\par\end{flushleft}%
\end{minipage}}
\par\end{center}

The third term in the objective function of each SU in (\ref{eq:player q_equi-sensing})
helps to induce the same optimal sensing time for all the SUs. Roughly
speaking, one expects that for sufficiently large $c$, the aforementioned
term will become the dominant term in the objective functions of the
SUs, leading thus to solutions of the game having sensing times that
differ from their average by any prescribed accuracy. This intuition
has been made formal in our companion paper \cite{Pang-Scutari-NNConvex_PI}
for stationary solutions of the game (\ref{eq:player q_equi-sensing}),
and it can be similarly extended to the Nash equilibria; we omit the
details because of space limitation. \vspace{-0.2cm}

\subsection{Unified formulation and summary of notation}

In this section, we introduce a compact and unified formulation of
the proposed games that simplifies their analysis. Let us start by
separating the convex constraints in the feasible set of the players
from the nonconvex ones. The interference constraints (a) in (\ref{eq:player q_individual_interference_constraints})
are bi-convex and thus not convex, whereas constraints (b) are convex
in $P_{q}^{\,\text{{fa}}}$ and $\sqrt{{\tau_{q}}}$. This motivates
the following change of variables:
\begin{equation}
\tau_{q}\mapsto\wh{\tau}_{q}\triangleq\sqrt{\tau_{q}\, f_{q}}\,\quad q=1,\ldots,Q,\label{eq:bijection}
\end{equation}
so that the constraints on $P_{q,k}^{\text{{miss}}}(P_{q}^{\,\text{{fa}}},\tau_{q})$
in each player's feasible set become convex in the tuple $(P_{q}^{\,\text{{fa}}},\wh{\tau}_{q})$
{[}with $P_{q}^{\,\text{{fa}}}\leq\beta_{q}$ {]}. Indeed, for each
$k=1,\ldots,N$, we have
\begin{equation}
\begin{array}{ll}
P_{q,k}^{\text{{miss}}}(P_{q}^{\,\text{{fa}}},\tau_{q})\,\leq\,\alpha_{q,k} & \Leftrightarrow\,\,\,\dfrac{{\sigma_{q,k|0}}\,}{{\sigma_{q,k|1}}}\,\mathcal{Q}^{-1}\left(P_{q}^{\,\text{{fa}}}\right)-\wh{\tau}_{q}\,\dfrac{{\mu_{q,k|1}-\mu_{q,k|0}}}{{\sigma_{q,k|1}}}\,\leq\,\mathcal{Q}^{-1}\left(1-\alpha_{q,k}\right)\end{array},\label{eq:ineq_new_variables}
\end{equation}
where $\mathcal{Q}^{-1}\left(\cdot\right)$ denotes the inverse of
the Q-function {[}$\mathcal{Q}(x)$ is a strictly decreasing function
on $\mathbb{R}${]}, which are convex constraints in $(P_{q}^{\,\text{{fa}}},\wh{\tau}_{q})$
{[}provided that $P_{q}^{\,\text{{fa}}}\leq\beta_{q}$ {]}. Using
the above transformation, we can equivalently rewrite the missed detection
probability $P_{q,k}^{\text{{miss}}}(P_{q}^{\,\text{{fa}}},\tau_{q})$
and the throughput $R_{q}(\tau_{q},\,\mathbf{p},\, P_{q}^{\,\text{{fa}}})$
of each player $q$ in terms of the tuples $\left(\wh{\tau}_{q},\,\mathbf{p}_{q},\, P_{q}^{\,\text{{fa}}}\right)$'s,
denoted by $\wh{P}_{q,k}^{\text{{miss}}}(P_{q}^{\,\text{{fa}}},\wh{\tau}_{q})$
and $\wh{R}_{q}(\wh{\tau}_{q},\,\mathbf{p},\, P_{q}^{\,\text{{fa}}})$,
respectively; the explicit expression of these quantities is:
\begin{equation}
P_{q,k}^{\text{{miss}}}(P_{q}^{\,\text{{fa}}},\tau_{q})=\wh{P}_{q,k}^{\text{{miss}}}(P_{q}^{\,\text{{fa}}},\wh{\tau}_{q})\triangleq\mathcal{Q}\left({\displaystyle {\frac{\sigma_{{q,k}|0}\,\mathcal{Q}^{-1}\left(P_{q}^{\,\text{{fa}}}\right)-(\,\mu_{{q,k}|1}-\mu_{{q,k}|0}\,)\,\wh{\tau}_{q}}{\sigma_{{q,k}|1}}}}\right)\label{eq:P_miss_new}
\end{equation}
\begin{equation}
R_{q}\left({\tau}_{q},\,\mathbf{p},\, P_{q}^{\,\text{{fa}}}\right)=\wh{R}_{q}\left(\wh{\tau}_{q},\,\mathbf{p},\, P_{q}^{\,\text{{fa}}}\right)\triangleq\log\left(\left(\,1-{\displaystyle {\frac{\wh{\tau}_{q}^{2}}{f_{q}\, T_{q}}}\,}\right)\,{\displaystyle {\sum_{k=1}^{N}}\,\left(\,1-{P}_{q,k}^{\text{{fa}}}\,\right)\, r_{q,k}\left(\mathbf{p}\right)}\right).\label{eq:R_q_new}
\end{equation}

To incorporate the equi-sensing case in our unified formulation, we
introduce the functions $\theta_{q}(\mathbf{x}_{q},\,\mathbf{x}_{-q})$,
which represent the objective functions of the users including the
equi-sensing term, with $(\wh{\boldsymbol{{\tau}}},\,\mathbf{p},\,\mathbf{P}^{\,\text{{fa}}})\triangleq\left((\wh{{\tau}_{q}},\,\mathbf{p}_{q},\, P_{q}^{\,\text{{fa}}})\right)_{q=1}^{Q}$
denoting the strategy profile of all the players: 
\begin{equation}
\theta_{q}(\wh{\boldsymbol{{\tau}}},\,\mathbf{p},\,\mathbf{P}^{\,\text{{fa}}})\triangleq\hat{R}_{q}(\wh{\tau}_{q},\,\mathbf{p},\, P_{q}^{\,\text{{fa}}})-\,\dfrac{{c}}{2}\,\left(\dfrac{{\wh{\tau}_{q}}}{\sqrt{{f_{q}}}}-\dfrac{{1}}{Q}\,{\displaystyle {\sum_{r=1}^{Q}}}\,\dfrac{{\wh{\tau}_{r}}}{\sqrt{{f_{r}}}}\right)^{2}.\label{eq:theta_payoff}
\end{equation}
 We can now rewrite the feasible set of each player's optimization
problem in terms of the new variables $\left(\wh{\tau}_{q},\,\mathbf{p}_{q},\, P_{q}^{\,\text{{fa}}}\right)$,
denoted by $\mathcal{X}_{q}$: for each $q=1,\ldots,Q,$ let 
\begin{equation}
\mathcal{X}_{q}\triangleq\left\{ \left(\wh{\tau}_{q},\,\mathbf{p}_{q},\, P_{q}^{\,\text{{fa}}}\right)\in\mathcal{Y}_{q}\,\,|\,\, I_{q}\left(\wh{\tau}_{q},\,\mathbf{p}_{q},\, P_{q}^{\,\text{{fa}}}\right)\leq0\right\} \label{eq:set_Xq}
\end{equation}
 where we have separated the convex part and the nonconvex part; the
convex part is given by the polyhedron $\mathcal{Y}_{q}$ corresponding
to the constraints (b) and (c) in (\ref{eq:player q_individual_interference_constraints})
under the transformation (\ref{eq:bijection}) {[}cf. (\ref{eq:ineq_new_variables}){]}:
\begin{equation}
\hspace{-1em}\mathcal{Y}_{q}\triangleq\left\{ \begin{array}{ll}
\left(\wh{\tau}_{q},\,\mathbf{p}_{q},\, P_{q}^{\,\text{{fa}}}\right)\,|\, & P_{q}^{\,\text{{fa}}}\,\leq\,\beta_{q},\quad{\displaystyle \dfrac{{\sigma_{q,k|0}}\,}{{\sigma_{q,k|1}}}\mathcal{Q}^{-1}\left(P_{q}^{\,\text{{fa}}}\right)-\wh{\tau}_{q}\,\dfrac{{\mu_{q,k|1}-\mu_{q,k|0}}}{{\sigma_{q,k|1}}}\,\leq\,\wh{\alpha}_{q,k}},\quad\forall k=1,\ldots,N\\
 & \mathbf{p}_{q}\in\mathcal{P}_{q},\qquad\epc\epc\epc\wh{\tau}_{q}^{\,\min}\,\leq\,\wh{\tau}_{q}\,\leq\,\wh{\tau}_{q}^{\,\max}
\end{array}\right\} ,\label{eq:def_Y_q}
\end{equation}
with 
\begin{equation}
\wh{\alpha}_{q,k}\triangleq\mathcal{Q}^{-1}\left(1-\alpha_{q,k}\right),\quad\wh{\tau}_{q}^{\,\max}\triangleq\sqrt{{\tau}_{q}^{\,\max}\, f_{q}},\quad\mbox{and}\quad\wh{\tau}_{q}^{\,\min}\triangleq\sqrt{\tau_{q}^{\,\min}\, f_{q}},\label{eq:def_tau_hat}
\end{equation}
whereas the nonconvex part in (\ref{eq:set_Xq}) is given by the constraint
(a) that we have rewritten as $I_{q}(\wh{\tau}_{q},\,\mathbf{p}_{q},\, P_{q}^{\,\text{{fa}}})\leq0$
by introducing the local interference violation function 
\begin{equation}
I_{q}\left(\wh{\tau}_{q},\,\mathbf{p}_{q},\, P_{q}^{\,\text{{fa}}}\right)\triangleq{\displaystyle {\sum_{k=1}^{N}}\,\wh{P}_{q,k}^{\text{{miss}}}\left(P_{q}^{\,\text{{fa}}},\wh{\tau}_{q}\right)\cdot w_{q,k}\cdot p_{q,k}\,-\,{I}_{q}^{\text{\ensuremath{\max}}}}.\label{eq:def_h_function}
\end{equation}
This measures the violation of the \emph{local} interference constraint
(a) at $(\wh{\tau}_{q},\,\mathbf{p}_{q},\, P_{q}^{\,\text{{fa}}})$.
Similarly, it is convenient to introduce also the \emph{global} interference
violation function $I(\wh{\boldsymbol{{\tau}}},\,\mathbf{p},\,\mathbf{P}^{\,\text{{fa}}})$,
which depends on the strategy profile $(\wh{\boldsymbol{{\tau}}},\,\mathbf{p},\,\mathbf{P}^{\,\text{{fa}}})$
of all the players:
\begin{equation}
I(\wh{\boldsymbol{{\tau}}},\,\mathbf{p},\,\mathbf{P}^{\,\text{{fa}}})\triangleq{\sum_{k=1}^{N}}\,{\displaystyle {\sum_{q=1}^{Q}}{\displaystyle \,\wh{P}_{q,k}^{\text{{miss}}}\left(P_{q}^{\,\text{{fa}}},\wh{\tau}_{q}\right)\cdot w_{q,k}\cdot p_{q,k}}\,-I^{\text{{max}}}};\label{eq:map_interference}
\end{equation}
$I(\wh{\boldsymbol{{\tau}}},\,\mathbf{p},\,\mathbf{P}^{\,\text{{fa}}})$
measures the violation of the \emph{global} interference constraint
(\ref{eq:global_interference_constraints_2}) at $(\wh{\boldsymbol{{\tau}}},\,\mathbf{p},\,\mathbf{P}^{\,\text{{fa}}})$;
global interference constraints (\ref{eq:global_interference_constraints_2})
can be then rewritten in terms of $I(\wh{\boldsymbol{{\tau}}},\,\mathbf{p},\,\mathbf{P}^{\,\text{{fa}}})$
as $I(\wh{\boldsymbol{{\tau}}},\,\mathbf{p},\,\mathbf{P}^{\,\text{{fa}}})\leq0$.

Based on the above definitions, throughout the paper, we will use
the following notation. The convex part of the \emph{joint} strategy
set is denoted by $\mathcal{Y}\triangleq\prod_{q=1}^{Q}\mathcal{Y}_{q}$,
whereas the set containing all the (convex part of) players' strategy
sets except the $q$-th one is denoted by $\mathcal{Y}_{-q}\triangleq\prod_{r\neq q}\mathcal{Y}_{r}$;
similarly, we define $\mathcal{X}\triangleq\prod_{q=1}^{Q}\mathcal{X}_{q}$
and $\mathcal{X}_{-q}\triangleq\prod_{r\neq q}\mathcal{X}_{r}$. For
notational simplicity, when it is needed, we will use interchangeably
either $(\wh{\tau}_{q},\,\mathbf{p}_{q},\, P_{q}^{\,\text{{fa}}})$
or $\mathbf{x}_{q}\triangleq(\wh{\tau}_{q},\,\mathbf{p}_{q},\, P_{q}^{\,\text{{fa}}})$
to denote the strategy tuple of player $q$; similarly, the strategy
profile of all the players will be denoted either by $\mathbf{x}\triangleq(\mathbf{x}_{q})_{q=1}^{Q}$
or $(\wh{\boldsymbol{{\tau}}},\,\mathbf{p},\,\mathbf{P}^{\,\text{{fa}}})$,
with $\wh{\boldsymbol{{\tau}}}\triangleq(\wh{\tau}{}_{q})_{q=1}^{Q}$,
$\mathbf{p}\triangleq(\mathbf{p}_{q})_{q=1}^{Q},$ and $\mathbf{P}^{\text{{fa}}}\triangleq(P_{q}^{\text{{fa}}})_{q=1}^{Q}$,
whereas $\mathbf{x}_{-q}\triangleq(\mathbf{x}_{r})_{q\neq r=1}^{Q}$
is the strategy profile of all the players except the $q$-th one.
All the tuples above are intended to be column vectors; for instance,
$(\wh{\boldsymbol{{\tau}}},\,\mathbf{p},\,\mathbf{P}^{\,\text{{fa}}})$
signifies $(\wh{\boldsymbol{{\tau}}},\,\mathbf{p},\,\mathbf{P}^{\,\text{{fa}}})=[\wh{\boldsymbol{{\tau}}}^{T},\,\mathbf{p}^{T},\,\mathbf{P}^{\,\text{{fa}}^{T}}]^{T}$,
with $\wh{\boldsymbol{{\tau}}}\triangleq(\wh{\tau}{}_{q})_{q=1}^{Q}=[\wh{\tau}{}_{1},\ldots,\wh{\tau}{}_{Q}]^{T}$,
$ $ $\mathbf{p}\triangleq(\mathbf{p}_{q})_{q=1}^{Q}=[\mathbf{p}_{1}^{T},\ldots,\mathbf{p}_{Q}^{T}]^{T},$
where each $\mathbf{p}_{q}=(p_{q,k})_{k=1}^{N}=[p_{q,1},\ldots,p_{q,N}]^{T},$
and $\mathbf{P}^{\,\text{{fa}}}=(P_{q}^{\,\text{{fa}}})_{q=1}^{Q}=[P_{1}^{\,\text{{fa}}},\ldots,P_{Q}^{\,\text{{fa}}}]^{T}.$
For future convenience, Table \ref{table_notation} collects the above
definitions and symbols. Using the above notation, the games introduced
in the previous sections can be unified under the following reformulation.

\begin{center}
\vspace{-0.7cm}%
\framebox{\begin{minipage}[t]{1\columnwidth}%
\textbf{Players' optimization}. The optimization problem of player
$q$ is: 
\begin{equation}
\begin{array}{lll}
\underset{\mathbf{x}_{q}}{\mbox{maximize}} &  & \theta_{q}(\mathbf{x}_{q},\,\mathbf{x}_{-q})-{\pi}\cdot I(\mathbf{x})\\
\mbox{subject to} &  & \mathbf{x}_{q}\triangleq\left(\wh{\tau}_{q},\,\mathbf{p}_{q},\, P_{q}^{\,\text{{fa}}}\right)\in\mathcal{X}_{q}.
\end{array}\vspace{-0.1cm}\label{eq:player q transformed 1}
\end{equation}
\textbf{Price equilibrium}. The price obeys the following complementarity
condition:\vspace{-0.3cm}

\begin{equation}
0\,\leq\,{\pi}\,\perp\,-I(\mathbf{x})\geq0.\label{eq:price equilibrium_vector_form}
\end{equation}
\end{minipage}}
\par\end{center}

Throughout the paper, we will refer to the game (\ref{eq:player q transformed 1})
along with the side constraint (\ref{eq:price equilibrium_vector_form})
as game $\mathcal{G}(\mathcal{X},\,{\boldsymbol{{\theta}}})$, where
${\boldsymbol{{\theta}}}\triangleq(\theta_{q}(\mathbf{x}_{q},\mathbf{x}_{-q},\pi))_{q=1}^{Q}$.

\begin{table}[ht]\vspace{-0.1cm} 
\caption{Glossary of notation of game $\mathcal{G}(\mathcal{X},\,{\boldsymbol{{\theta}}})$ [cf. (\ref{eq:player q transformed 1})-(\ref{eq:price equilibrium_vector_form})]} \vspace{0.3cm} 
\centering 
\vline\vline
\begin{tabular}{l l } 
\hline\hline                  
\hspace{1.5cm}Symbol & \hspace{3.2cm}Meaning \\
\hline\vspace{-0.3cm}\\
$\tau_q$ & sensing time of SU $q$   \\ 
$\mathbf{p}_q\triangleq (p_{q,k})_{k=1}^N$ & power allocation vector of SU $q$ \\ 
$\pi$ & scalar price variable \\$P_q^{\text{fa}}$ & false alarm probability of SU $q$  \\
$P_{q,k}^{\text{miss}}$ & missed detection probability of SU $q$ on carrier $k$ [cf. (\ref{eq:Pd-worst_case_2})]\\
 $\wh{\tau}_{q}\triangleq \sqrt{\tau_q f_q}$ & normalized sensing time of SU $q$ [cf. (\ref{eq:bijection})] \\
$\mathbf{x}_{q}\triangleq(\wh{\tau}_{q},\,\mathbf{p}_{q},\, P_{q}^{\,\text{{fa}}})$ & strategy tuple of SU $q$ \\
$\mathbf{x}_{-q}\triangleq (\wh{\tau}_{r},\,\mathbf{p}_{r},\, P_{r}^{\,\text{{fa}}})_{r\neq q}$ & strategy profile of all the SUs except the $q$-th one\\
$\mathbf{x}\triangleq (\mathbf{x}_q)_{q=1}^{Q}=(\wh{\boldsymbol{{\tau}}},\,\mathbf{p},\,\mathbf{P}^{\,\text{{fa}}})$ & strategy profile of all the SUs  \\
$\theta_{q}(\mathbf{x}_{q},\,\mathbf{x}_{-q})$ & payoff function of SU $q$ including the equisensing penalization [cf. (\ref{eq:theta_payoff})]\\
$I_q(\mathbf{x}_q)$ & local interference constraint violation of SU $q$ [cf. (\ref{eq:def_h_function})]\\
$I(\mathbf{x})$ & global interference constraint violation of SU $q$ [cf. (\ref{eq:map_interference})]\\
${\cal{X}}_q$,  $\mathcal{X}\triangleq\prod_{q=1}^{Q}\mathcal{X}_{q}$ & feasible set of SU $q$ [cf. (\ref{eq:set_Xq})],  joint feasible strategy set of $\mathcal{G}(\mathcal{X},\,{\boldsymbol{{\theta}}})$\\
$\mathcal{X}_{-q}\triangleq\prod_{r\neq q}\mathcal{X}_{r}$ & joint strategy set of the SUs except the $q$-th one\\
${\cal{Y}}_q$, $\mathcal{Y}\triangleq\prod_{q=1}^{Q}\mathcal{Y}_{q}$ & convex part of ${\cal{X}}_q$ [cf. (\ref{eq:def_Y_q})], Cartesian product of all ${\cal{Y}}_q$'s \vspace{0.2cm}\\
\hline\hline  
\end{tabular}\vline\vline
\label{table_notation} 
\end{table}

Needless to say, when $\pi=0$ and $c=0$, $\mathcal{G}(\mathcal{X},\,{\boldsymbol{{\theta}}})$
reduces to the game in (\ref{eq:player q_individual_interference_constraints})
where there are only individual interference constraints (\ref{eq:individual_overal_interference_constraint}),
whereas when $c=0$, $\mathcal{G}(\mathcal{X},\,{\boldsymbol{{\theta}}})$
coincides with the game in (\ref{eq:paler_q_global})-(\ref{eq:price equilibrium_pricing_global})
with local and global interference constraints. 

As a final remark, we observe that the proposed formulations may be
extended to cover more general settings, without affecting the validity
of the results we are going to present. For instance, the case of
multiple active PUs and additional local/global interference constraints
(such as per-carrier constraints) can be readily accommodated: Instead
of having a single price variable, we associate a different price
to each global interference constraint and proceed similarly as in
(\ref{eq:player q transformed 1})-(\ref{eq:price equilibrium_vector_form}).
Also, the sensing model introduced in Sec. \ref{sub:The-spectrum-sensing}
can be generalized to the case of multiple active PUs, and the presence
of device-level uncertainties (e.g., uncertainty in the power spectral
density of the PUs' signals and thermal noise) as well as system level
uncertainties (e.g., lack of knowledge of the number of active PUs).
The mathematical details of these more general formulations can be
found in our companion paper \cite{Pang-Scutari-NNConvex_PI}; for
notational simplicity, here we will stay within the formulation (\ref{eq:player q transformed 1})-(\ref{eq:price equilibrium_vector_form}),
without loss of generality.

\section{Solution Analysis: Nash Equilibria\label{sec:Solution-Analysis:-Nash}}

This section is devoted to the solution analysis of the games introduced
in the previous section. In order to provide a unified analysis, we
focus on the general game $\mathcal{G}(\mathcal{X},\,{\boldsymbol{{\theta}}})$
with side constraints; results for the other proposed formulations
are obtained as special cases. We start our analysis by studying the
feasibility of each optimization problem in (\ref{eq:player q transformed 1})
(cf. Sec. \ref{sub:Feasibility-conditions}); we then extend the definitions
of NE\emph{ }to a game with side constraints and establish its main
properties (cf. Sec. \ref{sub:Existence-and-uniqueness}).

\subsection{Feasibility conditions\label{sub:Feasibility-conditions}}

Introducing the SNR detection $\texttt{{snr}}_{q,k}^{\text{{d}}}\triangleq{\sigma_{I_{q,k}}^{2}}/\sigma_{q,k}^{2}$
experimented by SU $q$ over carrier $k$ and using the definitions
given in Sec. \ref{sub:The-spectrum-sensing}, sufficient conditions
guaranteeing the existence of an optimal solution for each player's
optimization problem (\ref{eq:player q transformed 1}) are the following:
For all $q=1,\ldots,Q$ and $k=1,\ldots,N$, there must exist a common
sensing time $\tau$ (corresponding to normalized sensing times $\wh{\tau}_{q}=\sqrt{{\tau\, f_{q}}}$)
such that

\begin{equation}
\dfrac{\wh{\tau}_{q}^{\min}}{\sqrt{{f_{q}}}}\leq\sqrt{\tau}\leq\dfrac{\wh{\tau}_{q}^{\max}}{\sqrt{f_{q}}},\quad\mbox{and}\quad\sqrt{f_{q}{\tau}}\geq\,{\displaystyle {\frac{\mathcal{Q}^{-1}({\beta}_{q,k})+|\mathcal{Q}^{-1}({\alpha}_{q,k})|\,\left({\sigma}_{{q,k}|1}/{\sigma}_{{q,k}|0}\right)}{\texttt{{snr}}_{q,k}^{\mbox{d}}}}}.\label{eq:nec_suff_feasibility_cond}
\end{equation}
The first set of conditions in (\ref{eq:nec_suff_feasibility_cond})
simply postulates the existence of an overlap among the (normalized)
sensing time intervals $[\wh{\tau}_{q}^{\min}/\sqrt{{f_{q}}},\,\wh{\tau}_{q}^{\max}/\sqrt{{f_{q}}}]$
in (\ref{eq:player q transformed 1}), which is necessary to guarantee
the existence of a common value for the sensing times in the original
variables $\tau_{q}$'s. The second set of conditions guarantees that
the strategy sets $\mathcal{Y}_{q}$'s (and thus $\mathcal{X}_{q}$'s)
are not empty. Interestingly, they quantify the existing trade-off
between the sensing time (the product ``time-bandwidth'' $f_{q}{\tau}$
of the system) and detection accuracy: the smaller both false alarm
and missed detection probability values, the larger the sensing time
(the decision process must be more accurate). 

When the sensing times are not forced to be the same, as in the formulations
(\ref{eq:player q_individual_interference_constraints}) and (\ref{eq:paler_q_global})-(\ref{eq:price equilibrium_pricing_global}),
the feasibility conditions (\ref{eq:nec_suff_feasibility_cond}) can
be weakened by the following: For all $q=1,\ldots,Q$ and $k=1,\ldots,N$,

\begin{equation}
\sqrt{f_{q}{\tau}_{q}^{\text{{max}}}}\geq\,{\displaystyle {\frac{\mathcal{Q}^{-1}({\beta}_{q,k})+|\mathcal{Q}^{-1}({\alpha}_{q,k})|\,\left({\sigma}_{{q,k}|1}/{\sigma}_{{q,k}|0}\right)}{\texttt{{snr}}_{q,k}^{\mbox{d}}}}}.\label{eq:feasibility_no_equisensing}
\end{equation}
Throughout the paper, we tacitly assume that each user's optimization
problem under consideration has a nonempty strategy set (the associated
feasibility conditions above are satisfied).

\subsection{Existence and uniqueness of the NE\label{sub:Existence-and-uniqueness}}

We focus in this section on the NE of $\mathcal{G}(\mathcal{X},\,{\boldsymbol{{\theta}}})$.
The definition of NE for  a game with price equilibrium conditions
such as $\mathcal{G}(\mathcal{X},\,{\boldsymbol{{\theta}}})$ is the
natural generalization of the same concept introduced for classical
noncooperative games having no side constraints (see, e.g., \cite{Rosen_econ65})
and is given next.\vspace{-0.7cm}

\begin{center}
\textbf{}%
\framebox{\begin{minipage}[t]{1\columnwidth}%
\begin{flushleft}
\textbf{Definition}. A \textbf{Nash equilibrium} of the game $\mathcal{G}(\mathcal{X},\,{\boldsymbol{{\theta}}})$
is a strategy-price tuple $\left(\mathbf{x}^{\star},\,{\pi}\right)$,
such that 
\begin{equation}
\mathbf{x}_{q}^{\star}\,\in\,{\displaystyle {\operatornamewithlimits{\mbox{argmax}}_{\mathbf{x}_{q}\,\in\,\mathcal{X}_{q}}}\,\left\{ \theta_{q}(\mathbf{x}_{q},\mathbf{x}_{-q}^{\star})-{\pi}^{\star}\cdot I(\mathbf{x}_{q},\mathbf{x}_{-q}^{\star})\right\} },\quad\forall q=1,\ldots,Q,\label{eq:player q opt}
\end{equation}
 and 
\begin{equation}
0\,\leq\,{\pi}^{\star}\,\perp\,-\, I(\mathbf{x}^{\star})\geq0.\label{eq:side constraint-1}
\end{equation}
A NE is said to be \emph{trivial }if the power-component $\mathbf{p}_{q}^{\star}=\mathbf{0}$
for all $q=1,\ldots,Q$. \hfill{}$\Box$ 
\par\end{flushleft}%
\end{minipage}}
\par\end{center}

In words, the proposed notion of equilibrium is a stable state of
the network consisting of an equilibrium power/sensing profile $\mathbf{x}^{\star}$
and price $\pi^{\star}$: at $(\mathbf{x}^{\star},\pi^{\star})$,
the SUs have no incentive to change their power/sensing profiles $\mathbf{x}^{\star}$
based on the current state of the network {[}represented by (\ref{eq:player q opt}){]},
while the optimal value $\pi^{\star}$ of the price is such that all
global interference constraints are met {[}a situation represented
by (\ref{eq:side constraint-1}){]}. Note that, for a set of fixed
price $\pi^{\star}$, the equilibrium power/sensing profile $\mathbf{x}^{\star}$
can be interpreted as the NE of a classical noncooperative game (having
thus only local constraints), wherein the payoff function of each
player $q$ is $\theta_{q}(\bullet,\mathbf{x}_{-q},{\pi}^{\star})$
and the strategy set is $\mathcal{X}_{q}$. The proposed equilibrium
concept is thus a NE of the aforementioned game with an appropriately
selected price. 

The game $\mathcal{G}(\mathcal{X},\,{\boldsymbol{{\theta}}})$ is
nonconvex with the nonconvexity occurring in the players' objective
functions and the local/global interference constraints; moreover,
the feasible price {[}satisfying (\ref{eq:side constraint-1}){]}
is not explicitly bounded {[}note that this price cannot be normalized
due to the lack of homogeneity in the players' optimization problem
(\ref{eq:player q transformed 1}){]}. Because of that, the existence
of a NE is in jeopardy. The rest of this section is then devoted to
provide a detailed solution analysis of the game; we derive sufficient
conditions for the existence and the uniqueness of a NE. 

Mathematically, a NE can be interpreted as a fixed-point of the players'
best-response map. When this map is a continuous single valued function,
the existence of a fixed-point can be proved by using the renowned
Brouwer fixed-point theorem%
\footnote{Brouwer fixed-point theorem states that every continuous (vector-valued)
function $\Phi:\mathcal{C\mapsto C}$ defined over a nonempty convex
compact set $\mathcal{C}\subseteq\mathbb{R}^{n}$ has a fixed point
in $\mathcal{C}$.%
} (see, e.g., \cite[Th. 2.1.18]{Facchinei-Pang_FVI03}), provided that
one can identify a convex compact set for the application of the theorem.
Our goal is then to derive a set of sufficient conditions under which
the best-response map associated with $\mathcal{G}(\mathcal{X},\,{\boldsymbol{{\theta}}})$
is a \emph{single-valued continuous} map over a proper \emph{compact}
and \emph{convex} set; this is a nontrivial task, because of the nonconvexity
of the players' optimization problems and the potential unboundedness
of the price. The new line of analysis we propose is based on the
following three steps:
\begin{description}
\item [{$\mbox{Step\,\textbf{1}}:$}] To deal with the unboundedness of
the price, we introduce an auxiliary price-truncated game $\mathcal{G}_{t}(\mathcal{X},\,{\boldsymbol{{\theta}}})$,
where the price $\pi$ is constrained to be upper bounded by a given
positive constant $t$;
\item [{$\mathbf{Step}\,\textbf{2}:$}] We derive sufficient conditions
for the nonconvex players' optimization problems in the game $\mathcal{G}_{t}(\mathcal{X},\,{\boldsymbol{{\theta}}})$
to have unique optimal solutions; building on such solutions we introduce
a continuous single-value map$-$the best-response associated with
the game $\mathcal{G}_{t}(\mathcal{X},\,{\boldsymbol{{\theta}}})$$-$defined
on a convex and compact set, whose fixed-points are the Nash equilibria
of the game $\mathcal{G}_{t}(\mathcal{X},\,{\boldsymbol{{\theta}}})$.
We can then apply the Brouwer fixed-point theorem to deduce that $\mathcal{G}_{t}(\mathcal{X},\,{\boldsymbol{{\theta}}})$
has a NE; 
\item [{$\mathbf{Step}\,\textbf{3}:$}] The final step is to demonstrate
that there exists a sufficiently large $t$ such that the price truncation
in the game $\mathcal{G}_{t}(\mathcal{X},\,{\boldsymbol{{\theta}}})$
is not binding. This will allow us to deduce that a NE of $\mathcal{G}_{t}(\mathcal{X},\,{\boldsymbol{{\theta}}})$
is also a NE of the original, un-truncated, game $\mathcal{G}(\mathcal{X},\,{\boldsymbol{{\theta}}})$.\medskip{}

\end{description}

\subsubsection*{Step 1: The price-truncated game $\mathcal{G}_{t}(\mathcal{X},\,{\boldsymbol{{\theta}}})$ }

To motivate the price-truncated game, observe first that the price
complementarity condition in (\ref{eq:side constraint-1}) is equivalent
to 
\begin{equation}
\pi^{\star}\in{\operatornamewithlimits{\mbox{argmax}}_{\pi\geq0}}\,\left\{ \pi\cdot I(\mathbf{x}^{\star})\right\} .\label{eq:price_CC}
\end{equation}
In order to bound the price $\pi$ in (\ref{eq:price_CC}), let us
introduce the price interval defined as: given $t>0$, 
\begin{equation}
\mathcal{S}_{t}\,\triangleq\,\left\{ \,\pi\,\mid\,{\displaystyle 0\leq\pi\leq t}\right\} ,\label{eq:simplex_price}
\end{equation}
and truncate in (\ref{eq:price_CC}) the nonnegative axis $\pi\geq0$
by $\mathcal{S}_{t}$. We then replace (\ref{eq:price_CC}) with the
following price-truncated optimization problem: 
\begin{equation}
\pi_{t}^{\star}\in{\operatornamewithlimits{\mbox{argmax}}_{\pi_{t}\in\mathcal{S}_{t}}}\,\left\{ \pi_{t}\cdot I(\mathbf{x}^{\star})\right\} ,\label{eq:price_truncated_max}
\end{equation}
where instead of $\pi$ we used $\pi_{t}$ to make explicit the dependence
of the optimal solution of (\ref{eq:price_truncated_max}) on $t$.
Using (\ref{eq:price_truncated_max}), the price-truncated game $\mathcal{G}_{t}(\mathcal{X},\,{\boldsymbol{{\theta}}})$
can be defined as follows.

\vspace{0.3cm}

\hspace{-0.7cm}%
\framebox{\begin{minipage}[t]{1\columnwidth}%
\textbf{Game }$\mathcal{G}_{t}(\mathcal{X},\,{\boldsymbol{{\theta}}})$.
The game is composed of $Q+1$ players' optimization problems: the
following nonconvex optimization problems for the $Q$ players 
\begin{equation}
{\displaystyle {\operatornamewithlimits{\mbox{maximize}}_{\mathbf{x}_{q}\in\mathcal{X}_{q}}}\,\,}\theta_{q}\left(\mathbf{x}_{q},\,\mathbf{x}_{-q}\right)-\pi_{t}\cdot I(\mathbf{x}),\quad q=1,\ldots,Q,\label{eq:game_G_t_2}
\end{equation}
and the price-truncated optimization problem for the $(Q+1)$-st player
\begin{equation}
{\operatornamewithlimits{\mbox{maximize}}_{\pi_{t}\in\mathcal{S}_{t}}}\,\,\pi_{t}\cdot I(\mathbf{x}).\label{eq:price_CC_simplex}
\end{equation}
\end{minipage}} \vspace{0.2cm}

Note that in the game $\mathcal{G}_{t}(\mathcal{X},\,{\boldsymbol{{\theta}}})$
there are no side constraints, but the price complementarity condition
in (\ref{eq:side constraint-1}) is treated as an additional player
of the game, at the same level of the other $Q$ players. In fact,
this formulation facilitates the solution analysis of the game, as
detailed next. 

Let us start our analysis by rewriting the NE of $\mathcal{G}_{t}(\mathcal{X},\,{\boldsymbol{{\theta}}})$
as fixed-points of a proper best-response map defined on a convex
and compact set, which allows us to apply standard fixed-point arguments.
Given $t\geq0$, suppose that each optimization problem in (\ref{eq:game_G_t_2})
has a unique optimal solution for every fixed $\mathbf{x}_{-q}\in\mathcal{Y}_{-q}$
and $\pi_{t}\in\mathcal{S}_{t}$ (we derive shortly conditions for
this assumption to hold; see Proposition \ref{proposition_uniqueness_opt_sol}
below); let denote such a solution by $\mathbf{x}_{q}^{\star}(\mathbf{x}_{-q},\,\pi_{t})$,
i.e., 
\begin{equation}
\mathbf{x}_{q}^{\star}(\mathbf{x}_{-q},\,\pi_{t})\triangleq{\displaystyle {\operatornamewithlimits{argmax}_{\mathbf{z}_{q}\in\mathcal{X}_{q}}}\,}\left\{ \theta_{q}\left(\mathbf{z}_{q},\,\mathbf{x}_{-q}\right)-\pi_{t}\cdot I(\mathbf{z}_{q},\mathbf{x}_{-q})\right\} ,\label{eq:optimal_solution_of_q_problem}
\end{equation}
where in (\ref{eq:optimal_solution_of_q_problem}) we made explicit
the dependence of $\mathbf{x}_{q}^{\star}(\mathbf{x}_{-q},\,\pi_{t})$
on the strategy profile $\mathbf{x}_{-q}$ of the other players and
the price $\pi_{t}.$ In order to have a unique solution also of the
price-truncated linear optimization problem (\ref{eq:price_CC_simplex}),
we introduce the following proximal-based regularization in (\ref{eq:price_CC_simplex}):
given $t\geq0$, $\mathbf{x}\in\mathcal{Y}$, and $\pi_{t}\in\mathcal{S}_{t}$,
let 
\begin{equation}
\pi_{t}^{\star}(\mathbf{x},\,\pi_{t})\triangleq{\operatornamewithlimits{argmax}_{{\mu}_{t}\in\mathcal{S}_{t}}}\,\left\{ \mu_{t}\cdot I(\mathbf{x})-\frac{{1}}{2}\,\left({\mu}_{t}-{\pi}_{t}\right)^{2}\right\} .\label{eq:price_cc_regularized}
\end{equation}
Note that, thanks to the proximal regularization, the optimization
problem in (\ref{eq:price_cc_regularized}) becomes strongly convex
for any given $(\mathbf{x},\,\pi_{t})$, and thus has a unique solution
$\pi_{t}^{\star}(\mathbf{x},\,\pi_{t})$, which depends on $(\mathbf{x},\,\pi_{t})$.
Building on (\ref{eq:optimal_solution_of_q_problem}) and (\ref{eq:price_cc_regularized}),
we can introduce the following best-response map $\mathcal{B}:\mathcal{Y}\times\mathcal{S}_{t}\rightarrow\mathcal{Y}\times\mathcal{S}_{t}$
associated with the price-truncated game $\mathcal{G}_{t}(\mathcal{X},\,{\boldsymbol{{\theta}}})$:\vspace{-0.3cm}
\begin{equation}
\mathcal{Y}\times\mathcal{S}_{t}\ni(\mathbf{x},\,\pi_{t})\triangleq\left(\begin{array}{c}
\mathbf{x}_{1}\\
\vdots\\
\mathbf{x}_{Q}\\
\pi_{t}
\end{array}\right)\mapsto\mathcal{B}(\mathbf{x},\,\pi_{t})\triangleq\left(\begin{array}{c}
\mathbf{x}_{1}^{\star}(\mathbf{x}_{-1},\,\pi_{t})\\
\vdots\\
\mathbf{x}_{Q}^{\star}(\mathbf{x}_{-q},\,\pi_{t})\\
\pi_{t}^{\star}(\mathbf{x},\,\pi_{t})
\end{array}\right).\label{eq:best-response_map}
\end{equation}
 Note that, even though the feasible sets $\mathcal{X}_{q}$ of the
players' optimization problems in (\ref{eq:game_G_t_2}) are nonconvex,
the map $\mathcal{B}(\bullet)$ is defined over the \emph{convex}
and \emph{compact} set $\mathcal{Y}\times\mathcal{S}_{t}$; which
is a key point to apply the Brouwer fixed-point theorem. Moreover,
the set of fixed-points of $\mathcal{B}(\bullet)$ coincides with
that of the NE of the game $\mathcal{G}_{t}(\mathcal{X},\,{\boldsymbol{{\theta}}})$,
establishing thus the desired connection between the map (\ref{eq:best-response_map})
and the game $\mathcal{G}_{t}(\mathcal{X},\,{\boldsymbol{{\theta}}})$.
More formally, we have the following.\textbf{ }

\begin{lemma}\label{fixed-point_NE_Gt}Suppose that each optimization
problem in (\ref{eq:optimal_solution_of_q_problem}) has a unique
optimal solution for every given $\mathbf{x}_{-q}\in\mathcal{Y}_{-q}$
and $\pi_{t}\in\mathcal{S}_{t}$. A tuple $\left(\mathbf{x}^{\star},\,\pi_{t}^{\star}\right)$
is a NE of $\mathcal{G}_{t}(\mathcal{X},\,{\boldsymbol{{\theta}}})$
if and only if it is a fixed-point of the map $\mathcal{B}(\bullet)$;
that is $\left(\mathbf{x}^{\star},\,\pi_{t}^{\star}\right)=\mathcal{B}\left(\mathbf{x}^{\star},\,\pi_{t}^{\star}\right)$.\end{lemma}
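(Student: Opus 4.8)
The plan is to exploit the product structure of the best-response map $\mathcal{B}$ in (\ref{eq:best-response_map}) and verify the equivalence block by block, treating the $Q$ player blocks and the single price block separately. By definition of $\mathcal{B}$, the fixed-point identity $(\mathbf{x}^\star,\pi_t^\star)=\mathcal{B}(\mathbf{x}^\star,\pi_t^\star)$ is equivalent to the simultaneous validity of the $Q+1$ equations
$$\mathbf{x}_q^\star=\mathbf{x}_q^\star(\mathbf{x}_{-q}^\star,\pi_t^\star),\quad q=1,\ldots,Q,\qquad \pi_t^\star=\pi_t^\star(\mathbf{x}^\star,\pi_t^\star).$$
I would show that each of these equations is equivalent to the corresponding player's equilibrium condition in $\mathcal{G}_t(\mathcal{X},\boldsymbol{\theta})$; chaining the equivalences over all $Q+1$ blocks then yields the claim.

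For the first $Q$ blocks the argument is immediate under the standing single-valuedness hypothesis. By (\ref{eq:optimal_solution_of_q_problem}), the quantity $\mathbf{x}_q^\star(\mathbf{x}_{-q}^\star,\pi_t^\star)$ is by assumption the unique maximizer over $\mathcal{X}_q$ of $\theta_q(\cdot,\mathbf{x}_{-q}^\star)-\pi_t^\star\cdot I(\cdot,\mathbf{x}_{-q}^\star)$, which is exactly player $q$'s objective in (\ref{eq:game_G_t_2}), since the dependence of $I(\mathbf{x})$ on the $q$-th block is faithfully captured by $I(\mathbf{z}_q,\mathbf{x}_{-q})$ when the rivals are held fixed. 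Hence the fixed-point equation $\mathbf{x}_q^\star=\mathbf{x}_q^\star(\mathbf{x}_{-q}^\star,\pi_t^\star)$ holds if and only if $\mathbf{x}_q^\star$ solves (\ref{eq:game_G_t_2}), i.e.\ the equilibrium condition for player $q$. It is precisely uniqueness of the solution that legitimizes passing between ``the argmax \emph{equals} $\mathbf{x}_q^\star$'' and ``$\mathbf{x}_q^\star$ \emph{lies in} the argmax set''; I would also remark that at such a fixed point $\mathbf{x}_q^\star\in\mathcal{X}_q$ automatically, so feasibility for the NE definition comes for free.

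The delicate block is the price, and I expect it to be the crux of the proof, because the map $\mathcal{B}$ uses the proximally regularized problem (\ref{eq:price_cc_regularized}) whereas the $(Q{+}1)$-st player of $\mathcal{G}_t$ solves the \emph{unregularized} linear problem (\ref{eq:price_CC_simplex}); these are different optimization problems, and the fixed-point condition is what reconciles them. I would argue through first-order (variational) characterizations. Since (\ref{eq:price_cc_regularized}) is strongly concave in $\mu_t$ over the convex set $\mathcal{S}_t$, its unique maximizer $\mu_t^\star\triangleq\pi_t^\star(\mathbf{x}^\star,\pi_t^\star)$ is characterized by
$$\bigl(I(\mathbf{x}^\star)-(\mu_t^\star-\pi_t^\star)\bigr)(\mu_t-\mu_t^\star)\le 0,\qquad \forall\,\mu_t\in\mathcal{S}_t.$$
Imposing the fixed-point condition $\mu_t^\star=\pi_t^\star$ annihilates the proximal term and reduces this to
$$I(\mathbf{x}^\star)\,(\mu_t-\pi_t^\star)\le 0,\qquad \forall\,\mu_t\in\mathcal{S}_t,$$
which is exactly the first-order optimality condition for $\pi_t^\star$ to maximize the linear map $\pi_t\mapsto\pi_t\cdot I(\mathbf{x}^\star)$ over $\mathcal{S}_t$, i.e.\ to solve (\ref{eq:price_CC_simplex}). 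Conversely, if $\pi_t^\star$ solves (\ref{eq:price_CC_simplex}) then this variational inequality holds, so $\mu_t=\pi_t^\star$ satisfies the optimality condition of the strongly concave proximal problem; by uniqueness of the latter's maximizer, $\pi_t^\star(\mathbf{x}^\star,\pi_t^\star)=\pi_t^\star$, the desired fixed-point identity. Thus the price fixed-point equation is equivalent to the $(Q{+}1)$-st player's equilibrium condition. Combining all $Q+1$ blocks establishes the asserted ``if and only if.'' The only genuinely nontrivial step is this price reconciliation; the player blocks follow directly from the single-valuedness assumption.
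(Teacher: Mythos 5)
Your proof is correct, and it follows exactly the argument the paper treats as immediate (the lemma is stated without an explicit proof): a block-by-block equivalence in which the $Q$ player blocks follow directly from the single-valuedness hypothesis. Your careful reconciliation of the regularized price update (\ref{eq:price_cc_regularized}) with the unregularized price player's problem (\ref{eq:price_CC_simplex}) via the first-order variational conditions --- the proximal term vanishing at a fixed point, and strong concavity giving uniqueness for the converse --- is precisely the one nontrivial detail the paper leaves implicit, and you handle it correctly.
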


Based on Lemma \ref{fixed-point_NE_Gt}, we can now study the existence
of a NE of $\mathcal{G}_{t}(\mathcal{X},\,{\boldsymbol{{\theta}}})$
by focusing on the fixed-points of the map $\mathcal{B}$.

\subsubsection*{Step 2: Existence of a NE of $\mathcal{G}_{t}(\mathcal{X},\,{\boldsymbol{{\theta}}})$}

We provide now sufficient conditions guaranteeing that each nonconvex
problem (\ref{eq:game_G_t_2}) has a unique optimal solution, for
every given $\mathbf{x}_{-q}\in\mathcal{Y}_{-q}$ and $\pi_{t}\in\mathcal{S}_{t}$.
Then, we show that these conditions are also sufficient for the existence
of a fixed-point of the map $\mathcal{B}$ in (\ref{eq:best-response_map}),
and thus a NE of the game $\mathcal{G}_{t}(\mathcal{X},\,{\boldsymbol{{\theta}}})$.

It is well-known that, under some Constraint Qualification (CQ), a
locally/globally optimal solution of a (possibly nonconvex) nonlinear
program satisfies the Karush-Kuhn-Tucker (KKT) conditions associated
with the optimization problem; such solutions are called stationary
solutions of the optimization problem. It turns out that to establish
the single-valuedness of the players' best-response map it is enough
to derive conditions guaranteeing the uniqueness of the stationary
solutions, provided that a suitable CQ holds. The classical approach
to write the KKT conditions of each player's optimization problem
would be introducing multipliers associated with \emph{all} the constraints
in the set $\mathcal{X}_{q}$$-$both the convex part $\mathcal{Y}_{q}$
and the nonconvex part $I_{q}(\mathbf{x}_{q})\leq0$ {[}cf. (\ref{eq:set_Xq}){]}$-$and
then maximizing the resulting Lagrangian function over\emph{ the whole
}space (i.e., considering an unconstrained optimization problem for
the Lagrangian maximization). The study of the uniqueness of the stationary
solutions based on the ``standard'' KKT conditions is however not
an easy task. To simplify the analysis, we propose here a different
approach: instead of explicitly accounting all the multipliers as
variables of the KKT system, for each player's optimization problem,
we introduce multipliers \emph{only for the nonconvex} constraints
$I_{q}(\mathbf{x}_{q})\leq0$, and retain the convex part $\mathcal{Y}_{q}$
as explicit constraints in the maximization of the resulting Lagrangian
function. More specifically, denoting by $\lambda_{q}$ the multiplier
associated with the nonconvex constraint $I_{q}(\mathbf{x}_{q})\leq0$
of player $q$, the Lagrangian function associated with the optimization
problem (\ref{eq:game_G_t_2}) of player $q$ (rewritten as a minimization)
is 
\begin{equation}
\mathcal{L}_{q}{\displaystyle \left(\left(\mathbf{x}_{q},\lambda_{q}\right),\,\mathbf{x}_{-q},\pi_{t}\right)}\triangleq-\theta_{q}(\mathbf{x}_{q},\mathbf{x}_{-q})+\lambda_{q}\cdot I_{q}(\mathbf{x}_{q})+\pi_{t}\cdot I(\mathbf{x}_{q},\mathbf{x}_{-q}),\label{eq:Lagrangian-1}
\end{equation}
which depends also on the strategies $\mathbf{x}_{-q}$ of the other
players and the price $\pi_{t}$. Given $\mathbf{x}_{-q}$ and $\pi_{t}$,
it is not difficult to see that if $\mathbf{x}_{q}^{\star}$ is an
optimal solution of the $q$-th player's optimization problem in (\ref{eq:player q transformed 1})
and some CQ holds at $\mathbf{x}_{q}^{\star}$, there exists a multiplier
${\lambda}_{q}^{\star}$ associated with the local nonconvex constraint
$I_{q}(\mathbf{x}_{q})\leq0$ such that the tuple $\left(\mathbf{x}_{q}^{\star},{\lambda}_{q}^{\star}\right)$
satisfies
\begin{equation}
\begin{array}{lc}
\mbox{(i)}:\quad & \mathbf{x}_{q}^{\star}\,\in\,\underset{\mathbf{x}_{q}\,\in\,\mathcal{Y}_{q}}{\mbox{argmin}}{\displaystyle \,\left\{ \mathcal{L}_{q}{\displaystyle \left((\mathbf{x}_{q},\lambda_{q}^{\star}),\mathbf{x}_{-q},\pi_{t}\right)}\right\} }\bigskip\\
\mbox{(ii)}: & 0\,\leq\,{\lambda}_{q}^{\star}\,\perp\,-\, I_{q}(\mathbf{x}_{q}^{\star})\geq0.
\end{array}\label{eq:KKT_player_q_1}
\end{equation}
Note that each Lagrangian minimization in (i) is constrained over
the convex part $\mathcal{Y}_{q}$ of the player's local constraints
$\mathcal{X}_{q}$. Since $\mathcal{Y}_{q}$ is a convex set, we can
invoke the variational principle for the optimality of $\mathbf{x}_{q}^{\star}$
in (i), and obtain the following necessary conditions for (\ref{eq:KKT_player_q_1})
to hold: 
\begin{equation}
\begin{array}{lc}
\mbox{(}\mbox{i}^{'}\mbox{)}:\quad & \left(\mathbf{x}_{q}-\mathbf{x}_{q}^{\star}\right)^{T}\,\nabla_{\mathbf{x}_{q}}\mathcal{L}_{q}{\displaystyle \left((\mathbf{x}_{q}^{\star},\lambda_{q}^{\star}),\,\mathbf{x}_{-q},\pi_{t}\right)}\geq0{\displaystyle \epc\forall\mathbf{x}_{q}\in\mathcal{Y}_{q}}\bigskip\\
\mbox{(ii\ensuremath{^{'}})}: & (\lambda_{q}-{\lambda}_{q}^{\star})\cdot\left(-\, I_{q}(\mathbf{x}_{q}^{\star})\right)\geq0,\quad\forall\lambda_{q}\in\mathbb{R}_{+}
\end{array}\label{eq:KKT_player_q_2}
\end{equation}
where $\mbox{(}\mbox{i}^{'}\mbox{)}$ is just the aforementioned first-order
(necessary) optimality condition of the (nonconvex) optimization problem
in $\mbox{(}\mbox{i}\mbox{)}$, albeit with a convex feasible set
${\cal Y}_{q}$; and $\mbox{(}\mbox{ii}^{'}\mbox{)}$ is equivalent
to $\mbox{(}\mbox{ii}\mbox{)}$. Finally, since there is no coupling
in the constraints involving the variables $\mathbf{x}_{q}$ and $\lambda_{q}$
in $(\mbox{i}\ensuremath{^{'}})$-$(\mbox{ii}\ensuremath{^{'}})$,
we can equivalently rewrite the two separated inequalities $(\mbox{i}\ensuremath{^{'}})$-$(\mbox{ii}\ensuremath{^{'}})$
as one inequality, obtaining 
\begin{equation}
\left(\begin{array}{c}
\mathbf{x}_{q}-\mathbf{x}_{q}^{\star}\smallskip\\
\lambda_{q}-{\lambda_{q}}^{\star}
\end{array}\right)^{T}\underset{\triangleq\mathbf{F}_{q}{\displaystyle \left((\mathbf{x}_{q}^{\star},\lambda_{q}^{\star});\,\mathbf{x}_{-q},\pi_{t}\right)}}{\underbrace{\left(\begin{array}{c}
\nabla_{\mathbf{x}_{q}}\mathcal{L}_{q}{\displaystyle {\displaystyle \left((\mathbf{x}_{q}^{\star},\lambda_{q}^{\star}),\,\mathbf{x}_{-q},\pi_{t}\right)}}\smallskip\\
-\, I_{q}(\mathbf{x}_{q}^{\star})\smallskip
\end{array}\right)}}\geq0,\quad\forall\left(\mathbf{x}_{q},\lambda_{q}\right)\in\underset{\triangleq\mathcal{K}_{q}}{\underbrace{\mathcal{Y}_{q}\times\mathbb{R}_{+}}}.\label{eq:VI_ref}
\end{equation}

The above system of inequalities defines the so-called VI problem
in the variables $\left(\mathbf{x}_{q},\lambda_{q}\right)$ for fixed
$(\mathbf{x}_{-q},\pi_{t})$, whose defining vector function is $\mathbf{F}_{q}\left(\bullet;\,\mathbf{x}_{-q},{\pi}_{t}\right)$
and feasible set is $\mathcal{K}_{q}$, both defined in (\ref{eq:VI_ref});%
\footnote{Given a set $\mathcal{Q}\subseteq\mathbb{R}^{n}$ and a vector-valued
function $\boldsymbol{\Psi}:\mathcal{Q}\rightarrow\mathbb{R}^{n}$,
the VI($\mathcal{Q},\boldsymbol{\Psi}$) problem is to find a point
$\mathbf{z}^{\star}\in\mathcal{Q}$, termed a solution of the VI,
such that $(\mathbf{z}-\mathbf{z}^{\star})^{T}\boldsymbol{\Psi}(\mathbf{z}^{\star})\geq0$
for all $\mathbf{z}\in\mathcal{Q}$ \cite{Facchinei-Pang_FVI03}. %
} such a VI is denoted by VI$(\mathcal{K}_{q},\mathbf{F}_{q})$. According
to the implications (\ref{eq:KKT_player_q_1})$\Rightarrow$(\ref{eq:VI_ref}),
the VI$(\mathcal{K}_{q},\mathbf{F}_{q})$ is an equivalent reformulation
of the KKT conditions of the $q$-th player's optimization problem
in (\ref{eq:player q transformed 1}), wherein the convex constraints
$\mathcal{Y}_{q}$'s (and thus the associated multipliers) have been
absorbed in the VI set $\mathcal{K}_{q}$, which is thus convex. It
turns out that the nonconvex problem in (\ref{eq:player q transformed 1})
has a unique optimal solution for any given $\mathbf{x}_{-q}$ and
$\pi_{t}$$-$the best-response of (\ref{eq:best-response_map}) is
unique, and thus $\mathbf{x}_{q}^{\star}(\mathbf{x}_{-q},\,\pi_{t})$
is well-defined$-$if the VI$(\mathcal{K}_{q},\mathbf{F}_{q})$ has
a unique $x_{q}$-component solution and some CQ holds. Proposition
\ref{proposition_uniqueness_opt_sol} below shows that Abadie CQ \cite[Ch. 3.2]{Facchinei-Pang_FVI03}
is satisfied by any nontrivial optimal solution of (\ref{eq:player q transformed 1})
and establishes the uniqueness of the $\mathbf{x}_{q}$-component
under the positive definiteness of the Hessian matrix $\nabla_{\mathbf{x}_{q}}^{2}\mathcal{L}_{q}\left((\mathbf{x}_{q},\lambda_{q}),\,\mathbf{x}_{-q},\pi_{t}\right)$
of $\mathcal{L}_{q}{\displaystyle \left(\left(\mathbf{x}_{q},\lambda_{q}\right),\,\mathbf{x}_{-q},\pi_{t}\right)}$,
for all $(\mathbf{x}_{q},\lambda_{q})\in\mathcal{K}_{q}$ and any
given $\mathbf{x}_{-q}\in\mathcal{Y}_{-q}$ and $\pi_{t}\geq0$. The
matrix $\nabla_{\mathbf{x}_{q}}^{2}\mathcal{L}_{q}\left((\mathbf{x}_{q},\lambda_{q}),\,\mathbf{x}_{-q},\pi_{t}\right)$
{[}interpreted as a function of $(\mathbf{x}_{q},\lambda_{q})$, for
fixed $\mathbf{x}_{-q}$ and $\pi_{t}${]} is given by
\begin{equation}
\nabla_{\mathbf{x}_{q}}^{2}\mathcal{L}_{q}\left((\mathbf{x}_{q},\lambda_{q}),\,\mathbf{x}_{-q},\pi_{t}\right)\triangleq-\nabla_{\mathbf{x}_{q}}^{2}\theta_{q}(\mathbf{x}_{q},\mathbf{x}_{-q})+\lambda_{q}\cdot\nabla_{\mathbf{x}_{q}}^{2}I_{q}(\mathbf{x}_{q})+\pi_{t}\cdot\nabla_{\mathbf{x}_{q}}^{2}I(\mathbf{x}_{q},\mathbf{x}_{-q}).\label{eq:L_q_2_matrix}
\end{equation}

Lemma \ref{Lemma_bounded_multipliers} in Appendix \ref{app:Proof-of-Proposition_uniqueness_NE}
shows that all the $\lambda_{q}$-solutions of the VI$(\mathcal{K}_{q},\mathbf{F}_{q})$
are bounded from above, for every given $\mathbf{x}_{-q}\in\mathcal{Y}_{q}$
and $\pi_{t}\in\mathcal{S}_{t}$. Specifically, it holds that any
$\lambda_{q}^{\star}$ satisfies $\lambda_{q}^{\star}\in[0,\,\lambda^{\max}]$
(see Lemma \ref{Lemma_bounded_multipliers} in Appendix \ref{app:Proof-of-Proposition_uniqueness_NE}),
with 
\begin{equation}
\lambda^{\max}\triangleq{\displaystyle {\sum_{q=1}^{Q}}\,{\displaystyle {\frac{1/\left[{\displaystyle {\min_{1\leq q\leq Q}}\,\left\{ \,{\displaystyle {I}_{q}^{\text{\text{{max}}}}},\,{\displaystyle {\min_{1\leq k\leq N}}\, p_{q,k}^{\max}\,}\right\} }\right]}{\left[\,{\displaystyle {\min_{1\leq k\leq N}}\,\left\{ \,\log\left(\,1+{\displaystyle {\frac{p_{q,k}^{\max}}{{\sigma}_{q,k}^{2}+{\displaystyle {\sum_{r\neq q}}\,|{H}_{qr}(k)|^{2}\, p_{r,k}^{\max}}}}\,}\right)\,\right\} \,}\right]\,{\displaystyle {\min_{1\leq k\leq N}}\,\left\{ {\sigma}_{q,k}^{2}\right\} }}}.}}\label{eq:t_star_def}
\end{equation}
$ $This allows us to restrict the requirement on the positive definiteness
of $\nabla_{\mathbf{x}_{q}}^{2}\mathcal{L}_{q}\left((\mathbf{x}_{q},\lambda_{q}),\,\mathbf{x}_{-q},\pi_{t}\right)$
on all $\mathbf{x}_{q}\in{\mathcal{Y}}_{q}$ and ${\lambda}_{q}\in[0,\,\lambda^{\max}]$.
The above discussion is made formal in the following proposition. 

\begin{proposition}\label{proposition_uniqueness_opt_sol}Let $\mathbf{x}_{-q}\in\mathcal{Y}_{-q}$
and $\pi_{t}\in\mathcal{S}_{t}$ for some $t>0$. Suppose that $\nabla_{\mathbf{x}_{q}}^{2}\mathcal{L}_{q}\left((\mathbf{x}_{q},\lambda_{q}),\,\mathbf{x}_{-q},\pi_{t}\right)$
in (\ref{eq:L_q_2_matrix}) is positive definite for all $\mathbf{x}_{q}\in{\mathcal{Y}}_{q}$
and ${\lambda}_{q}\in[0,\,\lambda^{\max}]$. Then, the $q$-th nonconvex
optimization problem in (\ref{eq:game_G_t_2}) has a unique optimal
solution $\mathbf{x}_{q}^{\star}\in\mathcal{X}_{q}$ that is necessarily
nontrivial. \end{proposition}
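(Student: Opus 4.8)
The plan is to combine three ingredients: an existence/nontriviality argument exploiting that the payoff blows up to $-\infty$ on the trivial set, a constraint-qualification step that turns optimality into the variational inequality VI$(\mathcal{K}_q,\mathbf{F}_q)$ of (\ref{eq:VI_ref}), and a monotonicity argument that extracts uniqueness of the $\mathbf{x}_q$-component from the positive-definiteness hypothesis. First I would settle existence and nontriviality simultaneously. The set $\mathcal{X}_q$ is compact, being a closed ($I_q$ continuous) subset of the compact polyhedron $\mathcal{Y}_q$ of (\ref{eq:def_Y_q}). By (\ref{eq:R_q_new}) the objective $\theta_q(\cdot,\mathbf{x}_{-q})-\pi_t\,I(\cdot,\mathbf{x}_{-q})$ equals $-\infty$ precisely on the trivial set $\{\mathbf{p}_q=\mathbf{0}\}$ --- there every $r_{q,k}(\mathbf{p})$ vanishes, so the argument of the logarithm is $0$, while the remaining factors stay strictly positive --- and is finite and continuous off it; as an extended-real function it is therefore upper semicontinuous. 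Since $I_q^{\max}>0$, any feasible $(\wh{\tau}_q,P_q^{\,\text{fa}})$ together with a sufficiently small positive $\mathbf{p}_q$ is feasible and gives a finite value, so the supremum is finite. A maximizing sequence then has a convergent subsequence whose limit is feasible; it cannot be trivial, for upper semicontinuity would force the value $-\infty$, contradicting finiteness of the supremum. Hence an optimal solution exists and every optimal solution is necessarily nontrivial.

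Next I would justify passing to the KKT/VI characterization. At a nontrivial optimum $\mathbf{x}_q^\star$ the only nonconvex constraint is $I_q\leq0$; by (\ref{eq:def_h_function}) its gradient has $p$-components $\wh{P}_{q,k}^{\text{miss}}\,w_{q,k}>0$ (a strictly positive Q-function value times a positive weight), hence $\nabla_{\mathbf{x}_q}I_q\neq\mathbf{0}$, and at a nontrivial point one may strictly decrease the positive power entries to obtain a direction along which $I_q$ strictly decreases while remaining in the polyhedron $\mathcal{Y}_q$. This yields MFCQ, hence Abadie CQ, so the reduced KKT system (\ref{eq:KKT_player_q_1})--(\ref{eq:KKT_player_q_2}) holds and $(\mathbf{x}_q^\star,\lambda_q^\star)$ solves VI$(\mathcal{K}_q,\mathbf{F}_q)$ for some $\lambda_q^\star\geq0$; by Lemma \ref{Lemma_bounded_multipliers} one in fact has $\lambda_q^\star\in[0,\lambda^{\max}]$.

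For uniqueness I would differentiate $\mathbf{F}_q$ of (\ref{eq:VI_ref}) in $(\mathbf{x}_q,\lambda_q)$: the Jacobian has $\mathbf{x}_q$-block $\nabla_{\mathbf{x}_q}^2\mathcal{L}_q$ from (\ref{eq:L_q_2_matrix}), off-diagonal blocks $+\nabla_{\mathbf{x}_q}I_q$ and $-(\nabla_{\mathbf{x}_q}I_q)^T$, and a vanishing $\lambda_q$-entry. The antisymmetric coupling cancels, so for every $(u,v)$ the associated quadratic form equals $u^T\nabla_{\mathbf{x}_q}^2\mathcal{L}_q\,u$, which is nonnegative on $\mathcal{Y}_q\times[0,\lambda^{\max}]$ and strictly positive when $u\neq\mathbf{0}$; thus $\mathbf{F}_q$ is monotone there and strictly so in the $\mathbf{x}_q$-direction. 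Given two VI solutions $\mathbf{z}^1,\mathbf{z}^2$ with multipliers in $[0,\lambda^{\max}]$, adding the two VI inequalities evaluated at each other forces $(\mathbf{z}^2-\mathbf{z}^1)^T(\mathbf{F}_q(\mathbf{z}^2)-\mathbf{F}_q(\mathbf{z}^1))=0$; writing the difference as $\int_0^1 J\mathbf{F}_q(\mathbf{z}^1+s(\mathbf{z}^2-\mathbf{z}^1))(\mathbf{z}^2-\mathbf{z}^1)\,ds$ along the segment (which stays in $\mathcal{Y}_q\times[0,\lambda^{\max}]$ by convexity) reduces the identity to $\int_0^1(\mathbf{x}_q^2-\mathbf{x}_q^1)^T\nabla_{\mathbf{x}_q}^2\mathcal{L}_q\,(\mathbf{x}_q^2-\mathbf{x}_q^1)\,ds=0$, whence $\mathbf{x}_q^1=\mathbf{x}_q^2$. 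Since every optimal solution produces such a VI solution, the optimal $\mathbf{x}_q^\star$ is unique (the multiplier need not be), completing the claim together with nontriviality.

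I expect the constraint-qualification step to be the main obstacle: it is the one place where nontriviality is indispensable, and one must check carefully that the \emph{reduced} dualization --- keeping $\mathcal{Y}_q$ as an explicit convex set and introducing a multiplier only for $I_q\leq0$ --- is legitimate and produces exactly the VI (\ref{eq:VI_ref}). The monotonicity step is then routine once one observes the two features the hypothesis is tailored to: the cross terms cancel, so only the $\mathbf{x}_q$-block governs the form, and positive-definiteness is required only on the \emph{bounded} multiplier range, which is precisely why Lemma \ref{Lemma_bounded_multipliers} must be established first.
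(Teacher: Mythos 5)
Your proof is correct, and its core coincides with the paper's: pass from optimality to the VI$(\mathcal{K}_q,\mathbf{F}_q)$ of (\ref{eq:VI_ref}) under a CQ, bound the multiplier to $[0,\lambda^{\max}]$ via Lemma \ref{Lemma_bounded_multipliers}, and then obtain uniqueness of the $\mathbf{x}_q$-component from the cancellation of the skew blocks $\pm\nabla_{\mathbf{x}_q}I_q$ in the Jacobian of $\mathbf{F}_q$ together with positive definiteness of $\nabla_{\mathbf{x}_q}^{2}\mathcal{L}_q$ along the segment joining the two candidate solutions, which stays in $\mathcal{Y}_q\times[0,\lambda^{\max}]$ by convexity (the paper uses the pointwise mean-value theorem at a single $\bar{\delta}$ where you use the integral form; this is immaterial).

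You deviate in two local sub-arguments, both legitimately. First, for nontriviality the paper proves a \emph{quantitative} statement (Lemma \ref{Lemma_lower_bound_on_the_power}): the optimal sum-power is bounded below by the explicit constant $\bigl(\min_k \wh{\sigma}_{q,k}^{2}\bigr)\exp\bigl(\eta_q^{\rm ref}(t)\bigr)$, obtained by comparing the optimal value with the value at a constructed reference point $(\wh{\tau}_q^{\rm ref},\mathbf{p}_q^{\rm ref},P_q^{\rm fa\,ref})$. Your soft argument---the objective equals $-\infty$ exactly on $\{\mathbf{p}_q=\mathbf{0}\}$ within the feasible set (the other log factors are bounded away from zero since $\wh{\tau}_q\le\wh{\tau}_q^{\max}$ with $\tau_q^{\max}<T_q$ and $P_q^{\rm fa}\le\beta_q\le 1/2$), so upper semicontinuity plus compactness of $\mathcal{X}_q$ yields a maximizer with finite value that cannot be trivial---suffices for the proposition as stated and is simpler; but be aware that the paper's explicit lower bound is not decorative: it is reused downstream to define the restricted sets $\wh{\mathcal{Y}}{}^{t}$ on which positive definiteness is checked in Corollary \ref{corollary_sf_cond_uniqueness_NE} and in the contraction analysis, so your version would not substitute for Lemma \ref{Lemma_lower_bound_on_the_power} elsewhere. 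Second, for the constraint qualification the paper cites Lemma \ref{Lemma_ACQ} (ACQ at \emph{every} feasible point, with the proof deferred to the companion paper), whereas you give a self-contained MFCQ direction: when $I_q=0$ is active, some $p_{q,k}>0$ because $I_q^{\max}>0$, and since $\partial I_q/\partial p_{q,k}=\wh{P}_{q,k}^{\rm miss}w_{q,k}>0$, decreasing the positive power entries strictly decreases $I_q$ while respecting the linearizations of the active polyhedral constraints of $\mathcal{Y}_q$; this is correct and arguably cleaner, since you only need the CQ at (necessarily nontrivial) optima rather than at all feasible points.
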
\vspace{-0.5cm}

\begin{proof}See Appendix \ref{app:Proof-of-Proposition_uniqueness_NE}.\end{proof}

Note that under conditions in the above proposition, the optimization
problems in (\ref{eq:game_G_t_2}) remain nonconvex (the constraint
set $\mathcal{X}_{q}$ is indeed nonconvex). To shed light on the
physical interpretation of the obtained result, we provide in Corollary~\ref{corollary_sf_cond_uniqueness_opt_sol}
below easier conditions to be checked (but more restrictive) under
which Proposition \ref{proposition_uniqueness_opt_sol} is true. To
state the corollary, we use as weights $w_{q,k}$'s involved in the
interference constraints (\ref{eq:individual_overal_interference_constraint})
and (\ref{eq:global_interference_constraints_2}) the cross-channels
between secondary and primary users, i.e., $w_{q,k}=G_{P,q}(k)$,
for all $q=1,\ldots,Q$ and $k=1,\ldots,Q$ (more general conditions
are given in Appendix \ref{app:Proof-of-Proposition_uniqueness_NE}).

\begin{corollary}\label{corollary_sf_cond_uniqueness_opt_sol} Proposition
\ref{proposition_uniqueness_opt_sol} holds if the following sufficient
condition is satisfied: 
\begin{equation}
\gamma_{q}^{(1)}\cdot{\displaystyle {\displaystyle {\max_{k=1,\ldots,N}}\left\{ \dfrac{{|G_{P,q}(k)|^{2}}}{I^{\,{\rm tot}}}\right\} }<1,}\label{eq:diagonal_dominance_SF_cond}
\end{equation}
where $\gamma_{q}^{(1)}$ is a positive constant that depends only
on system/sensing parameters and it is defined in (\ref{eq:omega_q})
(cf. Appendix \ref{sec:Proof-of-Corollary_existence}) \end{corollary}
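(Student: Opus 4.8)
The plan is to verify the single hypothesis of Proposition \ref{proposition_uniqueness_opt_sol}, namely the positive definiteness of the Hessian $\nabla_{\mathbf{x}_q}^2\mathcal{L}_q$ in (\ref{eq:L_q_2_matrix}) for every $\mathbf{x}_q\in\mathcal{Y}_q$, $\lambda_q\in[0,\lambda^{\max}]$ and $\pi_t\in\mathcal{S}_t$; condition (\ref{eq:diagonal_dominance_SF_cond}) will be shown to be sufficient for this. The first reduction I would make is to observe that $\nabla_{\mathbf{x}_q}^2 I(\mathbf{x}_q,\mathbf{x}_{-q})=\nabla_{\mathbf{x}_q}^2 I_q(\mathbf{x}_q)$: comparing (\ref{eq:map_interference}) with (\ref{eq:def_h_function}), the two violation functions agree, in the block of player $q$, up to a constant and up to terms independent of $\mathbf{x}_q$. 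Hence (\ref{eq:L_q_2_matrix}) becomes $-\nabla_{\mathbf{x}_q}^2\theta_q+(\lambda_q+\pi_t)\,\nabla_{\mathbf{x}_q}^2 I_q$, and the two multipliers enter only through the single nonnegative scalar $\lambda_q+\pi_t$, which Lemma \ref{Lemma_bounded_multipliers} (for $\lambda_q$) and the truncation level $t$ (for $\pi_t$) bound from above.

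Next I would exploit the additive separability of $\wh R_q$ in (\ref{eq:R_q_new}). Since the false-alarm rate is common to all carriers, $\wh R_q=\log(1-\wh\tau_q^2/(f_qT_q))+\log(1-P_q^{\,\text{fa}})+\log\big(\sum_k r_{q,k}(\mathbf{p})\big)$, so the three groups $\wh\tau_q$, $P_q^{\,\text{fa}}$ and $\mathbf{p}_q$ decouple in $-\nabla_{\mathbf{x}_q}^2\theta_q$ (the equi-sensing penalty only reinforces the $\wh\tau_q$ diagonal and produces no cross terms). Concavity of the two scalar logarithms makes the $\wh\tau_q$- and $P_q^{\,\text{fa}}$-diagonals strictly positive, while convexity of $\wh P_{q,k}^{\text{miss}}$ in $(\wh\tau_q,P_q^{\,\text{fa}})$, together with its linearity in $p_{q,k}$, ensures that $(\lambda_q+\pi_t)\nabla_{\mathbf{x}_q}^2 I_q$ adds only nonnegative contributions to those two diagonals and contributes nothing to the $\mathbf{p}_q$ block. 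Consequently all the harmful coupling is confined to the mixed blocks linking $\mathbf{p}_q$ to $(\wh\tau_q,P_q^{\,\text{fa}})$, whose entries are $(\lambda_q+\pi_t)\,w_{q,k}\,\partial\wh P_{q,k}^{\text{miss}}/\partial(\cdot)$ with $w_{q,k}=|G_{P,q}(k)|^2$.

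The core of the argument is then a diagonal-dominance estimate. I would first lower-bound the $\mathbf{p}_q$ block by discarding its positive-semidefinite rank-one correction (which only helps), leaving a strictly positive diagonal equal to the curvature of the individual rates $r_{q,k}$; this is essential, because the full log-rate block carries nonnegative off-diagonals and is not diagonally dominant on its own. On the compact set $\mathcal{Y}_q$ I would then bound this diagonal curvature from below and bound from above both the mixed partials $\partial\wh P_{q,k}^{\text{miss}}/\partial\wh\tau_q$ and $\partial\wh P_{q,k}^{\text{miss}}/\partial P_q^{\,\text{fa}}$ (which are uniformly controlled through the Q-function and its inverse on $P_q^{\,\text{fa}}\le\beta_q$) and the combined multiplier $\lambda_q+\pi_t$. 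Collecting every such system/sensing-dependent bound into the single constant $\gamma_q^{(1)}$ of (\ref{eq:omega_q}) and the interference level into $I^{\,{\rm tot}}$, the per-row dominance inequality for carrier $k$ reads exactly $\gamma_q^{(1)}\,|G_{P,q}(k)|^2/I^{\,{\rm tot}}<1$; taking the worst carrier gives (\ref{eq:diagonal_dominance_SF_cond}). Since a symmetric matrix with positive diagonal that is strictly diagonally dominant is positive definite, the hypothesis of Proposition \ref{proposition_uniqueness_opt_sol} is met.

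I expect the main obstacle to be precisely the internal structure of the $\mathbf{p}_q$ block: being positive definite but not diagonally dominant (it equals a positive diagonal minus the negative rank-one Hessian of a logarithm), a naive Gershgorin bound applied to the full matrix fails, and a Schur-complement reduction would instead force a constant quadratic in $|G_{P,q}(k)|^2$ rather than the linear one sought here. The remedy of lower-bounding by the rate curvature alone is what keeps the condition linear, and the delicate bookkeeping is to track all the uniform bounds over $\mathcal{Y}_q$ and over $\lambda_q+\pi_t$ so that they condense into the explicit $\gamma_q^{(1)}$; these routine but lengthy estimates are what I would relegate to the Appendix.
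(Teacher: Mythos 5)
Your proposal is correct and follows essentially the same route as the paper's Appendix~\ref{sec:Proof-of-Corollary_existence}: you reduce $\nabla_{\mathbf{x}_q}^2 I$ to $\nabla_{\mathbf{x}_q}^2 I_q$ with combined multiplier $\lambda_q+\pi_t$ bounded via Lemma~\ref{Lemma_bounded_multipliers} and the truncation, build a comparison minorant of the Hessian by discarding the PSD rank-one correction in the $\mathbf{p}_q$ block {[}the paper's bound (\ref{eq:lambda_min_Hrate}){]} and the nonnegative diagonal contributions of the interference Hessian, upper-bound the mixed $P^{\text{miss}}$-derivative couplings weighted by $\max_k|G_{P,q}(k)|^2$ exactly as in (\ref{eq:H_pq_up_and_S_p_q})--(\ref{eq:def_L_2_inf}), and conclude by row diagonal dominance, which is precisely how $\gamma_q^{(1)}$ in (\ref{eq:omega_q}) arises. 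Your remark that a naive entrywise (Gershgorin) treatment of the full log-rate block would fail, so that dropping the rank-one term is what keeps the condition linear in $|G_{P,q}(k)|^2$, is a sound and in fact slightly more explicit justification of the step the paper carries out via the eigenvalue bound in Step~1 of its proof.
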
\vspace{-0.5cm}\begin{proof}See
Appendix \ref{sec:Proof-of-Corollary_existence}.\end{proof}\vspace{-0.2cm}

The condition in (\ref{eq:diagonal_dominance_SF_cond}) has an interesting
physical interpretation: the nonconvex problem in (\ref{eq:game_G_t_2})
has a unique solution provided that the (normalized) cross-channels
between the secondary and the primary users are ``sufficiently''
small, meaning that there is not ``too much'' interference at the
primary receivers; see Sec. \ref{remark_cond} for more details on
the physical interpretation of the above conditions. 

Based on Proposition \ref{proposition_uniqueness_opt_sol} and Lemma
\ref{fixed-point_NE_Gt}, we can now establish the existence of a
NE of the game $\mathcal{G}_{t}(\mathcal{X},\,{\boldsymbol{{\theta}}})$
invoking the existence of a fixed-point of the single-valued mapping
$\mathcal{B}(\bullet)$ defined in (\ref{eq:best-response_map}).

\begin{proposition}\label{Proposition_existence of a NE of G_t}Given
$t>0$, suppose that each matrix $\nabla_{\mathbf{x}_{q}}^{2}\mathcal{L}_{q}\left((\mathbf{x}_{q},\lambda_{q}),\,\mathbf{x}_{-q},\pi_{t}\right)$
in (\ref{eq:L_q_2_matrix}) is positive definite for all $(\mathbf{x}_{q},\lambda_{q})\in{\mathcal{Y}}_{q}\times[0,\,\lambda^{\max}]$,
$\mathbf{x}_{-q}\in\mathcal{Y}_{q},$ and $\pi_{t}\in\mathcal{S}_{t}$.
Then, the game $\mathcal{G}_{t}(\mathcal{X},\,{\boldsymbol{{\theta}}})$
has a (nontrivial) NE.\end{proposition}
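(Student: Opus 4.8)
The plan is to realize a Nash equilibrium of $\mathcal{G}_t(\mathcal{X},\boldsymbol{\theta})$ as a fixed-point of the best-response map $\mathcal{B}(\bullet)$ in (\ref{eq:best-response_map}) and then to invoke the Brouwer fixed-point theorem. By Lemma \ref{fixed-point_NE_Gt}, once each player's problem admits a unique optimal solution, the NE of $\mathcal{G}_t(\mathcal{X},\boldsymbol{\theta})$ are \emph{exactly} the fixed-points of $\mathcal{B}(\bullet)$; hence it suffices to show that $\mathcal{B}(\bullet)$ is a single-valued, continuous self-map of a nonempty convex compact set. Nontriviality will then follow directly from Proposition \ref{proposition_uniqueness_opt_sol}.

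First I would check that $\mathcal{B}(\bullet)$ is well-defined and single-valued. The positive-definiteness hypothesis is precisely the assumption of Proposition \ref{proposition_uniqueness_opt_sol}, invoked for every $\mathbf{x}_{-q}\in\mathcal{Y}_{-q}$ and every $\pi_t\in\mathcal{S}_t$; it guarantees that each of the $Q$ nonconvex problems in (\ref{eq:game_G_t_2}) has a unique (nontrivial) optimizer $\mathbf{x}_q^{\star}(\mathbf{x}_{-q},\pi_t)\in\mathcal{X}_q$, so the first $Q$ components of $\mathcal{B}(\bullet)$ are well-defined. For the last component, the proximal term in (\ref{eq:price_cc_regularized}) makes the objective strongly concave in $\mu_t$ over the compact interval $\mathcal{S}_t$, so $\pi_t^{\star}(\mathbf{x},\pi_t)$ is its unique maximizer. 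Thus $\mathcal{B}:\mathcal{Y}\times\mathcal{S}_t\to\mathcal{X}\times\mathcal{S}_t\subseteq\mathcal{Y}\times\mathcal{S}_t$ is a single-valued self-map whose domain is the convex compact set already identified after (\ref{eq:best-response_map}).

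The key step, and the main obstacle, is the continuity of $\mathcal{B}(\bullet)$. The delicate component is the best-response $\mathbf{x}_q^{\star}(\mathbf{x}_{-q},\pi_t)$, since the maximization in (\ref{eq:optimal_solution_of_q_problem}) is over the nonconvex set $\mathcal{X}_q$. The crucial observation is that player $q$'s feasible set $\mathcal{X}_q$ does \emph{not} depend on the parameters $(\mathbf{x}_{-q},\pi_t)$: the only dependence is through the objective $\theta_q(\mathbf{x}_q,\mathbf{x}_{-q})-\pi_t\,I(\mathbf{x}_q,\mathbf{x}_{-q})$, which is jointly continuous in all its arguments. I would therefore apply Berge's Maximum Theorem with the \emph{constant} (and compact) feasible-set correspondence $\mathcal{X}_q$: it yields that the argmax correspondence is upper hemicontinuous with compact values, and since Proposition \ref{proposition_uniqueness_opt_sol} forces this value to be a singleton for every parameter, the argmax is in fact a continuous single-valued function of $(\mathbf{x}_{-q},\pi_t)$. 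The continuity of $\pi_t^{\star}(\mathbf{x},\pi_t)$ is easier: its unique maximizer is the Euclidean projection of $\pi_t+I(\mathbf{x})$ onto $\mathcal{S}_t$, and $I(\bullet)$ is continuous, so $\pi_t^{\star}(\bullet)$ is continuous. Collecting the components shows $\mathcal{B}(\bullet)$ is continuous.

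With $\mathcal{B}(\bullet)$ a continuous single-valued self-map of the nonempty convex compact set $\mathcal{Y}\times\mathcal{S}_t$, the Brouwer fixed-point theorem delivers a fixed-point $(\mathbf{x}^{\star},\pi_t^{\star})$. By Lemma \ref{fixed-point_NE_Gt} this tuple is a NE of $\mathcal{G}_t(\mathcal{X},\boldsymbol{\theta})$; since the image of $\mathcal{B}(\bullet)$ lies in $\mathcal{X}\times\mathcal{S}_t$, the component $\mathbf{x}^{\star}$ is feasible for the nonconvex problems, and each $\mathbf{x}_q^{\star}$ is nontrivial by Proposition \ref{proposition_uniqueness_opt_sol}, so the equilibrium is nontrivial. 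The one point genuinely needing care beyond routine bookkeeping is the continuity argument above: it is the \emph{uniqueness} of the best response, not mere existence, that upgrades Berge upper hemicontinuity to honest continuity and thereby makes Brouwer applicable.
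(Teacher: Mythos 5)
Your proof is correct and follows essentially the same route as the paper's: uniqueness of each player's nonconvex problem and of the proximally regularized price problem under the positive-definiteness hypothesis, continuity of the resulting single-valued map $\mathcal{B}$, Brouwer's theorem on the convex compact set $\mathcal{Y}\times\mathcal{S}_{t}$, and then Lemma \ref{fixed-point_NE_Gt} together with Proposition \ref{proposition_uniqueness_opt_sol} for the (nontrivial) NE. The only difference is cosmetic: where the paper cites a stability-of-optimal-solutions reference for the continuity step, you supply the standard Berge maximum-theorem argument (valid here precisely because $\mathcal{X}_{q}$ is a fixed compact set independent of $(\mathbf{x}_{-q},\pi_{t})$) and the explicit projection formula $\pi_{t}^{\star}(\mathbf{x},\pi_{t})=\Pi_{\mathcal{S}_{t}}(\pi_{t}+I(\mathbf{x}))$, which is an equally valid justification of the same step.
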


\begin{proof}Under the positive definiteness of each matrix $\nabla_{\mathbf{x}_{q}}^{2}\mathcal{L}_{q}\left((\mathbf{x}_{q},\lambda_{q}),\,\mathbf{x}_{-q},\pi_{t}\right)$,
the optimization problems (\ref{eq:optimal_solution_of_q_problem})
and (\ref{eq:price_cc_regularized}) have a unique optimal solutions
$\mathbf{x}_{q}^{\star}(\mathbf{x}_{-q},\,{\pi}_{t})$'s and ${\pi}_{t}^{\star}(\mathbf{x},\,{\pi}_{t})$,
respectively, for any given $\mathbf{x}\in\mathcal{Y}$ and $\pi_{t}\in S_{t}$.
Since these optimal solutions are unique, it is not difficult to show
that they are continuous functions of the parameters $(\mathbf{x},\,{\pi}_{t})$
(see, e.g., \cite{Zlobec_book-StableProgramming}), implying that
the single-valued map $\mathcal{B}$ in (\ref{eq:best-response_map})
is a continuous function on the convex and compact set ${\mathcal{Y}}\times\mathcal{S}_{t}$.
It follows from the Brouwer fixed-point theorem, that $\mathcal{B}$
has a fixed-point, which is a NE of the game $\mathcal{G}_{t}(\mathcal{X},\,{\boldsymbol{{\theta}}})$
(Lemma \ref{fixed-point_NE_Gt}). It follows from Proposition \ref{proposition_uniqueness_opt_sol}
that such a NE must be nontrivial. \end{proof}

\subsubsection*{Step 3: Existence and uniqueness of a NE of the game $\mathcal{G}(\mathcal{X},\,{\boldsymbol{{\theta}}})$ }

To pass from a NE of the price-truncated game $\mathcal{G}_{t}(\mathcal{X},\,{\boldsymbol{{\theta}}})$
to a NE of the original game $\mathcal{G}(\mathcal{X},\,{\boldsymbol{{\theta}}})$,
we argue that there exists a sufficiently large $t>0$ such that the
truncation constraint $\pi_{t}\leq t$ in $\mathcal{S}_{t}$ is not
binding at the optimal solution $\pi_{t}^{\star}$ of the price-truncated
optimization problem (\ref{eq:price_cc_regularized}), corresponding
to a NE of $\mathcal{G}_{t}(\mathcal{X},\,{\boldsymbol{{\theta}}})$.
This implies that a NE of $\mathcal{G}_{t}(\mathcal{X},\,{\boldsymbol{{\theta}}})$
is also a NE of $\mathcal{G}(\mathcal{X},\,{\boldsymbol{{\theta}}})$
and, as such, existence conditions given in Proposition \ref{Proposition_existence of a NE of G_t}
for the game $\mathcal{G}_{t}(\mathcal{X},\,{\boldsymbol{{\theta}}})$
apply also to $\mathcal{G}(\mathcal{X},\,{\boldsymbol{{\theta}}})$.
This is made formal in Theorem \ref{Theo_Existence-and-uniqueness_NE_G}
below, where we derive sufficient conditions for the existence and
uniqueness of a NE of $\mathcal{G}(\mathcal{X},\,{\boldsymbol{{\theta}}})$.

To introduce the theorem, we follow a similar approach as in Step
2: i) we first write the KKT conditions associated with the game $\mathcal{G}_{t}(\mathcal{X},\,{\boldsymbol{{\theta}}})$,
which under some CQ, are necessary conditions for a tuple $(\mathbf{x}^{\star},\pi_{t}^{\star})$
to be a NE of $\mathcal{G}_{t}(\mathcal{X},\,{\boldsymbol{{\theta}}})$
along with some multipliers associated with the local nonconvex constraints
$\{I_{q}(\mathbf{x}_{q})\leq0,\quad q=1,\ldots,Q\}$ and the truncation
in $\mathcal{S}_{t}$; and then ii) we rewrite this KKT system as
a proper VI problem, whose solution analysis leads to the desired
results (c.f. Theorem \ref{Theo_Existence-and-uniqueness_NE_G}). 

Under a suitable CQ, every NE $(\mathbf{x}^{\star},{\pi}_{t}^{\star})$
of $\mathcal{G}_{t}(\mathcal{X},\,{\boldsymbol{{\theta}}})$ will
satisfy the KKT conditions of the game, which are obtained by aggregating
the KKT conditions of players' optimization problems in (\ref{eq:game_G_t_2})
and (\ref{eq:price_CC_simplex}). Denoting by ${\lambda}_{q}^{\star}$
and $\eta_{t}^{\star}$ the multipliers associated with the nonconvex
constraint $I_{q}(\mathbf{x}_{q}^{\star})\leq0$ of player $q$ and
the price truncation $\pi_{t}^{\star}\leq t$ in $\mathcal{S}_{t}$,
respectively, and proceeding as in (\ref{eq:KKT_player_q_1})-(\ref{eq:VI_ref}),
the KKT conditions of $\mathcal{G}_{t}(\mathcal{X},\,{\boldsymbol{{\theta}}})$
that are necessarily satisfied by any NE $(\mathbf{x}^{\star},{\pi}_{t}^{\star})$
can be written as: 
\begin{equation}
\begin{array}{lc}
\mbox{(i)}: & \left(\begin{array}{c}
\mathbf{x}_{1}-\mathbf{x}_{1}^{\star}\\
\vdots\\
\mathbf{x}_{Q}-\mathbf{x}_{Q}^{\star}
\end{array}\right)^{T}\left(\begin{array}{c}
\nabla_{\mathbf{x}_{1}}\mathcal{L}_{1}{\displaystyle {\displaystyle \left((\mathbf{x}_{1}^{\star},\lambda_{1}^{\star}),\,\mathbf{x}_{-1}^{\star},\pi_{t}^{\star}\right)}}\\
\vdots\\
\nabla_{\mathbf{x}_{Q}}\mathcal{L}_{Q}{\displaystyle {\displaystyle \left((\mathbf{x}_{Q}^{\star},\lambda_{Q}^{\star}),\,\mathbf{x}_{-Q}^{\star},\pi_{t}^{\star}\right)}}\smallskip
\end{array}\right)\geq0,\quad\forall\mathbf{x}_{q}\in\mathcal{Y}_{q}\quad\mbox{and}\quad q=1,\ldots,Q,\bigskip\\
\mbox{(ii)}: & \left(\begin{array}{c}
(\lambda_{1}-{\lambda}_{1}^{\star})\\
\vdots\\
(\lambda_{Q}-{\lambda}_{Q}^{\star})
\end{array}\right)^{T}\left(\begin{array}{c}
-\, I_{1}(\mathbf{x}_{1}^{\star})\\
\vdots\\
-\, I_{Q}(\mathbf{x}_{Q}^{\star})\smallskip
\end{array}\right)\geq0,\quad\forall\lambda_{q}\geq0\quad\mbox{and}\quad q=1,\ldots,Q\bigskip\\
\mbox{(iii)}: & 0\leq\,{\pi}_{t}^{\star}\,\perp\,-I(\mathbf{x}^{\star})+\eta_{t}^{\star}\,\geq0\quad\mbox{and}\quad0\leq\eta_{t}^{\star}\,\perp\, t-{\pi}_{t}^{\star}\geq0.\qquad\qquad\qquad
\end{array}\label{eq:KKT_game_G_t_1}
\end{equation}
Observing that the complementarity conditions in (iii) of (\ref{eq:KKT_game_G_t_1})
are equivalent to the VI problem in the $\pi_{t}$ variable: 
\[
(\pi_{t}-\pi_{t}^{\star})\cdot(-I(\mathbf{x}^{\star}))\geq0,\quad\forall\pi_{t}\in\mathcal{S}_{t},
\]
the KKT system (\ref{eq:KKT_game_G_t_1}) can be equivalently rewritten
as 

\begin{equation}
\begin{array}{c}
\left(\begin{array}{c}
\mathbf{x}-\mathbf{x}^{\star}\medskip\\
\boldsymbol{{\lambda}-{\lambda}}^{\star}\medskip\\
\pi_{t}-\pi_{t}^{\star}
\end{array}\right)^{T}\underset{\triangleq\boldsymbol{{\Psi}}(\mathbf{x}^{\star},\,\boldsymbol{{\lambda}}^{\star},\,\pi_{t}^{\star})}{\underbrace{\left(\begin{array}{c}
\left(\nabla_{\mathbf{x}_{q}}\mathcal{L}_{q}{\displaystyle {\displaystyle \left((\mathbf{x}_{q}^{\star},\lambda_{q}^{\star}),\,\mathbf{x}_{-q}^{\star},\pi_{t}^{\star}\right)}}\right)_{q=1}^{Q}\medskip\\
\left(-I_{q}{\displaystyle (\mathbf{x}_{q}^{\star})}\right)_{q=1}^{Q}\medskip\\
-I(\mathbf{x}^{\star})\smallskip
\end{array}\right)}}\geq0,\quad\forall(\mathbf{x},\,\boldsymbol{{\lambda}},\,\pi_{t})\in\underset{\triangleq\mathcal{Z}_{t}}{\underbrace{\mathcal{Y}\times\mathbb{R}_{+}^{Q}\times\mathcal{S}_{t}}},\end{array}\label{eq:KKT_Game_Gt}
\end{equation}
which represents a VI problem in the tuple $(\mathbf{x},\,\boldsymbol{{\lambda}},\,\pi_{t})$,
i.e., VI$(\mathcal{Z}_{t},\,\boldsymbol{{\Psi}})$, with $\mathbf{x}=(\mathbf{x}_{q})_{q=1}^{Q}$
and $\boldsymbol{{\lambda}}\triangleq(\boldsymbol{\lambda}_{q})_{q=1}^{Q}$.

Based on the VI formulation (\ref{eq:KKT_Game_Gt}), in Appendix \ref{proof_Theo_Existence-and-uniqueness_NE_G}
we prove that the following two properties are satisfied by any solutions
$(\mathbf{x}^{\star},\,\boldsymbol{{\lambda}}^{\star},\,\pi_{t}^{\star})$
of VI$(\mathcal{Z}_{t},\,\boldsymbol{{\Psi}})$ and thus \emph{by
any NE} of $\mathcal{G}_{t}(\mathcal{X},\,{\boldsymbol{{\theta}}})$
(under some suitable CQ): i) at any $(\mathbf{x}^{\star},\,\boldsymbol{{\lambda}}^{\star},\,\pi_{t}^{\star})$,
$\pi_{t}^{\star}$ is bounded from above by $\pi_{t}^{\star}\leq\lambda^{\max}$,
with $\lambda^{\max}$ defined in (\ref{eq:t_star_def}); and ii)
the $\mathbf{x}$-component of $(\mathbf{x}^{\star},\,\boldsymbol{{\lambda}}^{\star},\,\pi_{t}^{\star})$
is unique if the Jacobian matrix of $\left(\nabla_{\mathbf{x}_{q}}\mathcal{L}_{q}{\displaystyle {\displaystyle \left((\mathbf{x}_{q},\lambda_{q}),\,\mathbf{x}_{-q},\pi_{t}\right)}}\right)_{q=1}^{Q}$
with respect to $\mathbf{x}$, denoted by $\mathbf{A}(\mathbf{x},\,\boldsymbol{{\lambda}},\,\pi_{t})$,
is positive definite on $\mathcal{Y}\times[0,\,\lambda^{\max}]^{Q}\times\mathcal{S}_{t}$,
with $\mathbf{A}(\mathbf{x},\,\boldsymbol{{\lambda}},\,\pi_{t})$
given by:
\begin{equation}
\mathbf{A}(\mathbf{x},\,\boldsymbol{{\lambda}},\,{\pi}_{t})\triangleq\mbox{J}_{\mathbf{x}}\left(\begin{array}{l}
\nabla_{\mathbf{x}_{1}}\mathcal{L}_{1}{\displaystyle {\displaystyle \left((\mathbf{x}_{1},\lambda_{1}),\,\mathbf{x}_{-1},\pi_{t}\right)}}\\
\vdots\\
\nabla_{\mathbf{x}_{Q}}\mathcal{L}_{Q}{\displaystyle {\displaystyle \left((\mathbf{x}_{Q},\lambda_{Q}),\,\mathbf{x}_{-Q},\pi_{t}\right)}}
\end{array}\right).\label{eq:def_A_matrix}
\end{equation}
 Building on the established connection between the NE of $\mathcal{G}_{t}(\mathcal{X},\,{\boldsymbol{{\theta}}})$
and the solutions of the VI$(\mathcal{Z}_{t},\,\boldsymbol{{\Psi}})$
and using properties i) and ii) above, we can finally obtain the desired
existence and uniqueness result: (a) It follows from property i) that
since the truncated game $\mathcal{G}_{t}(\mathcal{X},\,{\boldsymbol{{\theta}}})$
has a NE for $t>\lambda^{\max}$ (which is guaranteed under conditions
in Proposition \ref{Proposition_existence of a NE of G_t}), the original
game $\mathcal{G}(\mathcal{X},\,{\boldsymbol{{\theta}}})$ must have
a NE as well; and (b) According to property ii), if there exists a
$t>\lambda^{\max}$ such that $\mathbf{A}(\mathbf{x},\,\boldsymbol{{\lambda}},\,\pi_{t})$
is positive definite for all $(\mathbf{x},\,\boldsymbol{{\lambda}},\,\pi_{t})\in\mathcal{Y}\times\mathbb{R}_{+}^{Q}\times\mathcal{S}_{t}$,
the $\mathbf{x}$-component of the solution of the VI$(\mathcal{Z}_{t},\,\boldsymbol{{\Psi}})$$-$and
thus of the NE of $\mathcal{G}(\mathcal{X},\,{\boldsymbol{{\theta}}})$$-$is
unique. These results are collected in Theorem \ref{Theo_Existence-and-uniqueness_NE_G}
below and formally proved in Appendix \ref{proof_Theo_Existence-and-uniqueness_NE_G}. 

\begin{theorem}\label{Theo_Existence-and-uniqueness_NE_G}Given the
game $\mathcal{G}(\mathcal{X},\,{\boldsymbol{{\theta}}})$ and $\lambda^{\max}$
defined in (\ref{eq:t_star_def}), the following hold:
\begin{description}
\item [{(a)}] Suppose that there exists a $t>\lambda^{\max}$ such that
\emph{each} matrix $\nabla_{\mathbf{x}_{q}}^{2}\mathcal{L}_{q}\left((\mathbf{x}_{q},\lambda_{q}),\,\mathbf{x}_{-q},\pi_{t}\right)$
in (\ref{eq:L_q_2_matrix}) is positive definite for all $(\mathbf{x}_{q},\lambda_{q})\in{\mathcal{Y}}_{q}\times[0,\,\lambda^{\max}]$,
$\mathbf{x}_{-q}\in\mathcal{Y}_{q},$ and $\pi_{t}\in\mathcal{S}_{t}$.
Then, every NE $(\mathbf{x}^{\star},{\pi}_{t}^{\star})$ of $\mathcal{G}_{t}(\mathcal{X},\,{\boldsymbol{{\theta}}})$
is a NE of $\mathcal{G}(\mathcal{X},\,{\boldsymbol{{\theta}}})$;
therefore $\mathcal{G}(\mathcal{X},\,{\boldsymbol{{\theta}}})$ has
a NE; 
\item [{(b)}] If the condition in (a) is strengthened by the following:
the matrix $\mathbf{A}(\mathbf{x},\,\boldsymbol{{\lambda}},\,{\pi}_{t})$
in (\ref{eq:def_A_matrix}) is positive definite for all $\mathbf{x}\in\mathcal{Y}$,
$\boldsymbol{{\lambda}}\in[0,\,\lambda^{\max}]^{Q}$, and $\pi_{t}\in\mathcal{S}_{t}$,
then the $\mathbf{x}$-component of the NE of the game $\mathcal{G}(\mathcal{X},\,{\boldsymbol{{\theta}}})$
is unique.
\end{description}
\end{theorem}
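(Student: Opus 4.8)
The plan is to read everything off the VI reformulation VI$(\mathcal{Z}_t,\boldsymbol{\Psi})$ in (\ref{eq:KKT_Game_Gt}) of the KKT system of the truncated game, and to reduce the theorem to the two properties announced just before its statement. For part (a) I would first apply Proposition \ref{Proposition_existence of a NE of G_t}: under positive definiteness of each $\nabla_{\mathbf{x}_q}^{2}\mathcal{L}_q$ the truncated game $\mathcal{G}_{t}(\mathcal{X},\boldsymbol{\theta})$ has a nontrivial NE $(\mathbf{x}^{\star},\pi_t^{\star})$. Because this NE is nontrivial, the Abadie CQ verified in Proposition \ref{proposition_uniqueness_opt_sol} holds at each $\mathbf{x}_q^{\star}$, so $(\mathbf{x}^{\star},\pi_t^{\star})$, augmented by multipliers $(\boldsymbol{\lambda}^{\star},\eta_t^{\star})$, solves the KKT system (\ref{eq:KKT_game_G_t_1}), equivalently VI$(\mathcal{Z}_t,\boldsymbol{\Psi})$. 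The decisive ingredient is the a-priori price bound $\pi_t^{\star}\le\lambda^{\max}$ (Property i). Since I choose $t>\lambda^{\max}$, the truncation $\pi_t^{\star}\le t$ is strictly slack, hence the complementarity on $\eta_t^{\star}$ in (iii) of (\ref{eq:KKT_game_G_t_1}) forces $\eta_t^{\star}=0$ and the first complementarity in (iii) reduces to $0\le\pi_t^{\star}\perp-I(\mathbf{x}^{\star})\ge0$, which is exactly the price equilibrium (\ref{eq:side constraint-1}); blocks (i)--(ii) of (\ref{eq:KKT_game_G_t_1}) are the players' optimality (\ref{eq:player q opt}). Thus $(\mathbf{x}^{\star},\pi_t^{\star})$ is a NE of $\mathcal{G}(\mathcal{X},\boldsymbol{\theta})$ and existence follows.

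To prove the price bound I would mimic the multiplier estimate of Lemma \ref{Lemma_bounded_multipliers}. Since the $\mathbf{x}_q$-gradient of the global violation $I$ coincides with that of the local violation $I_q$ (both have the same $\mathbf{x}_q$-dependent part $\sum_{k}\wh{P}_{q,k}^{\mathrm{miss}}w_{q,k}p_{q,k}$, differing only by an additive constant), the gradient $\nabla_{\mathbf{x}_q}\mathcal{L}_q$ in $\mathbf{F}_q$ of (\ref{eq:VI_ref}) depends on the dual variables only through the sum $\lambda_q^{\star}+\pi_t^{\star}$. Inspecting the $p_{q,k}$-component of this stationarity at the (necessarily nontrivial) solution, the bounded marginal throughput gain must balance the marginal interference penalty $(\lambda_q^{\star}+\pi_t^{\star})\,w_{q,k}\,\wh{P}_{q,k}^{\mathrm{miss}}$; as the throughput gradient is bounded below away from zero in terms of $p_{q,k}^{\max}$ and $\sigma_{q,k}^{2}$ whenever the power is positive, this caps $\lambda_q^{\star}+\pi_t^{\star}$ and, using $\lambda_q^{\star}\ge0$, yields $\pi_t^{\star}\le\lambda^{\max}$ with $\lambda^{\max}$ as in (\ref{eq:t_star_def}).

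For part (b) I would establish Property ii --- uniqueness of the $\mathbf{x}$-component of any solution of VI$(\mathcal{Z}_t,\boldsymbol{\Psi})$ --- through a skew-symmetric cancellation in the Jacobian of $\boldsymbol{\Psi}$. The $\mathbf{x}$-block of $\boldsymbol{\Psi}$ contains $+(\lambda_q+\pi_t)\nabla_{\mathbf{x}_q}I_q$, while the $\boldsymbol{\lambda}$- and $\pi_t$-blocks are $-I_q(\mathbf{x}_q)$ and $-I(\mathbf{x})$, whose $\mathbf{x}$-Jacobians are $-\nabla_{\mathbf{x}_q}I_q^{T}$. Hence in the symmetric part $\tfrac12(\mathrm{J}\boldsymbol{\Psi}+\mathrm{J}\boldsymbol{\Psi}^{T})$ the off-diagonal blocks coupling $\mathbf{x}$ with $(\boldsymbol{\lambda},\pi_t)$ cancel exactly, leaving the block-diagonal matrix $\mathrm{diag}\!\big(\tfrac12(\mathbf{A}+\mathbf{A}^{T}),\mathbf{0},\mathbf{0}\big)$. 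Under the hypothesis of (b), $\mathbf{A}(\mathbf{x},\boldsymbol{\lambda},\pi_t)$ in (\ref{eq:def_A_matrix}) is positive definite --- uniformly, by compactness --- on $\mathcal{Y}\times[0,\lambda^{\max}]^{Q}\times\mathcal{S}_t$, so $\boldsymbol{\Psi}$ is monotone there and strongly monotone in the $\mathbf{x}$-direction. Given two solutions $\mathbf{z}^{1},\mathbf{z}^{2}$, summing their VI inequalities gives $(\mathbf{z}^{2}-\mathbf{z}^{1})^{T}(\boldsymbol{\Psi}(\mathbf{z}^{2})-\boldsymbol{\Psi}(\mathbf{z}^{1}))\le0$, whereas the integral (mean-value) representation of the same quantity along the feasible convex segment, combined with the structure of the symmetric Jacobian, lower-bounds it by $m\,\|\mathbf{x}^{2}-\mathbf{x}^{1}\|^{2}$ for some $m>0$; Lemma \ref{Lemma_bounded_multipliers} and Property i guarantee that both endpoints, and hence the whole segment, remain in the region where $\mathbf{A}$ is positive definite. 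Therefore $\mathbf{x}^{1}=\mathbf{x}^{2}$, and since, by the same price estimate of Property i, every NE of $\mathcal{G}(\mathcal{X},\boldsymbol{\theta})$ has price $\le\lambda^{\max}<t$ and is therefore a NE of $\mathcal{G}_{t}(\mathcal{X},\boldsymbol{\theta})$ --- equivalently a solution of VI$(\mathcal{Z}_t,\boldsymbol{\Psi})$ --- the $\mathbf{x}$-component of the NE of $\mathcal{G}(\mathcal{X},\boldsymbol{\theta})$ is unique.

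The two genuinely delicate points, which I expect to be the main obstacles, are precisely Properties i and ii. Property i is subtle because the price is not normalizable --- the players' problems in (\ref{eq:player q transformed 1}) lack homogeneity --- so no compactness shortcut is available and the bound must be extracted quantitatively from the stationarity conditions, as above. Property ii is subtle because the nonconvexity of $I_q$ would defeat any naive strict-monotonicity argument for $\boldsymbol{\Psi}$; what rescues it is the exact skew-cancellation between the Lagrangian terms $+(\lambda_q+\pi_t)\nabla_{\mathbf{x}_q}I_q$ and the complementarity blocks $-I_q,-I$, which concentrates all the required definiteness into the single matrix $\mathbf{A}$ and thereby makes the positive-definiteness hypothesis of (b) the natural and sufficient condition.
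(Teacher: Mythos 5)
Your proposal is correct and follows essentially the same route as the paper: part (a) combines Proposition \ref{Proposition_existence of a NE of G_t} with the ACQ (Lemma \ref{Lemma_ACQ}), the KKT/VI reformulation (\ref{eq:KKT_game_G_t_1})--(\ref{eq:KKT_Game_Gt}), and the price bound $\pi_t^{\star}\leq\lambda^{\max}$ (the paper's Lemma \ref{Lemma_bounded_multipliers_Game_t}, proved exactly by the multiplier estimate of Lemma \ref{Lemma_bounded_multipliers} that you reconstruct, exploiting $\nabla_{\mathbf{x}_q}I=\nabla_{\mathbf{x}_q}I_q$) to make the truncation slack and recover (\ref{eq:side constraint-1}), while part (b) is the same two-solution VI argument via the mean-value theorem and the skew-cancellation of the $(\mathbf{x},\boldsymbol{\lambda},\pi_t)$ coupling blocks that the paper uses in the proof of Proposition \ref{proposition_uniqueness_opt_sol} and declares its (omitted) proof of (b) to mirror. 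You in fact supply details the paper leaves implicit, namely the proof of the price bound and the reduction of any NE of $\mathcal{G}(\mathcal{X},\boldsymbol{\theta})$ to a solution of VI$(\mathcal{Z}_t,\boldsymbol{\Psi})$, both consistent with the paper's Lemma \ref{Lemma_G_VI_equivalence}.
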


\begin{proof}See Appendix \ref{proof_Theo_Existence-and-uniqueness_NE_G}.\end{proof}

Sufficient conditions for the matrix $\mathbf{A}(\mathbf{x},\,\boldsymbol{{\lambda}},\,{\pi}_{t})$
to be positive definite are given in the following. 

\begin{corollary}\label{corollary_sf_cond_uniqueness_NE} Statement
(b) \emph{{[}}and thus also (a)\emph{{]}} of Theorem \ref{Theo_Existence-and-uniqueness_NE_G}
true if the following sufficient conditions are satisfied: for all
$q=1,\ldots,Q$,
\begin{equation}
\gamma_{q}^{(1)}\cdot{\displaystyle {\displaystyle {\max_{k=1,\ldots,N}}\left\{ \dfrac{{|{G}_{Pq}(k)|^{2}}}{I^{\,{\rm tot}}}\right\} }+\gamma_{q}^{(2)}\cdot\sum_{r\neq q}\left({\max_{k=1,\ldots,N}}\left\{ {\displaystyle \frac{|{H}_{qr}(k)|^{2}}{{\sigma}_{q,k}^{2}}}\right\} +{\max_{k=1,\ldots,N}}\left\{ {\displaystyle {\frac{|{H}_{rq}(k)|^{2}}{{\sigma}_{r,k}^{2}}}}\right\} \right)<1,}\label{eq:diagonal_dominance_A_pd}
\end{equation}
where $\gamma_{q}^{(1)}$ and $\gamma_{q}^{(2)}$ are positive constants
depending only on system/sensing parameters and are defined in (\ref{eq:omega_q})
and (\ref{eq:gamma_2}), respectively (cf. Appendix \ref{sec:Proof-of-corollary_sf_cond_uniqueness_NE}). 

\end{corollary}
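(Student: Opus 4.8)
The plan is to prove the corollary by exhibiting condition (\ref{eq:diagonal_dominance_A_pd}) as a \emph{diagonal-dominance} condition that forces the (nonsymmetric) Jacobian $\mathbf{A}(\mathbf{x},\boldsymbol{\lambda},\pi_t)$ in (\ref{eq:def_A_matrix}) to have a positive definite symmetric part, uniformly over $\mathcal{Y}\times[0,\lambda^{\max}]^{Q}\times\mathcal{S}_t$; by Theorem~\ref{Theo_Existence-and-uniqueness_NE_G}(b) this delivers uniqueness of the $\mathbf{x}$-component of the NE. I would first expose the $Q\times Q$ block structure of $\mathbf{A}$: its $(q,q)$ block is the Hessian $\nabla^2_{\mathbf{x}_q}\mathcal{L}_q$ in (\ref{eq:L_q_2_matrix}), while its $(q,r)$ block for $r\neq q$ is the mixed derivative $\nabla_{\mathbf{x}_r}\nabla_{\mathbf{x}_q}\mathcal{L}_q$. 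The key structural observation is that $\nabla_{\mathbf{x}_q} I$ depends only on player $q$'s own tuple $\mathbf{x}_q$ (indeed $\nabla_{\mathbf{x}_q}I=\nabla_{\mathbf{x}_q}I_q$, since $\widehat P^{\text{miss}}_{q,k}$ and $p_{q,k}$ are $q$'s variables); hence the price term $\pi_t I(\mathbf{x})$ contributes \emph{only} to the diagonal blocks and does not couple players off-diagonally. The sole sources of off-diagonal coupling are therefore the rate functions $r_{q,k}(\mathbf{p})$ inside $\widehat R_q$ and the equi-sensing penalty in $\theta_q$.

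Next I would estimate the blocks. For the diagonal blocks I would reuse the analysis behind Corollary~\ref{corollary_sf_cond_uniqueness_opt_sol}: the same estimates that make $\nabla^2_{\mathbf{x}_q}\mathcal{L}_q$ positive definite yield a quantitative lower bound $\lambda_{\min}(\tfrac12(\mathbf{A}_{qq}+\mathbf{A}_{qq}^{T}))\ge m_q\,(1-\gamma_q^{(1)}\max_k |G_{P,q}(k)|^2/I^{\mathrm{tot}})$, where the positive constant $m_q$ collects the strong concavity of $-\widehat R_q$ in $\mathbf{x}_q$ and $\gamma_q^{(1)}$ absorbs the worst-case negative curvature contributed by $\lambda_q\nabla^2 I_q+\pi_t\nabla^2 I$ over $\lambda_q\in[0,\lambda^{\max}]$ and $\pi_t\le t$. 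For the off-diagonal blocks I would upper bound the spectral norms: the power–power coupling comes from $\partial^2 r_{q,k}/\partial p_{q,k}\,\partial p_{r,k}$, whose magnitude is controlled by $|H_{qr}(k)|^2/\sigma_{q,k}^2$ over the box $\mathbf{0}\le\mathbf{p}_q\le\mathbf{p}_q^{\max}$, while the equi-sensing term adds only the bounded constant curvature $-c/(Q\sqrt{f_q f_r})$. All bounds are taken uniformly over the compact set, which is precisely where the maximal powers $p_{q,k}^{\max}$, the noise floors $\sigma_{q,k}^2$, $I^{\mathrm{tot}}$, and the multiplier bound $\lambda^{\max}$ of (\ref{eq:t_star_def}) enter the definitions of $\gamma_q^{(1)}$ and $\gamma_q^{(2)}$ (cf.\ (\ref{eq:omega_q}) and (\ref{eq:gamma_2})).

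With these estimates I would pass to a scalar comparison matrix. Symmetrizing and using the block-majorization bound $\mathbf{z}^{T}\mathbf{A}\mathbf{z}\ge \mathbf{u}^{T}\boldsymbol{\Upsilon}\,\mathbf{u}$ with $u_q=\|\mathbf{z}_q\|$, it suffices that the $Q\times Q$ symmetric matrix $\boldsymbol{\Upsilon}$, with $\Upsilon_{qq}$ equal to the diagonal lower bound and $\Upsilon_{qr}=-\tfrac12(\|\mathbf{A}_{qr}\|+\|\mathbf{A}_{rq}\|)$, be positive definite. Dividing the $q$-th row by $m_q$ and recognizing that $\|\mathbf{A}_{qr}\|$ and $\|\mathbf{A}_{rq}\|$ generate, respectively, the $|H_{qr}(k)|^2/\sigma_{q,k}^2$ and $|H_{rq}(k)|^2/\sigma_{r,k}^2$ terms, I would identify the strict row diagonal dominance of $\boldsymbol{\Upsilon}$ as exactly (\ref{eq:diagonal_dominance_A_pd}). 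A symmetric, strictly diagonally dominant matrix with positive diagonal is positive definite (equivalently, a $Z$-matrix that is also a $P$-matrix, in the language of the paper), so $\mathbf{A}\succ0$ on the required set and Theorem~\ref{Theo_Existence-and-uniqueness_NE_G}(b) applies. Since the extra $\gamma_q^{(2)}$ term is nonnegative, (\ref{eq:diagonal_dominance_A_pd}) also implies the weaker (\ref{eq:diagonal_dominance_SF_cond}) of Corollary~\ref{corollary_sf_cond_uniqueness_opt_sol}, so the hypotheses of Theorem~\ref{Theo_Existence-and-uniqueness_NE_G}(a) hold automatically and existence is secured as well.

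I expect the main obstacle to be the explicit, uniform curvature estimates feeding $\gamma_q^{(1)}$ and $\gamma_q^{(2)}$. Bounding the second derivatives of the missed-detection contribution $\widehat P^{\text{miss}}_{q,k}(P_q^{\,\text{fa}},\widehat\tau_q)\,w_{q,k}\,p_{q,k}$ is delicate, because $\widehat P^{\text{miss}}_{q,k}$ is built from the $Q$-function and its inverse (cf.\ (\ref{eq:P_miss_new})), so its Hessian cannot be controlled by elementary constants without exploiting the box constraints $P_q^{\,\text{fa}}\le\beta_q\le 1/2$ and $\widehat\tau_q^{\,\min}\le\widehat\tau_q\le\widehat\tau_q^{\,\max}$; combined with the worst-case $\lambda^{\max}$ and price $t$, these produce the negative-curvature budget that $\gamma_q^{(1)}$ must dominate. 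The remaining work, namely bounding the log-rate Hessians and assembling the constants so that the single inequality (\ref{eq:diagonal_dominance_A_pd}) is precisely the diagonal-dominance condition, is routine but calculation-heavy, and I would relegate it to Appendix~\ref{sec:Proof-of-corollary_sf_cond_uniqueness_NE}.
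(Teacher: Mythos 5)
Your overall architecture is the paper's: expose the block structure of $\mathbf{A}$ (noting, correctly, that the price term couples players only through the diagonal blocks since $\nabla_{\mathbf{x}_q}I=\nabla_{\mathbf{x}_q}I_q$), reuse the Corollary \ref{corollary_sf_cond_uniqueness_opt_sol} machinery for the diagonal blocks, bound the off-diagonal power--power cross-Hessians, pass to a $Q\times Q$ scalar comparison matrix, and read off (\ref{eq:diagonal_dominance_A_pd}) as strict row diagonal dominance. However, there is a genuine gap in the uniformity claim for the off-diagonal estimates. You assert the coupling $\nabla^2_{\mathbf{p}_q\mathbf{p}_r}\theta_q$ is bounded ``over the box $\mathbf{0}\le\mathbf{p}_q\le\mathbf{p}_q^{\max}$'' in terms of $|H_{qr}(k)|^2/\sigma_{q,k}^2$. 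This is false on the full feasible set: the payoff is $\log r_q(\mathbf{p})$ with $r_q(\mathbf{p})=\sum_k r_{q,k}(\mathbf{p})$, so $\nabla^2_{\mathbf{p}_q\mathbf{p}_r}(-\log r_q)$ carries factors $1/r_q(\mathbf{p})$ and $1/r_q(\mathbf{p})^2$ that blow up as $\mathbf{p}_q\to\mathbf{0}$, a point that lies in $\mathcal{P}_q$. No finite constant $\gamma_q^{(2)}$ exists if the supremum is taken over all of $\mathcal{Y}$. The paper's proof needs (and uses) Lemma \ref{Lemma_lower_bound_on_the_power}: every best response satisfies $\sum_k p_{q,k}^{\star}\geq(\min_k\wh{\sigma}_{q,k}^{2})\exp(\eta_q^{\text{ref}}(t))>0$, so positive definiteness need only be checked on the restricted set $\wh{\mathcal{Y}}^t$ of (\ref{eq:set_P_q_t_a})--(\ref{eq:set_P_q_t_b}), where $r_q\geq r_q^{\text{low}}(t)>0$ {[}cf. (\ref{eq:lower_bound_r_q_2}){]} and the finite bound (\ref{eq:upper_spectrum_nabla_f_q_p_q_p_r}) holds; it is exactly $r_q^{\text{low}}(t)$ that enters $\zeta(t)$ in (\ref{eq:zita_def}) and hence $\gamma_q^{(2)}$ in (\ref{eq:gamma_2}). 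You also invert which estimate is delicate: the $\mathcal{Q}$-function derivatives of $\wh{P}^{\text{miss}}_{q,k}$ \emph{are} elementary to bound uniformly, since $(\wh{\tau}_q,P_q^{\text{fa}})$ live in compact boxes {[}this is precisely (\ref{eq:der_Pmiss_wrt_tau})--(\ref{eq:sec_der_Pmiss_wrt_tau_pfa}){]}; the obstruction is the log-rate part you labeled routine.

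A secondary, repairable discrepancy concerns the equi-sensing penalty. You budget its coupling $-c(1-1/Q)(1/Q)/\sqrt{f_qf_r}$ into the off-diagonal entries of $\boldsymbol{\Upsilon}$, but then your dominance condition would be $c$-dependent and, since the penalty matrix $\mathbf{D}_f^{-1}(\mathbf{I}_Q-\mathbf{1}\mathbf{1}^T/Q)\mathbf{D}_f^{-1}$ is only \emph{weakly} diagonally dominant (it is positive semidefinite with a zero eigenvalue), it cannot be absorbed into a strict-dominance requirement for large $c$ without exact cancellation of its row sums. The paper avoids this by the split (\ref{eq:matrix_A_eq_2}): the penalty is peeled off as a PSD term and discarded, so that (\ref{eq:diagonal_dominance_A_pd}) is independent of $c$, which is essential because the equi-sensing formulation relies on taking $c$ large.
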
\begin{proof}See Appendix \ref{sec:Proof-of-corollary_sf_cond_uniqueness_NE}.\end{proof}

\subsection{Discussion on the existence/uniqueness conditions \label{remark_cond}}

Corollary \ref{corollary_sf_cond_uniqueness_opt_sol} and Corollary
\ref{corollary_sf_cond_uniqueness_NE} suggest an intuitive physical
interpretation of the equilibrium existence/uniqueness conditions:
existence of an equilibrium and uniqueness of the $\mathbf{x}$-component
are ensured if the MUI in the network is sufficiently small (compared
to the background noise). More specifically, existence results in
(\ref{eq:diagonal_dominance_SF_cond}) impose a limit (only) on the
maximum interference that the the SUs are allowed to generate at the
primary receivers, measured by ${\displaystyle {\max_{k=1,\ldots,N}}\left\{ {|{G}_{Pq}(k)|^{2}}/I^{\,{\rm tot}}\right\} }$.
Uniqueness conditions in (\ref{eq:diagonal_dominance_A_pd}) impose
instead a limit on the maximum MUI experienced at \emph{both} primary
and secondary receivers. This is clear looking at the LHS of (\ref{eq:diagonal_dominance_A_pd}):
the first term on the LHS, ${\displaystyle {\max_{k=1,\ldots,N}}\left\{ {|{G}_{Pq}(k)|^{2}}/I^{\,{\rm tot}}\right\} },$
coincides with that of (\ref{eq:diagonal_dominance_SF_cond}), imposing
thus a limit on the MUI at the PU, whereas the second term, $\sum_{r\neq q}{\underset{k=1,\ldots,N}{\max}}\left\{ |{H}_{qr}(k)|^{2}/{\sigma}_{q,k}^{2}\right\} +\sum_{r\neq q}{\underset{k=1,\ldots,N}{\max}}\left\{ |{H}_{rq}(k)|^{2}/{\sigma}_{r,k}^{2}\right\} $,
limits the overall MUI in the secondary network; indeed, the quantity
$\sum_{r\neq q}$ ${\underset{k=1,\ldots,N}{\max}}\left\{ |{H}_{rq}(k)|^{2}/{\sigma}_{r,k}^{2}\right\} $
is an estimate of the maximum interference generated by each SU $q$
against all the other SUs $r$'s, and $\sum_{r\neq q}{\underset{k=1,\ldots,N}{\max}}\left\{ {\displaystyle |{H}_{qr}(k)|^{2}/{\sigma}_{q,k}^{2}}\right\} $
can be interpreted as a limit on the maximum MUI tolerable by each
secondary receiver $q$ and generated by all the other secondary transmitters
$r$'s. These two sources of MUI affect the uniqueness through the
constants $\gamma_{q}^{(1)}$ and $\gamma_{q}^{(2)}$, which depend
on the fixed sensing/device-level parameters as well as on the SU/PUs'
QoS requirements (e.g., maximum false alarm rate/minimum detection
probability, and maximum sensing time constraints).

Interestingly, conditions in (\ref{eq:diagonal_dominance_A_pd}) are
of the same genre as those obtained in the literature to guarantee
the uniqueness of the NE of \emph{convex} games modeling the power
control problem in ad-hoc networks \cite{Luo-Pang_IWFA-Eurasip,Huang-Cendrillon-Chiang-Moonen_SP07,Scutari-Palomar-Barbarossa_SP08_PI,Scutari-Palomar-Barbarossa_AIWFA_IT08}
and CR systems \cite{Scutari-Palomar-Facchinei-Pang_SPMag09,Pang-Scutari-Palomar-Facchinei_SP_10}.
The main difference is that, because of the nonconvexity of some constraints
and the joint optimization of sensing and transmission strategies,
in (\ref{eq:diagonal_dominance_A_pd}), there is an extra term, ${\displaystyle {\max_{k=1,\ldots,N}}\left\{ {|{G}_{Pq}(k)|^{2}}/I^{\,{\rm tot}}\right\} },$
limiting the interference generated also against the PUs and the two
weights $\gamma_{q}^{(1)}$ and $\gamma_{q}^{(2)}$ capturing the
sensing/QoS requirements.

\section{Distributed Algorithms\textmd{\normalsize \label{sub:Game_with_fixed_prices}}{\normalsize \vspace{-0.2cm}}}

This section is devoted to the design of distributed algorithms that
solve the proposed class of games and the study of their convergence.
Before analyzing the most general game $\mathcal{G}(\mathcal{X},\,\boldsymbol{\theta})$,
we focus on solution methods for the game where the price $\pi$ is
a fixed exogenous parameter (and thus there are only local interference
constraints). The resulting algorithms will be used as a subroutine
in an extended iterative algorithm solving the more complex game $\mathcal{G}(\mathcal{X},\,\boldsymbol{\theta})$
wherein the prices are endogenous variables to optimize.

\subsection{Game with exogenous price\label{sub:Game-with-exogenous}}

When the price $\pi$ is an exogenous fixed parameter, game $\mathcal{G}(\mathcal{X},\,\boldsymbol{\theta})$
reduces to the following game. 

\vspace{0.2cm}\hspace{-0.5cm}%
\framebox{\begin{minipage}[t]{0.98\columnwidth}%
\textbf{Game }$\mathcal{G}_{{\pi}}(\mathcal{X},\,\boldsymbol{{\theta}})$.
The optimization problem of player $q$ is: given $\mathbf{x}_{-q}\in\mathcal{X}_{-q}$
and $\pi\geq0$, 
\begin{equation}
{\displaystyle {\operatornamewithlimits{\mbox{maximize}}_{\mathbf{x}_{q}\in\mathcal{X}_{q}}}\,\,}\theta_{q}\left(\mathbf{x}_{q},\,\mathbf{x}_{-q}\right)-\pi\cdot I(\mathbf{x}_{q},\mathbf{x}_{-q})\quad q=1,\ldots,Q.\label{eq:game_G_t_pi}
\end{equation}
\end{minipage}}\vspace{0.2cm}

We have denoted such a game by $\mathcal{G}_{{\pi}}(\mathcal{X},\,\boldsymbol{{\theta}})$,
making explicit the fact that $\pi$ is an exogenous fixed parameter.
Note that $\mathcal{G}_{{\pi}}(\mathcal{X},\,\boldsymbol{{\theta}})$
contains as special cases the game with zero price (and thus no global
interference constraints) as introduced in Sec. \ref{sub:GT_local},
and the equisensing game with constant price $\pi$ (and local interference
constraints only), which is an instance of the game $\mathcal{G}(\mathcal{X},\,\boldsymbol{\theta})$
introduced in Sec. \ref{sub:GT_global}. Therefore, Algorithms for
$\mathcal{G}_{{\pi}}(\mathcal{X},\,\boldsymbol{{\theta}})$ apply
also to the aforementioned special cases. 

We are interested in iterative schemes based on the best-response
mapping: according to a given scheduling (e.g., sequentially, simultaneously,
or asynchronously), each SU solves his own optimization problem in
(\ref{eq:game_G_t_pi}), given the strategies of the others. If this
procedure converges and some suitable conditions are satisfied, it
will converge to a NE of the game $\mathcal{G}_{{\pi}}(\mathcal{X},\,\boldsymbol{{\theta}})$.
The Jacobi version of the proposed class of algorithms wherein the
users update their strategies simultaneously is formally described
in Algorithm \ref{async_best-response_algo}.

\begin{algo}{Jacobi Best-Response-Consensus Algorithm for $\mathcal{G}_{{\pi}}(\mathcal{X},\,\boldsymbol{{\theta}})$}
S\texttt{$\mbox{(\mbox{S.0})}:$} Choose any feasible $\mathbf{x}^{(0)}\in\mathcal{X}$
and set $n=0$.

\texttt{$\mbox{(S.1)}:$} \texttt{If} $\mathbf{x}^{(n)}$ satisfies
a suitable termination criterion: \texttt{STOP}. \\
\texttt{$\mbox{(S.2)}:$} Run a consensus algorithm to locally compute
the average $\dfrac{{1}}{Q}\,{\displaystyle {\sum_{r=1}^{Q}}}\,\dfrac{{\wh{\tau}_{r}^{(n)}}}{\sqrt{{f_{r}}}}$.\\
 \texttt{$\mbox{(S.3)}:$}\noun{ }\texttt{for} $q=1,\ldots,Q$,$\,$compute
\begin{equation}
\mathbf{x}_{q}^{(n+1)}\in\underset{\mathbf{x}_{q}\in\mathcal{X}_{q}}{\text{{argmax}}}\,\,\left\{ \theta_{q}\left(\mathbf{x}_{q},\,\mathbf{x}_{-q}^{(n)}\right)-\pi\cdot I(\mathbf{x}_{q},\mathbf{x}_{-q})\right\} .\vspace{-0.2cm}\label{eq:Async_update}
\end{equation}
\\
 $\qquad\,$$\quad\,\,\,$\texttt{$\mbox{(S.4)}:$} $n\leftarrow n+1$;
go to \texttt{$\mbox{(S.1)}$}.\label{async_best-response_algo}\end{algo}

In order to relax constraints on the synchronization of the players\textquoteright{}
updates, totally asynchronous schemes (in the sense specified in \cite{Bertsekas_Book-Parallel-Comp})
can be considered, where some SUs may update their strategy profile
more frequently than others and they may even use an outdated measurement
of the interference generated by the others (we refer to \cite{Bertsekas_Book-Parallel-Comp}
and \cite{Scutari-Palomar-Barbarossa_AIWFA_IT08} for a formal description
of asynchronous algorithms). The analysis of this general class of
algorithms is addressed in Appendix \ref{sec:Convergence-of-Asynchronous_BR_local_constraints},
where we provide sufficient conditions for their convergence; see
Theorem \ref{Theo-async_best-response_NEP} and Corollary \ref{Corollary_SF_Cond_convergence_algo_zero_pricing}.
Since Algorithm \ref{async_best-response_algo} is an instance of
these asynchronous schemes, it converges under the same aforementioned
conditions. It is worth remarking that the obtained convergence conditions
have the same physical interpretation of that given for the existence/uniqueness
of the NE (cf. Sec. \ref{remark_cond}). Roughly speaking, they require
``low'' interference in the network, meaning ``small'' values
of the (normalized) secondary cross-channels $|{H}_{qr}(k)|^{2}/{\sigma}_{q,k}^{2}$
as well as secondary-primary cross-channels ${|{G}_{Pq}(k)|^{2}}/I^{\,{\rm tot}}$.
Interestingly, they do not depend on the specific updating scheduling
used by the users, meaning that the whole class of asynchronous algorithms
converges under the same set of unified conditions. The main implication
of this result is that all the algorithms obtained as special case
of the asynchronous scheme, such as the \emph{sequential} (Gauss-Seidel
scheme) and the \emph{simultaneous} (Jacobi scheme) best-response
algorithms, are robust against missing or outdated updates of the
players.

\subsubsection{Discussion on the implementation\label{Rmk_implementation_issues} }

We discuss now some implementation issues related to the proposed
algorithms; for notational simplicity, we will focus only on Algorithm
\ref{async_best-response_algo}, but similar conclusions can be drawn
also for the asynchronous implementation. 

In Step 3 of the algorithm, each user $q$ needs to compute its best-response,
knowing the information on the strategies of the others $\mathbf{x}_{-q}^{(n)}=(\mathbf{x}_{r}^{(n)})_{r\neq q=1}^{Q},$
with each $\mathbf{x}_{r}=(\wh{\tau}_{r},\mathbf{p}_{r},P_{r}^{\text{{fa}}})$.
Given the structure of the feasible set $\mathcal{X}_{q}$ {[}specifically,
the presence of local interference constraints (\ref{eq:individual_overal_interference_constraint}){]}
and the functional dependence of the objective function in (\ref{eq:game_G_t_pi})
on $\mathbf{x}_{-q}$ {[}see (\ref{eq:player q transformed 1}){]},
this knowledge requires each SU $q$ to estimate: i) the overall Power
Spectral Density (PSD) of the MUI at each subcarrier, $\sum_{r\neq q}|{H}{}_{qr}(k)|^{2}p_{r,k}$;
ii) the primary-secondary cross-channel function $\left(G_{Pq}(k)\right)_{k=1}^{N}$
{[}if the weights $w_{q,k}$'s in the local interference constraints
(\ref{eq:individual_overal_interference_constraint}) are chosen as
$w_{q,k}=G_{Pq}(k)${]}; and iii) the average of the (normalized)
sensing times $(1/Q)\,\sum_{r=1}^{Q}(\wh{\tau}_{r}/\sqrt{{f_{s}^{(r)}}})$
of all the SUs. Among other remarks, we discuss next alternative distributed
protocols to obtain these estimates, each of them being characterized
by a different level (albeit limited) of signaling among the SUs and
computational complexity.

\subsubsection*{Estimate of the MUI\emph{ }and the primary-secondary cross-channels}

To measure the MUI in a totally distributed way, it is enough for
the SUs to perform a preliminary noise calibration of their receivers
(during this phase of course the SUs must stay silent). After this
noise calibration phase, to acquire the MUI, the SUs just need to
locally measure the global interference experienced at their receivers.
Note that this procedure does not require the SUs to be able to distinguish
between primary and secondary signaling. 

Because of the presence of the individual interference constraints
in the set $\mathcal{X}_{q}$, each SU needs to estimate also the
secondary-primary cross-channel transfer function $\left(G_{Pq}(k)\right)_{k=1}^{N}$
{[}if in (\ref{eq:individual_overal_interference_constraint}) one
uses $w_{q,k}=G_{Pq}(k)${]}. This knowledge can be acquired by each
SU in advance by using classical channel estimation techniques, and
updated at the rate of the channel coherence time. In the CR scenarios
where the PUs cannot communicate with the SUs (e.g., when the PUs
are legacy systems) and thus cannot be involved in the (cross-)channel
estimation, and the primary receivers have a fixed geographical location,
it may be possible to install some monitoring devices close to each
primary receiver having the functionality of (cross-)channel/interference
measurement. 

In scenarios where the above options are not feasible and the channel
state information cannot be acquired, a different choice of the weights
coefficients $w_{q,k}$'s and the interference threshold $I_{q}^{\max}$
in (\ref{eq:individual_overal_interference_constraint}) can be made,
based on worst-case channel/interference statistics. More specifically,
one can replace the instantaneous value of the secondary-primary cross-channel
transfer function $\left(G_{Pq}(k)\right)_{k=1}^{N}$ with its expected
value; the expected value of each $G_{Pq}(k)$ is 
\begin{equation}
\text{{E}}\left\{ \left|G_{Pq}(k)\right|^{2}\right\} =\dfrac{{\sigma_{g}}}{1+\left(d_{Pq}/d_{0}\right)^{\varsigma}},\label{eq:pathloss}
\end{equation}
where $\sigma_{g}$ is a positive constant depending on the number
of resolvable paths and their variance; $\varsigma$ is the path loss
exponent, which generally is $2\leq\varsigma\leq6$; $d_{Pq}$ is
the distance between the SU $q$ and the PU; and $d_{0}$ is the Fraunhofer
distance. The interference constraints imposed to each SU $q$ become
then 
\begin{equation}
\sum_{k=1}^{N}P_{q,k}^{\text{{miss}}}\left({\tau}_{q},\, P_{q}^{\,\text{{fa}}}\right)\cdot\dfrac{{\sigma_{g}}}{1+\left(d_{Pq}/d_{0}\right)^{\varsigma}}\cdot p_{q,k}\leq I_{q}^{\text{{max}}},\label{eq:Int_costr_2}
\end{equation}
which is still in the form of (\ref{eq:individual_overal_interference_constraint}),
with weights coefficients $w_{q,k}={\sigma_{g}}/(1+(d_{Pq}/d_{0})^{\varsigma})$. 

When the distance $d_{Pq}$ in (\ref{eq:Int_costr_2}) is unknown,
one can instead consider a probabilistic (conservative) version of
(\ref{eq:Int_costr_2}), based on the worst-case interference scenario,
as proposed in \cite{Barbarossa-Sardellitti_CIP12}. Modeling $d_{Pq}$
as a random variable, we can impose 
\begin{equation}
\text{{Prob}}\left\{ \sum_{k=1}^{N}P_{q,k}^{\text{{miss}}}\left({\tau}_{q},\, P_{q}^{\,\text{{fa}}}\right)\cdot\dfrac{{\sigma_{g}}}{1+\left(d_{Pq}/d_{0}\right)^{\varsigma}}\cdot p_{q,k}\leq I_{q}^{\text{{max}}}\right\} \geq P_{I},\label{eq:Int_costr_3}
\end{equation}
where $0\leq P_{I}\leq1$ is a given positive constant guaranteeing
the desired QoS at the primary receiver. To obtain an explicit expression
of the probability above, we consider next a more conservative constraint
implying (\ref{eq:Int_costr_3}). More specifically, denoting by $d_{\min}\triangleq\min_{q}d_{Pq}$
the distance between the PU and the nearest SU $q$, the following
interference constraint implies (\ref{eq:Int_costr_3}): 
\begin{equation}
\text{{Prob}}\left\{ \sum_{k=1}^{N}P_{q,k}^{\text{{miss}}}\left({\tau}_{q},\, P_{q}^{\,\text{{fa}}}\right)\cdot\dfrac{{\sigma_{g}}}{1+\left(d_{\min}/d_{0}\right)^{\varsigma}}\cdot p_{q,k}\leq I_{q}^{\text{{max}}}\right\} \geq P_{I}.\label{eq:Int_costr_4}
\end{equation}
Assuming that the SUs are randomly distributed according to a homogeneous
Poisson point process with spatial density $\rho$, $d_{\min}\triangleq\min_{q}d_{Pq}$
is Rayleigh distributed; the probability in (\ref{eq:Int_costr_4})
can be then evaluated in closed form and we obtain \cite{Barbarossa-Sardellitti_CIP12}
\begin{equation}
\sum_{k=1}^{N}P_{q,k}^{\text{{miss}}}\left({\tau}_{q},\, P_{q}^{\,\text{{fa}}}\right)\cdot p_{q,k}\leq\bar{{I}}_{q}^{\max},\quad\mbox{with}\quad\bar{{I}}_{q}^{\max}\triangleq\dfrac{{I}_{q}^{\max}}{{\sigma_{g}}}\cdot\left(1+\dfrac{|{\ln}(P_{I})|}{\pi\rho r_{0}^{2}}\right)\label{eq:Int_constr_5}
\end{equation}
which is still in the form of (\ref{eq:individual_overal_interference_constraint}),
where $w_{q,k}=1$ and the interference threshold ${I}_{q}^{\max}$
is replaced by $\bar{{I}}_{q}^{\max}$.

\subsubsection*{Estimate of the average sensing time {[}Step 2{]}}

The average of the sensing times can be locally computed by each SU
by running a consensus based algorithm that requires the interaction
only between nearby secondary nodes, as stated in Step 2. Consensus
algorithms have become popular over the past few decades since \cite{Tsitsiklis_PhD84}
as a practical scheme for the in-network distributed calculation of
general functions of the node values; several protocols suitable for
different applications and working under different network settings
have been proposed and their properties analyzed; see, e.g., \cite{OlfatiSaber-Fax-Murray_ProcIEEE,Scutari-Barbarossa-Pescosolido_ConsDelay_SP08}
for a good overview of recent results. In order to minimize the running
time of the consensus iterates and thus the amount of signaling to
be exchange in Step 2 by the SUs, we suggest here to use the \emph{finite-time
}distributed convergence linear scheme proposed in \cite{Sundaram-Hadjicostis_JSAC08}.
The main advantage of this scheme with respect to the more classical
consensus/gossip algorithms whose convergence is only asymptotic (i.e.,
exact consensus is not reached in a finite number of times) is that,
at no extra signaling, each node can immediately calculate the consensus
value after observing the evolution of its own value over a \emph{finite
} number of time-iterations (specifically, upper bounded by the size
of the network). 

The consensus scheme we consider in Step 2 of Algorithm \ref{consensus_algorithm}
makes use of the following liner iterations: given the (normalized)
sensing times $\wh{\tau}_{q}^{(n)}$'s obtained as output of Step
3 at iterations $n$, and setting $z_{q}^{(0)}=\wh{\tau}_{q}^{(n)}/\sqrt{{f_{q}}}$,
each SU $q$ updates at each (inner) time-iteration $i$ its value
as 
\begin{equation}
z_{q}^{(i+1)}=a_{qq}\, z_{q}^{(i)}+\sum_{r\in\mathcal{N}_{q}}a_{qr}\,\left(z_{r}^{(i)}-z_{q}^{(i)}\right)\label{consensus-iterates}
\end{equation}
where $\mathcal{N}_{q}$ is the set of neighbors of user $q$, which
are the nodes that interfere with node $q$ (the SUs' network is modeled
as a directed graph); the cardinality of $\mathcal{N}_{q}$, the number
of neighbors of node $q$, is denoted by $\text{{deg}}_{q}^{\text{{in}}}\triangleq\left|\mathcal{N}_{q}\right|$
(also called in the graph theory jargon the \emph{in-degree} of node
$q$); and the $a_{qr}$'s are a set of given coefficients. These
weights represent a degree of freedom in the algorithm design; here
we focus on the following choice that can be made locally by each
SU $q$: 
\begin{equation}
a_{qr}=\left\{ \begin{array}{lll}
1, &  & \mbox{if }r\in\mathcal{N}_{q}\\
0, &  & \mbox{if }r\notin\mathcal{N}_{q}\\
F-\text{{deg}}_{q}^{\text{{in}}} &  & \mbox{if }r=q,
\end{array}\right.\label{weight_matrix}
\end{equation}
where $F$ is any integer number. Associated with the SUs' network
topology, there are some absolute quantities that play a role in the
stopping criterion of the iterates (\ref{consensus-iterates}) and
the computation of the final consensus value. More specifically, for
each node $q,$ there exist a scalar $0\leq L_{q}\leq Q-\mbox{deg}_{q}$
and a ($L_{q}+1$)-length vector $\mathbf{m}_{q}\in\mathbb{R}^{L_{q}+1}$
having the following properties \cite{Sundaram-Hadjicostis_JSAC08}:
given the samples $z_{q}^{(0)},\dots,z_{q}^{(L_{q})}$ collected by
the SU $q$ in the first $L_{q}+1$ iterations of (\ref{consensus-iterates}),
it holds that 
\begin{equation}
\mathbf{m}_{q}^{T}\,\left[\begin{array}{c}
z_{q}^{(0)}\\
\vdots\\
z_{q}^{(L_{q})}
\end{array}\right]=\dfrac{{1}}{Q}\,\sum_{r=1}^{Q}z_{r}^{(0)}=\dfrac{{1}}{Q}\,{\displaystyle {\sum_{r=1}^{Q}}}\,\dfrac{{\wh{\tau}_{r}^{(n)}}}{\sqrt{{f_{r}}}}.\label{eq:consensus}
\end{equation}
According to (\ref{eq:consensus}), each SU $q$ can obtain locally
the desired average of the sensing times after running the linear
iterates (\ref{consensus-iterates}) for $L_{q}+1$ time-steps; this
will require at most $Q-\mbox{deg}_{q}+1$ time-iterations. Note that,
to calculate the quantity in (\ref{eq:consensus}), the SUs do not
need to store the entire set of samples $z_{q}^{(0)},\dots,z_{q}^{(L_{q})}$;
instead one can compute the scalar product in (\ref{eq:consensus})
incrementally, as the iterations progress. 

To implement the above protocol distributively, each SU $q$ has to
preliminarily estimate his own $L_{q}$ and $\mathbf{m}_{q}$; for
time-invariant topologies this can be done just once; the cost of
this computation will then be amortized over the number of times the
consensus algorithm is performed. In \cite{Sundaram-Hadjicostis_JSAC08},
the authors proposed a \emph{decentralized} protocol still based on
the updating (\ref{consensus-iterates}) to perform such a computation
in (at most) $Q(Q-1)$ iterations; we refer the interested reader
to \cite[Sec. V]{Sundaram-Hadjicostis_JSAC08} for details. The consensus
protocol discussed above is formally described in Algorithm \ref{consensus_algorithm}
below, which represents the subroutine to implement Step 2 of Algorithm
\ref{async_best-response_algo}. 

\begin{algo}{Finite-time Consensus Algorithm in Step 2 of Algorithm
\ref{async_best-response_algo}} S\texttt{$\mbox{Data}:$} $\wh{\tau}_{q}^{(n)}$
{[}from Step 2 of Algorithm \ref{async_best-response_algo}{]}, $L_{q}$,
$\mathbf{m}_{q}$, and $(a_{qr})_{r=1}^{Q}$, for all $q=1,\ldots,Q$.\\
\texttt{$\mbox{(S.2a)}:$} Set $z_{q}^{(0)}=\wh{\tau}_{q}^{(n)}$,
for $q=1,\ldots,Q$ and set $i=0$.\\
\texttt{$\mbox{(S.2b)}:$}\noun{ }\texttt{for} $i=1,\ldots,\max_{q}L_{q}$,$ $ 

\quad\quad\quad\quad $-$ Each SU $ $$q$ updates $z_{q}^{(i)}$
according to (\ref{consensus-iterates}) 

$\qquad\,\,\,$$\quad\,\,\,$$-$ \texttt{if} $i==L_{q}$ for some
$q$, \texttt{then} SU $ $$q$ computes (\ref{eq:consensus}) and
gets $\dfrac{{1}}{Q}\,{\displaystyle {\sum_{r=1}^{Q}}}\,\dfrac{{\wh{\tau}_{r}^{(n)}}}{\sqrt{{f_{r}}}}$; 

\quad\quad\quad\quad\texttt{end }(\texttt{for}).\label{consensus_algorithm}\end{algo}\medskip{}

In Algorithm \ref{consensus_algorithm}, the number of iterations
$i$ required to propagate the consensus over the whole network is
$\max_{q}\{L_{q}\}+1\leq Q-\min_{q}\{\mbox{deg}_{q}\}+1$. One can
reduce such a number by slightly changing the above protocol: SU $q$
runs the iteration (\ref{consensus-iterates}) for $L_{q}+1$ consecutive
time-steps, or until he receives the consensus value from a neighbor.
If $L_{q}+1$ iterations passes without receiving the consensus value,
SU $q$ calculates that value and broadcast it to his neighbors, along
with a flag indicating that it is the consensus value (and not just
an intermediate value). In this way, ``slower'' SUs $r$'s will
receive the final value at most one iteration after node $q$.

\subsubsection*{On the time-complexity and communication costs }

We quantify now the complexity of Algorithm \ref{async_best-response_algo}
(whose Step 2 is implemented using Algorithm \ref{consensus_algorithm})
in terms of the minimum number of iterations required to reach the
desired convergence accuracy and communication costs (number of message
passing among the SUs). Both results come readily from the following
two facts.

\noindent \emph{Fact 1}. The convergence conditions of Algorithm
\ref{async_best-response_algo} as given in Theorem \ref{Theo-async_best-response_NEP}
in Appendix \ref{sec:Convergence-of-Asynchronous_BR_local_constraints}
are based on the contraction properties of the best-response mapping
$\mathcal{B}_{\pi_{t}}(\mathbf{x})\triangleq\left(\mathbf{x}_{q}^{\star}(\mathbf{x}_{-q},\,\pi_{t})\right)_{q=1}^{Q}$
associated with the game $\mathcal{G}_{{\pi}}(\mathcal{X},\,\boldsymbol{{\theta}})$
in (\ref{eq:game_G_t_pi}), with each $\mathbf{x}_{q}^{\star}(\mathbf{x}_{-q},\,\pi_{t})$
defined in (\ref{eq:optimal_solution_of_q_problem}): under assumptions
in Theorem \ref{Theo-async_best-response_NEP}, there exists a constant
$c_{\mathcal{B}}\in(0,1)$ such that {[}see (\ref{eq:contraction_constant})
in Appendix \ref{sec:Convergence-of-Asynchronous_BR_local_constraints}{]}
\begin{equation}
\left\Vert \mathcal{B}_{\pi_{t}}(\mathbf{x})-\mathcal{B}_{\pi_{t}}(\mathbf{y})\right\Vert \leq c_{\mathcal{B}}\,\left\Vert \mathbf{x}-\mathbf{y}\right\Vert ,\qquad\forall\mathbf{x},\mathbf{y}\in\mathcal{X},\label{eq:contraction}
\end{equation}
where an explicit expression of the contraction constant $c_{\mathcal{B}}$
is given in (\ref{eq:contraction_constant}) (cf. Appendix \ref{sub:Contraction-properties-of_BR_of_G_pi}).
If the ``suitable termination criterion'' in Step 2 of Algorithm
\ref{async_best-response_algo} is chosen as the smallest iteration
$n=n_{\min}$ at which the relative error $\left\Vert \mathcal{B}_{\pi_{t}}(\mathbf{x}^{(n)})-\mathbf{x}^{\star}\right\Vert /\left\Vert \mathbf{x}^{(0)}-\mathbf{x}^{\star}\right\Vert $
is less than a prescribed tolerance $\epsilon_{\max}>0$ {[}with $\mathbf{x}^{\star}$
being the NE of $\mathcal{G}_{{\pi}}(\mathcal{X},\,\boldsymbol{{\theta}})${]},
(\ref{eq:contraction}) leads to 
\begin{equation}
n_{\min}\geq\dfrac{{\ln}\left(1/\epsilon_{\max}\right)}{\ln\left|c_{\mathcal{B}}\right|},\label{eq:number_of_iterations}
\end{equation}
which provides the number of iterations $n$ required for Algorithm
\ref{async_best-response_algo} to reach convergence (within the accuracy
$\epsilon_{\max}$). 

\noindent \emph{Fact 2}. The consensus algorithm described in Algorithm
\ref{consensus_algorithm} was shown to converge in at most $\max_{q}\{L_{q}\}+1\leq Q-\min_{q}\{\mbox{deg}_{q}\}+1$
iterations. The communication cost incurred by the protocol can be
characterized as follows. Given the directed graph modeling the network
topology (the outgoing edges from each node $q$ link the nodes associated
with the SUs who receive interference from SU $q$), each SU $q$
transmits a scalar value on each outgoing edge at each time-step $i$;
since there are at most $\max_{q}\{L_{q}\}$ runs, each SU $q$ will
have in principle to transmit $(\max_{q}\{L_{q}\}+1)\cdot\text{{deg}}_{q}^{\text{{out}}}$
messages, where $\text{{deg}}_{q}^{\text{{out}}}$ is the out-degree
of node $q$ (i.e., the number of SUs having user $q$ as interferer).
Thanks to the broadcast nature of the wireless channel, however, a
single transmission of each user $q$ will be equivalent to communicating
a message along each of $\text{{deg}}_{q}^{\text{{out}}}$ outgoing
edges, and thus each node would only have to transmit $\max_{q}\{L_{q}\}+1$
messages. Summing over all nodes in the network, there will be $\sum_{q=1}^{Q}(\max_{q}\{L_{q}\}+1)$
overall messages that have to be transmitted to run the consensus
protocol. 

Using Facts 1 and 2 above, one can conclude that Algorithm \ref{async_best-response_algo}
(whose Step 2 is implemented by Algorithm \ref{consensus_algorithm})
converges (within the accuracy $\epsilon_{\max}$) in $\dfrac{{\ln}\left(1/\epsilon_{\max}\right)}{\ln\left|c_{\mathcal{B}}\right|}\cdot(\max_{q}\{L_{q}\}+1)$
(outer plus inner-loop) iterations, which is also the number of per/user
message passing.

\subsubsection*{A special case: fixed equi-sensing times}

In the scenarios where no coordination is allowed among the SUs to
run a consensus algorithm, one can implement a special case of Algorithm
\ref{async_best-response_algo}, where the SUs' sensing times are
fixed a-priori and thus not optimized. This would correspond to solving
the game $\mathcal{G}_{{\pi}}(\mathcal{X},\,\boldsymbol{{\theta}})$
in (\ref{eq:game_G_t_pi}) where the sensing times $\wh{\tau}_{q}$
are fixed and equal to a common value $\tau$; the resulting solution
scheme will be like Algorithm \ref{async_best-response_algo} where
there is no Step 2 and the optimization problems in (\ref{eq:game_G_t_pi})
are solved only with respect to the tuple $(\mathbf{p}_{q},P_{q}^{\text{{fa}}})$,
given $\wh{\tau}_{q}=\tau$. The time and communication complexity
of such an algorithm is of the same order of that required by the
well-known iterative waterfilling algorithm proposed and studied in
many papers \cite{Luo-Pang_IWFA-Eurasip,Scutari-Palomar-Barbarossa_SP08_PI,Scutari-Palomar-Barbarossa_AIWFA_IT08,Leshem-Zehavip_SPMag_sub09}
to distributively solve the rate maximization game over interference
channels (there is no optimization of the sensing part in any formulation
of that game). The price in the reduction of signaling obtained with
the fixing of sensing times may be paid in terms of overall performance;
in Sec. \ref{sec:Numerical-Results}, we numerically quantify the
loss in using a fixed sensing time rather than optimizing it. This
sheds some light on the trade-off between performance and signaling
in the proposed games.

\subsubsection*{On the best-response computation}

A last comment deals with the computation of the best-response of
each optimization problem (\ref{eq:game_G_t_pi}), which would require
the capability of solving a nonconvex problems. This is not a difficult
task under the assumption of Theorem \ref{Theo-async_best-response_NEP}
(cf. Appendix \ref{sec:Convergence-of-Asynchronous_BR_local_constraints}),
which ensures that each of such (nonconvex) optimization problems
has a unique stationary point (cf. Proposition \ref{proposition_uniqueness_opt_sol})
that can be computed by any of nonlinear programming solvers, provided
that each SU $q$ has the information on the strategies $\mathbf{x}_{-q}$
of the other SUs.

Finally, observe that, when conditions in Theorem \ref{Theo-async_best-response_NEP}
are not satisfied, every limit point of the sequence generated by
the proposed algorithms, wherein the best-response solution is replaced
by a stationary solution, has still some optimality properties: it
is guaranteed to be a QNE of the game, whose properties have been
studied in our companion paper \cite{Pang-Scutari-NNConvex_PI}.

\subsection{Game with endogenous prices\textmd{\normalsize{} \label{sub:Game_G_with_side_constraints}}}

We focus now on distributed algorithms for solving the general game
$\mathcal{G}(\mathcal{X},\,{\boldsymbol{{\theta}}})$. The main challenge
here is to obtain \emph{distributed} algorithms in the presence of
\emph{coupling nonconvex} constraints. The proposed approach is to
reduce the solution of the nonconvex game $\mathcal{G}(\mathcal{X},\,{\boldsymbol{{\theta}}})$
with side constraints to a solution of a \emph{sequence }of (compact
and) \emph{convex}%
\footnote{According to the terminology introduced in \cite{Rosen_econ65}, a
game is said to be compact and convex if: i) the feasible set of each
player is a convex and compact set; and ii) the cost function of each
player (to be minimized) is a convex and continuously differentiable
function of the strategy of that player, for any given strategy profile
of the other players. The desired properties of such games are: i)
each player optimization problem is a convex problem and thus it can
be solved using efficient numerical algorithms; and i) they always
have a NE. %
} games of a particular structure \emph{with no side constraints}.
The advantage of this method is that we can efficiently solve each
of the\emph{ }convex games with convergence guarantee using the best-response
algorithms introduced in Sec. \ref{sub:Game-with-exogenous} for the
game $\mathcal{G}_{\pi}(\mathcal{X},\,{\boldsymbol{{\theta}}})$ with
exogenous price; the disadvantage is that, to recover the solution
of the original game $\mathcal{G}(\mathcal{X},\,{\boldsymbol{{\theta}}})$,
we have to solve a (possibly infinite) number of convex games. However,
it is important to remark from the outset that this potential drawback
is greatly mitigated by the fact that, as we discuss shortly, (i)
one only needs to solve these games inaccurately; (ii) the (inaccurate)
solution of the NEPs usually requires little computational effort;
and (iii) in practice, a fairly accurate solution of the original
game $\mathcal{G}(\mathcal{X},\,{\boldsymbol{{\theta}}})$ is obtained
after the solution of a limited number of games in the sequence.

Before introducing the formal description of the algorithm, let us
begin with some informal observations and intermediate results motivating
how the sequence of convex games is built; the mathematical details
can be found in Appendix \ref{app:Convergence-of-Best-Response}.
At the basis of our analysis there are two results, namely: i) an
equivalence (under some conditions) between the game $\mathcal{G}(\mathcal{X},\,{\boldsymbol{{\theta}}})$
and the VI$(\mathcal{Z}_{t},\,\boldsymbol{{\Psi}})$ introduced in
(\ref{eq:KKT_Game_Gt}); and ii) the reformulation of the VI$(\mathcal{Z}_{t},\,\boldsymbol{{\Psi}})$
as a \emph{convex} game with \emph{no side constraint}. The former
connection, which is made formal in Lemma \ref{Lemma_G_VI_equivalence}
below, allows us to remove side constraints from the game $\mathcal{G}(\mathcal{X},\,{\boldsymbol{{\theta}}})$,
whereas the latter, given in Lemma \ref{Lemma_VI_game_noprox} below,
paves the way to the use of best-response algorithms for\emph{ convex}
games with \emph{no side constraints, }as introduced in Sec. \ref{sub:Game-with-exogenous}.

\begin{lemma} \label{Lemma_G_VI_equivalence} Given the game $\mathcal{G}(\mathcal{X},\,{\boldsymbol{{\theta}}})$,
suppose that there exists some $t>\lambda^{\max}$ such that the matrix
$\mathbf{A}(\mathbf{x},\,\boldsymbol{{\lambda}},\,{\pi}_{t})$ in
(\ref{eq:def_A_matrix}) is positive definite for all $\mathbf{x}\in\mathcal{Y}$,
$\boldsymbol{{\lambda}}\in[0,\,\lambda^{\max}]^{Q}$, and $\pi_{t}\in\mathcal{S}_{t}$.
Then $\mathcal{G}(\mathcal{X},\,{\boldsymbol{{\theta}}})$ is equivalent
to the VI$(\mathcal{Z}_{t},\,\boldsymbol{{\Psi}})$, which always
has a solution. The equivalence is in the following sense: for any
solution $\left(\mathbf{x}^{\text{{VI}}},\,\boldsymbol{{\lambda}}^{\text{{VI}}},\,{\pi}_{t}^{\text{{VI}}}\right)\in\mathcal{Z}_{t}$
of the VI, the tuple $\left(\mathbf{x}^{\text{{VI}}},\,{\pi}_{t}^{\text{{VI}}}\right)$
is a NE of $\mathcal{G}(\mathcal{X},\,{\boldsymbol{{\theta}}})$;
conversely, the game $\mathcal{G}(\mathcal{X},\,{\boldsymbol{{\theta}}})$
has a NE $(\mathbf{x}^{\star},{\pi}_{t}^{\star})$, and for any such
a NE there exist multipliers ${\boldsymbol{{\lambda}}}^{\star}\in[0,\,\lambda^{\max}]^{Q}$
associated with the nonconvex constraints $\{I_{q}(\mathbf{x}_{q}^{\star}),\, q=1,\ldots,Q\}$
such that $\left(\mathbf{x}^{\star},\,\boldsymbol{{\lambda}}^{\star},\,{\pi}_{t}^{\star}\right)$
is a solution of the $\text{{VI}}(\mathcal{Z}_{t},\,\boldsymbol{{\Psi}})$.\end{lemma}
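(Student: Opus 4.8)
The plan is to route everything through the price-truncated game $\mathcal{G}_{t}(\mathcal{X},\,\boldsymbol{\theta})$, whose KKT system has already been identified with VI$(\mathcal{Z}_{t},\,\boldsymbol{\Psi})$ in (\ref{eq:KKT_Game_Gt}), and then to promote a NE of $\mathcal{G}_{t}(\mathcal{X},\,\boldsymbol{\theta})$ to a NE of $\mathcal{G}(\mathcal{X},\,\boldsymbol{\theta})$ using the choice $t>\lambda^{\max}$. First I would record the elementary but essential observation that positive definiteness of $\mathbf{A}(\mathbf{x},\,\boldsymbol{\lambda},\,\pi_{t})$ in (\ref{eq:def_A_matrix}) forces each diagonal block $\nabla_{\mathbf{x}_{q}}^{2}\mathcal{L}_{q}$ in (\ref{eq:L_q_2_matrix}) to be positive definite (test the quadratic form on vectors supported on the $q$-th block only). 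This places us inside the hypotheses of both Proposition \ref{proposition_uniqueness_opt_sol} and Proposition \ref{Proposition_existence of a NE of G_t}, so that each player's nonconvex subproblem in $\mathcal{G}_{t}(\mathcal{X},\,\boldsymbol{\theta})$ has a unique, necessarily nontrivial, optimal solution and $\mathcal{G}_{t}(\mathcal{X},\,\boldsymbol{\theta})$ possesses a NE. Combined with Theorem \ref{Theo_Existence-and-uniqueness_NE_G}(a), this NE is a NE of $\mathcal{G}(\mathcal{X},\,\boldsymbol{\theta})$, which already settles the existence assertions (both ``$\mathcal{G}(\mathcal{X},\,\boldsymbol{\theta})$ has a NE'' and ``the VI always has a solution,'' the latter via the forward direction of the equivalence below).

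For the equivalence I would prove the two directions separately. VI $\Rightarrow$ NE: given a solution $(\mathbf{x}^{\text{VI}},\,\boldsymbol{\lambda}^{\text{VI}},\,\pi_{t}^{\text{VI}})$ of VI$(\mathcal{Z}_{t},\,\boldsymbol{\Psi})$, unpacking (\ref{eq:KKT_Game_Gt}) yields, for each $q$, exactly the per-player system (\ref{eq:KKT_player_q_1})--(\ref{eq:VI_ref}) at $(\mathbf{x}_{q}^{\text{VI}},\,\lambda_{q}^{\text{VI}})$ for the fixed data $(\mathbf{x}_{-q}^{\text{VI}},\,\pi_{t}^{\text{VI}})$, together with the price relation $(\pi_{t}-\pi_{t}^{\text{VI}})\cdot(-I(\mathbf{x}^{\text{VI}}))\geq0$ on $\mathcal{S}_{t}$. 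Since the Lagrangian is strongly convex in $\mathbf{x}_{q}$ over the convex set $\mathcal{Y}_{q}$ (by the block positive definiteness), the first-order condition in (\ref{eq:KKT_player_q_1})(i) is sufficient and singles out the unique minimizer; invoking Proposition \ref{proposition_uniqueness_opt_sol}, this stationary point is the unique global optimum of the nonconvex player problem (\ref{eq:game_G_t_2}). Hence $(\mathbf{x}^{\text{VI}},\,\pi_{t}^{\text{VI}})$ is a NE of $\mathcal{G}_{t}(\mathcal{X},\,\boldsymbol{\theta})$. I would then use the price bound $\pi_{t}^{\text{VI}}\leq\lambda^{\max}<t$ to conclude that the truncation multiplier vanishes, so the price complementarity collapses to (\ref{eq:price equilibrium_vector_form}) and the subproblems (\ref{eq:game_G_t_2}) coincide with (\ref{eq:player q transformed 1}); thus $(\mathbf{x}^{\text{VI}},\,\pi_{t}^{\text{VI}})$ is a NE of $\mathcal{G}(\mathcal{X},\,\boldsymbol{\theta})$.

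For the converse NE $\Rightarrow$ VI: take a NE $(\mathbf{x}^{\star},\,\pi_{t}^{\star})$ of $\mathcal{G}(\mathcal{X},\,\boldsymbol{\theta})$. Each $\mathbf{x}_{q}^{\star}$ globally solves (\ref{eq:player q transformed 1}) and is nontrivial, so (as established in the derivation preceding Proposition \ref{proposition_uniqueness_opt_sol}) Abadie CQ holds and there is a multiplier $\lambda_{q}^{\star}$ making (\ref{eq:KKT_player_q_1}), equivalently VI$(\mathcal{K}_{q},\mathbf{F}_{q})$ in (\ref{eq:VI_ref}), hold; by Lemma \ref{Lemma_bounded_multipliers} these satisfy $\lambda_{q}^{\star}\in[0,\,\lambda^{\max}]$. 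Because $\pi_{t}^{\star}\leq\lambda^{\max}<t$, the price equilibrium (\ref{eq:price equilibrium_vector_form}) can be written in the truncated form of (\ref{eq:KKT_game_G_t_1})(iii) with vanishing truncation multiplier, i.e., as the $\pi_{t}$-VI on $\mathcal{S}_{t}$. Stacking the $Q$ per-player VIs with the price VI reassembles precisely (\ref{eq:KKT_Game_Gt}), so $(\mathbf{x}^{\star},\,\boldsymbol{\lambda}^{\star},\,\pi_{t}^{\star})$ solves VI$(\mathcal{Z}_{t},\,\boldsymbol{\Psi})$, with the asserted multiplier bound.

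I expect the delicate point to be the passage, in the VI $\Rightarrow$ NE direction, from a stationary tuple satisfying the first-order/complementarity system to a genuine global optimizer of the \emph{nonconvex} player problem: this is where the block positive definiteness (strong convexity of the Lagrangian over $\mathcal{Y}_{q}$) together with Proposition \ref{proposition_uniqueness_opt_sol} is indispensable, and care is needed to ensure that the relevant multiplier range $[0,\,\lambda^{\max}]$ (Lemma \ref{Lemma_bounded_multipliers}) is exactly the range on which positive definiteness was assumed. The secondary care is bookkeeping on the truncation: the strict inequality $t>\lambda^{\max}$ is what guarantees the truncation is inactive at every relevant point, making the $\mathcal{G}_{t}(\mathcal{X},\,\boldsymbol{\theta})\leftrightarrow\mathcal{G}(\mathcal{X},\,\boldsymbol{\theta})$ identification clean in both directions.
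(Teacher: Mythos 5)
Your proof is correct and takes essentially the same route as the paper, which assembles this lemma from exactly the ingredients you use: the block-diagonal extraction of positive definiteness of $\mathbf{A}$, the ACQ and multiplier bound $\lambda_{q}^{\star}\leq\lambda^{\max}$ (Lemmas \ref{Lemma_ACQ} and \ref{Lemma_bounded_multipliers}, the latter valid for all solutions of the per-player VI, as you correctly flag), the inactive price truncation $\pi_{t}^{\star}<t$ (Lemma \ref{Lemma_bounded_multipliers_Game_t}), and the identification of stationary points with the unique global optima of the nonconvex player problems (Proposition \ref{proposition_uniqueness_opt_sol}), all combined as in the proof of Theorem \ref{Theo_Existence-and-uniqueness_NE_G}. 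One cosmetic slip only: solvability of the VI follows from the NE $\Rightarrow$ VI direction (which you do prove), not from the ``forward'' VI $\Rightarrow$ NE direction as you label it.
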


Sufficient conditions for $\mathbf{A}(\mathbf{x},\,\boldsymbol{{\lambda}},\,{\pi}_{t})$
to be positive definite along with their physical interpretation are
given in Sec. \ref{remark_cond} (cf. Corollary \ref{corollary_sf_cond_uniqueness_NE}).
Under conditions of Lemma \ref{Lemma_G_VI_equivalence}, one can solve
the $\text{{VI}}(\mathcal{Z}_{t},\,\boldsymbol{{\Psi}})$ and obtain
the NE of the original game $\mathcal{G}(\mathcal{X},\,{\boldsymbol{{\theta}}})$.
Since we are interested in using best-response algorithms as those
developed in Sec. \ref{sub:Game-with-exogenous} for games with exogenous
price and no side constraints, we rewrite next the $\text{{VI}}(\mathcal{Z}_{t},\,\boldsymbol{{\Psi}})$
as a game, and then use best-response algorithms to solve that game.
More formally, let us introduce the following game with \emph{no side
constraints} wherein the players, anticipating rivals' strategies,
solve\emph{
\begin{equation}
\begin{array}{clc}
(i):\quad\,\,\,{\displaystyle {\operatornamewithlimits{minimize}_{\mathbf{x}_{q}\in{\mathcal{Y}}_{q}}}} & \mathcal{L}_{q}\left((\mathbf{x}_{q},\lambda_{q}),\,\mathbf{x}_{-q},\pi_{t}\right), & q=1,\ldots,Q\medskip\\
(\mbox{\mbox{ii}}):\quad\,\,{\displaystyle {\operatornamewithlimits{minimize}_{\lambda_{q}\in[0,\,\lambda^{\max}]}}} & -\lambda_{q}\cdot I_{q}(\mathbf{x}_{q}), & \,\,\, q=1,\ldots,Q,\medskip\\
(iii):\quad{\displaystyle {\operatornamewithlimits{minimize}_{\pi_{t}\in\mathcal{S}_{t}}}} & -\pi_{t}\cdot I(\mathbf{x}).
\end{array}\label{eq:G_VI_augmented}
\end{equation}
}The following connection holds between the above game and the $\text{{VI}}(\mathcal{Z}_{t},\,\boldsymbol{{\Psi}})$.

\begin{lemma} \label{Lemma_VI_game_noprox}Under the setting of Lemma
\ref{Lemma_G_VI_equivalence}, the $\text{{VI}}(\mathcal{Z}_{t},\,\boldsymbol{{\Psi}})$
is equivalent to the game in (\ref{eq:G_VI_augmented}), which always
admits a NE. \end{lemma}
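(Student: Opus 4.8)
The plan is to read (\ref{eq:G_VI_augmented}) as a \emph{convex} game with no side constraints and then apply the standard variational characterization of its Nash equilibria, handling the multiplier domain with the a priori bound $\lambda^{\max}$. First I would check the convexity/compactness prerequisites. Each type-(i) player minimizes $\mathcal{L}_q((\mathbf{x}_q,\lambda_q),\mathbf{x}_{-q},\pi_t)$ over the polyhedron $\mathcal{Y}_q$; under the hypothesis inherited from Lemma \ref{Lemma_G_VI_equivalence}, the diagonal block of the positive definite matrix $\mathbf{A}(\mathbf{x},\boldsymbol{\lambda},\pi_t)$, namely $\nabla_{\mathbf{x}_q}^2\mathcal{L}_q$, is itself positive definite, so this objective is (strictly) convex in $\mathbf{x}_q$. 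The type-(ii) and type-(iii) players minimize functions that are \emph{linear} in their own variables $\lambda_q$ and $\pi_t$ over the compact intervals $[0,\lambda^{\max}]$ and $\mathcal{S}_t$. Thus every player solves a convex program over a nonempty convex compact set with continuous payoff, and Rosen's theorem \cite{Rosen_econ65} yields a Nash equilibrium; this already gives the ``always admits a NE'' assertion. Note that, because the convex constraints $\mathcal{Y}_q$ are retained explicitly, no constraint qualification is needed here.

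For the equivalence I would argue both inclusions at the level of first-order conditions. Given a Nash equilibrium $(\mathbf{x}^\star,\boldsymbol{\lambda}^\star,\pi_t^\star)$, the minimum principle for each convex subproblem produces $(\mathbf{x}_q-\mathbf{x}_q^\star)^T\nabla_{\mathbf{x}_q}\mathcal{L}_q\geq0$ on $\mathcal{Y}_q$, $(\lambda_q-\lambda_q^\star)(-I_q(\mathbf{x}_q^\star))\geq0$ on $[0,\lambda^{\max}]$, and $(\pi_t-\pi_t^\star)(-I(\mathbf{x}^\star))\geq0$ on $\mathcal{S}_t$; stacking these and reading off the block structure of $\boldsymbol{\Psi}$ gives the VI over $\mathcal{Y}\times[0,\lambda^{\max}]^Q\times\mathcal{S}_t$. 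Conversely, since $\mathcal{Z}_t$ is a Cartesian product and the components of $\boldsymbol{\Psi}$ depend on the variables blockwise, any solution of VI$(\mathcal{Z}_t,\boldsymbol{\Psi})$ decouples into exactly these per-player inequalities; convexity upgrades the type-(i) first-order condition to global optimality over $\mathcal{Y}_q$, while the linear conditions are the optimality conditions for the type-(ii) and type-(iii) players, so the VI solution is a Nash equilibrium.

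The step that requires care is reconciling the two domains for the multipliers: the game truncates each $\lambda_q$ to $[0,\lambda^{\max}]$, whereas VI$(\mathcal{Z}_t,\boldsymbol{\Psi})$ ranges over the unbounded cone $\mathbb{R}_+^Q$. In the forward direction I must rule out a spurious equilibrium with $I_q(\mathbf{x}_q^\star)>0$ and $\lambda_q^\star=\lambda^{\max}$, at which the truncated inequality holds but the full complementarity $0\leq\lambda_q^\star\perp -I_q(\mathbf{x}_q^\star)\geq0$ fails. Here I would invoke Lemma \ref{Lemma_bounded_multipliers}: the constant $\lambda^{\max}$ in (\ref{eq:t_star_def}) is large enough that the minimizer of $\mathcal{L}_q((\cdot,\lambda^{\max}),\mathbf{x}_{-q}^\star,\pi_t^\star)$ over $\mathcal{Y}_q$ is already feasible for the nonconvex constraint, forcing $I_q(\mathbf{x}_q^\star)\leq0$; the complementarity then holds and the $\lambda_q$-inequality extends from $[0,\lambda^{\max}]$ to all of $\mathbb{R}_+$. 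In the reverse direction the same bound, together with Lemma \ref{Lemma_G_VI_equivalence}, places $\boldsymbol{\lambda}^\star\in[0,\lambda^{\max}]^Q$, so restricting the VI to the truncated set is lossless and the per-player decoupling above applies verbatim.

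I expect this multiplier-truncation bookkeeping to be the only delicate point; everything else is the routine convex-game-to-VI dictionary, and the existence claim follows either directly from Rosen's theorem applied to (\ref{eq:G_VI_augmented}) or, equivalently, from the solvability of VI$(\mathcal{Z}_t,\boldsymbol{\Psi})$ already recorded in Lemma \ref{Lemma_G_VI_equivalence}.
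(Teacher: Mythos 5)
Your proposal is correct and coincides with the argument the paper leaves implicit: Lemma \ref{Lemma_VI_game_noprox} is stated without proof, the intended justification being exactly your convex-game/VI dictionary over the Cartesian set $\mathcal{Y}\times[0,\lambda^{\max}]^{Q}\times\mathcal{S}_{t}$ (convexity of each $\mathcal{L}_{q}$ in $\mathbf{x}_{q}$ from the diagonal blocks of the positive definite $\mathbf{A}$, linearity of the virtual players' objectives, existence from a compact convex game) combined with the multiplier bound $\lambda_{q}^{\star}\leq\lambda^{\max}$ to reconcile the truncated box with the cone $\mathbb{R}_{+}^{Q}$ in $\mathcal{Z}_{t}$. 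The only joint worth tightening is your citation of Lemma \ref{Lemma_bounded_multipliers} in the forward direction: as stated it bounds multipliers at KKT points of the \emph{constrained} problem, so to exclude a game equilibrium with $\lambda_{q}^{\star}=\lambda^{\max}$ and $I_{q}(\mathbf{x}_{q}^{\star})>0$ you should rerun the computation (\ref{eq:bound_on_multipliers})--(\ref{eq:mult bound}) at the minimizer of $\mathcal{L}_{q}(\cdot,\lambda^{\max})$ over $\mathcal{Y}_{q}$ --- it uses only stationarity and complementarity of the constraints defining $\mathcal{Y}_{q}$, and under the hypothesized violation yields $\lambda^{\max}\, I_{q}^{\max}<\lambda_{q}^{\max}\leq\lambda^{\max}\, I_{q}^{\max}$, the desired contradiction, obtained from the lemma's proof rather than its statement.
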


Note that the game in (\ref{eq:G_VI_augmented}) is composed of $2Q+1$
players. The first $Q$ players in (i) correspond to the players of
the original game $\mathcal{G}(\mathcal{X},\,{\boldsymbol{{\theta}}})$$-$the
SUs in the system$-$that now optimize a different cost function,
which is the ``Lagrangian'' function associated with their original
cost functions in (\ref{eq:player q transformed 1}), for a given
set of price $\pi_{t}$ and multiplies $\boldsymbol{{\lambda}}$.
In addition to the $Q$ SUs, there are $Q+1$ more players solving
problems (ii) and (iii); they act as virtual players who aim to compute
the optimal multipliers $\lambda_{q}$'s associated with the nonconvex
local interference constraints $\{I_{q}(\mathbf{x}_{q}),\, q=1,\ldots,Q\}$
and the optimal price $\pi_{t}$, respectively. By introducing these
virtual players, the original game $\mathcal{G}(\mathcal{X},\,{\boldsymbol{{\theta}}})$
can be transformed (under the setting of Lemma \ref{Lemma_G_VI_equivalence})
into the desired (compact and) \emph{convex} game with only \emph{local
constraints}, which paves the way to the design of best-response algorithms
for the game $\mathcal{G}(\mathcal{X},\,{\boldsymbol{{\theta}}})$. 

We proved in Appendix \ref{sec:Convergence-of-Asynchronous_BR_local_constraints}
that the best-response algorithms introduced in Sec. \ref{sub:Game-with-exogenous}
converge under conditions implying the uniqueness of individual player's
optimization problems. 
The game in the form (\ref{eq:G_VI_augmented}) however may never
satisfy such conditions; indeed, the linear programming problems in
(ii) and (iii) have multiple optimal solutions whenever some $I_{q}(\mathbf{x}_{q})=0$
or $I(\mathbf{x})=0$. To overcome this issue, we follow a similar
idea as in Step 1 of Sec. \ref{sub:Existence-and-uniqueness} and
introduce in (ii) and (iii) of (\ref{eq:G_VI_augmented}) a proximal-based
regularization of the $\lambda$-variables and price $\pi_{t}$, so
that the resulting modified optimization problems become strongly
convex. Given the center of the regularization of the $\lambda$-variables,
say $\boldsymbol{{\lambda}}^{0}\triangleq(\lambda_{q}^{0})_{q=1}^{Q}$,
and the price $\pi_{t}$, say $\pi_{t}^{0}$, and the proximal gain
$\alpha>0$, the regularized version of the game in (\ref{eq:G_VI_augmented}),
denoted by $\mathcal{G}_{t}(\mathcal{X},\boldsymbol{{\theta}},\boldsymbol{{\lambda}}^{0},\pi_{t}^{0})$
is the following. \vspace{0.5cm}

\hspace{-0.6cm}%
\framebox{\begin{minipage}[t]{0.98\columnwidth}%
\textbf{Game} $\mathcal{G}_{t}(\mathcal{X},\boldsymbol{{\theta}},\boldsymbol{{\lambda}}^{0},\pi_{t}^{0})$.
Anticipating rivals' strategies and given $\boldsymbol{{\lambda}}^{0}\triangleq(\lambda_{q}^{0})_{q=1}^{Q}$,
$\pi_{t}^{0}$, and $\alpha>0$, the $2Q+1$ players solve the following
optimization problems: 
\begin{equation}
\begin{array}{clc}
{\displaystyle {\operatornamewithlimits{minimize}_{\mathbf{x}_{q}\in{\mathcal{Y}}_{q}}}} & \mathcal{L}_{q}\left((\mathbf{x}_{q},\lambda_{q}),\,\mathbf{x}_{-q},\pi_{t}\right), & q=1,\ldots,Q\medskip\\
{\displaystyle {\operatornamewithlimits{minimize}_{\lambda_{q}\in[0,\,\lambda^{\max}]}}} & -\lambda_{q}\cdot I_{q}(\mathbf{x}_{q})+\dfrac{{\alpha}}{2}\,\left(\lambda_{q}-\lambda_{q}^{0}\right)^{2}, & \quad\quad\,\,\, q=1,\ldots,Q,\medskip\\
{\displaystyle {\operatornamewithlimits{minimize}_{\pi_{t}\in\mathcal{S}_{t}}}} & -\pi_{t}\cdot I(\mathbf{x})+\dfrac{{\alpha}}{2}\,\left(\pi_{t}-\pi_{t}^{0}\right)^{2}
\end{array}\label{eq:G_prox}
\end{equation}
\end{minipage}}\vspace{0.5cm}

The main (desired) property of game $\mathcal{G}_{t}(\mathcal{X},\boldsymbol{{\theta}},\boldsymbol{{\lambda}}^{0},\pi_{t}^{0})$
is that, under the setting of Lemma \ref{Lemma_G_VI_equivalence},
the NE is unique and it can be computed with convergence guarantee
using best-response algorithms as those introduced in Sec. \ref{sub:Game-with-exogenous}
(we make formal this statement shortly). Nice as it is, this result
would be of no practical interest if we were not able to connect the
solutions of $\mathcal{G}_{t}(\mathcal{X},\boldsymbol{{\theta}},\boldsymbol{{\lambda}}^{0},\pi_{t}^{0})$
with those of the game in (\ref{eq:G_VI_augmented}) and thus the
original game $\mathcal{G}(\mathcal{X},\,{\boldsymbol{{\theta}}})$.
In fact, the solution of $\mathcal{G}_{t}(\mathcal{X},\boldsymbol{{\theta}},\boldsymbol{{\lambda}}^{0},\pi_{t}^{0})$
and (\ref{eq:G_VI_augmented}) are in general different but, nevertheless,
there exists a connection between them, as stated in the following
lemma. 

\begin{lemma} \label{Lemma_NE_VI_game_prox-Game}Under the setting
of Lemma \ref{Lemma_G_VI_equivalence}, a tuple $(\mathbf{x}^{\star},\boldsymbol{{\lambda}}^{\star},\pi_{t}^{\star})$
is a NE of the game in (\ref{eq:G_VI_augmented}) if and only if it
is a NE of the game $\mathcal{G}_{t}(\mathcal{X},\boldsymbol{{\theta}},\boldsymbol{{\lambda}}^{\star},\pi_{t}^{\star})$.
Therefore, such a $(\mathbf{x}^{\star},\pi_{t}^{\star})$ is a NE
of the original game $\mathcal{G}(\mathcal{X},\boldsymbol{{\theta}})$.\end{lemma}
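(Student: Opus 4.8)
The plan is to prove the equivalence player-by-player through first-order (variational) optimality conditions, exploiting the crucial fact that in $\mathcal{G}_{t}(\mathcal{X},\boldsymbol{{\theta}},\boldsymbol{{\lambda}}^{\star},\pi_{t}^{\star})$ the proximal regularization is centered \emph{at the very point} $(\boldsymbol{{\lambda}}^{\star},\pi_{t}^{\star})$ being tested, so that its gradient contribution vanishes there. First I would observe that the two games in (\ref{eq:G_VI_augmented}) and $\mathcal{G}_{t}(\mathcal{X},\boldsymbol{{\theta}},\boldsymbol{{\lambda}}^{\star},\pi_{t}^{\star})$ in (\ref{eq:G_prox}) share \emph{identically} the first $Q$ players (the SUs): their objective $\mathcal{L}_{q}$ and feasible set $\mathcal{Y}_{q}$ carry no proximal term, so the best-response condition for the $\mathbf{x}_{q}$-component$-$the variational inequality in $(\mbox{i}^{'})$ of (\ref{eq:KKT_player_q_2}) written at the common profile$-$is the same in both games. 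Hence that block of the NE conditions holds in one game if and only if it holds in the other, with no further argument.

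Second, for each $\lambda_{q}$ virtual player I would write the optimality condition in each game. In (\ref{eq:G_VI_augmented}) the player solves the linear program $\min_{\lambda_{q}\in[0,\lambda^{\max}]}-\lambda_{q}\cdot I_{q}(\mathbf{x}_{q}^{\star})$, whose optimality VI at $\lambda_{q}^{\star}$ reads $(\lambda_{q}-\lambda_{q}^{\star})(-I_{q}(\mathbf{x}_{q}^{\star}))\geq 0$ for all feasible $\lambda_{q}$. In $\mathcal{G}_{t}(\mathcal{X},\boldsymbol{{\theta}},\boldsymbol{{\lambda}}^{\star},\pi_{t}^{\star})$ the regularized objective adds $\tfrac{\alpha}{2}(\lambda_{q}-\lambda_{q}^{0})^{2}$ with center $\lambda_{q}^{0}=\lambda_{q}^{\star}$; its gradient at $\lambda_{q}^{\star}$ is $-I_{q}(\mathbf{x}_{q}^{\star})+\alpha(\lambda_{q}^{\star}-\lambda_{q}^{\star})=-I_{q}(\mathbf{x}_{q}^{\star})$, so the optimality VI of the regularized problem at $\lambda_{q}^{\star}$ is \emph{identical}. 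Since both objectives are convex, these VIs are necessary and sufficient for optimality, whence $\lambda_{q}^{\star}$ solves one problem if and only if it solves the other. The identical computation applies verbatim to the price player, with $\pi_{t}^{0}=\pi_{t}^{\star}$. Assembling the three blocks yields the desired "if and only if."

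For the concluding assertion I would simply chain the two preceding lemmas: by Lemma \ref{Lemma_VI_game_noprox} a NE of (\ref{eq:G_VI_augmented}) is a solution of the VI$(\mathcal{Z}_{t},\,\boldsymbol{{\Psi}})$, and by Lemma \ref{Lemma_G_VI_equivalence} the $(\mathbf{x}^{\star},\pi_{t}^{\star})$-part of any such solution is a NE of the original game $\mathcal{G}(\mathcal{X},\boldsymbol{{\theta}})$; this is exactly where the standing hypothesis that $\mathbf{A}(\mathbf{x},\boldsymbol{{\lambda}},\pi_{t})$ be positive definite on $\mathcal{Y}\times[0,\lambda^{\max}]^{Q}\times\mathcal{S}_{t}$ enters.

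The only real subtlety$-$and the step I would flag as the main point to state precisely$-$is the self-referential role of the regularization center. The lemma does \emph{not} claim that an arbitrary regularized game $\mathcal{G}_{t}(\mathcal{X},\boldsymbol{{\theta}},\boldsymbol{{\lambda}}^{0},\pi_{t}^{0})$ shares the NE of (\ref{eq:G_VI_augmented}); it asserts this only when the center $(\boldsymbol{{\lambda}}^{0},\pi_{t}^{0})$ is taken equal to the candidate equilibrium $(\boldsymbol{{\lambda}}^{\star},\pi_{t}^{\star})$ itself. I would therefore phrase the virtual players' conditions as ``$\lambda_{q}^{\star}$ is a best response \emph{given} the center $\lambda_{q}^{0}=\lambda_{q}^{\star}$,'' making the vanishing of the proximal gradient legitimate and avoiding any confusion with a statement about matching equilibria across \emph{different} centers. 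This is precisely the fixed-point-of-the-proximal-map device already used in Step 1 of Sec. \ref{sub:Existence-and-uniqueness}, and beyond convexity of the virtual players' objectives no extra regularity is required.
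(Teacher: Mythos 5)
Your proof is correct, and it rests on the same core mechanism the paper uses: the proximal regularization is centered at the very point being tested, so its gradient contribution vanishes there. The packaging, however, differs. The paper never writes a standalone proof of this lemma; the underlying argument appears in aggregated VI form in Appendix \ref{app:Convergence-of-Best-Response}, where the regularized game is identified with the VI$(\mathcal{Z}_{t},\,\boldsymbol{{\Psi}}_{\boldsymbol{{\lambda}}^{0},\pi_{t}^{0}})$ whose defining map (\ref{eq:Phi_epsilon_u}) coincides with $\boldsymbol{{\Psi}}$ whenever the $(\boldsymbol{{\lambda}},\pi_{t})$-argument equals the regularization center, and the matching of equilibria is then carried out through strong monotonicity and uniqueness of the regularized VI {[}see the proof of Proposition \ref{Prop_partial_regularization}(a){]}. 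Your blockwise, per-player derivation buys two small things: the ``if and only if'' is obtained without invoking uniqueness of the regularized equilibrium, and it makes explicit that only convexity of the virtual players' objectives (linear, respectively strongly convex quadratic, over compact intervals) is needed for the equivalence between the two games, the positive definiteness of $\mathbf{A}(\mathbf{x},\boldsymbol{{\lambda}},\pi_{t})$ in (\ref{eq:def_A_matrix}) entering solely through the final chaining via Lemma \ref{Lemma_VI_game_noprox} and Lemma \ref{Lemma_G_VI_equivalence}. One refinement of emphasis: for the first $Q$ players, base the equivalence on the literal identity of their optimization problems in (\ref{eq:G_VI_augmented}) and (\ref{eq:G_prox})---so that global optimality transfers verbatim---rather than on the stationarity condition in (\ref{eq:KKT_player_q_2}), which for the nonconvex minimization of $\mathcal{L}_{q}$ over $\mathcal{Y}_{q}$ is only necessary unless you invoke the standing hypothesis (positive definiteness of the diagonal blocks $\nabla_{\mathbf{x}_{q}}^{2}\mathcal{L}_{q}$) to make that problem convex; since you state the identity of the $Q$ players' problems first, this is a presentational point, not a gap.
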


Providing the relationship between $\mathcal{G}(\mathcal{X},\boldsymbol{{\theta}})$,
the game in (\ref{eq:G_VI_augmented}), and $\mathcal{G}_{t}(\mathcal{X},\boldsymbol{{\theta}},\boldsymbol{{\lambda}}^{0},\pi_{t}^{0})$,
Lemma \ref{Lemma_NE_VI_game_prox-Game} opens the way to the design
of best-response algorithms that solve the original game $\mathcal{G}(\mathcal{X},\boldsymbol{{\theta}})$:
instead of solving $\mathcal{G}(\mathcal{X},\boldsymbol{{\theta}})$
directly, starting from an arbitrary regularization tuple $(\boldsymbol{{\lambda}}^{0},\pi_{t}^{0})>\mathbf{0}$,
one can solve the sequence of games $\mathcal{G}_{t}(\mathcal{X},\boldsymbol{{\theta}},\boldsymbol{{\lambda}}^{0},\pi_{t}^{0})\rightarrow\cdots\rightarrow\mathcal{G}_{t}(\mathcal{X},\boldsymbol{{\theta}},\boldsymbol{{\lambda}}^{n},\pi_{t}^{n})\rightarrow\cdots$,
where the center $(\boldsymbol{{\lambda}}^{n},\pi_{t}^{n})$ of the
regularization of the game at stage $n$ is just the $(\lambda,\pi_{t})$-component
of the (unique) NE of the game $\mathcal{G}_{t}(\mathcal{X},\boldsymbol{{\theta}},\boldsymbol{{\lambda}}^{n-1},\pi_{t}^{n-1})$
in the previous stage. If this procedure converges, it must converge
to a tuple $(\mathbf{x}^{\star},\boldsymbol{{\lambda}}^{\star},\pi_{t}^{\star})$
that necessarily is a NE of the game $\mathcal{G}_{t}(\mathcal{X},\boldsymbol{{\theta}},\boldsymbol{{\lambda}}^{\star},\pi_{t}^{\star})$,
which implies by Lemma \ref{Lemma_NE_VI_game_prox-Game} that $(\mathbf{x}^{\star},\boldsymbol{{\lambda}}^{\star})$
is also a NE of the original game $\mathcal{G}(\mathcal{X},\boldsymbol{{\theta}})$.
A flow-chart with the connection of all these games along with an
informal description of the above ideas is given in Figure \ref{Fig_GT-VI_eq}.
\begin{figure}[H]
\center\includegraphics[height=9.5cm]{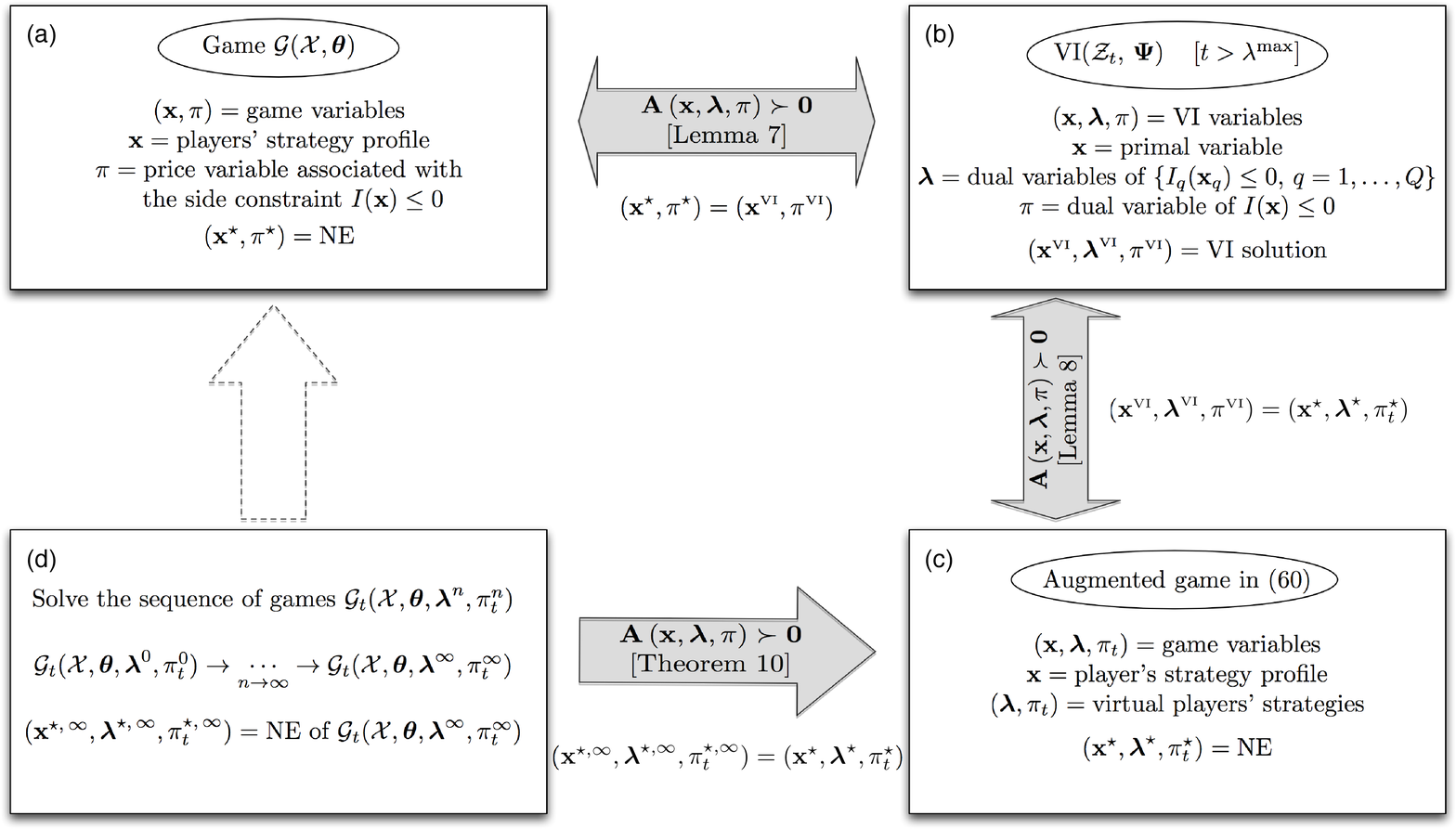}\vspace{-.2cm}

{\footnotesize \caption{{\footnotesize Connection among $\mathcal{G}(\mathcal{X},\boldsymbol{{\theta}})$,
VI$(\mathcal{Z}_{t},\boldsymbol{{\Psi}})$ , and the sequence of games
$\mathcal{G}_{t}(\mathcal{X},\boldsymbol{{\theta}},\boldsymbol{{\lambda}}^{n},\pi_{t}^{n})$.
Under the setting of Lemma \ref{Lemma_G_VI_equivalence}, we have
the following: i) $\mathcal{G}(\mathcal{X},\boldsymbol{{\theta}})$
in (a) is equivalent to the ``augmented'' VI$(\mathcal{Z}_{t},\boldsymbol{{\Psi}})$
in (b), where the local interference constraints $\{I_{q}(\mathbf{x}_{q}),\,\, q=1,\dots,Q\}$
are ``relaxed'' by introducing the multipliers $\boldsymbol{{\lambda}}\triangleq(\lambda_{q})_{q=1}^{Q}$
and $\pi$ is a variable of the VI; ii) the VI$(\mathcal{Z}_{t},\boldsymbol{{\Psi}})$
can be interpreted as a (compact) }\emph{\footnotesize convex}{\footnotesize{}
``augmented'' game with }\emph{\footnotesize no side constraints
}{\footnotesize as represented in (c) {[}see (\ref{eq:G_VI_augmented}){]},
where there are $Q$ real players, the SUs, and $Q+1$ virtual players
who aim to optimize the multipliers $\lambda_{q}$'s and the price
variable $\pi_{t}$; iii) a NE of the augmented game (\ref{eq:G_VI_augmented}),
and thus the original game $\mathcal{G}(\mathcal{X},\boldsymbol{{\theta}})$,
is computed via best-response algorithms solving the sequence of regularized
convex games with no side constraints $\mathcal{G}_{t}(\mathcal{X},\boldsymbol{{\theta}},\boldsymbol{{\lambda}}^{0},\pi_{t}^{0})\rightarrow\cdots\rightarrow\mathcal{G}_{t}(\mathcal{X},\boldsymbol{{\theta}},\boldsymbol{{\lambda}}^{\infty},\pi_{t}^{\infty})$
as shown in (d).}{\small{} }\label{Fig_GT-VI_eq}}
}
\end{figure}

A formal description of the above solution method is given in Algorithm
\ref{alg:PDA_GT_interpretation} below, which provides the desired
best-response based scheme solving the game $\mathcal{G}(\mathcal{X},\boldsymbol{{\theta}})$;
the convergence conditions are given in Theorem \ref{ProxDecAlg_viaGT_conv_theo}.
In the algorithm we use the following notation: given $\left(\boldsymbol{{\lambda}}^{n},\pi_{t}^{n}\right)$,
we denote by $(\mathbf{x}^{\star}(\boldsymbol{{\lambda}}^{n},\pi_{t}^{n}),\boldsymbol{{\lambda}}^{\star}(\boldsymbol{{\lambda}}^{n},\pi_{t}^{n}),$
$\pi_{t}^{\star}(\boldsymbol{{\lambda}}^{n},\pi_{t}^{n}))$ the NE
tuple of the game $\mathcal{G}_{t}(\mathcal{X},\boldsymbol{{\theta}},\boldsymbol{{\lambda}}^{n},\pi_{t}^{n})$,
where we make explicit the dependence on the regularization offset
$\left(\boldsymbol{{\lambda}}^{n},\pi_{t}^{n}\right)$.

\begin{algo}{Best-Response Algorithm for $\mathcal{G}(\mathcal{X},\boldsymbol{{\theta}})$}
S\texttt{$\mbox{(S.0)}:$} Choose any tuple $\left(\boldsymbol{{\lambda}}^{0},\pi_{t}^{0}\right)>\mathbf{0}$,
with $\boldsymbol{{\lambda}}^{0}\triangleq(\lambda_{q}^{0})_{q=1}^{Q}$,
and some $\epsilon\in(0,1)$; set $n=0$.

\texttt{$\mbox{(S.1)}:$} \texttt{If} $\left(\mathbf{x}^{\star}(\boldsymbol{{\lambda}}^{n},\pi_{t}^{n}),\,\boldsymbol{{\lambda}}^{\star}(\boldsymbol{{\lambda}}^{n},\pi_{t}^{n}),\,\pi_{t}^{\star}(\boldsymbol{{\lambda}}^{n},\pi_{t}^{n})\right)$
satisfies a suitable termination criterion: STOP.

\texttt{$\mbox{(S.2)}:$} Solve the game $\mathcal{G}_{t}(\mathcal{X},\boldsymbol{{\theta}},\boldsymbol{{\lambda}}^{n},\pi_{t}^{n})$;
let $\left(\mathbf{x}^{\star}(\boldsymbol{{\lambda}}^{n},\pi_{t}^{n}),\,\boldsymbol{{\lambda}}^{\star}(\boldsymbol{{\lambda}}^{n},\pi_{t}^{n}),\,\pi_{t}^{\star}(\boldsymbol{{\lambda}}^{n},\pi_{t}^{n})\right)$
be the NE. 

\texttt{$\mbox{(S.3)}:$} Update the center of the regularization:
\begin{equation}
\begin{array}{l}
\lambda_{q}^{n+1}\triangleq(1-\epsilon)\cdot\lambda_{q}^{n}+\epsilon\cdot\lambda_{q}^{\star}(\boldsymbol{{\lambda}}^{n},\pi_{t}^{n}),\quad q=1,\ldots,Q,\medskip\\
\pi_{t}^{n+1}\triangleq(1-\epsilon)\cdot\pi_{t}^{n}+\epsilon\cdot\pi_{t}^{\star}(\boldsymbol{{\lambda}}^{n},\pi_{t}^{n}).\vspace{-0.3cm}
\end{array}\label{eq:price_multipliers_update_in_Algorithm_prox}
\end{equation}
.

\texttt{$\mbox{(S.4)}:$} $n\leftarrow n+1$; go to \texttt{$\mbox{(S.1)}.$}
\label{alg:PDA_GT_interpretation}\end{algo}\medskip 

\begin{theorem}\label{ProxDecAlg_viaGT_conv_theo} Under the setting
of Lemma \ref{Lemma_G_VI_equivalence}, the sequence $\left\{ \left(\mathbf{x}^{\star}(\boldsymbol{{\lambda}}^{n},\pi_{t}^{n}),\,\pi_{t}^{\star}(\boldsymbol{{\lambda}}^{n},\pi_{t}^{n})\right)\right\} {}_{n=0}^{\infty}$
generated by Algorithm \ref{alg:PDA_GT_interpretation} globally converges
to a NE of $\mathcal{G}(\mathcal{X},\boldsymbol{{\theta}})$. \end{theorem}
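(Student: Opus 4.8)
The plan is to recognize the outer loop of Algorithm~\ref{alg:PDA_GT_interpretation} as a proximal-point iteration on the monotone VI$(\mathcal{Z}_t,\boldsymbol{\Psi})$ and to invoke fixed-point theory for nonexpansive maps. First I would define the dual map $\mathcal{T}:(\boldsymbol{\lambda},\pi_t)\mapsto(\boldsymbol{\lambda}^\star(\boldsymbol{\lambda},\pi_t),\pi_t^\star(\boldsymbol{\lambda},\pi_t))$ that sends the regularization center to the $(\boldsymbol{\lambda},\pi_t)$-component of the \emph{unique} NE of the regularized game $\mathcal{G}_t(\mathcal{X},\boldsymbol{\theta},\boldsymbol{\lambda},\pi_t)$ in (\ref{eq:G_prox}); single-valuedness follows because, under the setting of Lemma~\ref{Lemma_G_VI_equivalence}, the proximal terms render the regularized VI strongly monotone and hence uniquely solvable. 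With this notation the update (\ref{eq:price_multipliers_update_in_Algorithm_prox}) is exactly the averaged (Krasnoselskii--Mann) iteration $(\boldsymbol{\lambda}^{n+1},\pi_t^{n+1})=(1-\epsilon)(\boldsymbol{\lambda}^{n},\pi_t^{n})+\epsilon\,\mathcal{T}(\boldsymbol{\lambda}^{n},\pi_t^{n})$, and a fixed point of $\mathcal{T}$ is, by Lemma~\ref{Lemma_NE_VI_game_prox-Game}, precisely the dual component of a NE of the augmented game (\ref{eq:G_VI_augmented}).

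The second ingredient is the monotonicity of $\boldsymbol{\Psi}$ on $\mathcal{Z}_t$. I would read off the Jacobian block structure of $\boldsymbol{\Psi}$ in (\ref{eq:KKT_Game_Gt}): its $\mathbf{x}$-diagonal block is $\mathbf{A}(\mathbf{x},\boldsymbol{\lambda},\pi_t)$, which is positive definite by hypothesis, while the $(\boldsymbol{\lambda},\pi_t)$-diagonal block vanishes because the $\boldsymbol{\lambda}$- and $\pi_t$-components of $\boldsymbol{\Psi}$, namely $-I_q(\mathbf{x}_q)$ and $-I(\mathbf{x})$, do not depend on the dual variables. The crucial observation is that the primal--dual coupling is skew-symmetric: $\partial_{\lambda_r}\nabla_{\mathbf{x}_q}\mathcal{L}_q=\delta_{qr}\nabla_{\mathbf{x}_q}I_q$ and $\partial_{\pi_t}\nabla_{\mathbf{x}_q}\mathcal{L}_q=\nabla_{\mathbf{x}_q}I$ are exactly the negative transposes of $\partial_{\mathbf{x}}(-I_q)$ and $\partial_{\mathbf{x}}(-I)$. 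Hence the symmetrized Jacobian is block-diagonal with a positive-definite $\mathbf{x}$-block and a zero dual block, so $\boldsymbol{\Psi}$ is monotone, and strongly monotone in the $\mathbf{x}$-direction.

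The decisive step is then to show that $\mathcal{T}$ is nonexpansive. Because the $\mathbf{x}$-block is uniformly strongly monotone, for each fixed $(\boldsymbol{\lambda},\pi_t)$ the $\mathbf{x}$-subproblem has a unique solution $\mathbf{x}^\star(\boldsymbol{\lambda},\pi_t)$ that is Lipschitz continuous in the dual variables; eliminating $\mathbf{x}$ leaves a \emph{reduced} VI in $(\boldsymbol{\lambda},\pi_t)$ whose defining operator $T_{\mathrm{red}}$ --- the Schur reduction of the monotone $\boldsymbol{\Psi}$ by a strongly monotone block --- is again monotone (as in the linear saddle prototype, where reducing $(\mathbf{A}\mathbf{x}+\mathbf{B}\mathbf{y},\,-\mathbf{B}^{T}\mathbf{x})$ yields the positive-semidefinite map $\mathbf{B}^{T}\mathbf{A}^{-1}\mathbf{B}$). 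The map $\mathcal{T}$ is then exactly the resolvent of $T_{\mathrm{red}}$ with proximal parameter $\alpha$ over $[0,\lambda^{\max}]^Q\times\mathcal{S}_t$, and resolvents of maximal monotone operators are firmly nonexpansive, in particular nonexpansive. I expect verifying that $T_{\mathrm{red}}$ stays monotone --- that eliminating the strongly monotone primal block does not destroy monotonicity in the presence of the polyhedral constraints $\mathcal{Y}_q$ --- to be the main technical obstacle, since it requires combining the variational characterization of $\mathbf{x}^\star(\cdot)$ with the skew coupling above.

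Finally, I would assemble the pieces. Since VI$(\mathcal{Z}_t,\boldsymbol{\Psi})$ always has a solution (Lemma~\ref{Lemma_G_VI_equivalence}) and any such solution, used as regularization center, makes the proximal terms vanish and is therefore a fixed point of $\mathcal{T}$, the fixed-point set of the nonexpansive map $\mathcal{T}$ is nonempty. The Krasnoselskii--Mann theorem with $\epsilon\in(0,1)$ then guarantees $(\boldsymbol{\lambda}^n,\pi_t^n)\to(\boldsymbol{\lambda}^\infty,\pi_t^\infty)$, a fixed point of $\mathcal{T}$. Lipschitz continuity of $\mathbf{x}^\star(\cdot)$ yields $\mathbf{x}^\star(\boldsymbol{\lambda}^n,\pi_t^n)\to\mathbf{x}^\star(\boldsymbol{\lambda}^\infty,\pi_t^\infty)$, and Lemmas~\ref{Lemma_NE_VI_game_prox-Game}, \ref{Lemma_VI_game_noprox}, and \ref{Lemma_G_VI_equivalence} identify the limit $(\mathbf{x}^\infty,\pi_t^\infty)$ as a NE of the original game $\mathcal{G}(\mathcal{X},\boldsymbol{\theta})$, which establishes the global convergence asserted in the theorem.
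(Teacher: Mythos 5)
Your proposal is correct, and its overall architecture coincides with the paper's: uniqueness of the NE of each regularized game $\mathcal{G}_{t}(\mathcal{X},\boldsymbol{{\theta}},\boldsymbol{{\lambda}}^{n},\pi_{t}^{n})$ via the strongly monotone VI$(\mathcal{Z}_{t},\boldsymbol{{\Psi}}_{\boldsymbol{{\lambda}}^{n},\pi_{t}^{n}})$; nonexpansiveness of the dual map $(\boldsymbol{{\lambda}},\pi_{t})\mapsto(\boldsymbol{{\lambda}}^{\star}(\boldsymbol{{\lambda}},\pi_{t}),\pi_{t}^{\star}(\boldsymbol{{\lambda}},\pi_{t}))$; nonemptiness of its fixed-point set because any solution of VI$(\mathcal{Z}_{t},\boldsymbol{{\Psi}})$ annihilates the proximal terms and is thus a fixed point; convergence of the averaged update (\ref{eq:price_multipliers_update_in_Algorithm_prox}) to a fixed point; Lipschitz continuity of $\mathbf{x}^{\star}(\cdot)$ to transfer convergence to the primal sequence; and identification of the limit through Lemmas \ref{Lemma_NE_VI_game_prox-Game}--\ref{Lemma_G_VI_equivalence}. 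These are exactly parts (a)--(c) of the paper's Proposition \ref{Prop_partial_regularization} followed by the JOR convergence theorem of Ortega--Rheinboldt, which plays the role of your Krasnoselskii--Mann step. Where you genuinely depart from the paper is in the proof of the central nonexpansiveness lemma: the paper argues directly, evaluating the two regularized VIs at each other's solutions, summing, applying the mean-value theorem to expose the quadratic form in $\mathbf{A}(\mathbf{x},\boldsymbol{{\lambda}},\pi_{t})$, and finishing with Cauchy--Schwarz {[}cf. (\ref{eq:VI_two_points_prices})--(\ref{eq:nonexpansive}){]}; you instead Schur-eliminate the strongly monotone $\mathbf{x}$-block, verify that the reduced dual operator inherits monotonicity from the skew primal--dual coupling (this check is sound: summing the two $\mathbf{x}$-block VI inequalities shows the eliminated block contributes a nonnegative term, and here the dual components $-I_{q}(\mathbf{x}_{q})$, $-I(\mathbf{x})$ do not even depend on the duals), and identify the dual map as the resolvent, with parameter $1/\alpha$, of this maximal monotone operator plus the normal cone of $[0,\lambda^{\max}]^{Q}\times\mathcal{S}_{t}$, hence firmly nonexpansive. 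Your route is more abstract and buys a conceptually cleaner statement (firm nonexpansiveness rather than mere nonexpansiveness, and a reusable reduction principle for monotone saddle structures), at the cost of needing maximal monotonicity of the reduced operator, which you correctly secure via single-valuedness and continuity of $\mathbf{x}^{\star}(\cdot)$; the paper's computation is more elementary and self-contained, and makes the role of the hypothesis on $\mathbf{A}$ explicit. Both rest on the same two structural facts --- the skew-symmetric primal--dual coupling in (\ref{eq:KKT_Game_Gt}) and the uniform positive definiteness of $\mathbf{A}$ --- so no gap remains in either direction.
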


\begin{proof}See Appendix \ref{app:Convergence-of-Best-Response}.
\end{proof}

It is interesting to observe that Algorithm \ref{alg:PDA_GT_interpretation}
converges under the same conditions introduced in Proposition \ref{Theo_Existence-and-uniqueness_NE_G}
and guaranteeing the uniqueness of the $x$-component of the NE of
$\mathcal{G}(\mathcal{X},\boldsymbol{{\theta}})$; we refer to Corollary
\ref{corollary_sf_cond_uniqueness_NE} and Sec. \ref{remark_cond}
for easier conditions to be checked as well as a detailed discussion
on their interpretation in terms of the system parameters. We discuss
next some practical implementation issues related to Algorithm \ref{alg:PDA_GT_interpretation}.

\subsubsection{Discussion on the implementation\label{sub:Implementation-issues_global_algorithm}}

Algorithm \ref{alg:PDA_GT_interpretation} is conceptually a double-loop
scheme wherein at each (outer) iteration $n$, given the current values
of the regularization parameters $(\boldsymbol{{\lambda}}^{n},\pi_{t}^{n})$,
the SUs solve the game $\mathcal{G}_{t}(\mathcal{X},\boldsymbol{{\theta}},\boldsymbol{{\lambda}}^{n},\pi_{t}^{n})$
(with $t>\lambda^{\max}$) {[}Step 2{]}, which requires an inner iterative
process. Once the NE of $\mathcal{G}_{t}(\mathcal{X},\boldsymbol{{\theta}},\boldsymbol{{\lambda}}^{n},\pi_{t}^{n})$
is reached, the regularization parameters $(\boldsymbol{{\lambda}}^{n},\pi_{t}^{n})$
are updated according to (\ref{eq:price_multipliers_update_in_Algorithm_prox})
{[}Step 3{]}, which represents the outer loop, and the new game $\mathcal{G}_{t}(\mathcal{X},\boldsymbol{{\theta}},\boldsymbol{{\lambda}}^{n+1},\pi_{t}^{n+1})$
is played again (if the convergence criterion in Step 1 is not met).
In practice, however, Algorithm \ref{alg:PDA_GT_interpretation} is
implementable as a single-loop scheme: the SUs play the game $\mathcal{G}_{t}(\mathcal{X},\boldsymbol{{\theta}},\boldsymbol{{\lambda}}^{n},\pi_{t}^{n})$,
wherein from ``time to time'' (more precisely, when a NE is reached
within the required accuracy) the objective functions of the virtual
players are changed by updating the regularization terms from $\frac{\alpha}{2}(\lambda_{q}-\lambda_{q}^{n})$
and $\frac{\alpha}{2}(\pi_{t}-\pi_{t}^{n})$ to $\frac{\alpha}{2}(\lambda_{q}-\lambda_{q}^{n+1})$
and $\frac{\alpha}{2}(\pi_{t}-\pi_{t}^{n+1})$, respectively. 

In order to implement the aforementioned single-scale scheme, the
following issues need to be addressed: 1) How to solve each inner
game $\mathcal{G}_{t}(\mathcal{X},\boldsymbol{{\theta}},\boldsymbol{{\lambda}}^{n},\pi_{t}^{n})$
via distributed best-response algorithms? 2) How to update the regularization
parameters in a distributed way? and 3) How to check the terminations
of the inner process in Step 2$-$the SUs have reached a NE of the
game $\mathcal{G}_{t}(\mathcal{X},\boldsymbol{{\theta}},\boldsymbol{{\lambda}}^{n},\pi_{t}^{n})$
within the desired accuracy? We provide an answer to these questions
next.

\subsubsection*{On the inner game and price/multipliers update {[}Steps 2 and 3{]}}

Capitalizing on the solution methods that we developed in Sec. \ref{sub:Game-with-exogenous}
for games with exogenous price and no side constraints, a natural
choice for computing a NE of each $\mathcal{G}_{t}(\mathcal{X},\boldsymbol{{\theta}},\boldsymbol{{\lambda}}^{n},\pi_{t}^{n})$
in Step 2 of Algorithm \ref{alg:PDA_GT_interpretation} is applying
those best-response asynchronous algorithms to $\mathcal{G}_{t}(\mathcal{X},\boldsymbol{{\theta}},\boldsymbol{{\lambda}}^{n},\pi_{t}^{n})$.
For instance, if a Jacobi scheme is chosen (cf. Algorithm \ref{async_best-response_algo}),
Algorithm \ref{alg:PDA_GT_interpretation} reduces to Algorithm \ref{algo_bi_level}
below, which sheds light on the signaling and complexity requirements
of the proposed class of algorithms. In Algorithm \ref{algo_bi_level}
we use the following notation: $\left(\mathbf{x}^{\star}(\overline{{\boldsymbol{{\lambda}}}},\overline{\pi}_{t}),\,\boldsymbol{{\lambda}}^{\star}(\overline{{\boldsymbol{{\lambda}}}},\overline{\pi}_{t}),\,\pi_{t}^{\star}(\overline{{\boldsymbol{{\lambda}}}},\overline{\pi}_{t})\right)$
denotes the NE of $\mathcal{G}_{t}(\mathcal{X},\boldsymbol{{\theta}},\overline{{\boldsymbol{{\lambda}}}},\overline{\pi}_{t})$,
and $[x]_{0}^{\lambda^{\max}}$ in (\ref{eq:Async_update_NE_sel})
is the Euclidean projection onto the interval $[0,\,\lambda^{\max}]$,
i.e., $ $ $[x]_{0}^{\lambda^{\max}}\triangleq\max(0,\min(x,\,\lambda^{\max}))$.

\begin{algo}{Jacobi Best-Response-Consensus Algorithm for $\mathcal{G}(\mathcal{X},\boldsymbol{{\theta}})$}
S\texttt{$\mbox{(S.0)}:$} Choose i) any arbitrary starting point
$(\mathbf{x}^{(0)},\boldsymbol{{\lambda}}^{(0)},\pi_{t}^{(0)})$,
with $\mathbf{x}^{(0)}\triangleq(\wh{\tau}_{q}^{(0)},\mathbf{p}_{q}^{(0)},P_{q}^{\text{{fa}}(0)})\in\mathcal{Y}$
and $(\boldsymbol{{\lambda}}^{(0)},\pi_{t}^{(0)})>\mathbf{0}$; ii)
any regularization tuple $\left(\overline{\boldsymbol{{\lambda}}},\overline{\pi}_{t}\right)>\mathbf{0}$
, and iii) some $\epsilon\in(0,1)$; set $n=0$.\\[1pt] \texttt{$\mbox{(\mbox{S.1}):}$}
\texttt{If} $\left(\mathbf{x}^{\star}(\overline{{\boldsymbol{{\lambda}}}},\overline{\pi}_{t}),\,\boldsymbol{{\lambda}}^{\star}(\overline{{\boldsymbol{{\lambda}}}},\overline{\pi}_{t}),\,\pi_{t}^{\star}(\overline{{\boldsymbol{{\lambda}}}},\overline{\pi}_{t})\right)$
satisfies a suitable termination criterion: STOP.\\[1pt] 
 \texttt{$\mbox{(\mbox{S.2}a):}$} Run a (vector) consensus algorithm
to locally compute the current values of $\dfrac{{1}}{Q}\,{\displaystyle {\sum_{q=1}^{Q}}}\,\dfrac{{\wh{\tau}_{q}^{(n)}}}{\sqrt{{f_{q}}}}$
and $I\left(\mathbf{x}^{(n)}\right)={\displaystyle {\sum_{q=1}^{Q}}}I_{q}(\mathbf{x}_{q}^{(n)})$
{[}cf. Algorithm \ref{consensus_algorithm}{]};\\[1pt]\texttt{$\mbox{(\mbox{S.2}b):}$
}Update the players' strategies simultaneously:\texttt{ }

\begin{equation}
\begin{array}{lll}
\mathbf{x}_{q}^{(n+1)}\in\underset{\mathbf{x}_{q}\in\mathcal{Y}_{q}}{\text{{argmin}}}\left\{ \mathcal{L}_{q}\left((\mathbf{x}_{q},\lambda_{q}^{(n)}),\,\mathbf{x}_{-q}^{(n)},\pi_{t}^{n}\right)\right\} , &  & \forall q=1,\ldots,Q\\
\\
\lambda_{q}^{(n+1)}=\left[\overline{\lambda}_{q}+\dfrac{I_{q}\left(\mathbf{x}_{q}^{(n)}\right)}{{\alpha}}\right]_{0}^{\lambda^{\max}}, &  & \forall q=1,\ldots,Q\\
\\
\pi_{t}^{(n+1)}=\left[\overline{\pi}_{t}+\dfrac{I\left(\mathbf{x}^{(n)}\right)}{{\alpha}}\right]_{0}^{\lambda^{\max}}.
\end{array}\label{eq:Async_update_NE_sel}
\end{equation}
\\[1pt]$\quad\,\,\,$ \texttt{$\mbox{(\mbox{S.3})}:$} If $\left(\mathbf{x}^{(n+1)},\,\boldsymbol{{\lambda}}^{(n+1)},\,\pi_{t}^{(n+1)}\right)$
is a NE of $\mathcal{G}_{t}(\mathcal{X},\boldsymbol{{\theta}},\overline{{\boldsymbol{{\lambda}}}},\overline{{\pi}}_{t})$,
then 

\hspace{1.5cm}1) update the regularization tuple $\left(\overline{\boldsymbol{{\lambda}}},\overline{\pi}_{t}\right)$:
\begin{equation}
\overline{{\lambda}}_{q}={\lambda}_{q}^{(n+1)},\,\,\forall q=1,\ldots,Q\quad\mbox{and}\quad\overline{\pi}_{t}=\pi_{t}^{(n+1)};\label{eq:step_size_and_centroid_update}
\end{equation}
\hspace{1.5cm}2) set $\left(\mathbf{x}^{\star}(\overline{{\boldsymbol{{\lambda}}}},\overline{\pi}_{t}),\,\boldsymbol{{\lambda}}^{\star}(\overline{{\boldsymbol{{\lambda}}}},\overline{\pi}_{t}),\,\pi_{t}^{\star}(\overline{{\boldsymbol{{\lambda}}}},\overline{\pi}_{t})\right)=\left(\mathbf{x}^{(n+1)},\,\boldsymbol{{\lambda}}^{(n+1)},\,\pi_{t}^{(n+1)}\right)$;

\hspace{1.5cm}3) $n\leftarrow n+1$ and return to \texttt{$\mbox{(\mbox{S.1})}$}. 

\hspace{1cm}else: $n\leftarrow n+1$ and return to \texttt{$\mbox{(\mbox{S.2}a)}$}.
\label{algo_bi_level} \end{algo}

The convergence analysis of the algorithm follows from that of Algorithm
\ref{alg:PDA_GT_interpretation} (the outer loop) and Algorithm \ref{async_best-response_algo}
(the inner loop) and thus is omitted. It is worth mentionig that Algorithm
\ref{algo_bi_level} converges under similar conditions obtained for
Algorithm \ref{async_best-response_algo}, provided that a sufficiently
large proximal gain $\alpha$ is chosen; this is not surprising, since
the core of Algorithm \ref{algo_bi_level} is the updating rule used
in Algorithm \ref{async_best-response_algo}, whose convergence conditions
imply those of the outer loop (cf. Theorem \ref{ProxDecAlg_viaGT_conv_theo}).
We refer to Sec. \ref{sub:Game-with-exogenous} for an interpretation
of these convergence conditions.

Algorithm \ref{algo_bi_level} is mainly composed of two-subroutines:
a consensus-based scheme {[}Step 2a{]} and a best-response update
{[}Step 2b{]}, both implemented locally by the SUs. More specifically,
the inner game $\mathcal{G}_{t}(\mathcal{X},\boldsymbol{{\theta}},\overline{{\boldsymbol{{\lambda}}}},\overline{\pi}_{t})$
is solved in a fairly distributed way by following a two-steps procedure.
Fist, in Step 2a, the SUs run a consensus algorithm to locally acquire
the global information required to perform the update of their sensing/transmission
variables as well as the multipliers ${\lambda}_{q}$'s and the price
$\pi_{t}$, which is represented by the average sensing time $(1/Q)\,\sum_{q=1}^{Q}\,({\wh{\tau}_{q}^{(n)}}/\sqrt{{f_{q}}})$
and the global level of interference $I(\mathbf{x}^{(n)})=\sum_{q=1}^{Q}I_{q}(\mathbf{x}_{q}^{(n)})$
generated at the primary receiver; this procedure requires an exchange
of information among neighboring nodes, as already discussed in Sec.
\ref{sub:Game-with-exogenous}, where we refer for details. Once the
aforementioned information is available at the secondary transmitters,
each SU $q$ \emph{locally} updates his own sensing/transmission strategy
$\mathbf{x}_{q}$ as well as the multiplier $\lambda_{q}$ and the
price $\pi_{t}$, according to (\ref{eq:Async_update_NE_sel}) {[}Step2b{]};
he just needs to measure the MUI experienced at his receiver and solve
his own optimization problem. Note that: i) the updates of the multipliers
$\lambda_{q}$'s and the price $\pi_{t}$ have an explicit closed
form expression, and thus are computationally inexpensive; and ii)
there is no need of a centralized authority for the optimization of
the price $\pi_{t}$, which is instead updated locally by each SU.

\subsubsection*{On the inner termination criterium {[}Step 2{]}}

The only issue left to discuss is how to check the termination criterion
of the inner process in Step 2 of Algorithm \ref{alg:PDA_GT_interpretation};
similar discussion applies to Algorithm \ref{algo_bi_level}. In practice,
Step 2 is terminated when the NE $\left(\mathbf{x}^{\star}(\boldsymbol{{\lambda}}^{n},\pi_{t}^{n}),\,\boldsymbol{{\lambda}}^{\star}(\boldsymbol{{\lambda}}^{n},\pi_{t}^{n}),\,\pi_{t}^{\star}(\boldsymbol{{\lambda}}^{n},\pi_{t}^{n})\right)$
of $\mathcal{G}_{t}(\mathcal{X},\boldsymbol{{\theta}},{\boldsymbol{{\lambda}}}^{n},\pi_{t}^{n})$
is reached within the prescribed accuracy,%
\footnote{Recall that, under the convergence conditions in Theorem \ref{ProxDecAlg_viaGT_conv_theo},
each $\mathcal{G}_{t}(\mathcal{X},\boldsymbol{{\theta}},\overline{{\boldsymbol{{\lambda}}}},\overline{\pi}_{t})$
has a unique equilibrium,%
} say $\varepsilon^{(n)}$, where we let $\varepsilon^{(n)}$ to depend
on the (outer) iteration index $n$. Stated in mathematical terms,
this means that the players leave Step 2 as soon as their current
strategy profile $(\mathbf{x},\,\boldsymbol{{\lambda}},\,\pi_{t})$
satisfies the following inequality: 
\begin{equation}
\left\Vert \left[\begin{array}{c}
\mathbf{x}\\
\boldsymbol{{\lambda}}\\
\pi_{t}
\end{array}\right]-\left[\begin{array}{c}
\mathbf{x}^{\star}(\boldsymbol{{\lambda}}^{n},\pi_{t}^{n})\\
\boldsymbol{{\lambda}}^{\star}(\boldsymbol{{\lambda}}^{n},\pi_{t}^{n})\\
\pi_{t}^{\star}(\boldsymbol{{\lambda}}^{n},\pi_{t}^{n})
\end{array}\right]\right\Vert ^{2}\leq\varepsilon^{(n)},\label{eq:error_bound}
\end{equation}
where $\left\Vert \bullet\right\Vert $ is any vector norm. Denoting
by $\mathbf{z}\triangleq(\mathbf{x},\,\boldsymbol{{\lambda}},\,\pi_{t})$
the players' strategy profile and by $\mathbf{S}_{\mathcal{G}_{t}}(\boldsymbol{{\lambda}}^{n},\pi_{t}^{n})$
the (unique) NE of $\mathcal{G}_{t}(\mathcal{X},\boldsymbol{{\theta}},{\boldsymbol{{\lambda}}}^{n},\pi_{t}^{n})$,
which depends on the regularization parameters $({\boldsymbol{{\lambda}}}^{n},\pi_{t}^{n})$,
the stopping criterium in (\ref{eq:error_bound}) can be equivalently
written as $\|\mathbf{z}-\mathbf{S}_{\mathcal{G}_{t}}(\boldsymbol{{\lambda}}^{n},\pi_{t}^{n})\|^{2}\leq\varepsilon^{(n)}$.
Using the above notation/terminology, Step 2 of Algorithm \ref{alg:PDA_GT_interpretation}
reads as 

\begin{center}
\texttt{$\mbox{(S.2a)}:$} Solve the game $\mathcal{G}_{t}(\mathcal{X},\boldsymbol{{\theta}},\boldsymbol{{\lambda}}^{n},\pi_{t}^{n})$
within the accuracy $\varepsilon^{(n)}$: find a $\mathbf{z}=(\mathbf{x},\,\boldsymbol{{\lambda}},\,\pi_{t})$
such that\vspace{-.3cm} 
\begin{equation}
\|\mathbf{z}-\mathbf{S}_{\mathcal{G}_{t}}(\boldsymbol{{\lambda}}^{n},\pi_{t}^{n})\|^{2}\leq\varepsilon^{(n)};\vspace{-.2cm}\label{eq:stopping_rule_Step2}
\end{equation}

\par\end{center}

\begin{center}
\texttt{$\mbox{(S.2b)}:$} Set $\left(\mathbf{x}^{\star}(\boldsymbol{{\lambda}}^{n},\pi_{t}^{n}),\,\boldsymbol{{\lambda}}^{\star}(\boldsymbol{{\lambda}}^{n},\pi_{t}^{n}),\,\pi_{t}^{\star}(\boldsymbol{{\lambda}}^{n},\pi_{t}^{n})\right)=\mathbf{z}$.\hspace{7.3cm} 
\par\end{center}

\noindent In general, the test in (\ref{eq:stopping_rule_Step2})
would require some coordination among the players; nevertheless, we
suggest next two simple distributed protocols to do that, building
on the error-bound analysis of VIs \cite[Ch. 6]{Facchinei-Pang_FVI03}. 

Observe preliminarily that an error bound on the distance of the current
strategy profile $\mathbf{z}$ from the NE $\mathbf{S}_{\mathcal{G}_{t}}(\boldsymbol{{\lambda}}^{n},\pi_{t}^{n})$
can be obtained by solving a convex (quadratic) problem (see, e.g.,
\cite[Prop. 6.3.1]{Facchinei-Pang_FVI03}, \cite[Prop. 6.3.7]{Facchinei-Pang_FVI03}).
Indeed, under the convergence conditions of Algorithm \ref{alg:PDA_GT_interpretation}
{[}cf. Theorem \ref{ProxDecAlg_viaGT_conv_theo}{]}, one can write
each game $\mathcal{G}_{t}(\mathcal{X},\boldsymbol{{\theta}},\boldsymbol{{\lambda}}^{n},\pi_{t}^{n})$
as a (strongly monotone) VI problem, for which  the following error
bound holds \cite[Prop. 6.3.1]{Facchinei-Pang_FVI03}: a (finite and
absolute) constant $\eta>0$%
\footnote{An explicit expression of $\eta$ can be obtained as a function of
the system parameters, based on \cite[Prop. 6.3.1]{Facchinei-Pang_FVI03}.%
} exists such that for every $\mathbf{z}$, 
\begin{equation}
\|\mathbf{z}-\mathbf{S}_{\mathcal{G}_{t}}(\boldsymbol{{\lambda}}^{n},\pi_{t}^{n})\|^{2}\,\leq\eta\,\|\boldsymbol{{\Psi}}_{\text{{nat}}}^{n}\left(\mathbf{z}\right)\|^{2},\label{eq:error_bound_1}
\end{equation}
with 
\begin{equation}
\boldsymbol{{\Psi}}_{\text{{nat}}}^{n}\left(\mathbf{z}\right)\triangleq\left(\begin{array}{c}
\left(\begin{array}{c}
\mathbf{x}_{q}-\Pi_{\mathcal{Y}_{q}}\left(\mathbf{x}_{q}-\nabla_{\mathbf{x}_{q}}\mathcal{L}_{q}{\displaystyle {\displaystyle ((\mathbf{x}_{q},\lambda_{q}),\,\mathbf{x}_{-q},\pi_{t})}}\right)\\
\lambda_{q}-\left[\lambda_{q}^{n}+\dfrac{{I_{q}{\displaystyle (\mathbf{x}_{q})}}}{\alpha}\right]_{0}^{\lambda^{\max}}
\end{array}\right)_{q=1}^{Q}\medskip\\
\pi_{t}-\left[\pi_{t}^{n}+\dfrac{{I{\displaystyle (\mathbf{x})}}}{\alpha}\right]_{0}^{\lambda^{\max}}
\end{array}\right)\triangleq\left(\begin{array}{c}
\left(\left[\boldsymbol{{\Psi}}_{\text{{nat}}}^{n}\left(\mathbf{z}\right)\right]_{q}\right)_{q=1}^{Q}\medskip\\
\left[\boldsymbol{{\Psi}}_{\text{{nat}}}^{n}\left(\mathbf{z}\right)\right]_{Q+1}
\end{array}\right),\label{eq:natural map}
\end{equation}
and $\Pi_{\mathcal{Y}_{q}}\left(\mathbf{a}\right)$ denoting the Euclidean
projection of the vector $\mathbf{a}$ onto the closed and convex
set $\mathcal{Y}_{q}$, where in (\ref{eq:natural map}) we made explicit
the partition of $\boldsymbol{{\Psi}}_{\text{{nat}}}^{n}\left(\mathbf{z}\right)$
in $Q+1$ (vector) components, $\left([\boldsymbol{{\Psi}}_{\text{{nat}}}^{n}\left(\mathbf{z}\right)]_{q}\right)_{q=1}^{Q+1}$,
each of the first $Q$ being associated with one different player
$q$. The important result here is that each SU $q$ can compute his
own component $[\boldsymbol{{\Psi}}_{\text{{nat}}}^{n}\left(\mathbf{z}\right)]_{q}$
(as well as the last component $[\boldsymbol{{\Psi}}_{\text{{nat}}}^{n}\left(\mathbf{z}\right)]_{Q+1}$)\emph{
efficiently} and \emph{locally}. Indeed, capitalizing on the information
already acquired for the computation of the best-response, he just
needs to solve a quadratic programming {[}corresponding to the evaluation
of the projection $\Pi_{\mathcal{Y}_{q}}\left(\bullet\right)${]},
for which no extra signaling/coordination with the others is required. 

A simple application of the error bound (\ref{eq:error_bound_1})
for the test in (\ref{eq:stopping_rule_Step2}) is to let each SU
$q$ to choose a local termination error $\varepsilon_{q}\leq\eta\cdot\varepsilon/Q$,
with $\varepsilon=\varepsilon^{(n)}$ being the desired accuracy in
(\ref{eq:error_bound_1}), and perform the termination criterion $\left\Vert [\boldsymbol{{\Psi}}_{\text{{nat}}}^{n}\left(\mathbf{z}\right)]_{q}\right\Vert ^{2}+\left\Vert [\boldsymbol{{\Psi}}_{\text{{nat}}}^{n}\left(\mathbf{z}\right)]_{Q+1}\right\Vert ^{2}\leq\varepsilon_{q}$;
which is locally implementable, provided that an estimate of the absolute
constant $\eta$ in (\ref{eq:error_bound_1}) and the number of the
active SUs can be preliminary obtained. 

When this information is not available, one can consider a variation
(inexact version) of Algorithm \ref{alg:PDA_GT_interpretation}. Instead
of solving\emph{ }each game $\mathcal{G}_{t}(\mathcal{X},\boldsymbol{{\theta}},\boldsymbol{{\lambda}}^{n},\pi_{t}^{n})$
\emph{exactly}, the players compute at every stage $n$ only an\emph{
approximated solution} of $\mathcal{G}_{t}(\mathcal{X},\boldsymbol{{\theta}},\boldsymbol{{\lambda}}^{n},\pi_{t}^{n})$
that becomes tighter and tighter as the iteration in $n$ proceeds.
Stated in mathematical terms, we have that the sub-iterations in Step
2a are terminated according to a prescribed error sequence $\{\varepsilon^{(n)}\}_{n}$
that progressively becomes tighter as the iteration in $n$ proceeds.
For instance, a suitable termination sequence in (\ref{eq:stopping_rule_Step2})
is any $\{\varepsilon^{(n)}\}_{n}\subset[0,\infty)$ satisfying $\sum_{n=1}^{\infty}\varepsilon^{(n)}<\infty$;
since the latter condition implies $\varepsilon^{(n)}\downarrow0$,
when the iterations $n$ progress the NE $\mathbf{S}_{\mathcal{G}_{t}}(\boldsymbol{{\lambda}}^{n},\pi_{t}^{n})$
will be estimated with an increasing accuracy. One can show that the
aforementioned inexact version of Algorithm \ref{alg:PDA_GT_interpretation}
converges under the same conditions given in Theorem \ref{ProxDecAlg_viaGT_conv_theo};
we omit the details because of the space limitation, and we refer
to \cite{ScutariPalomarFacchineiPang-Monotone_bookCh} for a similar
approach valid for \emph{convex} games. The termination protocol for
the inexact version of Algorithm \ref{alg:PDA_GT_interpretation}
is then the following. Each player $q$ choses preliminarily a suitable
local termination sequence $\{\varepsilon_{q}^{(n)}\}_{n}\subset[0,\infty)$
such that $\sum_{n=1}^{\infty}\varepsilon_{q}^{(n)}<\infty$; the
termination criterion of each player $q$ becomes then $\left\Vert [\boldsymbol{{\Psi}}_{\text{{nat}}}^{n}\left(\mathbf{z}\right)]_{q}\right\Vert ^{2}+\left\Vert [\boldsymbol{{\Psi}}_{\text{{nat}}}^{n}\left(\mathbf{z}\right)]_{Q+1}\right\Vert ^{2}\leq\varepsilon_{q}^{(n)}$,
which can be locally implemented. Once the desired local accuracy
is reached by all the players, they can all update the center of their
regularization, according to (\ref{eq:price_multipliers_update_in_Algorithm_prox}).
This protocol guarantees that the resulting sequence $\varepsilon^{(n)}\triangleq\sum_{q=1}^{Q}\varepsilon_{q}^{(n)}$
in (\ref{eq:stopping_rule_Step2}) will satisfy the required condition
$\sum_{n=1}^{\infty}\varepsilon^{(n)}<\infty$, without the need of
any information exchange among the players. 

The last issue to address for a practical implementation of the two
protocols above is to understand how the players can know that also
the others have reached the desired termination criterion. This can
be done by exchanging one bit of information; otherwise each user
can just update his regularization after experiencing no changes in
$\left\Vert [\boldsymbol{{\Psi}}_{\text{{nat}}}^{n}\left(\mathbf{z}\right)]_{q}\right\Vert $
and $\left\Vert [\boldsymbol{{\Psi}}_{\text{{nat}}}^{n}\left(\mathbf{z}\right)]_{Q+1}\right\Vert $
for a prescribed number of iterations. 

Two last comments about the proposed class of algorithms solving $\mathcal{G}(\mathcal{X},\boldsymbol{{\theta}})$
are in order. To obtain decentralize algorithms even in the presence
of global (nonconvex) interference constraints, we have introduced
multipliers and relaxed the global constraints. As a side effect of
the proposed approach, we have that global interference constraints
are met only at the equilibrium of the game; implying that during
the iterations of the algorithms they might not be satisfied. This
issue is alleviated in practice by a fast convergent behavior of the
proposed algorithms, as shown in Sec. \ref{sec:Numerical-Results}.
Note that this issue is quite common to many power control algorithms
subject to QoS or coupling interference constraints (see, e.g., \cite{Chiang_Hande_Lan_Tan_book_PC}
and references therein). Finally, we wish to point out that when the
sufficient conditions for the convergence of the proposed algorithms
are not satisfied, still we can claim some optimality property for
the proposed algorithms, namely: every limit point of the sequence
generated by the our algorithms is a \emph{quasi-NE} of the game under
consideration; the analysis of such relaxed equilibrium concept along
with its main properties is addressed in the companion paper \cite{Pang-Scutari-NNConvex_PI}.

\subsection{A bird's-eye view }

In the previous three sections we proposed several distributed algorithms
to solve the general game $\mathcal{G}(\mathcal{X},\boldsymbol{{\theta}})$
and its special cases. The algorithms differ from computational complexity,
performance, and level of signaling among the SUs; making them applicable
to several different scenarios. It is useful to summarize the results
obtained so far, showing that, in spite of apparent diversities, all
the algorithms belong to a same family; Figure \ref{fig1_Roadmap}
provides the roadmap of the proposed distributed solution methods
along with the signaling required for their implementation.

\section{Numerical Results\label{sec:Numerical-Results} \vspace{-0.3cm}}

In this section, we provide some numerical results to illustrate our
theoretical findings. More specifically, we first compare the performance
of our games with those of state-of-the-art decentralized \cite{Pang-Scutari-Palomar-Facchinei_SP_10}
and centralized \cite{SchmidtShiBerryHonigUtschick-SPMag} schemes
proposed in the literature for similar problems; such schemes \emph{do
not perform any sensing optimization} using thus all the frame length
for the transmission, and the QoS of the PUs is preserved by imposing
(deterministic) interference constraints (we properly modified the
algorithms in \cite{SchmidtShiBerryHonigUtschick-SPMag} to include
the interference constraints in the feasible set of the optimization
problem). Interestingly, the proposed design of CR systems based on
the distributed\emph{ }joint optimization of the sensing and transmission
strategies is shown to outperform \emph{both centralized and decentralized
current CR designs}, which validates our new formulation. Then, we
provide an example of signaling/performance trade-off, showing the
throughput gains achievable by the SUs if the sensing time is included
in the optimization. Finally, we focus on the convergence properties
of the proposed algorithms. 

\begin{figure}[t]
\vspace{-2cm}\center\includegraphics[height=11cm]{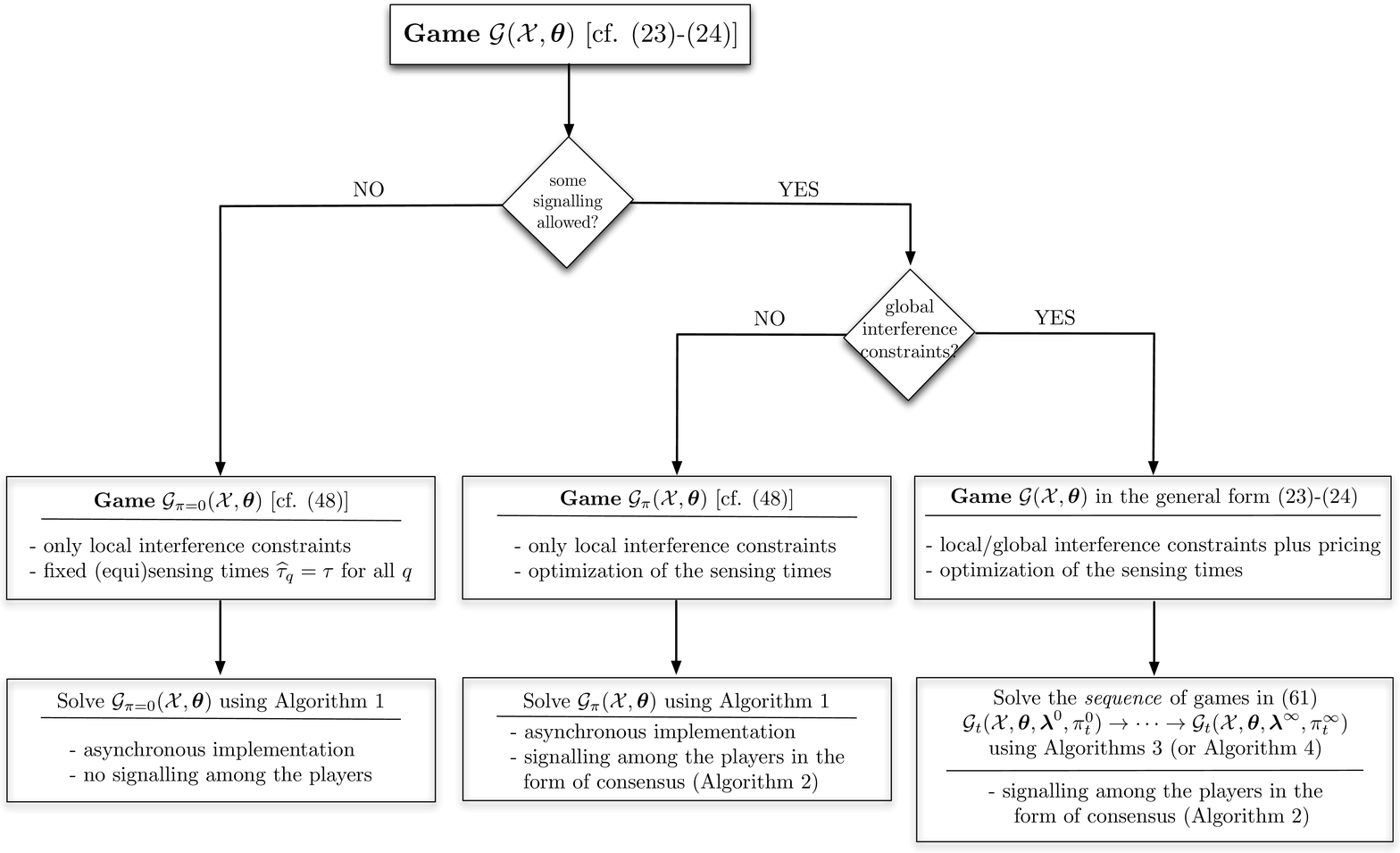}\vspace{-.2cm}

{\footnotesize \caption{{\footnotesize Road-map of the proposed algorithms solving $\mathcal{G}(\mathcal{X},\boldsymbol{{\theta}})$
and its special cases along with the resulting signalling/optimization
tradeoff. }{\small \label{fig1_Roadmap}}}
}
\end{figure}
\medskip{}

\noindent \textbf{Example \#1: Comparison with state-of-the-art algorithms.}
In Fig. \ref{Fig_comparison}, we compare the performance achievable
by the proposed joint optimization of the sensing and the transmission
strategies with those achievable using the sum-rate NUM-based approach
subject to interference constraints \cite{SchmidtShiBerryHonigUtschick-SPMag}
and the game theoretical formulation in \cite{Pang-Scutari-Palomar-Facchinei_SP_10}.
More specifically, we plot the (\%) ratio $(SR_{QE}-SR)/SR$ versus
the (normalized) interference constraint bound $P/I^{\text{{max}}}$
($P_{q}=P_{r}=P$ for all $q\neq r$ and $I_{q}^{\text{{max}}}=I^{\text{{max}}}$
for all $q$), for different values of the SNR detection $\texttt{snr}_{d}=\sigma_{I_{q,k}}^{2}/\sigma_{q,k}^{2}$,
where $SR_{QE}$ is the sum-throughput achievable at the (Q)NE of
the game $\mathcal{G}_{\pi=0}(\mathcal{X},\boldsymbol{{\theta}})$
(local interference constraints only), whereas $SR$ is either the
sum-rate achievable using the scheme in \cite{SchmidtShiBerryHonigUtschick-SPMag}
(red line curves) or the sum-rate at the NE of the game in \cite{Pang-Scutari-Palomar-Facchinei_SP_10}
(black line curves). We simulated a hierarchical CR network composed
of two PUs (the base stations of two cells) and ten SUs, randomly
distributed in the cells. The (cross-)channels among the secondary
links and between the primary and the secondary links are FIR filters
of order $L=10$, where each tap has variance equal to $1/L^{2}$;
the available bandwidth is divided in $N=1024$ subchannels. From
Fig. \ref{Fig_comparison}, we clearly see that the proposed joint
optimization of the sensing and transmission parameters yields a considerable
performance improvement over the current state-of-the-art CR \emph{centralized
and decentralized} designs, especially when the interference constraints
are stringent. 

\smallskip{}

\noindent \textbf{Example \#2: Sensing time optimization. }Fig. \ref{Fig_sens_through_tradeoff}
shows an example of the achievable throughput of the SUs when the
sensing time is included in the optimization. More specifically, in
the picture, we plot the (normalized) sum-throughput achieved at a
(Q)NE by one player of the game versus the (normalized) \emph{common}
sensing time, for different values of the (normalized) total interference
constraint (the setup is the same as in Fig. \ref{Fig_comparison}).
In the same figure, we plot also the sum-throughput achieved at the
(Q)NE of the game $\mathcal{G}_{\pi=0}(\mathcal{X},\boldsymbol{{\theta}})$
(square markers in the plot), where $c$ is set to $c=100$. According
to the picture, the following comments are in order. There exists
an optimal duration for the (common) sensing time at which the throughput
of each SU is maximized, implying that the SUs can achieve better
performance if some (limited) signaling is exchanged in order to optimize
also the sensing time. Second, as expected, more stringent interference
constraints impose lower missed detection probabilities as well as
false-alarm rates; requirement that is met by increasing the sensing
time (i.e., making the detection more accurate). This is clear in
the picture where one can see that the optimal sensing time duration
increases as the interference constraints increase. Third, the proposed
approach based on a penalty function leads to performance comparable
with those achievable by a centralized approach that computes the
optimal common sensing time based on a grid search.\smallskip{}

\noindent \textbf{Example \#3: Algorithms for} $\mathcal{G}_{\pi=0}(\mathcal{X},\boldsymbol{{\theta}})$
\textbf{(local constraints only).} In Fig. \ref{Fig_convergence_noprices},
we plot an instance of the sequential and simultaneous best-response
based algorithms, proposed in Sec. \ref{sub:Game-with-exogenous}
to solve the game $\mathcal{G_{\pi}}(\mathcal{X},\boldsymbol{{\theta}})$
in (\ref{eq:game_G_t_pi}), with ${\pi}=0$ (cf. Algorithm \ref{async_best-response_algo}).
We considered the same setup as in Fig. \ref{Fig_sens_through_tradeoff},
but with 15 active SUs; the SNR detection $\texttt{snr}_{d}\triangleq\sigma_{I_{q,k}}^{2}/\sigma_{q,k}^{2}$
is set to $\texttt{snr}_{d}=0$dB, for all $q$ and $k$; the SNR
of the SUs $ $ $\texttt{snr}_{q,k}\triangleq P_{q}/\sigma_{q}^{2}(k)$
is $\texttt{snr}_{q,k}=2$dB for all $q$ and $k$, and the (normalized)
inter-pair distances $d_{qr}/d_{qq}\geq3$ for all $q\neq r$, with
$d_{qr}$ denoting the distance between the receiver of SU $q$ and
the transmitter of SU $r$, which corresponds to a ``low/medium''
level of interference among the SUs; the bounds $\alpha_{q,k}$ and
$\beta_{q,k}$ are both equal to $0.5$ for all $q$ and $k$; and
$ $  the constant $c$ is set to $c=100$. In Fig. \ref{Fig_convergence_noprices}(a),
we plot the opportunistic throughput evolution of the SUs\textquoteright{}
links as a function of the iteration index, achieved using the sequential
best-response algorithm (solid line curves) and the simultaneous best-response
algorithm (dashed line curves); whereas in Fig. \ref{Fig_convergence_noprices}(b)
we plot the evolution of the optimal (normalized) sensing times of
the SUs versus the iteration index. To make the figures not excessively
overcrowded, we report only the curves of 3 out of 15 links. As expected,
the sequential best-response algorithm is slower than the simultaneous
version, especially if the number of active links is large, since
each SU is forced to wait for all the users scheduled in advance,
before updating his own strategy. However, both algorithms converge
in a few iterations (this desired feature has been observed for different
channel realizations), which makes them appealing in practical CR
scenarios. Observe also that, thanks to the penalty term on the sensing
times in the objective function of each SU, the algorithms converge
to the same optimal sensing time for all the SUs {[}cf. Fig. \ref{Fig_convergence_noprices}(b){]}.
Roughly speaking, these algorithms share the same features of the
well-known iterative waterfilling algorithms solving the power control
game over ICs \cite{Luo-Pang_IWFA-Eurasip,Scutari-Palomar-Barbarossa_SP08_PI,Scutari-Palomar-Barbarossa_AIWFA_IT08,Pang-Scutari-Palomar-Facchinei_SP_10,ScutariPalomarFacchineiPang-Monotone_bookCh}. 

Finally, observe that, even when the theoretical convergence conditions
we obtained are not satisfied, still we can claim that every limit
point of the sequence generated by our algorithms is a QNE of the
game.\smallskip{}

\noindent \textbf{Example \#4: Algorithms for} $\mathcal{G}(\mathcal{X},\boldsymbol{{\theta}})$
\textbf{(global constraints).} In Fig. \ref{Fig_exAlgorithm1_rate_and_interference}
we tested the convergence speed of Algorithm 1 applied to the game
$\mathcal{G}(\mathcal{X},\boldsymbol{{\theta}})$ in the presence
of global interference constraints. The system setup is the same as
the one considered in Fig. \ref{Fig_convergence_noprices} for the
low/medium interference regime, with the only difference that now,
instead of the overall bandwidth interference constraints (\ref{eq:individual_overal_interference_constraint}),
we assume that the PUs impose the global interference constraint (\ref{eq:global_interference_constraints_2});
for the sake of simplicity we considered the same interference threshold
for both the PUs. In Fig. \ref{Fig_exAlgorithm1_rate_and_interference},
we plot the opportunistic throughput evolution of $4$ (out of $15$)
SUs\textquoteright{} links and the worst-case average violation of
the interference constraints as a function of the iteration index
(counted considering both the inner and the outer iterations), achieved
using Algorithm \ref{algo_bi_level}. As expected, Fig. \ref{Fig_exAlgorithm1_rate_and_interference}
shows that the algorithms proposed to solve the game $\mathcal{G}(\mathcal{X},\boldsymbol{{\theta}})$
with side constraints require more iterations to converge that those
used to solve the game $\mathcal{G}_{\pi=0}(\mathcal{X},\boldsymbol{{\theta}})$.
On the other hand, global interference constraints impose less stringent
conditions on the transmit power of the SUs than those imposed by
the individual interference constraints, implying better throughput
performance of the SUs (at the price however of more signaling among
the SUs) \cite{Pang-Scutari-NNConvex_PI}. 
\begin{figure}[H]
\center\includegraphics[height=9cm]{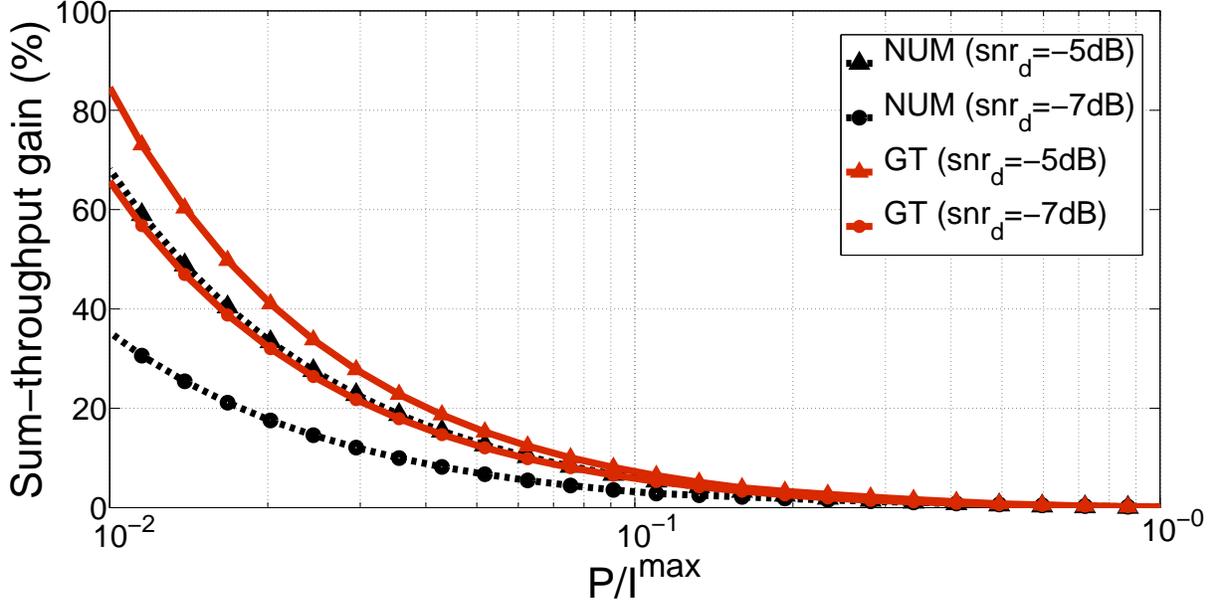}\vspace{-.3cm}

{\footnotesize \caption{{\small Comparison of proposed joint sensing/transmission optimization
with state-of-the-art NUM (cooperative) and game theoretical (noncooperative)
schemes where no sensing is optimized: (\%) ratio $(SR_{QE}-SR)/SR$
versus the (normalized) interference constraint bound $P/I^{\text{{max}}}$.}
\label{Fig_comparison}}
}{\footnotesize \par}

\vspace{-0.5cm}
\end{figure}
\begin{figure}[H]
\center\includegraphics[height=9cm]{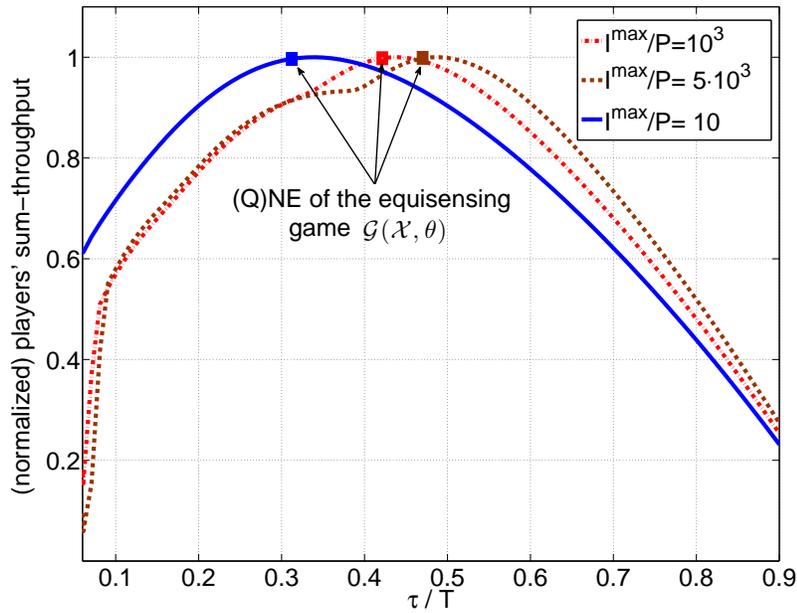}\vspace{-.6cm}

\caption{{\small Normalized throughput versus the normalized sensing time,
for different values of the (normalized) interference threshold. The
square markers correspond to the (Q)NE of the game $\mathcal{G}(\mathcal{X},\boldsymbol{{\theta}})$,
achieved with $c=100$.}\label{Fig_sens_through_tradeoff}}
\end{figure}
 
\begin{figure}[H]
\vspace{0.4cm}\hspace{-0.8cm}\includegraphics[height=7.5cm]{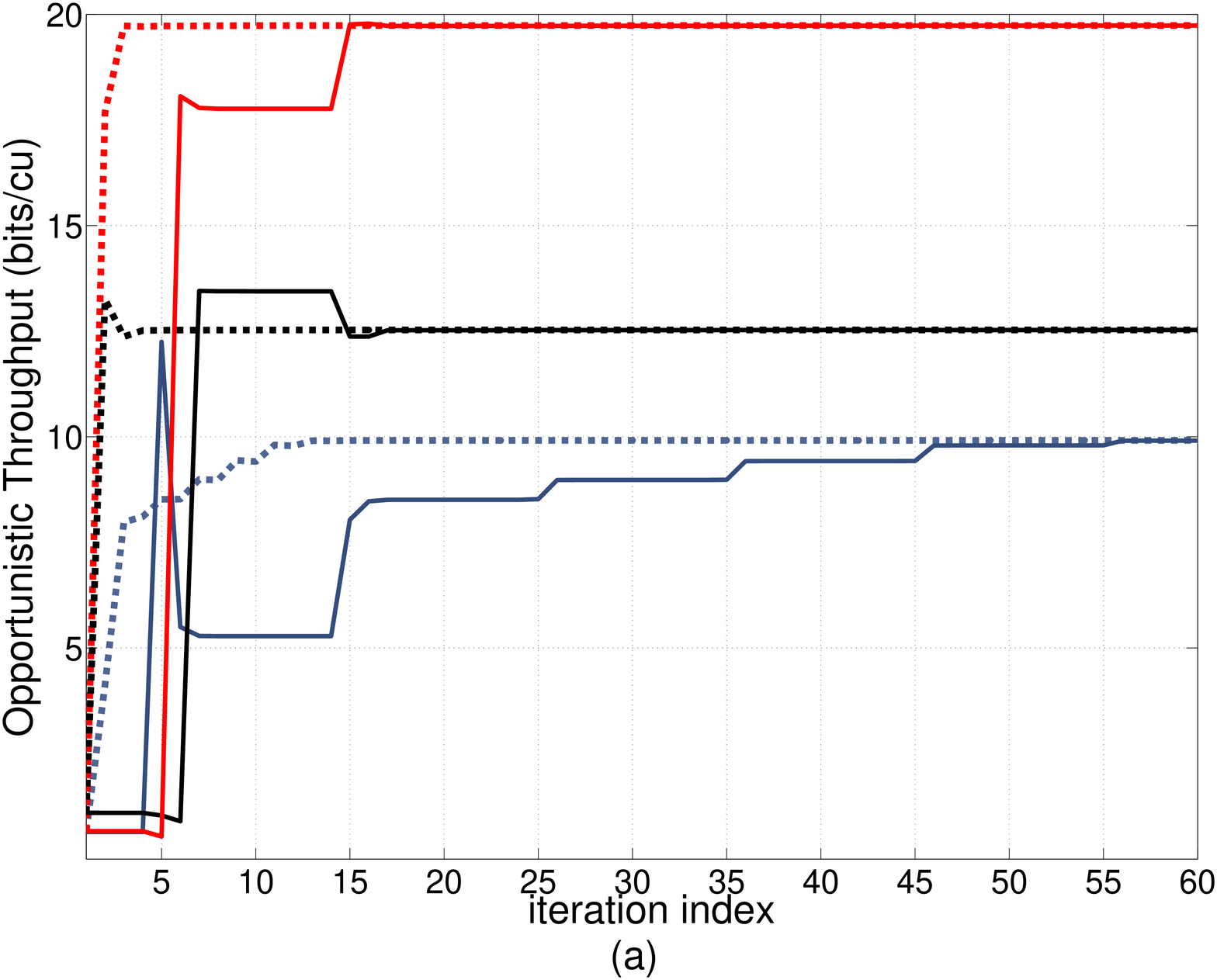}\hspace{-0.7cm}\includegraphics[height=7.5cm]{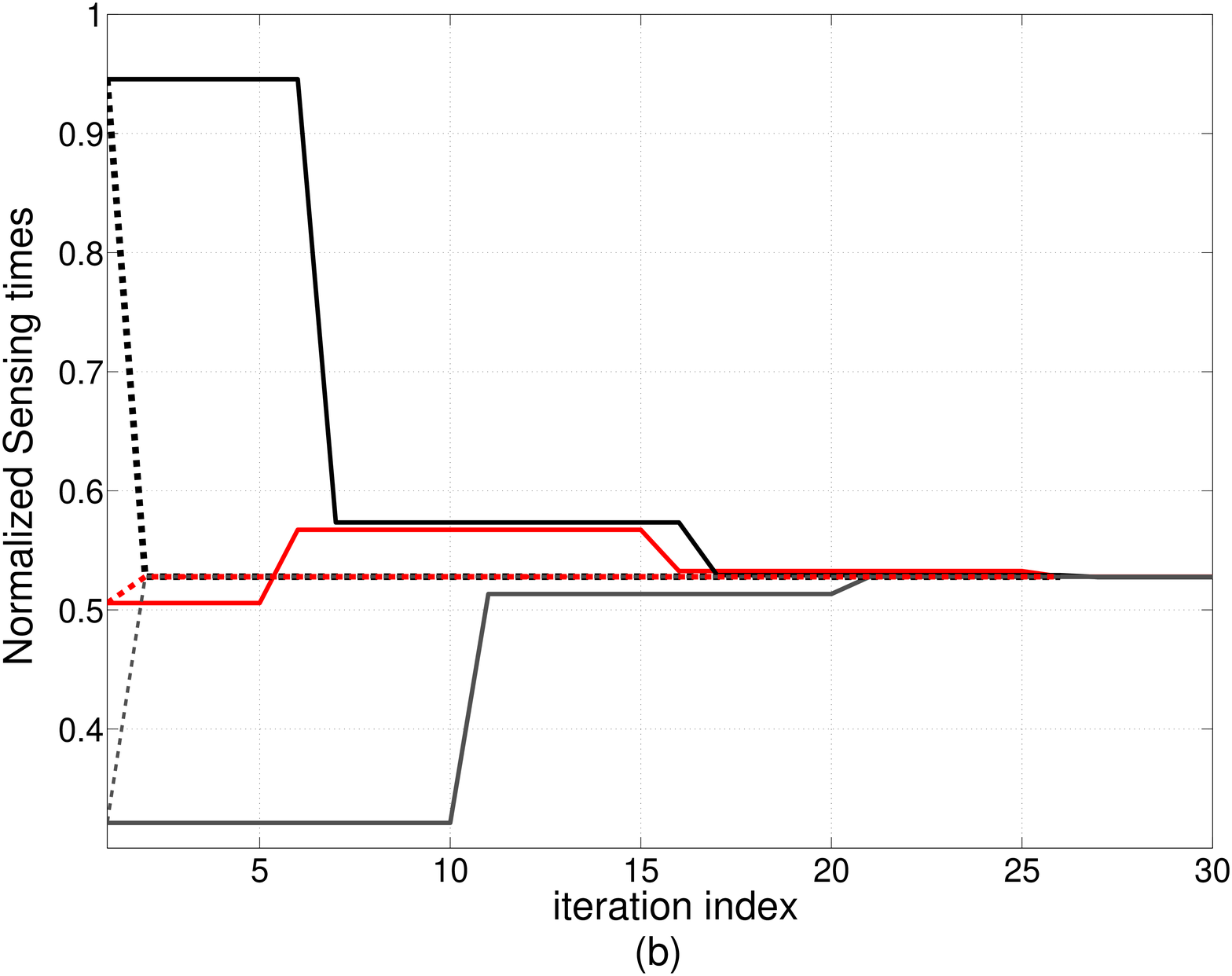}\vspace{-.3cm}

{\footnotesize \caption{{\small Example of convergence speed of sequential (solid line curves)
and simultaneous (dashed line curves) best-response based algorithms
applied to the game }$\mathcal{G}_{\pi}(\mathcal{X},\boldsymbol{{\theta}})${\small :
Secondary users\textquoteright{} opportunistic throughput (subplot
a) and normalized sensing times (subplot b) versus the iteration index.}\label{Fig_convergence_noprices}}
}
\end{figure}
\begin{figure}[H]
\vspace{-.4cm}\center\includegraphics[height=11cm]{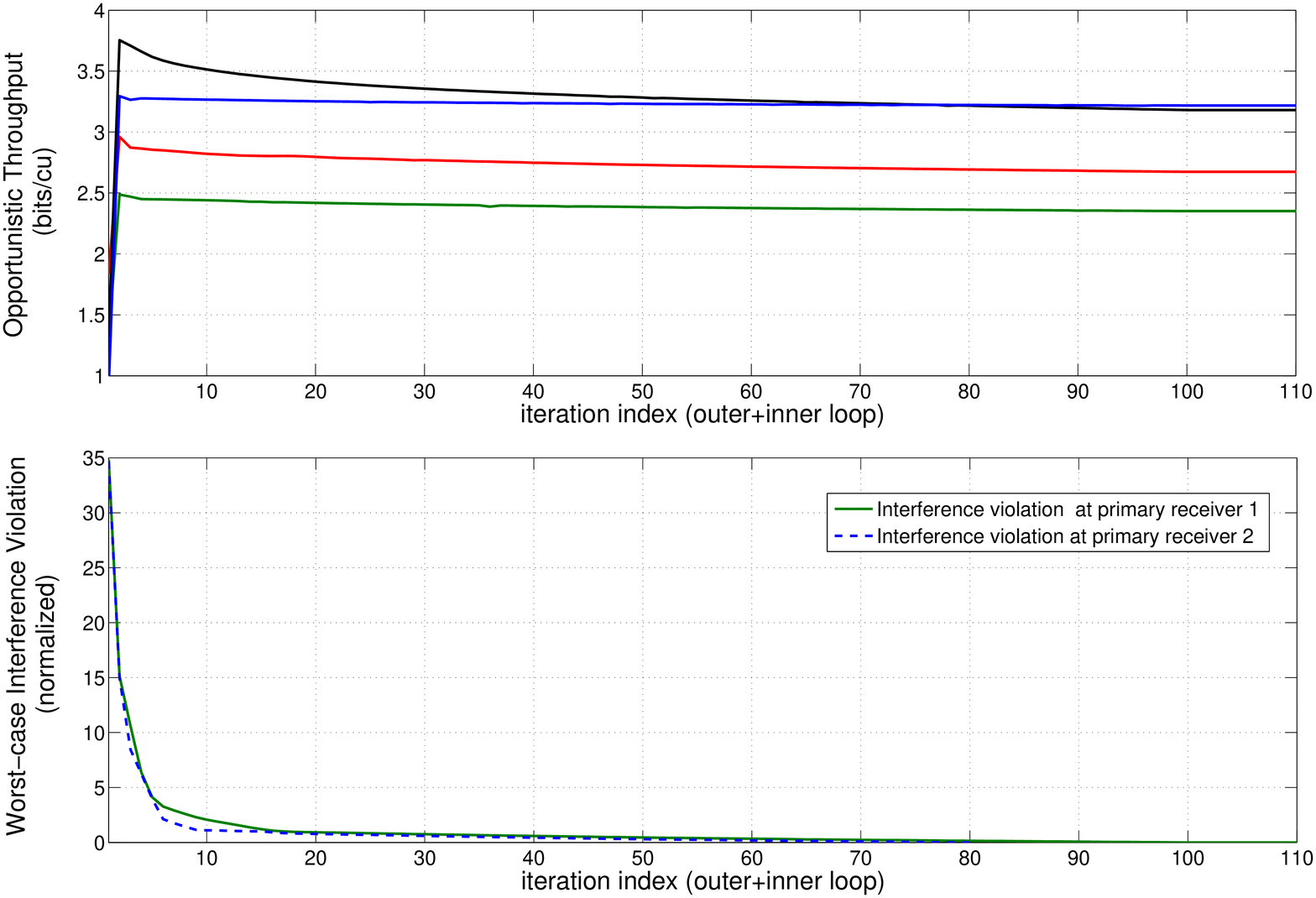}\vspace{-.6cm}

{\footnotesize \caption{{\small Algorithm 1 applied to game }\textbf{$\mathcal{G}(\mathcal{X},\,\boldsymbol{{\theta}})$}{\small :
Opportunistic throughput and average interference violation versus
iterations (outer plus inner loop) }.\label{Fig_exAlgorithm1_rate_and_interference}}
}{\footnotesize \par}

\vspace{-0.3cm}
\end{figure}

\section{Conclusions \label{sec:Conclusions}}

\noindent In this paper, we proposed a novel class of noncooperative
games with (possibly) side constraints, where each SU aims to maximize
his own opportunistic throughput by choosing jointly the sensing duration,
the detection thresholds, and the vector power allocation over SISO
frequency-selective interference channels, under local and (possibly)
global average probabilistic interference constraints. In particular,
to enforce global interference constraints while keeping the optimization
as decentralized as possible, we proposed a pricing mechanism that
penalizes the SUs in violating the global interference constraints.
The proposed games belong to the class of nonconvex games and lack
boundedness in some of the optimization variables, which makes the
analysis quite involved. A major contribution of this paper was to
introduce a new methodology for studying the existence and the uniqueness
of the solution of nonconvex games with side constraints and design
distributed solution algorithms. The proposed class of algorithms
spans from noncooperative settings modeling selfish users to cooperative
scenarios where the users are willing to exchange limited signaling
(in the form of consensus algorithms) in favor of better performance.
Numerical results showed the superiority of the proposed design (in
terms of achievable system throughput) with respect to the state-of-the-art
centralized and decentralized resource allocation algorithms for CR
systems. Together with their fast convergence behavior, this makes
them appealing in many practical CR scenarios. 

\appendix

\section*{Appendix }

\section{Proof of Proposition \ref{proposition_uniqueness_opt_sol} \label{app:Proof-of-Proposition_uniqueness_NE}}

\subsection{Intermediate results\label{sub:Intermediate-results_AppA}}

To prove the proposition we need two intermediate results, stated
in Lemma \ref{Lemma_ACQ} and Lemma \ref{Lemma_bounded_multipliers}
below. Lemma \ref{Lemma_ACQ} proves that the Abadie  Constraint Qualification
(ACQ) holds true at every (nontrivial) optimal solution of (\ref{eq:game_G_t_2}),
which implies that any of such solutions must satisfy the KKT conditions
associated with (\ref{eq:game_G_t_2}). Lemma \ref{Lemma_bounded_multipliers}
proves the boundedness of the multipliers $\lambda_{q}^{\star}$ associated
with the local nonconvex constraint $I(\mathbf{x}_{q}^{\star})\leq0$
at any solution $\mathbf{x}_{q}^{\star}$ of (\ref{eq:game_G_t_2}). 

\begin{lemma} \label{Lemma_ACQ} The ACQ holds at every feasible
solution of problem (\ref{eq:game_G_t_2}).\end{lemma}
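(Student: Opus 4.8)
The plan is to verify the Abadie CQ straight from its definition. At a feasible point $\mathbf{x}_q^\star=(\wh{\tau}_q^\star,\mathbf{p}_q^\star,P_q^{\star\,\text{fa}})\in\mathcal{X}_q$, the Abadie CQ asserts that the tangent cone $T_{\mathcal{X}_q}(\mathbf{x}_q^\star)$ coincides with the linearized cone $\mathcal{L}_{\mathcal{X}_q}(\mathbf{x}_q^\star)$. Since the inclusion $T_{\mathcal{X}_q}(\mathbf{x}_q^\star)\subseteq\mathcal{L}_{\mathcal{X}_q}(\mathbf{x}_q^\star)$ holds at any feasible point, it suffices to prove the reverse one. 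Two structural facts drive the argument: (i) the set $\mathcal{Y}_q$ in (\ref{eq:def_Y_q}) is a polyhedron, hence it satisfies the Abadie CQ on its own and, more strongly, every $\mathbf{d}\in T_{\mathcal{Y}_q}(\mathbf{x}_q^\star)$ is genuinely feasible for small steps, i.e. $\mathbf{x}_q^\star+t\mathbf{d}\in\mathcal{Y}_q$ for all small $t>0$; and (ii) the sole nonconvex constraint $I_q(\cdot)\leq0$ in (\ref{eq:def_h_function}) is \emph{strictly increasing} in each power coordinate, since $\partial I_q/\partial p_{q,k}=\wh{P}_{q,k}^{\text{miss}}\,w_{q,k}>0$ (the Q-function is strictly positive and $w_{q,k}>0$).

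First I would dispose of the inactive case $I_q(\mathbf{x}_q^\star)<0$, which in particular covers every trivial point $\mathbf{p}_q^\star=\mathbf{0}$ because then $I_q(\mathbf{x}_q^\star)=-I_q^{\max}<0$. Here the nonconvex constraint is slack in a neighborhood of $\mathbf{x}_q^\star$, so $\mathcal{X}_q$ locally coincides with the polyhedron $\mathcal{Y}_q$ and the Abadie CQ reduces to the polyhedral case handled by (i).

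The substantive case is $I_q(\mathbf{x}_q^\star)=0$. Activity of the constraint forces $\mathbf{p}_q^\star\neq\mathbf{0}$ (else $I_q=-I_q^{\max}<0$), so there is an index $k_0$ with $p_{q,k_0}^\star>0$; the direction $\mathbf{e}$ that decreases only this power coordinate then satisfies $\nabla I_q(\mathbf{x}_q^\star)^T\mathbf{e}=-\wh{P}_{q,k_0}^{\text{miss}}\,w_{q,k_0}<0$ and keeps $\mathbf{x}_q^\star+s\mathbf{e}$ inside $\mathcal{Y}_q$ for small $s\geq0$ (lowering a strictly positive power relaxes the budget and box constraints and does not touch the sensing/false-alarm constraints). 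Given any $\mathbf{d}\in\mathcal{L}_{\mathcal{X}_q}(\mathbf{x}_q^\star)$, so that $\mathbf{d}\in T_{\mathcal{Y}_q}(\mathbf{x}_q^\star)$ and $\nabla I_q(\mathbf{x}_q^\star)^T\mathbf{d}\leq0$, I would build a feasible arc tangent to $\mathbf{d}$. If $\nabla I_q^T\mathbf{d}<0$, the arc $t\mapsto\mathbf{x}_q^\star+t\mathbf{d}$ already works: it stays in $\mathcal{Y}_q$ by (i) and drives $I_q$ strictly negative to first order. If $\nabla I_q^T\mathbf{d}=0$, I add a vanishing correction along $\mathbf{e}$ and set $\mathbf{x}_q(t)=\mathbf{x}_q^\star+t\mathbf{d}+t^{3/2}\mathbf{e}$; a second-order expansion gives $I_q(\mathbf{x}_q(t))=-\wh{P}_{q,k_0}^{\text{miss}}\,w_{q,k_0}\,t^{3/2}+O(t^2)\leq0$ for small $t$, while $t^{3/2}=o(t)$ keeps $\mathbf{x}_q(t)\in\mathcal{Y}_q$ and leaves the limiting tangent direction equal to $\mathbf{d}$. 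In both subcases $\mathbf{x}_q(t)\in\mathcal{X}_q$ and $(\mathbf{x}_q(t)-\mathbf{x}_q^\star)/t\to\mathbf{d}$, so $\mathbf{d}\in T_{\mathcal{X}_q}(\mathbf{x}_q^\star)$, establishing $\mathcal{L}_{\mathcal{X}_q}(\mathbf{x}_q^\star)\subseteq T_{\mathcal{X}_q}(\mathbf{x}_q^\star)$.

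The main obstacle is exactly the degenerate subcase $\nabla I_q^T\mathbf{d}=0$, where first-order data are inconclusive and a naive arc along $\mathbf{d}$ may exit the feasible region through the uncontrolled curvature of $I_q$. Monotonicity (ii) is what rescues the argument: it furnishes a uniformly constraint-decreasing correction $\mathbf{e}$ whose first-order pull on $I_q$ beats the $O(t^2)$ curvature once scaled as $t^{3/2}$, yet is too small to perturb the tangent. The only care needed is to confirm this correction never violates the polyhedral constraints of $\mathcal{Y}_q$ — immediate here, since $p_{q,k_0}^\star>0$ guarantees $p_{q,k_0}^\star+t d_{k_0}-t^{3/2}>0$ for small $t$, and decreasing one power coordinate only relaxes the remaining power constraints.
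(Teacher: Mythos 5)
Your proof is correct, and it is worth noting that the paper itself gives no argument for this lemma: its ``proof'' is a one-line deferral to \cite[Prop.~8]{Pang-Scutari-NNConvex_PI}, so your self-contained verification cannot be compared step-by-step against anything in this paper, only judged on its own merits. The structure you exploit is exactly the right one: $\mathcal{X}_q$ is the polyhedron $\mathcal{Y}_q$ intersected with a single smooth nonconvex constraint $I_q\leq 0$ that is strictly increasing in every power coordinate. When $I_q$ is inactive (in particular at any trivial point $\mathbf{p}_q=\mathbf{0}$, using $I_q^{\max}>0$, an assumption the paper makes implicitly since $1/I_q^{\max}$ appears in (\ref{eq:t_star_def})), ACQ reduces to the polyhedral case; when $I_q$ is active, the existence of your direction $\mathbf{e}$ with $\nabla I_q(\mathbf{x}_q^\star)^T\mathbf{e}<0$ that is feasible for $\mathcal{Y}_q$ is an MFCQ-type condition relative to the polyhedron, and your $t^{3/2}$-scaled correction is the standard way to convert it into the tangent-cone inclusion at degenerate directions with $\nabla I_q^T\mathbf{d}=0$: the bookkeeping (the $-\wh{P}_{q,k_0}^{\text{miss}}w_{q,k_0}\,t^{3/2}$ term dominates the $O(t^2)$ curvature, $t^{3/2}=o(t)$ preserves the limiting direction, and $p_{q,k_0}^\star>0$ together with the fact that decreasing a power only relaxes the budget and mask constraints keeps the arc in $\mathcal{Y}_q$) is all in order. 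One hypothesis you use silently deserves an explicit sentence: the second-order expansion requires $I_q$ to be $C^2$ near $\mathbf{x}_q^\star$, which is not automatic, since $\wh{P}_{q,k}^{\text{miss}}$ involves $\mathcal{Q}^{-1}(P_q^{\,\text{fa}})$ and $\mathcal{Q}^{-1}$ blows up at the endpoints of $(0,1)$. It does hold here because on $\mathcal{Y}_q$ one has $P_q^{\,\text{fa}}\leq\beta_q\leq 1/2$ and, from the constraints in (\ref{eq:def_Y_q}) combined with $\wh{\tau}_q\leq\wh{\tau}_q^{\,\max}$, also $P_q^{\,\text{fa}}\geq\mathcal{Q}\bigl((\sigma_{q,k|1}\,\wh{\alpha}_{q,k}+\wh{\tau}_q^{\,\max}(\mu_{q,k|1}-\mu_{q,k|0}))/\sigma_{q,k|0}\bigr)>0$, so $P_q^{\,\text{fa}}$ stays in a compact subinterval of $(0,1)$ on which $\mathcal{Q}^{-1}$ is smooth — the same bound the paper itself invokes (as $P_q^{\text{fa}^{\min}}$) in Appendix B. With that sentence added, your proof is complete, and in fact it establishes slightly more than ACQ at points where $I_q$ is active.
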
\begin{proof}
The proof follows similar steps of \cite[Prop. 8]{Pang-Scutari-NNConvex_PI}
and thus is omitted. \end{proof}

\begin{lemma}\label{Lemma_bounded_multipliers}Let $\mathbf{x}_{-q}\in\mathcal{Y}_{-q}$
and ${\pi}_{t}\in\mathcal{S}_{t}$ for some $t>0$. At every solution
$\mathbf{x}_{q}^{\star}$ of (\ref{eq:game_G_t_2}), any optimal multiplier
${\lambda}_{q}^{\star}$ associated with the constraint $I(\mathbf{x}_{q}^{\star})\leq0$
satisfies ${\lambda}\leq\lambda^{\max}$, with $\lambda^{\max}$ defined
in (\ref{eq:t_star_def}).\end{lemma}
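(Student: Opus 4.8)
The plan is to pin down $\lambda_q^{\star}$ through the first-order conditions and then squeeze it using the structure of the interference constraint. Fix $\mathbf{x}_{-q}\in\mathcal{Y}_{-q}$, $\pi_t\in\mathcal{S}_t$, and let $\mathbf{x}_q^{\star}=(\widehat\tau_q^{\star},\mathbf{p}_q^{\star},P_q^{\mathrm{fa},\star})$ solve (\ref{eq:game_G_t_2}). If $\lambda_q^{\star}=0$ there is nothing to prove, so assume $\lambda_q^{\star}>0$. By Lemma~\ref{Lemma_ACQ} the ACQ holds at $\mathbf{x}_q^{\star}$, so the pair $(\mathbf{x}_q^{\star},\lambda_q^{\star})$ satisfies the KKT system (\ref{eq:KKT_player_q_1}): $\mathbf{x}_q^{\star}$ minimizes $\mathcal{L}_q((\cdot,\lambda_q^{\star}),\mathbf{x}_{-q},\pi_t)$ over the polyhedron $\mathcal{Y}_q$, and $0\le\lambda_q^{\star}\perp-I_q(\mathbf{x}_q^{\star})\ge0$. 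Since $\lambda_q^{\star}>0$, complementarity makes the local constraint active, i.e. $\sum_{k=1}^N\widehat P_{q,k}^{\mathrm{miss}}\,w_{q,k}\,p_{q,k}^{\star}=I_q^{\max}$ by (\ref{eq:def_h_function}).

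First I would exploit the optimality of $\mathbf{x}_q^{\star}$ along the feasible direction that drives all powers to zero. The point $(\widehat\tau_q^{\star},\mathbf{0},P_q^{\mathrm{fa},\star})$ lies in $\mathcal{Y}_q$, since none of the constraints defining $\mathcal{Y}_q$ other than $\mathbf{p}_q\in\mathcal{P}_q$ involve $\mathbf{p}_q$, and $\mathbf{0}\in\mathcal{P}_q$. Inserting $\mathbf{x}_q=(\widehat\tau_q^{\star},\mathbf{0},P_q^{\mathrm{fa},\star})$ into the variational inequality in (\ref{eq:KKT_player_q_2}) yields $\sum_{k}p_{q,k}^{\star}\,\partial_{p_{q,k}}\mathcal{L}_q\le 0$. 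Using (\ref{eq:Lagrangian-1}) together with $\partial_{p_{q,k}}I_q=\partial_{p_{q,k}}I=\widehat P_{q,k}^{\mathrm{miss}}w_{q,k}$ (the equi-sensing penalty in $\theta_q$ being power-independent) and $\partial_{p_{q,k}}\widehat R_q=(\partial_{p_{q,k}}r_{q,k})/\sum_j r_{q,j}$, this becomes
\[
(\lambda_q^{\star}+\pi_t)\sum_{k=1}^N\widehat P_{q,k}^{\mathrm{miss}}\,w_{q,k}\,p_{q,k}^{\star}\;\le\;\frac{\sum_{k=1}^N p_{q,k}^{\star}\,\partial_{p_{q,k}}r_{q,k}(\mathbf{p}^{\star})}{\sum_{j=1}^N r_{q,j}(\mathbf{p}^{\star})}.
\]

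The remaining step is elementary. From (\ref{eq:rate_FSIC}) one has $p_{q,k}\,\partial_{p_{q,k}}r_{q,k}=t/(1+t)$ with $t=p_{q,k}/(\widehat\sigma_{q,k}^2+\sum_{r\ne q}|\widehat H_{qr}(k)|^2p_{r,k})$, and the concavity inequality $t/(1+t)\le\log(1+t)=r_{q,k}$ gives $\sum_k p_{q,k}^{\star}\partial_{p_{q,k}}r_{q,k}\le\sum_k r_{q,k}=\sum_j r_{q,j}$, so the right-hand side above is at most $1$. Substituting the active-constraint value $\sum_k\widehat P_{q,k}^{\mathrm{miss}}w_{q,k}p_{q,k}^{\star}=I_q^{\max}$ and discarding the nonnegative term $\pi_t I_q^{\max}$ then produces the sharp bound $\lambda_q^{\star}\le 1/I_q^{\max}$, which already establishes boundedness. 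To recover precisely the displayed constant $\lambda^{\max}$ of (\ref{eq:t_star_def}) I would instead retain cruder per-carrier estimates of the same ratio — bounding $\partial_{p_{q,k}}r_{q,k}$ above through the noise floor $\sigma_{q,k}^2$ and the achievable rate from below by a single-carrier rate evaluated at the power cap against worst-case rival powers $p_{r,k}^{\max}$; these are the estimates the authors adopt, precisely so that the \emph{same} constant also upper-bounds the global price $\pi_t^{\star}$ needed in Step~3.

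I expect the main difficulty to lie in two places rather than in the KKT manipulation. The first is legitimizing KKT at $\mathbf{x}_q^{\star}$, i.e. the constraint qualification: this is exactly what Lemma~\ref{Lemma_ACQ} supplies, and it is what forces the statement to concern \emph{nontrivial} solutions (at $\mathbf{p}_q^{\star}=\mathbf{0}$ the argument degenerates). The second is the bookkeeping that turns the clean bound $1/I_q^{\max}$ into the explicit $\lambda^{\max}$: one must verify that the denominators in (\ref{eq:t_star_def}) stay uniformly bounded away from zero over all admissible $(\mathbf{x}_{-q},\pi_t)$, which requires controlling the rivals' powers through their caps and the noise levels. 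The two ideas that make the estimate go through cleanly are the choice of the ``drive-to-zero'' feasible direction and the inequality $t/(1+t)\le\log(1+t)$.
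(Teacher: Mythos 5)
Your proposal is correct and essentially reproduces the paper's own argument: Lemma~\ref{Lemma_ACQ} legitimizes the KKT conditions, pairing the power-component stationarity with complementarity of the interference and power constraints yields exactly the inequality chain (\ref{eq:bound_on_multipliers})--(\ref{eq:mult bound}) (your zero-power test point in the VI merely sidesteps the explicit multipliers $\chi_q^{\star},\xi_{q,k}^{\star}$ that the paper introduces and then discards as nonnegative), and the ``cruder per-carrier estimates'' you defer to are precisely the noise-floor bound $\min_k\sigma_{q,k}^2$ and the worst-case-MUI rate lower bound that the authors use to produce $\lambda^{\max}$ in (\ref{eq:t_star_def}). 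The one point to flag is that your sharp interim bound $\lambda_q^{\star}\le 1/I_q^{\max}$, obtained via $u/(1+u)\le\log(1+u)$, does not by itself imply the lemma in every parameter regime (the denominator in (\ref{eq:t_star_def}) can exceed one, making $\lambda^{\max}<1/I_q^{\max}$), so the stated constant genuinely requires the paper-style estimates --- which you correctly anticipated and described.
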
\begin{proof} First of all,
observe that the nonconvex problem (\ref{eq:game_G_t_2}) admits a
solution $\mathbf{x}_{q}^{\star}=\left(\wh{\tau}{}_{q}^{\star},\,\mathbf{p}_{q}^{\star},\, P_{q}^{\text{{fa}}\star}\right)$,
for every given $\mathbf{x}_{-q}\in\mathcal{Y}_{-q}$ and $\pi{}_{t}\in\mathcal{S}_{t}$;
by Lemma \ref{Lemma_ACQ}, $\mathbf{x}_{q}^{\star}$ must satisfy
the KKT conditions of the problem, for some multipliers ${\lambda}_{q}^{\star}$
associated with the constraint $I(\mathbf{x}_{q}^{\star})\leq0$.
Given the KKT conditions (which are omitted here), starting from the
complementarity of the $p_{q,k}$-variables, summing over $k$, and
invoking the orthogonality condition, we obtain: denoting by $\chi_{q}^{\star}$
and $\xi_{q,k}^{\star}$ the multipliers associated to the power budget
and the spectral mask constraints, respectively, 
\begin{equation}
\begin{array}{l}
{\displaystyle {\displaystyle {\displaystyle \left(\,\lambda_{q}^{\star}+{\pi}_{t}\right)\,{\sum_{k=1}^{N}}P_{q,k}^{\text{{miss}}}(\hat{{\tau}}_{q}^{\star},\, P_{q}^{\text{{fa}}\star})\,|G_{P,q}(k)|^{2}\, p_{q,k}^{\star}+{\displaystyle \chi_{q}^{\star}\,{\displaystyle {\sum_{k=1}^{N}}\, p_{q,k}^{\star}+{\displaystyle {\displaystyle {\sum_{k=1}^{N}}\,\xi_{q,k}^{\star}\, p_{q,k}^{\star}}}}}}}}\\[0.3in]
={\displaystyle \,{\displaystyle {\sum_{k=1}^{N}}\,{\displaystyle {\frac{p_{q,k}^{\star}}{\left({\displaystyle {\sum_{k=1}^{N}}}r_{q,k}(p_{q,k}^{\star},\mathbf{p}_{-q})\right)\,\left(\,{\sigma}_{q,k}^{2}+{\displaystyle {\sum_{r\neq q}}\,|{H}_{qr}(k)|^{2}\, p_{r,k}\,+|{H}_{qq}(k)|^{2}p_{q,k}^{\star}}\right)}}}}}\bigskip\\
\leq\,{\displaystyle \frac{1}{\left[\,{\displaystyle {\min_{1\leq k\leq N}}\,\left\{ \,\log\left(\,1+{\displaystyle {\frac{|{H}_{qq}(k)|^{2}p_{q,k}^{\max}}{{\sigma}_{q}^{2}(k)+{\displaystyle {\sum_{r\neq q}}\,|{H}_{qr}(k)|^{2}\, p_{r}^{\max}(k)}}}\,}\right)\,\right\} \,}\right]\,{\displaystyle {\min_{1\leq k\leq N}}\,\left\{ {\sigma}_{q,k}^{2}\right\} }}}\triangleq\lambda_{q}^{\max},
\end{array}\label{eq:bound_on_multipliers}
\end{equation}
where in the last inequality we used the following property of the
logarithmic function, which is an immediate consequence of its concavity:
for any scalar $a>0$ and $c>0$, it holds that $\log(1+c\, y)\,\geq\, y\,\log(1+c\, a),$
for all $y\in[0,\, a].$ Inequality in (\ref{eq:bound_on_multipliers})
together with the complementarity conditions associated to the power
constraints $\mathbf{p}_{q}^{\star}\leq\mathbf{p}_{q}^{\max}$ and
$\sum_{k=1}^{N}p_{q,k}^{\star}\leq P_{q}$, and the individual nonconvex
interference constraint $I_{q}(\mathbf{x}_{q}^{\star})\leq0$ lead
to 
\begin{equation}
\begin{array}{l}
{\displaystyle \lambda_{q}^{\star}\,{I}_{q}^{\text{{max}}}+{\displaystyle \,\chi_{q}^{\star}\, P_{q}+{\displaystyle \,{\displaystyle {\sum_{k=1}^{N}}\,\xi_{q,k}^{\,\star}\, p_{q,k}^{\max}}}}\leq\lambda_{q}^{\max}.}\\[0.3in]\end{array}\label{eq:mult bound}
\end{equation}
The desired result ${\lambda}_{q}^{\star}\leq\lambda^{\max}$ follows
from (\ref{eq:mult bound}) and $\min\left\{ P_{q},\,\min_{k}\,\{p_{q,k}^{\max}\}\right\} =\min_{k}\, p_{q,k}^{\max}$
for $q$. \end{proof}

\subsection{Proof of Proposition \ref{proposition_uniqueness_opt_sol}}

The proof is organized in the following two steps: 

\noindent \textbf{Step 1}. We show first that under the assumptions
in the proposition, each problem (\ref{eq:game_G_t_2}) has a unique
optimal solution, for any given $\mathbf{x}_{-q}\in\mathcal{Y}_{-q}$. 

\noindent \textbf{Step 2}. Then, we prove that any optimal solution
of (\ref{eq:game_G_t_2}) is nontrivial.

\noindent \textbf{Step 1}. Given $\mathbf{x}_{-q}\in\mathcal{Y}_{-q}$
and $\pi{}_{t}\in\mathcal{S}_{t}$, let $\mathbf{x}_{q}^{\star}=\left(\wh{\tau}{}_{q}^{\star},\,\mathbf{p}_{q}^{\star},\, P_{q}^{\text{{fa}}\star}\right)$
be a solution of (\ref{eq:game_G_t_2}); by Lemma \ref{Lemma_ACQ},
there exists a multiplier ${\lambda}_{q}^{\star}$ such that $ $
$(\mathbf{x}_{q}^{\star},\lambda_{q}^{\star})$ satisfies the VI$(\mathcal{K}_{q},\mathbf{F}_{q})$
in (\ref{eq:VI_ref}); by Lemma \ref{Lemma_bounded_multipliers},
it must be ${\lambda}_{q}^{\star}\leq\lambda^{\max}$. It turns out
that to prove Proposition \ref{proposition_uniqueness_opt_sol} is
sufficient to show that, under the condition in the proposition, the
VI$(\mathcal{K}_{q},\mathbf{F}_{q})$ has a unique solution in the
$x_{q}$-variables. 

Suppose by contradiction that there are two distinct solutions $ $of
the VI$(\mathcal{K}_{q},\mathbf{F}_{q})$, denoted by $\mathbf{y}_{q}^{(1)}\triangleq(\mathbf{x}_{q}^{(1)},\lambda_{q}^{(1)})\in\mathcal{Y}_{q}\times[0,\lambda^{\max}]$
and $\mathbf{y}_{q}^{(2)}\triangleq(\mathbf{x}_{q}^{(2)},\lambda_{q}^{(2)})\in\mathcal{Y}_{q}\times[0,\lambda^{\max}]$,
with $\mathbf{x}_{q}^{(1)}\neq\mathbf{x}_{q}^{(2)}$. Then, we have
\[
\begin{array}{l}
\left(\mathbf{y}_{q}^{(2)}-\mathbf{y}_{q}^{(1)}\right)^{T}\mathbf{F}_{q}\left(\mathbf{y}_{q}^{(1)};\,\mathbf{x}_{-q},\,{\pi}_{t}\right)\geq0\smallskip\\
\left(\mathbf{y}_{q}^{(1)}-\mathbf{y}_{q}^{(2)}\right)^{T}\mathbf{F}_{q}\left(\mathbf{y}_{q}^{(2)};\,\mathbf{x}_{-q},\,{\pi}_{t}\right)\geq0.
\end{array}
\]
Summing the two inequalities yields to 
\begin{equation}
0\leq-\left(\mathbf{y}_{q}^{(1)}-\mathbf{y}_{q}^{(2)}\right)^{T}\left(\mathbf{F}_{q}(\mathbf{y}_{q}^{(1)};\,\mathbf{x}_{-q},\,{\pi}_{t})-\mathbf{F}_{q}(\mathbf{y}_{q}^{(2)};\,\mathbf{x}_{-q},\,{\pi}_{t})\right).\label{eq:ineq_chain_0}
\end{equation}
 Invoking the mean-value theorem applied to to the univariate, differentiable,
scalar-valued function 
\begin{equation}
\delta\in[0,\,1]\mapsto\left(\mathbf{y}_{q}^{(1)}-\mathbf{y}_{q}^{(2)}\right)^{T}\mathbf{F}_{q}\left(\mathbf{y}_{q}(\delta);\,\mathbf{x}_{-q},\,{\pi}_{t}\right);\label{eq:univariate_function}
\end{equation}
we deduce that there exists some $0<\bar{{\delta}}<1$, such that
(\ref{eq:ineq_chain_0}) can be written as\vspace{-0.4cm}

\begin{eqnarray}
0 & \leq & -\left(\mathbf{y}_{q}^{(1)}-\mathbf{y}_{q}^{(2)}\right)^{T}\left(\mathbf{F}_{q}(\mathbf{y}_{q}^{(1)};\,\mathbf{x}_{-q},\,{\pi}_{t})-\mathbf{F}_{q}(\mathbf{y}_{q}^{(2)};\,\mathbf{x}_{-q},\,{\pi}_{t})\right)\label{eq:ineq_chain_uniq_proof_1}\\
 & = & -\left(\mathbf{y}_{q}^{(1)}-\mathbf{y}_{q}^{(2)}\right)^{T}\mbox{J}_{\mathbf{y}_{q}}\mathbf{F}_{q}\left(\mathbf{y}_{q}(\bar{{\delta}});\,\mathbf{x}_{-q},\,{\pi}_{t}\right)\,\left(\mathbf{y}_{q}^{(1)}-\mathbf{y}_{q}^{(2)}\right)\label{eq:ineq_chain_uniq_proof_2}\\
 & = & -\left(\begin{array}{c}
\mathbf{x}_{q}^{(1)}-\mathbf{x}_{q}^{(2)}\\
\lambda_{q}^{(1)}-\lambda_{q}^{(2)}
\end{array}\right)^{T}\left[\begin{array}{cc}
\nabla_{\mathbf{x}_{q}}^{2}\mathcal{L}_{q}\left((\mathbf{x}_{q}(\bar{{\delta}}),\,\mathbf{x}_{-q}),\,{\pi}_{t},\,{\lambda}_{q}(\bar{{\delta}})\right),\, & \nabla_{\mathbf{x}_{q}}I_{q}\left(\mathbf{x}_{q}(\bar{{\delta}})\right)\\
-\nabla_{\mathbf{x}_{q}}I_{q}\left(\mathbf{x}_{q}(\bar{{\delta}})\right)^{T} & 0
\end{array}\right]\!\left(\begin{array}{c}
\mathbf{x}_{q}^{(1)}-\mathbf{x}_{q}^{(2)}\\
\lambda_{q}^{(1)}-\lambda_{q}^{(2)}
\end{array}\right)\label{eq:ineq_chain_uniq_proof_3}\\
 & = & -\left(\mathbf{x}_{q}^{(1)}-\mathbf{x}_{q}^{(2)}\right)^{T}\nabla_{\mathbf{x}_{q}}^{2}\mathcal{L}_{q}\left((\mathbf{x}_{q}(\bar{{\delta}}),\,\mathbf{x}_{-q}),\,{\pi}_{t},\,{\lambda}_{q}(\bar{{\delta}})\right)\left(\mathbf{x}_{q}^{(1)}-\mathbf{x}_{q}^{(2)}\right),\label{eq:ineq_chain_uniq_proof_4}
\end{eqnarray}
where in (\ref{eq:ineq_chain_uniq_proof_1}) $\mbox{J}_{\mathbf{y}_{q}}\mathbf{F}_{q}(\,\cdot\,;\,\mathbf{x}_{-q},\,{\pi}_{t})$
denotes the Jacobian matrix of $\mathbf{F}_{q}(\cdot\,;\mathbf{x}_{-q},\,{\pi}_{t})$
with respect to $\mathbf{y}_{q}\triangleq\left(\mathbf{x}_{q},\,\lambda_{q}\right)$.
Since $\mathbf{x}_{q}(\bar{{\delta}})\in\mathcal{Y}_{q}$ (recall
that $\mathcal{Y}_{q}$ is a convex set) and ${\lambda}_{q}(\bar{{\delta}})\leq\lambda^{\max}$,
the inequality in (\ref{eq:ineq_chain_uniq_proof_4}) contradicts
the positive definiteness of $\nabla_{\mathbf{x}_{q}}^{2}\mathcal{L}_{q}\left((\mathbf{x}_{q}(\bar{{\delta}}),\,\mathbf{x}_{-q}),\,{\pi}_{t},\,{\lambda}_{q}(\bar{{\delta}})\right)$,
as assumed in Proposition \ref{proposition_uniqueness_opt_sol}.\medskip

\noindent \textbf{Step 2}. To complete the proof it is enough to
show that the $\mathbf{p}_{q}$-component of any optimal solution
$\mathbf{x}_{q}^{\star}=\left(\wh{\tau}{}_{q}^{\star},\,\mathbf{p}_{q}^{\star},\, P_{q}^{\text{{fa}}\star}\right)$
of (\ref{eq:game_G_t_2}) is such that $\sum_{k}p_{q}^{\star}(k)$
is lower bounded by a positive constant; see Lemma \ref{Lemma_lower_bound_on_the_power}
below. To state the lemma, we need the following intermediate definitions.
Let $\mathbf{p}_{q}^{\text{{ref}}}\triangleq(p_{q,k}^{\text{{ref}}})\in\mathcal{P}_{q}$
be any tuple such that 
\begin{equation}
{\displaystyle {\sum_{k}}\,|G_{P,q}(k)|^{2}\, p_{q,k}^{\text{{ref}}}\,\leq\,2\, I_{q}^{\max},}\label{eq:upper_interference}
\end{equation}
so that for all pairs $(\wh{\tau}_{q},\, P_{q}^{\text{{fa}}})$ satisfying
(\ref{eq:player q_equi-sensing})(b), the interference constraints
(\ref{eq:player q_equi-sensing})(a) evaluated at $(\wh{\tau}_{q},\,\mathbf{p}_{q}^{\text{{ref}}},\, P_{q}^{\text{{fa}}})$
hold; and let 
\begin{equation}
P_{q}^{\text{{fa}}^{\text{{ref}}}}\triangleq\max_{k}\,\left\{ \mathcal{Q}\left(\frac{{\sigma_{q,k|1}}\,\wh{\alpha}{}_{q,k}+({\mu_{q,k|1}-\mu_{q,k|0}})\,\sqrt{{f_{q}}\,\tau^{\min}}}{{\sigma_{q,k|0}}}\right)\right\} .\label{eq:Pfa_ref}
\end{equation}
Note that, under the feasibility conditions (\ref{eq:nec_suff_feasibility_cond}),
such a $P_{q}^{\text{{fa}}^{\text{{ref}}}}$ satisfies {[}see (\ref{eq:player q_equi-sensing})(b){]}
\begin{equation}
\dfrac{{\sigma_{q,k|0}}\,}{{\sigma_{q,k|1}}}\mathcal{Q}^{-1}\left(P_{q}^{\text{{fa}}^{\text{{ref}}}}\right)-\wh{\tau}_{q}\,\dfrac{{\mu_{q,k|1}-\mu_{q,k|0}}}{{\sigma_{q,k|1}}}\,\leq\,\wh{\alpha}_{q,k},\quad\forall k=1,\ldots,N,\label{eq:interference_constraint}
\end{equation}
for any $\wh{\tau}_{q}\geq\sqrt{{f_{q}}\,\tau^{\min}}$. Finally,
given $t>0$, let
\begin{align}
\eta_{q}^{\text{{ref}}}(t) & \triangleq\log\left(1-\frac{{\tau^{\min}}}{T_{q}}\right)+\log\left(1-P_{q}^{\text{{fa}}^{\text{{ref}}}}\right)+\log\left({\displaystyle {\sum_{k}}}\, r_{q,k}\left(\mathbf{p}_{q}^{\text{{ref}}},\mathbf{p}_{-q}^{\max}\right)\right)-\dfrac{{t}}{2}\,\left(\max_{k=1,\ldots,N}\left\{ |G_{P,q}(k)|^{2}\, p_{q,k}^{\text{{ref}}}\right\} \right).\label{eq:n_q_ref}
\end{align}
We can now introduce Lemma \ref{Lemma_lower_bound_on_the_power} that
provides a lower bound for the optimal sum-power allocation of each
player. 

\begin{lemma}\label{Lemma_lower_bound_on_the_power} Given $t>0$,
and feasible ${\pi}_{t}\in\mathcal{S}_{t}$, $\mathbf{p}_{-q}\in\mathcal{P}_{-q}$
and $\wh{\tau}_{r}\in\left[\sqrt{{f_{r}}\,\tau^{\min}},\,\sqrt{{f_{r}}\,\tau^{\max}}\right]$
for all $r\neq q$, the power-part $\mathbf{p}_{q}^{\star}$ of any
optimal solution of the $q$-th nonconvex optimization problem in
(\ref{eq:player q_equi-sensing}) satisfies 
\begin{equation}
\sum_{k=1}^{N}p_{q,k}^{\star}\geq\left(\min_{k=1,\ldots,N}\left\{ \wh{\sigma}{}_{q,k}^{2}\right\} \right)\,\exp\left(\eta_{q}^{\text{{ref}}}(t)\right).\label{eq:lower_bound_p_opt}
\end{equation}
\end{lemma}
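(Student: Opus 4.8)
The plan is to exploit the optimality of $\mathbf{x}_q^\star$ by comparing the optimal payoff against the value attained at a carefully constructed feasible reference strategy $\mathbf{x}_q^{\mathrm{ref}}\triangleq(\wh\tau_q^{\mathrm{ref}},\mathbf{p}_q^{\mathrm{ref}},P_q^{\mathrm{fa,ref}})$, and then to squeeze $\sum_k p_{q,k}^\star$ between a \emph{lower} bound on the reference payoff and an \emph{upper} bound on the optimal payoff expressed through the power. Since $\mathbf{x}_q^\star$ maximizes the objective of (\ref{eq:game_G_t_2}) over $\mathcal{X}_q$ for the given $\mathbf{x}_{-q}$ and $\pi_t$, and the contribution of the rivals to $\pi_t I(\mathbf{x})$ is the same constant at $\mathbf{x}_q^\star$ and $\mathbf{x}_q^{\mathrm{ref}}$, I obtain $\theta_q^\star-\pi_t\sum_k\wh P_{q,k}^{\mathrm{miss},\star}w_{q,k}p_{q,k}^\star\ge\theta_q^{\mathrm{ref}}-\pi_t\sum_k\wh P_{q,k}^{\mathrm{miss,ref}}w_{q,k}p_{q,k}^{\mathrm{ref}}$. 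Dropping the nonpositive equi-sensing penalty and the nonnegative pricing term on the left isolates $\wh R_q^\star$, while on the right I bound every factor below termwise.

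For the reference I would set $\wh\tau_q^{\mathrm{ref}}=\sqrt{f_q\tau^{\min}}$, take $P_q^{\mathrm{fa,ref}}$ as in (\ref{eq:Pfa_ref}), and choose $\mathbf{p}_q^{\mathrm{ref}}\in\mathcal{P}_q$ supported on a single carrier and obeying (\ref{eq:upper_interference}). Feasibility then rests on three facts: constraints (b) hold by (\ref{eq:interference_constraint}) under the feasibility conditions (\ref{eq:nec_suff_feasibility_cond}); the nonconvex constraint $I_q(\mathbf{x}_q^{\mathrm{ref}})\le0$ holds because $\wh P_{q,k}^{\mathrm{miss}}\le\alpha_{q,k}\le1/2$ together with (\ref{eq:upper_interference}) gives $\sum_k\wh P_{q,k}^{\mathrm{miss,ref}}|G_{P,q}(k)|^2p_{q,k}^{\mathrm{ref}}\le\tfrac12\cdot2I_q^{\max}=I_q^{\max}$; and the power/box constraints hold by construction.

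The two quantitative bounds come next. Bounding $\theta_q^{\mathrm{ref}}$ from below: the frame factor equals $1-\tau^{\min}/T_q$ exactly by the choice of $\wh\tau_q^{\mathrm{ref}}$; the false-alarm factor gives $\log(1-P_q^{\mathrm{fa,ref}})$; monotonicity of each $r_{q,k}$ in the rivals' powers with $\mathbf{p}_{-q}\le\mathbf{p}_{-q}^{\max}$ gives $\log\sum_k r_{q,k}(\mathbf{p}_q^{\mathrm{ref}},\mathbf{p}_{-q})\ge\log\sum_k r_{q,k}(\mathbf{p}_q^{\mathrm{ref}},\mathbf{p}_{-q}^{\max})$; and the single-carrier support with $\pi_t\le t$, $\wh P^{\mathrm{miss}}\le1/2$ turns the pricing term into $-\tfrac t2\max_k\{|G_{P,q}(k)|^2p_{q,k}^{\mathrm{ref}}\}$, so the four pieces assemble into $\eta_q^{\mathrm{ref}}(t)$ of (\ref{eq:n_q_ref}). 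Bounding $\wh R_q^\star$ from above: using $(1-\wh\tau_q^2/(f_qT_q))\le1$, $(1-P_q^{\mathrm{fa}})\le1$, $\log(1+x)\le x$, and the denominator lower bound $\ge\min_k\wh\sigma_{q,k}^2$ in (\ref{eq:rate_FSIC}) yields $\wh R_q^\star\le\log\big(\sum_k p_{q,k}^\star/\min_k\wh\sigma_{q,k}^2\big)$. Chaining everything gives $\log\sum_k p_{q,k}^\star-\log\min_k\wh\sigma_{q,k}^2\ge\eta_q^{\mathrm{ref}}(t)$, which is exactly (\ref{eq:lower_bound_p_opt}) after exponentiating.

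The delicate part is twofold. First, the bookkeeping of the equi-sensing penalty $-\tfrac c2(\cdots)^2$: at the reference it is nonpositive and hence only weakens the lower bound, yet it is absent from $\eta_q^{\mathrm{ref}}(t)$, so one must argue it is either dominated (the reference sensing time can be taken so that it is negligible) or that the constant it contributes only reinforces nontriviality, which is all that Step 2 of Proposition \ref{proposition_uniqueness_opt_sol} ultimately needs. Second, reproducing the exact constant $\tfrac t2\max_k\{\cdots\}$ rather than a crude $\sum_k$ estimate is precisely what forces the single-carrier choice of $\mathbf{p}_q^{\mathrm{ref}}$; getting the geometry of the reference right so that all four lower-bound pieces line up with (\ref{eq:n_q_ref}) is the step I expect to require the most care.
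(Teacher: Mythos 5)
Your overall strategy is the one the paper itself uses: compare the optimal payoff with that of a feasible reference tuple, lower-bound the reference side to obtain $\eta_q^{\text{ref}}(t)$, upper-bound the optimal side by $\log\bigl(\sum_k p_{q,k}^\star/\min_k\wh{\sigma}_{q,k}^2\bigr)$ via $(1-\wh{\tau}_q^2/(f_qT_q))(1-P_q^{\text{fa}})\le 1$ and $\log(1+x)\le x$, and exponentiate. Your feasibility argument for the reference point also matches the paper's: constraints (b) via (\ref{eq:Pfa_ref})--(\ref{eq:interference_constraint}) under (\ref{eq:nec_suff_feasibility_cond}), and constraint (a) via $\wh{P}_{q,k}^{\text{miss}}\le\alpha_{q,k}\le 1/2$ combined with (\ref{eq:upper_interference}). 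Your single-carrier choice of $\mathbf{p}_q^{\text{ref}}$ is a defensible reading that makes the term $\frac{t}{2}\max_k\{|G_{P,q}(k)|^2p_{q,k}^{\text{ref}}\}$ in (\ref{eq:n_q_ref}) literally attainable, since the paper only imposes (\ref{eq:upper_interference}); that part is sound rather than a gap.

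The genuine gap is your treatment of the equi-sensing penalty, which you flag as "delicate" but do not resolve, and neither of your suggested escapes works. With your fixed choice $\wh{\tau}_q^{\text{ref}}=\sqrt{f_q\,\tau^{\min}}$, the penalty at the reference, $-\frac{c}{2}\bigl(\sqrt{\tau^{\min}}-\frac{1}{Q}\sum_r\wh{\tau}_r/\sqrt{f_r}\bigr)^2$, does not vanish and is not negligible: the rivals' normalized sensing times range over $[\sqrt{\tau^{\min}},\sqrt{\tau^{\max}}]$ and $c\ge 0$ is a free design parameter intended to be \emph{large} (the paper uses $c=100$ to enforce equi-sensing), so your chain only yields $\log\bigl(\sum_kp^\star_{q,k}/\min_k\wh{\sigma}^2_{q,k}\bigr)\ge\eta_q^{\text{ref}}(t)-\frac{c}{2}(\cdot)^2$, which degenerates as $c\to\infty$ and does not prove the $c$-free bound (\ref{eq:lower_bound_p_opt}). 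Retreating to "nontriviality only" is also not enough, because the explicit constant in (\ref{eq:lower_bound_p_opt}) is used quantitatively downstream: it defines $\wh{\mathcal{P}}_q^t$ in (\ref{eq:set_P_q_t_a}), the rate bound (\ref{eq:lower_bound_r_q_2}), and hence the constants in Corollary \ref{corollary_sf_cond_uniqueness_opt_sol}, Corollary \ref{corollary_sf_cond_uniqueness_NE}, and the contraction constants behind Theorem \ref{Theo_Contraction_best-response}. The paper's fix, which closes your argument with one change, is to take the reference sensing time to be the rivals' \emph{average}, $\wh{\tau}_q^{\text{ref}}\triangleq\frac{\sqrt{f_q}}{Q-1}\sum_{r\neq q}\wh{\tau}_r/\sqrt{f_r}$: this lies in $\bigl[\sqrt{f_q\,\tau^{\min}},\,\sqrt{f_q\,\tau^{\max}}\bigr]$, so feasibility via (\ref{eq:Pfa_ref})--(\ref{eq:interference_constraint}) still goes through, and it annihilates the penalty \emph{identically} at the reference, since then $(1-\frac{1}{Q})\,\wh{\tau}_q^{\text{ref}}/\sqrt{f_q}=\frac{1}{Q}\sum_{r\neq q}\wh{\tau}_r/\sqrt{f_r}$, so no $c$-dependent term ever enters the lower bound.
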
 

\begin{proof} Let $t>0$, $\pi_{t}\in\mathcal{S}_{t}$, $\mathbf{0}\leq\mathbf{p}_{r}\leq\mathbf{p}_{r}^{\max}$
 with $r\neq q$, and $\wh{\tau}_{r}$ for $r\neq q$ satisfying $\wh{\tau}_{r}\in\left[\sqrt{{f_{r}}\,\tau^{\min}},\,\sqrt{{f_{r}}\,\tau^{\max}}\right]$
be given. Let define $\wh{\tau}{}_{q}^{\text{{ref}}}\triangleq\dfrac{{\sqrt{{f_{q}}}}}{Q-1}\,\sum_{r\neq q}\dfrac{{\wh{\tau}}_{r}}{{\sqrt{{f_{r}}}}}$;
we then have $\sqrt{\tau^{\min}}\leq\dfrac{\wh{\tau}{}_{q}^{\text{{ref}}}}{{\sqrt{{f_{q}}}}}\leq\dfrac{1}{Q}\,\sum_{r=1}^{Q}\dfrac{\wh{\tau}{}_{r}}{{\sqrt{{f_{r}}}}}\leq\sqrt{\tau^{\max}}$.
Therefore, if $\mathbf{x}_{q}^{\star}=(\wh{\tau}_{q}^{\star},\,\mathbf{p}_{q}^{\star},\, P_{q}^{\text{{fa}\ensuremath{\star}}})$
is player $q$'s best-response corresponding to ${\pi}_{t}$, $\wh{\boldsymbol{{\tau}}}_{-q},$
and $\mathbf{p}_{-q}$, then
\begin{equation}
\begin{array}{l}
\wh{R}{}_{q}\left(\wh{\tau}{}_{q}^{\text{{ref}}},\,(\mathbf{p}_{q}^{\text{{ref}}},\mathbf{p}_{-q}),\, P_{q}^{\text{{fa}}^{\text{{ref}}}}\right)-\pi_{t}\cdot{\displaystyle {\sum_{k}}P_{q,k}^{\text{{miss}}}(\hat{{\tau}}_{q}^{\text{{ref}}},\, P_{q}^{\text{{fa}}^{\text{{ref}}}})\,|G_{P,q}(k)|^{2}\, p_{q,k}^{\text{{ref}}}\,\vspace{-0.6cm}}\\
\leq R_{q}\left(\wh{\tau}_{q}^{\star},\,(\mathbf{p}_{q}^{\star},\mathbf{p}_{-q}),\, P_{q}^{\text{{fa}}\star}\right)-\pi_{t}\cdot{\displaystyle {\sum_{k}}P_{q,k}^{\text{{miss}}}(\hat{{\tau}}_{q}^{\text{{ref}}},\, P_{q}^{\text{{fa}}^{\text{{ref}}}})\,|G_{P,q}(k)|^{2}\, p_{q,k}^{\star}-\,\dfrac{{c}}{2}\,\left(\left(1-\dfrac{{1}}{Q}\right)\dfrac{{\wh{\tau}_{q}^{\star}}}{\sqrt{{f_{q}}}}-\dfrac{{1}}{Q}\,{\displaystyle {\sum_{r\neq q}}}\,\dfrac{{\wh{\tau}_{r}^{\star}}}{\sqrt{{f_{r}}}}\right)^{2}\vspace{-0.6cm}}\\
\leq\log\left({\displaystyle {\sum_{k}}}\, r_{q,k}(\mathbf{p}_{q}^{\star},\mathbf{p}_{-q})\right)\leq\log\left({\displaystyle {\sum_{k}}}\log\left(1+\dfrac{p_{q,k}^{\star}}{\hat{{\sigma}}_{q,k}^{2}}\right)\right)\leq\log\left({\displaystyle {\sum_{k}}}\left(\dfrac{p_{q,k}^{\star}}{\hat{{\sigma}}_{q,k}^{2}}\right)\right),
\end{array}\label{eq:right_upper_bound}
\end{equation}
where $\wh{R}{}_{q}\left(\wh{\tau}_{q},\,\mathbf{p},\, P_{q}^{\text{{fa}}}\right)$
and $r_{q,k}\left(\mathbf{p}_{q},\mathbf{p}_{-q}\right)$ are defined
in (\ref{eq:Opportunistic-throughtput}) and (\ref{eq:rate_FSIC}),
respectively. On the other end, we have:
\begin{equation}
\begin{array}{l}
\wh{R}{}_{q}\left(\wh{\tau}{}_{q}^{\text{{ref}}},\,(\mathbf{p}_{q}^{\text{{ref}}},\mathbf{p}_{-q}),\, P_{q}^{\text{{fa}}^{\text{{ref}}}}\right)-{\displaystyle \pi_{t}\cdot{\displaystyle {\sum_{k}}P_{q,k}^{\text{{miss}}}(\hat{{\tau}}_{q}^{\text{{ref}}},\, P_{q}^{\text{{fa}}^{\text{{ref}}}})\,|G_{P,q}(k)|^{2}\, p_{q,k}^{\text{{ref}}}}\,\geq\eta_{q}^{\text{{ref}}}(t)},\end{array}\label{eq:left_lower_bound}
\end{equation}
with $\eta_{q}^{\text{{ref}}}(t)$ defined in (\ref{eq:n_q_ref}),
and in (\ref{eq:left_lower_bound}) we used ${\pi}_{t}\in\mathcal{S}_{t}$
and $P_{q,k}^{\text{{miss}}}\leq1/2$$ $. The desired bound in (\ref{eq:lower_bound_p_opt})
follows readily from (\ref{eq:right_upper_bound}) and (\ref{eq:left_lower_bound}).\end{proof}

\section{Proof of Corollary \ref{corollary_sf_cond_uniqueness_opt_sol}\label{sec:Proof-of-Corollary_existence}}

The proof is based on the following two steps. 

\noindent \textbf{Step 1}. We introduce a symmetric matrix, denoted
by $\overline{\nabla_{\mathbf{x}_{q}}^{2}\mathcal{L}_{q}}\in\mathbb{R}^{(N+2)\times(N+2)}$,
having the property that: given $t>0$, 
\begin{equation}
\mathbf{y}^{T}\,\left(\nabla_{\mathbf{x}_{q}}^{2}\mathcal{L}_{q}\left(\mathbf{x},\,{\pi}_{t},\,{\lambda}_{q}\right)\right)\,\mathbf{y}\geq\left|\mathbf{y}\right|^{T}\overline{\nabla_{\mathbf{x}_{q}}^{2}\mathcal{L}_{q}}\left|\mathbf{y}\right|\quad\forall\left(\mathbf{x},\,{\pi}_{t},\,{\lambda}_{q}\right)\in\mathcal{Y}\times\mathcal{S}_{t}\times[0,\lambda^{\max}],\quad\mbox{and}\quad\mathbf{y}\in\mathbb{R}^{N+2},\label{eq:copositivity_Lower_bound}
\end{equation}
which guarantees that $\nabla_{\mathbf{x}_{q}}^{2}\mathcal{L}_{q}\left(\mathbf{x},\,{\pi}_{t},\,{\lambda}_{q}\right)$
is positive definite if $\overline{\nabla_{\mathbf{x}_{q}}^{2}\mathcal{L}_{q}}$
is so. 

\noindent\textbf{Step 2}. We derive sufficient conditions for $\overline{\nabla_{\mathbf{x}_{q}}^{2}\mathcal{L}_{q}}$
to be positive definite. 

\noindent \textbf{Step 1}. It is not difficult to see that (\ref{eq:copositivity_Lower_bound})
is satisfied if $\overline{\nabla_{\mathbf{x}_{q}}^{2}\mathcal{L}_{q}}$
is built such that: for all $\left(\mathbf{x},\,{\pi}_{t},\,{\lambda}_{q}\right)\in\mathcal{Y}\times[0,t]\times[0,\lambda^{\max}]$,
\begin{equation}
\left[\overline{\nabla_{\mathbf{x}_{q}}^{2}\mathcal{L}_{q}}\right]_{ij}=\left\{ \begin{array}{lll}
\leq\left[\nabla_{\mathbf{x}_{q}}^{2}\mathcal{L}_{q}\left(\mathbf{x},\,{\pi}_{t},\,{\lambda}_{q}\right)\right]_{ij} &  & \mbox{if }i=j,\\
\\
\leq-\left|\left[\nabla_{\mathbf{x}_{q}}^{2}\mathcal{L}_{q}\left(\mathbf{x},\,{\pi}_{t},\,{\lambda}_{q}\right)\right]_{ij}\right| &  & \mbox{if }i\neq j.
\end{array}\right.\label{eq:nabla_L_bar}
\end{equation}

To construct such a matrix, we need to bound properly the entries
of $\nabla_{\mathbf{x}_{q}}^{2}\mathcal{L}_{q}\left(\mathbf{x},\,{\pi}_{t},\,{\lambda}_{q}\right)$.
Recalling that $\nabla_{\mathbf{x}_{q}}^{2}\mathcal{L}_{q}\left(\mathbf{x},\,{\pi}_{t},\,{\lambda}_{q}\right)$
has the following expression {[}cf. (\ref{eq:L_q_2_matrix}){]}: 
\begin{align}
\nabla_{\mathbf{x}_{q}}^{2}\mathcal{L}_{q}\left(\mathbf{x},\pi_{t},\lambda_{q}\right) & \triangleq-\nabla_{\mathbf{x}_{q}}^{2}\theta{}_{q}(\mathbf{x})+\lambda_{q}\cdot\nabla_{\mathbf{x}_{q}}^{2}I_{q}(\mathbf{x}_{q})+\pi_{t}\cdot\nabla_{\mathbf{x}_{q}}^{2}I(\mathbf{x})\label{eq:nabla_L}
\end{align}
we focus next on each term in (\ref{eq:nabla_L}) separately. 

\noindent$-$Matrix\emph{ $-\nabla_{\mathbf{x}_{q}}^{2}\theta{}_{q}(\mathbf{x})$}:
Introducing 
\begin{equation}
r_{q}(\mathbf{p})\triangleq\sum_{k=1}^{N}r_{q,k}(\mathbf{p})\leq\sum_{k=1}^{N}\log\left(1+\dfrac{{p_{q,k}^{\max}}}{\wh{\sigma}_{q,k}^{2}}\right)\triangleq r_{q}^{\max},\label{eq:def_r_q_max}
\end{equation}
with $r_{q,k}(\mathbf{p})$ defined in (\ref{eq:rate_FSIC}), $-\nabla_{\mathbf{x}_{q}}^{2}\theta{}_{q}(\mathbf{x})$
is given by 

\begin{equation}
-\nabla_{\mathbf{x}_{q}}^{2}\theta{}_{q}(\mathbf{x})=\left[\begin{array}{ccc}
\,{\displaystyle {\frac{{\displaystyle {\frac{2}{f_{q}\, T_{q}}}\,\left(\,1+{\displaystyle {\frac{\wh{\tau}_{q}^{2}}{f_{q}\, T_{q}}}\,}\right)}}{\left(\,1-{\displaystyle {\frac{\wh{\tau}_{q}^{2}}{f_{q}\, T_{q}}}\,}\right)^{2}}}+c\,\left(\dfrac{{1-1/Q}}{\sqrt{{f_{q}}}}\right)^{2}} & \mathbf{0}_{1\times N} & 0\\
\mathbf{0}_{N\times1} & \nabla_{\mathbf{p}_{q}}^{2}\left(-\log r_{q}(\mathbf{p})\right) & \mathbf{0}_{N\times1}\\
0 & \mathbf{0}_{1\times N} & {\displaystyle {\frac{1}{\left(\,1-P_{q}^{\text{\,{fa}}}\right)^{2}}}}
\end{array}\right],\label{eq:nabla_square_f_q_x_q}
\end{equation}
with $ ${\small{} 
\begin{align}
\nabla_{\mathbf{p}_{q}}^{2}\left(-\log r_{q}(\mathbf{p}_{q},\mathbf{p}_{-q})\right) & =\left[{\displaystyle {\frac{-\nabla_{\mathbf{p}_{q}}^{2}r_{q}(\mathbf{p})}{r_{q}(\mathbf{p}_{q})}}}+{\displaystyle {\frac{\nabla_{\mathbf{p}_{q}}r_{q}(\mathbf{p})\,\nabla_{\mathbf{p}_{q}}r_{q}(\mathbf{p})^{T}}{r_{q}(\mathbf{p}_{q})^{2}}}}\right]\label{eq:nablas_of_r_q_a}\\
\nabla_{\mathbf{p_{q}}}r_{q}(\mathbf{p}) & =\mbox{vect}\left\{ \left(\,{\displaystyle {\frac{1}{\wh{\sigma}_{q,k}^{2}+{\displaystyle {\sum_{r=1}^{Q}}\,|\wh{H}_{qr}(k)|^{2}\, p_{r}(k)}}}\,}\right)_{k=1}^{N}\right\} \label{eq:nablas_of_r_q_b}\\
\nabla_{\mathbf{p}_{q}}^{2}r_{q}(\mathbf{p}) & =\mbox{Diag}\left\{ \left(\,{\displaystyle {\frac{-1}{\left(\,\wh{\sigma}_{q,k}^{2}+{\displaystyle {\sum_{r=1}^{Q}}\,|\wh{H}_{qr}(k)|^{2}\, p_{r,k}}\right)^{2}}}\,}\right)_{k=1}^{N}\right\} \label{eq:nablas_of_r_q_c}
\end{align}
}{\small \par}

We provide now some bounds of the above quantities that will be used
to define the diagonal entries of $\overline{\nabla_{\mathbf{x}_{q}}^{2}\mathcal{L}_{q}}$.
The minimum eigenvalue of the positive definite matrix $\nabla_{\mathbf{p}_{q}}^{2}\left(-\log r_{q}(\mathbf{p})\right)$
is lower bounded by: for all $\mathbf{p}\in\mathcal{P}=\prod_{q=1}^{Q}\mathcal{P}_{q}$,
{\small 
\begin{align}
\lambda_{\min}\left(\nabla_{\mathbf{p}_{q}}^{2}\left(-\log r_{q}(\mathbf{p})\right)\right) & \geq\min_{k=1,\ldots,N}\left\{ d_{-\log r_{q},k}^{\min}\triangleq\dfrac{{1}/r_{q}^{\max}}{{\wh{\sigma}_{q,k}^{2}+{\displaystyle {\sum_{r=1}^{Q}}\,|\wh{H}_{qr}(k)|^{2}\, p_{r,k}^{\max}}}}\right\} \triangleq d_{-\log r_{q}}^{\min},\label{eq:lambda_min_Hrate}
\end{align}
}whereas a lower bound of the first and last diagonal elements in
(\ref{eq:nabla_square_f_q_x_q}) are: for all feasible $\left(\wh{\tau}_{q},\, P_{q}^{\text{{fa}}}\right)$
{[}see conditions (b) and (c) in (\ref{eq:player q_equi-sensing}){]},
\begin{equation}
{\frac{{\displaystyle {\frac{2}{f_{q}\, T_{q}}}\,\left(\,1+{\displaystyle {\frac{\wh{\tau}_{q}^{2}}{f_{q}\, T_{q}}}\,}\right)}}{\left(\,1-{\displaystyle {\frac{\wh{\tau}_{q}^{2}}{f_{q}\, T_{q}}}\,}\right)^{2}}}\geq{\displaystyle {\frac{{\displaystyle {\frac{2}{f_{q}\, T_{q}}}\,\left(\,1+{\displaystyle {\frac{\left(\,{\tau}^{\,\min}\,\right)^{2}}{T_{q}}}\,}\right)}}{\left(\,1-{\displaystyle {\displaystyle {\frac{\left(\,{\tau}^{\,\min}\,\right)^{2}}{T_{q}}}\,}}\right)^{2}}}\,\triangleq\, d_{\wh{\tau}_{q}}^{\min}\,\,\mbox{and}\,\,{\displaystyle {\displaystyle {\frac{1}{\left(\,1-P_{q}^{\text{{fa}}^{\min}}\,\right)^{2}}}}\geq\,{\displaystyle {\frac{1}{\left(\,1-P_{q}^{\text{{fa}}}\,\right)^{2}}}\triangleq\, d_{P_{q}^{\text{{fa}}}}^{\min}},}}\label{eq:lamnda_lower_tau_q_and_pfa}
\end{equation}
where we used the following lower bound of $P_{q}^{\text{{fa}}}$:
$P_{q}^{\text{{fa}}}\geq\min_{k}\mathcal{Q}\left({\displaystyle {\frac{\left(\mu_{{q,k}|1}-\mu_{{q,k}|0}\right)\,\sqrt{f_{q}\,{{\tau}^{\max}}}}{\sigma_{{q,k}|0}}}}\right)\triangleq P_{q}^{\text{{fa}}^{\min}}.$
This bounds will be used to define the diagonal entries of the candidate
matrix $\overline{\nabla_{\mathbf{x}_{q}}^{2}\mathcal{L}_{q}}$. 

\noindent$-$Matrix\emph{ $\nabla_{\mathbf{x}_{q}}^{2}I_{q}(\mathbf{x}_{q})$}.
Let introduce first the following quantities and their associated
bounds: \textcolor{black}{\small 
\begin{equation}
\omega_{\wh{\tau}_{q},k}\triangleq{\frac{\partial P_{q,k}^{\text{{miss}}}(\hat{{\tau}}_{q},\, P_{q}^{\text{{fa}}})}{\partial\wh{\tau}_{q}}}\quad\mbox{and}\quad\left|\omega_{\wh{\tau}_{q},k}\right|\leq\frac{1}{\sqrt{2\,\pi}}\left(\,{\displaystyle {\frac{\mu_{{q,k}|1}-\mu_{{q,k}|0}}{\sigma_{{q,k}|1}}}\,}\right)\triangleq\omega_{\wh{\tau}_{q},k}^{\max}\label{eq:der_Pmiss_wrt_tau}
\end{equation}
\begin{equation}
\omega_{P_{q}^{\text{{fa}}},k}\triangleq{\frac{\partial P_{q,k}^{\text{{miss}}}(\hat{{\tau}}_{q},\, P_{q}^{\text{{fa}}})}{\partial P_{q}^{\text{{fa}}}}}\quad\mbox{and}\quad\left|\omega_{P_{q}^{\text{{fa}}},k}\right|\leq\left(\,{\displaystyle {\frac{\sigma_{{q,k}|0}}{\sigma_{{q,k}|1}}}\,}\right)\,\text{{exp}}\left\{ \left({\displaystyle {\frac{\mu_{{q,k}|1}-\mu_{{q,k}|0}}{\sigma_{{q,k}|0}}}\,}\sqrt{{f_{q}}{\tau}^{\max}}\right)^{2}/2\right\} \triangleq\omega_{P_{q}^{\text{{fa}}},k}^{\max},\label{eq:der_Pmiss_wrt_Pfa}
\end{equation}
\begin{align}
\begin{array}{l}
\omega_{\wh{\tau}_{q}\, P_{q}^{\text{{fa}}},k}\triangleq{\displaystyle {\frac{\partial^{2}P_{q,k}^{\text{{miss}}}(\hat{{\tau}}_{q},\, P_{q}^{\text{{fa}}})}{\partial\wh{\tau}_{q}\partial P_{q}^{\text{{fa}}}}}\quad\mbox{and}\quad\omega_{\wh{\tau}_{q}\, P_{q}^{\text{{fa}}},k}\leq\max\left\{ \mathcal{Q}^{-1}(\alpha_{q,k}),\,{\displaystyle {\frac{\mu_{{q,k}|1}-\mu_{{q,k}|0}}{\sigma_{{q,k}|1}}}\,}\sqrt{{f_{q}}{\tau}^{\max}}\right\} }\\
\text{\hspace{1em}\hspace{3.5cm}\hspace{2cm}\hspace{2cm}\,\,\ensuremath{\cdot}\,{exp}}\left\{ \left({\displaystyle {\frac{\mu_{{q,k}|1}-\mu_{{q,k}|0}}{\sigma_{{q,k}|0}}}\,}\sqrt{{f_{s}}{\tau}^{\max}}\right)^{2}/2\right\} \triangleq\omega_{\wh{\tau}_{q}\, P_{q}^{\text{{fa}}},k}^{\max}
\end{array}\label{eq:sec_der_Pmiss_wrt_tau_pfa}
\end{align}
\begin{equation}
\omega_{\wh{\tau}_{q}\wh{\tau}_{q},k}\triangleq{\frac{\partial^{2}P_{q,k}^{\text{{miss}}}(\hat{{\tau}}_{q},\, P_{q}^{\text{{fa}}})}{\partial(\wh{\tau}_{q}^{\,})^{2}}}\quad\text{{and}\quad}\omega_{P_{q}^{\text{{fa}}}P_{q}^{\text{{fa}}},k}\triangleq{\displaystyle {\frac{\partial^{2}P_{q,k}^{\text{{miss}}}(\hat{{\tau}}_{q},\, P_{q}^{\text{{fa}}})}{\partial(P_{fa}^{\,(q)})^{2}}}},\vspace{-0.2cm}\label{eq:sec_der_Pmiss_wrt_tau}
\end{equation}
}which can be collected in the vectors $\boldsymbol{{\omega}}_{\wh{\tau}_{q}}\triangleq(\omega_{\wh{\tau}_{q},k})_{k=1}^{N}$,
$\boldsymbol{{\omega}}_{P_{q}^{\text{{fa}}}}\triangleq(\omega_{P_{q}^{\text{{fa}}},k})_{k=1}^{N}$,
$\boldsymbol{{\omega}}_{\wh{\tau}_{q}\, P_{q}^{\text{{fa}}}}\triangleq(\omega_{\wh{\tau}_{q}\, P_{q}^{\text{{fa}}},k})_{k=1}^{N}$,
$\boldsymbol{{\omega}}_{P_{q}^{\text{{fa}}}P_{q}^{\text{{fa}}}}\triangleq(\omega_{P_{q}^{\text{{fa}}}P_{q}^{\text{{fa}}},k})_{k=1}^{N},$
and $\boldsymbol{{\omega}}_{\wh{\tau}_{q}}^{\max}\triangleq(\omega_{\wh{\tau}_{q},k}^{\max})_{k=1}^{N}$,
$\boldsymbol{{\omega}}_{P_{q}^{\text{{fa}}}}^{\max}\triangleq(\omega_{P_{q}^{\text{{fa}}},k}^{\max})_{k=1}^{N}$,
$\boldsymbol{{\omega}}_{\wh{\tau}_{q}\, P_{q}^{\text{{fa}}}}^{\max}\triangleq(\omega_{\wh{\tau}_{q}\, P_{q}^{\text{{fa}}},k}^{\max})_{k=1}^{N}$.
Finally, we introduce the column vector $G_{P,q}\triangleq\left(|{G}_{P,q}(k)|^{2}\right)_{k=1}^{N}$
of the cross-channel transfer function between the secondary transmitter
$q$ and the PU, and the notation $\mathbf{a}\odot\mathbf{b}\triangleq(a_{k}\cdot b_{k})_{k=1}^{N}$
for given $\mathbf{a}\triangleq(a_{k})_{k=1}^{N}$ and $\mathbf{b}\triangleq(b_{k})_{k=1}^{N}$.
Then, matrix $\nabla_{\mathbf{x}_{q}}^{2}I_{q}(\mathbf{x}_{q})$\emph{
}can be written as\textcolor{black}{\small{} 
\begin{equation}
\begin{array}{l}
\nabla_{\mathbf{x}_{q}}^{2}I_{q}(\mathbf{x}_{q})=2\,\left[\begin{array}{ccc}
\mathbf{1}^{T}\mbox{vect}\left(\boldsymbol{{\omega}}_{\wh{\tau}_{q}\wh{\tau}_{q}}\odot\mathbf{G}_{P,q}\odot\mathbf{p}_{q}\right), & \mbox{vect}{\displaystyle {\displaystyle \left(\boldsymbol{{\omega}}_{\wh{\tau}_{q}}\odot\mathbf{G}_{P,q}\right)^{T},}} & \mathbf{1}^{T}\mbox{vect}\left(\boldsymbol{{\omega}}_{\wh{\tau}_{q}P_{fa}^{\,(q)}}\odot\mathbf{G}_{P,q}\odot\mathbf{p}_{q}\right)\\
{\displaystyle {\displaystyle {\displaystyle {\displaystyle \mbox{vect}\left(\boldsymbol{{\omega}}_{\wh{\tau}_{q}}\odot\mathbf{G}_{P,q}\right)}},}} & {\displaystyle \mathbf{0}_{N\times N}}, & \mbox{vect}\left(\boldsymbol{{\omega}}_{P_{fa}^{\,(q)}}\odot\mathbf{G}_{P,q}\right)\\
\mathbf{1}^{T}\mbox{vect}\left(\boldsymbol{{\omega}}_{\wh{\tau}_{q}P_{fa}^{\,(q)}}\odot\mathbf{G}_{P,q}\odot\mathbf{p}_{q}\right), & {\displaystyle \mbox{vect}\left(\boldsymbol{{\omega}}_{P_{fa}^{\,(q)}}\odot\mathbf{G}_{P,q}\right)^{T},} & \mathbf{1}^{T}\mbox{vect}\left(\boldsymbol{{\omega}}_{P_{fa}^{\,(q)}P_{fa}^{\,(q)}}\odot\mathbf{G}_{P,q}\odot\mathbf{p}_{q}\right)
\end{array}\right].\end{array}\vspace{-0.3cm}\label{eq:sum_nabla_q}
\end{equation}
}{\small \par}

Based on (\ref{eq:sum_nabla_q}), let us introduce the matrix $\left[\nabla_{\mathbf{x}_{q}}^{2}I_{q}(\mathbf{x}_{q})\right]_{\text{{off}}}$obtained
from $\nabla_{\mathbf{x}_{q}}^{2}I_{q}(\mathbf{x}_{q})$ by setting
to zero the diagonal terms ($\left[\mathbf{A}\right]_{\text{{off}}}$
denotes the off-diagonal part of the matrix $\mathbf{A}$) and take
an upper bound of its off-diagonal entries (the inequalities below
have to be intended component-wise): \textcolor{black}{\small 
\begin{equation}
\begin{array}{l}
\left[\nabla_{\mathbf{x}_{q}}^{2}I_{q}(\mathbf{x}_{q})\right]_{\text{{off}}}\triangleq2\,\left[\begin{array}{ccc}
0, & \mbox{vect}{\displaystyle {\displaystyle \left(\boldsymbol{{\omega}}_{\wh{\tau}_{q}}\odot\mathbf{G}_{P,q}\right)^{T},}} & \mathbf{1}^{T}\mbox{vect}\left(\boldsymbol{{\omega}}_{\wh{\tau}_{q}P_{fa}^{\,(q)}}\odot\mathbf{G}_{P,q}\odot\mathbf{p}_{q}\right)\\
{\displaystyle {\displaystyle {\displaystyle {\displaystyle \mbox{vect}\left(\boldsymbol{{\omega}}_{\wh{\tau}_{q}}\odot\mathbf{G}_{P,q}\right)}},}} & {\displaystyle \mathbf{0}_{N\times N}}, & \mbox{vect}\left(\boldsymbol{{\omega}}_{P_{fa}^{\,(q)}}\odot\mathbf{G}_{P,q}\right)\\
\mathbf{1}^{T}\mbox{vect}\left(\boldsymbol{{\omega}}_{\wh{\tau}_{q}P_{fa}^{\,(q)}}\odot\mathbf{G}_{P,q}\odot\mathbf{p}_{q}\right), & {\displaystyle \mbox{vect}\left(\boldsymbol{{\omega}}_{P_{fa}^{\,(q)}}\odot\mathbf{G}_{P,q}\right)^{T},} & 0
\end{array}\right]\\
\leq{\displaystyle \max_{k}}\left\{ |G_{P,q}(k)|^{2}\right\} \cdot\underset{\triangleq\left[\nabla_{\mathbf{x}_{q}}^{2}I_{q}\right]_{\text{{off}}}^{\text{{up}}}}{\underbrace{{\displaystyle \,2\,}\left[\begin{array}{ccc}
0, & {\displaystyle {\displaystyle \mbox{vect}\left(\boldsymbol{{\omega}}_{\wh{\tau}_{q}}^{\max}\right)^{T},}} & \mathbf{1}^{T}\mbox{vect}\left(\boldsymbol{{\omega}}_{\wh{\tau}_{q}P_{q}^{\text{{fa}}}}^{\max}\odot\mathbf{p}_{q}^{\max}\right)\\
{\displaystyle {\displaystyle {\displaystyle {\displaystyle \mbox{vect}\left(\boldsymbol{{\omega}}_{\wh{\tau}_{q}}^{\max}\right)}},}} & {\displaystyle \mathbf{0}_{N\times N}}, & \mbox{vect}\left(\boldsymbol{{\omega}}_{P_{q}^{\text{{fa}}}}^{\max}\right)\\
\mathbf{1}^{T}\mbox{vect}\left(\boldsymbol{{\omega}}_{\wh{\tau}_{q}P_{q}^{\text{{fa}}}}^{\max}\odot\mathbf{p}_{q}^{\max}\right), & {\displaystyle \mbox{vect}\left(\boldsymbol{{\omega}}_{P_{q}^{\text{{fa}}}}^{\max}\right)^{T},} & 0
\end{array}\right]}}\\
\triangleq{\displaystyle \max_{k}}\left\{ |G_{P,q}(k)|^{2}\right\} \cdot\left[\nabla_{\mathbf{x}_{q}}^{2}I_{q}\right]_{\text{{off}}}^{\text{{up}}}.
\end{array}\label{eq:H_pq_up_and_S_p_q}
\end{equation}
}{\small \par}

\noindent$-$Matrix $\nabla_{\mathbf{x}_{q}}^{2}I(\mathbf{x})$.
Following similar steps as for (\ref{eq:H_pq_up_and_S_p_q}), we obtain
\begin{equation}
\left[\left|\nabla_{\mathbf{x}_{q}}^{2}I(\mathbf{x})\right|\right]_{\text{{off}}}\leq{\displaystyle \max_{k}}\left\{ |G_{P,q}(k)|^{2}\right\} \cdot\left[\nabla_{\mathbf{x}_{q}}^{2}I_{q}\right]_{\text{{off}}}^{\text{{up}}}.\label{eq:H_pq_up_and_S_p_q_2}
\end{equation}

We are now ready to introduce the matrix $\overline{\nabla_{\mathbf{x}_{q}}^{2}\mathcal{L}_{q}}$
satisfying (\ref{eq:nabla_L_bar}). Given $t>0$, and the definitions
in (\ref{eq:der_Pmiss_wrt_tau})-(\ref{eq:sec_der_Pmiss_wrt_tau_pfa})
and (\ref{eq:H_pq_up_and_S_p_q_2}), we define\vspace{-0.6cm}

\begin{equation}
\overline{\nabla_{\mathbf{x}_{q}}^{2}\mathcal{L}_{q}}\triangleq\text{{Diag}}\left\{ (d_{\wh{\tau}_{q}}^{\min},\,(d_{-\log r_{q},k}^{\min})_{k=1}^{N},\, d_{P_{q}^{\text{{fa}}}}^{\min})\right\} -2\cdot\max\left\{ t,\,\lambda^{\max}\right\} \cdot{\displaystyle \max_{k}}\left\{ |G_{P,q}(k)|^{2}\right\} \cdot\left[\nabla_{\mathbf{x}_{q}}^{2}I_{q}\right]_{\text{{off}}}^{\text{{up}}}\label{eq:def_L_2_inf}
\end{equation}

\noindent \textbf{Step 2}. It follows from Step 1 that $\overline{\nabla_{\mathbf{x}_{q}}^{2}\mathcal{L}_{q}}$
in (\ref{eq:def_L_2_inf}) satisfies the desired property (\ref{eq:copositivity_Lower_bound}).
Condition (\ref{eq:diagonal_dominance_SF_cond}) of the corollary
is readily obtained by imposing that $\overline{\nabla_{\mathbf{x}_{q}}^{2}\mathcal{L}_{q}}$
is row-diagonal dominant, and setting 

\begin{equation}
\gamma_{q}^{(1)}=\frac{2\,\max(t,\,\lambda^{\max})}{\min\left\{ \dfrac{d_{\wh{\tau}_{q}}^{\min}}{\sum_{j}\left[[\nabla_{\mathbf{x}_{q}}^{2}I_{q}]_{\text{{off}}}^{\text{{up}}}\right]_{1j}},{\displaystyle {\displaystyle {\min_{i=1,\ldots,N}}}}\left\{ \dfrac{{d_{q,i}^{\min}}}{\sum_{j}\left[[\nabla_{\mathbf{x}_{q}}^{2}I_{q}]_{\text{{off}}}^{\text{{up}}}\right]_{ij}}\right\} ,\dfrac{d_{P_{q}^{\text{{fa}}}}^{\min}}{\sum_{j}\left[[\nabla_{\mathbf{x}_{q}}^{2}I_{q}]_{\text{{off}}}^{\text{{up}}}\right]_{N+2\, j}}\right\} }.\label{eq:omega_q}
\end{equation}
\hfill $\square$

\section{Proof of Theorem \ref{Theo_Existence-and-uniqueness_NE_G}\label{proof_Theo_Existence-and-uniqueness_NE_G}}

To prove the theorem we need the following lemma whose proof follows
the same idea of that in Lemma \ref{Lemma_bounded_multipliers} and
thus is omitted. 

\begin{lemma}\label{Lemma_bounded_multipliers_Game_t}Let $t>\lambda^{\max}$,
with $\lambda^{\max}$ defined in (\ref{eq:t_star_def}). Then, at
every solution $(\mathbf{x}^{\star},\,{\boldsymbol{{\lambda}}}^{\star},{\pi}_{t}^{\star})$
of the VI$(\mathcal{Z}_{t},\boldsymbol{{\Psi}})$ defined in (\ref{eq:KKT_Game_Gt})\emph{
{[}}stationary solution of $\mathcal{G}_{t}(\mathcal{X},\,\boldsymbol{{\theta}})$\emph{{]}},
the price constraints (\ref{eq:simplex_price}) are not binding, i.e.,
$\pi_{t}^{\star}<t.$\end{lemma}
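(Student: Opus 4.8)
The plan is to establish the $t$-independent bound $\pi_t^{\star}\le\lambda^{\max}$ at every solution $(\mathbf{x}^{\star},\boldsymbol{\lambda}^{\star},\pi_t^{\star})$ of VI$(\mathcal{Z}_t,\boldsymbol{\Psi})$; since the lemma assumes $t>\lambda^{\max}$, this at once yields $\pi_t^{\star}\le\lambda^{\max}<t$, so the truncation constraint $\pi_t\le t$ defining $\mathcal{S}_t$ [cf. (\ref{eq:simplex_price})] is inactive. The argument mirrors the proof of Lemma \ref{Lemma_bounded_multipliers}, the only new ingredient being that the price $\pi_t^{\star}$ plays, for the \emph{global} interference constraint, exactly the role that each local multiplier $\lambda_q^{\star}$ plays for the \emph{local} one; this is transparent because, with $w_{q,k}=|G_{P,q}(k)|^2$, one has $\partial I_q/\partial p_{q,k}=\partial I/\partial p_{q,k}=P_{q,k}^{\text{{miss}}}\,|G_{P,q}(k)|^2$ [cf. (\ref{eq:def_h_function}), (\ref{eq:map_interference})], so $\lambda_q^{\star}$ and $\pi_t^{\star}$ enter the $\mathbf{x}_q$-stationarity additively.

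First I would fix such a solution and write out its KKT system, exactly as in (\ref{eq:KKT_player_q_1})--(\ref{eq:VI_ref}): each block (i) of (\ref{eq:KKT_game_G_t_1}) is the first-order condition for minimizing $\mathcal{L}_q$ over the convex set $\mathcal{Y}_q$, and invoking the ACQ (Lemma \ref{Lemma_ACQ}) I introduce, for every player $q$, the nonnegative multipliers $\chi_q^{\star}$ and $(\xi_{q,k}^{\star})_k$ associated with the power budget $\sum_k p_{q,k}\le P_q$ and the spectral masks $p_{q,k}\le p_{q,k}^{\max}$. Repeating the manipulation that produced (\ref{eq:bound_on_multipliers})--(\ref{eq:mult bound})---multiply the $p_{q,k}$-stationarity by $p_{q,k}^{\star}$, sum over $k$, and use the orthogonality between the power variables and the multiplier of $p_{q,k}\ge0$---gives, for each $q$,
\[
\bigl(\lambda_q^{\star}+\pi_t^{\star}\bigr)\sum_{k=1}^N P_{q,k}^{\text{{miss}}}\,|G_{P,q}(k)|^2\,p_{q,k}^{\star}+\chi_q^{\star}\sum_{k=1}^N p_{q,k}^{\star}+\sum_{k=1}^N\xi_{q,k}^{\star}\,p_{q,k}^{\star}\,\le\,\lambda_q^{\max},
\]
where $\lambda_q^{\max}$ is the rate-dependent, $t$-independent constant of (\ref{eq:bound_on_multipliers}).

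Summing this over $q$ and discarding the nonnegative terms carrying $\lambda_q^{\star}$, $\chi_q^{\star}$ and $\xi_{q,k}^{\star}$ leaves
\[
\pi_t^{\star}\sum_{q=1}^Q\sum_{k=1}^N P_{q,k}^{\text{{miss}}}\,|G_{P,q}(k)|^2\,p_{q,k}^{\star}\,\le\,\sum_{q=1}^Q\lambda_q^{\max}.
\]
Then I would close the argument with the price complementarity (iii) of (\ref{eq:KKT_game_G_t_1}). If $\pi_t^{\star}=0$, then $\pi_t^{\star}<t$ trivially. Otherwise $\pi_t^{\star}>0$ forces $-I(\mathbf{x}^{\star})+\eta_t^{\star}=0$ with $\eta_t^{\star}\ge0$, hence $I(\mathbf{x}^{\star})\ge0$, i.e. $\sum_q\sum_k P_{q,k}^{\text{{miss}}}|G_{P,q}(k)|^2 p_{q,k}^{\star}\ge I^{\max}$ by (\ref{eq:map_interference}). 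Substituting into the display yields $\pi_t^{\star}\le(\sum_q\lambda_q^{\max})/I^{\max}$, which is dominated by $\lambda^{\max}$ of (\ref{eq:t_star_def}) through the same normalization that bounds the local multipliers; thus $\pi_t^{\star}\le\lambda^{\max}<t$.

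The step I expect to be the main obstacle is this last one. Two things must be handled carefully: (a) the truncation multiplier $\eta_t^{\star}$ must be treated cleanly so as to conclude that the global constraint is active whenever the price is positive; and (b) one must verify that the normalizing constants in (\ref{eq:t_star_def}) are chosen so that the aggregate bound $(\sum_q\lambda_q^{\max})/I^{\max}$ indeed collapses to $\lambda^{\max}$---this is precisely where the generous form of $\lambda^{\max}$ (a sum over $q$ divided by a single minimal interference/power level, no larger than $I^{\max}$) is exploited, and is the reason the \emph{same} constant simultaneously bounds the local multipliers and the price. Nontriviality of $\mathbf{x}^{\star}$ (Proposition \ref{proposition_uniqueness_opt_sol}) ensures the interference sums above are genuinely positive, so no degeneracy arises in the division.
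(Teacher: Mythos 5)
Your proof is correct and is exactly the argument the paper intends: the paper omits the proof of this lemma, saying only that it ``follows the same idea'' as Lemma \ref{Lemma_bounded_multipliers}, and your writeup supplies precisely that mirroring---the per-player inequality (\ref{eq:bound_on_multipliers}) (where $\lambda_{q}^{\star}$ and $\pi_{t}^{\star}$ indeed enter additively because $w_{q,k}=|G_{P,q}(k)|^{2}$ makes the local and global interference gradients coincide), summed over $q$ with the nonnegative $\lambda_{q}^{\star}$-, $\chi_{q}^{\star}$-, $\xi_{q,k}^{\star}$-terms discarded, combined with the complementarity (iii) of (\ref{eq:KKT_game_G_t_1}) to force $I(\mathbf{x}^{\star})\geq0$ whenever $\pi_{t}^{\star}>0$, yielding $\pi_{t}^{\star}\leq(\sum_{q}\lambda_{q}^{\max})/I^{\max}\leq\lambda^{\max}<t$. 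Your caveat (b) is well placed and not a defect of your argument: the final comparison tacitly requires $I^{\max}\geq\min_{q}\{I_{q}^{\max},\,\min_{k}p_{q,k}^{\max}\}$, an assumption left implicit in the paper itself through its choice of $\lambda^{\max}$ in (\ref{eq:t_star_def}) (whose sum over $q$ exists precisely to make one constant cover both the local multipliers and the price).
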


\noindent\textbf{Proof of Theorem \ref{Theo_Existence-and-uniqueness_NE_G}.}
We prove only statement (a); the proof of the second part (b) follows
similar steps of those in the proof of Proposition \ref{proposition_uniqueness_opt_sol}
and thus is omitted. Given $t>t^{\star}$, under the assumptions in
(a), Proposition \ref{Proposition_existence of a NE of G_t} states
that the game $\mathcal{G}_{t}(\mathcal{X},\,\boldsymbol{{\theta}})$
admits a nontrivial NE $(\mathbf{x}^{\star},{\pi}_{t}^{\star})$;
by Lemma \ref{Lemma_ACQ}, there exist multipliers ${\boldsymbol{{\lambda}}}^{\star}$
such that $(\mathbf{x}^{\star},{\boldsymbol{{\lambda}}}^{\star},{\pi}_{t}^{\star})$
satisfies the VI$(\mathcal{Z}_{t},\boldsymbol{{\Psi}})$ in (\ref{eq:KKT_Game_Gt})
{[}or equivalently (\ref{eq:KKT_game_G_t_1}){]}. Lemma \ref{Lemma_bounded_multipliers_Game_t}
shows that the upper bound constraint on the price in $\mathcal{S}_{t}$
is not binding at $(\mathbf{x}^{\star},{\boldsymbol{{\lambda}}}^{\star},{\pi}_{t}^{\star})$,
implying from iii) of (\ref{eq:KKT_game_G_t_1}) that $\eta_{t}^{\star}=0$
and thus $0\leq\,{\pi}_{t}^{\star}\,\perp\,-I(\mathbf{x}^{\star})\geq0$.
Hence, $(\mathbf{x}^{\star},{\pi}_{t}^{\star})$ must be a NE of the
original un-truncated game $\mathcal{G}(\mathcal{X},\,\boldsymbol{\theta})$
{[}recall that, under the positive definiteness of the matrices $\nabla_{\mathbf{x}_{q}}^{2}\mathcal{L}_{q}(\mathbf{x},\,{\pi}_{t},\,\lambda_{q})$
on ${\mathcal{Y}}\times\mathcal{S}_{t}\times[0,\lambda^{\max}]$,
each optimization problem in (\ref{eq:game_G_t_2}), with $\mathbf{x}_{-q}=\mathbf{x}_{-q}^{\star}$
and $\pi_{t}={\pi}_{t}^{\star}$, has a unique stationary (and thus
optimal) solution, which then must be equal to $\mathbf{x}_{q}^{\star}$;
see Proposition \ref{proposition_uniqueness_opt_sol}{]}. \hfill $\square$

\section{Proof of Corollary \ref{corollary_sf_cond_uniqueness_NE}\label{sec:Proof-of-corollary_sf_cond_uniqueness_NE}}

In order to obtain more general conditions than those in Theorem \ref{Theo_Existence-and-uniqueness_NE_G},
by Lemma \ref{Lemma_lower_bound_on_the_power}, we can restrict the
check of the positive definiteness of the matrices $\nabla_{\mathbf{x}_{q}}^{2}\mathcal{L}_{q}(\mathbf{x},\,{\pi}_{t},\,\lambda_{q})$
and $\mathbf{A}(\mathbf{x},\,\boldsymbol{{\lambda}},\,{\pi}_{t})$
as required in Theorem \ref{Theo_Existence-and-uniqueness_NE_G} to
the subset of the feasible set where any solution of the game lies.
More specifically, let us introduce the restriction of the sets $\mathcal{P}_{q}$
and $\mathcal{Y}_{q}$ defined in (\ref{set_P_q}) and (\ref{eq:def_Y_q}),
respectively, to the power allocations satisfying (\ref{eq:lower_bound_p_opt}):
given $t>0$,
\begin{align}
\wh{\mathcal{P}}_{q}^{t} & \triangleq\left\{ \mathbf{p}\in\mathcal{P}_{q}\,:\,{\displaystyle {\sum_{k=1}^{N}}}\, p_{q,k}\,\geq\,\left(\min_{k}\left\{ \wh{\sigma}{}_{q,k}^{2}\right\} \right)\,\exp\left(\eta_{q}^{\text{{ref}}}(t)\right)\right\} ,\quad q=1,\ldots,Q,\label{eq:set_P_q_t_a}\\
\wh{\mathcal{Y}}{}^{t} & \triangleq\prod_{q}\wh{\mathcal{Y}}{}_{q}^{t},\label{eq:set_P_q_t_b}
\end{align}
where $\wh{\mathcal{Y}}{}_{q}^{t}$ is defined as $\mathcal{Y}_{q}$
in (\ref{eq:def_Y_q}), but with $\mathcal{P}_{q}$ replaced by $\wh{\mathcal{P}}_{q}^{t}$.
By Lemma \ref{Lemma_lower_bound_on_the_power}, instead of checking
the positive definiteness of $\nabla_{\mathbf{x}_{q}}^{2}\mathcal{L}_{q}(\mathbf{x},\,{\pi}_{t},\,\lambda_{q})$
and $\mathbf{A}(\mathbf{x},\,\boldsymbol{{\lambda}},\,{\pi}_{t})$
on the feasible set $\mathcal{Y}\times\mathcal{S}_{t}\times[0,\lambda^{\max}]$,
we can restrict this requirement to the subset $\wh{\mathcal{Y}}{}^{t}\times\mathcal{S}_{t}\times[0,\lambda^{\max}]$. 

We can now prove the corollary. We show next that (\ref{eq:diagonal_dominance_A_pd})
are sufficient conditions for the matrix $\mathbf{A}(\mathbf{x},\,\boldsymbol{{\lambda}},\,{\pi}_{t})$
to be positive definite on $\mathcal{Y}\times\mathcal{S}_{t}\times[0,\lambda^{\max}]$.
Fist of all, observe that matrix $\mathbf{A}(\mathbf{x},\,\boldsymbol{{\lambda}},\,{\pi}_{t})$
can be written as 
\begin{align}
\mathbf{A}(\mathbf{x},\,\boldsymbol{{\lambda}},\,{\pi}_{t}) & \triangleq\underset{\triangleq\left.\mathbf{A}(\mathbf{x},\,\boldsymbol{{\lambda}},\,{\pi}_{t})\right|_{c=0}}{\underbrace{\left[\begin{array}{cccc}
\left.\nabla_{\mathbf{x}_{1}}^{2}\mathcal{L}_{1}\right|_{c=0}, & \left.\nabla_{\mathbf{x}_{1}\mathbf{x}_{2}}^{2}\theta_{1}\right|_{c=0} & \cdots & \left.\nabla_{\mathbf{x}_{1}\mathbf{x}_{Q}}^{2}\theta_{1}\right|_{c=0}\\
\vdots & \cdots & \ddots & \vdots\\
\left.\nabla_{\mathbf{x}_{Q}\mathbf{x}_{1}}^{2}\theta_{Q}\right|_{c=0}, & \cdots & \left.\nabla_{\mathbf{x}_{Q}\mathbf{x}_{Q-1}}^{2}\theta_{Q}\right|_{c=0} & \left.\nabla_{\mathbf{x}_{Q}}^{2}\mathcal{L}_{Q}\right|_{c=0}
\end{array}\right]}}\label{eq:matrix_A_eq_}\\
 & +c\,(1-1/Q)\,\underset{\mbox{up to a permutation}}{\underbrace{\left[\begin{array}{cc}
\mathbf{D}_{f}^{-1}\left(\mathbf{I}_{Q}-\dfrac{{\mathbf{1}\mathbf{1}^{T}}}{Q}\right)\mathbf{D}_{f}^{-1} & \mathbf{0}\\
\mathbf{0} & \mathbf{0}
\end{array}\right]}},\label{eq:matrix_A_eq_2}
\end{align}
where $\mathbf{D}_{f_{s}}\triangleq\text{{diag}}\left\{ \left(\sqrt{{f_{q}}}\right)_{q=1}^{Q}\right\} $.
Since the matrix in (\ref{eq:matrix_A_eq_2}) is positive semidefinite,
we can focus only on $\left.\mathbf{A}(\mathbf{x},\,\boldsymbol{{\lambda}},\,{\pi}_{t})\right|_{c=0}$.
To obtain a sufficient condition for $\left.\mathbf{A}(\mathbf{x},\,\boldsymbol{{\lambda}},\,{\pi}_{t})\right|_{c=0}$
to be positive definite on $\wh{\mathcal{Y}}{}^{t}\times\mathcal{S}_{t}\times[0,\lambda^{\max}]$,
we follow a similar idea of that in Corollary \ref{corollary_sf_cond_uniqueness_opt_sol}.
Namely, we build a proper matrix $\overline{{\mathbf{A}}}$ such that,
for some $t>\lambda^{\max}$, 
\begin{equation}
\mathbf{y}^{T}\,\left(\left.\mathbf{A}(\mathbf{x},\,\boldsymbol{{\lambda}},\,{\pi}_{t})\right|_{c=0}\right)\,\mathbf{y}\geq\left|\mathbf{y}\right|^{T}\overline{{\mathbf{A}}}\left|\mathbf{y}\right|\quad\forall(\mathbf{x},\,{\pi}_{t},\,\boldsymbol{{\lambda}})\in\wh{\mathcal{Y}}{}^{t}\times\mathcal{S}_{t}\times[0,\lambda^{\max}],\quad\mbox{and}\quad\mathbf{y}\in\mathbb{R}^{Q\,(N+2)}.\label{eq:pd_with_A_bar}
\end{equation}
To this end, we focus on each term in (\ref{eq:matrix_A_eq_}) separately
and derive proper bounds. 

\noindent$-$Matrix $\left|\left.\nabla_{\mathbf{x}_{q}\mathbf{x}_{r}}^{2}\theta_{q}\right|_{c=0}\right|$.
Recalling the definition of $r_{q}(\mathbf{p})\triangleq\sum_{k}r_{q,k}(\mathbf{p})$,
with $r_{q,k}(\mathbf{p})$ given in (\ref{eq:rate_FSIC}), we have
\begin{equation}
\left.\nabla_{\mathbf{x}_{q}\mathbf{x}_{r}}^{2}\theta_{q}\right|_{c=0}=\left[\begin{array}{ccc}
0 & \mathbf{0}_{1\times N} & 0\\
\mathbf{0}_{N\times1} & \nabla_{\mathbf{p}_{q}\mathbf{p}_{r}}^{2}\left(-\log r_{q}(\mathbf{p})\right) & \mathbf{0}_{N\times1}\\
0 & \mathbf{0}_{1\times N} & 0
\end{array}\right],\label{eq:nabla_x_q_xr_f_q}
\end{equation}
with 
\begin{equation}
\nabla_{\mathbf{p}_{q}\mathbf{p}_{r}}^{2}\left(-\log r_{q}(\mathbf{p})\right)=\left[{\displaystyle {\frac{-\nabla_{\mathbf{p}_{q}\mathbf{p}_{r}}^{2}r_{q}(\mathbf{p})}{r_{q}(\mathbf{p}_{q})}}+{\displaystyle {\frac{\nabla_{\mathbf{p}_{q}}r_{q}(\mathbf{p})\,\nabla_{\mathbf{p}_{r}}r_{q}(\mathbf{p})^{T}}{r_{q}(\mathbf{p}_{q})^{2}}}}}\right],\label{eq:nabla_p_q_p_r_minus_log_r_q}
\end{equation}
$\nabla_{\mathbf{p}_{q}}r_{q}(\mathbf{p})$ given in (\ref{eq:nablas_of_r_q_a})
and 
\begin{align}
\nabla_{\mathbf{p}_{r}}r_{q}(\mathbf{p}) & =\mbox{vect}\left\{ \left(\,{\displaystyle {\frac{-|\wh{H}_{qr}(k)|^{2}\, p_{q,k}}{\left(\wh{\sigma}_{q,k}^{2}+{\displaystyle {\sum_{r=1}^{Q}}\,|\wh{H}_{qr}(k)|^{2}\, p_{r,k}}\right)\left(\wh{\sigma}_{q,k}^{2}+{\displaystyle {\sum_{r\neq q}}\,|\wh{H}_{qr}(k)|^{2}\, p_{r,k}}\right)}}\,}\right)_{k=1}^{N}\right\} ,\label{eq:nabla_r_r_q_}\\
\nabla_{\mathbf{p}_{q}\mathbf{p}_{r}}^{2}r_{q}(\mathbf{p}) & =\mbox{Diag}\left\{ \left(\,{\displaystyle {\frac{-|\wh{H}_{qr}(k)|^{2}}{\left(\,\wh{\sigma}_{q,k}^{2}+{\displaystyle {\sum_{r=1}^{Q}}\,|\wh{H}_{qr}(k)|^{2}\, p_{r,k}\,}\right)^{2}}}\,}\right)_{k=1}^{N}\right\} .\label{eq:eq:nabla_square_p_q_p_r_of_r_q_}
\end{align}
Using the following lower bound for the rate function $r_{q}(\mathbf{p})$:
given $t>0$ and $\mathbf{p}_{q}\in\wh{\mathcal{P}}{}_{q}^{t}$, 
\begin{align}
r_{q}(\mathbf{p}) & \geq\left(\sum_{k=1}^{N}p_{q,k}\right)\cdot\underset{\triangleq r_{q}^{\min}}{\underbrace{\min_{k=1,\ldots,N}\left\{ \log\left(1+\dfrac{{p_{q,k}^{\max}}}{\wh{\sigma}_{q,k}^{2}+{\displaystyle {\sum_{r\neq q}}\,|\wh{H}_{qr}(k)|^{2}\, p_{r,k}^{\max}}}\right)\right\} }}\label{eq:lower_bound_r_q}\\
 & \geq\left(\min_{k=1,\ldots,N}\left\{ \wh{\sigma}{}_{q,k}^{2}\right\} \right)\cdot\exp\left(\eta_{q}^{\text{{ref}}}(t)\right)\cdot r_{q}^{\min}\triangleq r_{q}^{\text{{low}}}(t),\label{eq:lower_bound_r_q_2}
\end{align}
where the second inequality follows from Lemma \ref{Lemma_lower_bound_on_the_power},
we have for $\left|\nabla_{\mathbf{p}_{q}\mathbf{p}_{r}}^{2}\left(-\log r_{q}(\mathbf{p})\right)\right|$:
given $t>0$, $\mathbf{p}_{q}\in\wh{\mathcal{P}}_{q}^{t}$ and $\mathbf{p}_{r}\in[\mathbf{0},\,\mathbf{p}_{r}^{\max}]$
with $r\neq q$,{\small 
\begin{equation}
\begin{array}{l}
\left|\nabla_{\mathbf{p}_{q}\mathbf{p}_{r}}^{2}\left(-\log r_{q}(\mathbf{p})\right)\right|\leq{\displaystyle {\frac{1}{r_{q}(\mathbf{p})}}\,\mbox{Diag}\left\{ \left(\,{\displaystyle {\frac{|\wh{H}_{qr}(k)|^{2}}{\left(\,\wh{\sigma}_{q,k}^{2}\right)^{2}}}\,}\right)_{k=1}^{N}\right\} }{\displaystyle +{\frac{1}{r_{q}(\mathbf{p})^{2}}}\,\mbox{vect}\left\{ \left(\,{\displaystyle {\frac{1}{\wh{\sigma}_{q,k}^{2}}}\,}\right)_{k=1}^{N}\right\} }\cdot\mbox{vect}\left\{ \left(\,{\displaystyle {\frac{|\wh{H}_{qr}(k)|^{2}\, p_{q,k}}{\left(\wh{\sigma}_{q,k}^{2}\right)^{2}}}\,}\right)_{k=1}^{N}\right\} ^{T}\\
\leq{\displaystyle {\frac{1}{r_{q}^{\text{{low}}}(t)}}\,\left[\mbox{Diag}\left\{ \left(\,{\displaystyle {\frac{|\wh{H}_{qr}(k)|^{2}}{\left(\wh{\sigma}_{q,k}^{2}\right)^{2}}}\,}\right)_{k=1}^{N}\right\} +\dfrac{{1}}{r_{q}^{\min}}{\displaystyle \,\mbox{vect}\left\{ \left(\,{\displaystyle {\frac{1}{\wh{\sigma}_{q,k}^{2}}}\,}\right)_{k=1}^{N}\right\} \cdot}\mbox{vect}\left\{ \left(\,{\displaystyle {\frac{|\wh{H}_{qr}(k)|^{2}}{\left(\wh{\sigma}_{q,k}^{2}\right)^{2}}}\,}\right)_{k=1}^{N}\right\} ^{T}\right]}\\
\triangleq\left[\nabla_{\mathbf{p}_{q}\mathbf{p}_{r}}^{2}\theta_{q}\right]^{\text{up}},
\end{array}\label{eq:def_nabla_p_q_p_r_f_q}
\end{equation}
}which leads also to 
\begin{equation}
\left\Vert \nabla_{\mathbf{p}_{q}\mathbf{p}_{r}}^{2}\left(-\log r_{q}(\mathbf{p})\right)\right\Vert \leq{\max_{k=1,\ldots,N}}\left\{ {\displaystyle {\frac{|\wh{H}_{qr}(k)|^{2}}{\wh{\sigma}_{q,k}^{4}}}}\right\} \cdot\underset{\triangleq\xi_{q}^{\sup}(t)}{\underbrace{\left({\displaystyle \dfrac{1}{r_{q}^{\text{{low}}}(t)}+\dfrac{1}{r_{q}^{\text{{low}}}(t)}\cdot\dfrac{1}{r_{q}^{\min}}\cdot}{\displaystyle {\max_{k=1,\ldots,N}}\left\{ \frac{1}{\wh{\sigma}_{q,k}^{4}}\right\} }\right)}}.\label{eq:upper_spectrum_nabla_f_q_p_q_p_r}
\end{equation}

Using $\overline{\nabla_{\mathbf{x}_{q}}^{2}\mathcal{L}_{q}}$\textcolor{red}{{}
}defined in (\ref{eq:def_L_2_inf}), we are now ready to introduce
the matrix $\overline{{\mathbf{A}}}$, defines as: given $t>0$, 
\begin{equation}
\overline{{\mathbf{A}}}\triangleq\left(\overline{{\mathbf{A}}}_{qr}\right)_{q,r=1}^{Q}\quad\mbox{with}\quad\overline{{\mathbf{A}}}_{qr}\triangleq\left\{ \begin{array}{ll}
\overline{\nabla_{\mathbf{x}_{q}}^{2}\mathcal{L}_{q}},\quad & \mbox{if }q=r,\\
-\text{{Diag}}\left\{ \left[0,\,\left[\nabla_{\mathbf{p}_{q}\mathbf{p}_{r}}^{2}\theta_{q}\right]^{\text{up}},\,0\right]\right\} ,\,\, & \mbox{otherwise},
\end{array}\right.\label{eq:def_A_comparison}
\end{equation}
which satisfies the desired property in (\ref{eq:pd_with_A_bar}). 

A sufficient condition for (\ref{eq:pd_with_A_bar}) can be obtained
as in (\ref{eq:diagonal_dominance_A_pd}), by imposing that (the symmetric
part of) $\overline{{\mathbf{A}}}$ is row diagonal dominant. More
specifically, introducing 
\begin{equation}
\zeta(t)\triangleq\max_{q=1,\ldots,Q}\left\{ {\max_{k=1,\ldots,N}}\left\{ \frac{1}{\wh{\sigma}_{q,k}^{2}}\right\} \cdot\left({\displaystyle \dfrac{1}{r_{q}^{\text{{low}}}(t)}+\dfrac{1}{r_{q}^{\text{{low}}}(t)}\cdot\dfrac{1}{r_{q}^{\min}}\cdot}{\sum_{k^{'}=1}^{N}\frac{1}{\wh{\sigma}_{q,k^{'}}^{2}}}\right)\right\} \label{eq:zita_def}
\end{equation}
the diagonal dominance conditions is: for each $q=1,\ldots,Q$ and
$i=1,\ldots,N$, 
\begin{equation}
\dfrac{{1}}{2}\sum_{r\neq q}\sum_{j=1}^{N}\left[\left[\nabla_{\mathbf{p}_{q}\mathbf{p}_{r}}^{2}\theta_{q}\right]^{\text{up}}+\left(\left[\nabla_{\mathbf{p}_{q}\mathbf{p}_{r}}^{2}\theta_{q}\right]^{\text{up}}\right)^{T}\right]_{ij}\leq\dfrac{\zeta(t)}{2}\,\sum_{r\neq q}\left({\max_{k=1,\ldots,N}}\left\{ {\displaystyle {\frac{|\wh{H}_{qr}(k)|^{2}}{\wh{\sigma}_{q,k}^{2}}}}\right\} +{\max_{k=1,\ldots,N}}\left\{ {\displaystyle {\frac{|\wh{H}_{rq}(k)|^{2}}{\wh{\sigma}_{r,k}^{2}}}}\right\} \right).\label{eq:row_sum_upper_bound}
\end{equation}
After substituting the explicit expression of $\left[\nabla_{\mathbf{p}_{q}\mathbf{p}_{r}}^{2}\theta_{q}\right]^{\text{up}}$
and doing some manipulations, (\ref{eq:row_sum_upper_bound}) leads
to the desired condition (\ref{eq:diagonal_dominance_A_pd}), where
we defined $\gamma_{q}^{(2)}$ as
\begin{equation}
\gamma_{q}^{(2)}\triangleq\zeta_{q}^{\max}(t)\cdot\gamma_{q}^{(1)}\label{eq:gamma_2}
\end{equation}
with $\gamma_{q}^{(1)}$ given in (\ref{eq:omega_q}) and 
\begin{equation}
\zeta_{q}^{\max}(t)\triangleq\dfrac{{\zeta(t)}}{2\, t}\cdot\frac{1}{\min\left\{ \sum_{j}\left[[\nabla_{\mathbf{x}_{q}}^{2}I_{q}]_{\text{{off}}}^{\text{{up}}}\right]_{1j},\,{\displaystyle {\displaystyle {\min_{i=1,\ldots,N}}}}\left\{ \sum_{j}\left[[\nabla_{\mathbf{x}_{q}}^{2}I_{q}]_{\text{{off}}}^{\text{{up}}}\right]_{ij}\right\} ,\,\sum_{j}\left[[\nabla_{\mathbf{x}_{q}}^{2}I_{q}]_{\text{{off}}}^{\text{{up}}}\right]_{N+2\, j}\right\} },\label{eq:zita_q_max}
\end{equation}
where $[\nabla_{\mathbf{x}_{q}}^{2}I_{q}]_{\text{{off}}}^{\text{{up}}}$
and ${\zeta(t)}$ are defined in (\ref{eq:H_pq_up_and_S_p_q}) and
(\ref{eq:zita_def}), respectively.

\section{Convergence of Asynchronous Best-Response Algorithms for $\mathcal{G}_{\pi}(\mathcal{X},\boldsymbol{{\theta}})$\label{sec:Convergence-of-Asynchronous_BR_local_constraints}}

In this section, we study the convergence of asynchronous best-response
algorithms solving the game $\mathcal{G}_{\pi}(\mathcal{X},\boldsymbol{{\theta}})$
in (\ref{eq:game_G_t_pi}); an instance of such algorithms is represented
by Algorithm \ref{async_best-response_algo}. Since the study of convergence
is based on contraction arguments of the best-response map associated
with game $\mathcal{G}_{\pi}(\mathcal{X},\boldsymbol{{\theta}})$,
we derive first sufficient conditions for this best-response to be
a contraction; see Sec. \ref{sub:Contraction-properties-of_BR_of_G_pi}.
We then provide the main theorem stating convergence of the asynchronous
best-response algorithms; see Sec. \ref{sub:Asynchronous-convergence-theorem}.

\subsection{Contraction properties of the best-response of $\mathcal{G}_{\pi}(\mathcal{X},\boldsymbol{{\theta}})$\label{sub:Contraction-properties-of_BR_of_G_pi}}

Before introducing the main result of this section, we need the following
intermediate definitions. Given $\mathcal{L}_{q}$ defined in (\ref{eq:Lagrangian-1}),
let $\mathbf{B}_{q}\left(\mathbf{x},\lambda_{q},\pi_{t}\right)$ be
the $2\times2$ matrix, defined as 
\begin{equation}
\mathbf{B}_{q}\left(\mathbf{x},\lambda_{q},\pi_{t}\right)\triangleq\left[\begin{array}{ll}
\left.\nabla_{\wh{\tau}_{q}}^{2}\mathcal{L}_{q}\left(\mathbf{x},\,{\pi}_{t},\,\lambda_{q}\right)\right|_{c=0}, & -\left\Vert \nabla_{\wh{\tau}_{q}\,(\mathbf{p}_{q},P_{q}^{\text{{fa}}})}^{2}\mathcal{L}_{q}\left(\mathbf{x},\,{\pi}_{t},\,{\lambda_{q}}\right)\right\Vert \medskip\\
-\left\Vert \nabla_{(\mathbf{p}_{q},P_{q}^{\text{{fa}}})\,\wh{\tau}_{q}}^{2}\mathcal{L}_{q}\left(\mathbf{x},\,\pi_{t},\,\lambda_{q}\right)\right\Vert , & \lambda_{\text{{least}}}\left(\nabla_{(\mathbf{p}_{q},P_{q}^{\text{{fa}}})}^{2}\mathcal{L}_{q}\left(\mathbf{x},\,\pi_{t},\,\lambda_{q}\right)\right)
\end{array}\right],\label{eq:B_q_def}
\end{equation}
where $\left\Vert \mathbf{A}\right\Vert \triangleq\rho\left(\mathbf{A}^{T}\mathbf{A}\right)^{1/2}$
and $\lambda_{\text{{least}}}(\mathbf{B})$ denote the spectral norm
of $\mathbf{A}$ and the minimum eigenvalue of the symmetric matrix
$\mathbf{B}$, respectively. Given $t>0$ and $\wh{\mathcal{Y}}{}^{t}$
as defined in (\ref{eq:set_P_q_t_a}) (cf. Appendix \ref{sec:Proof-of-corollary_sf_cond_uniqueness_NE}),
we also introduce 

\begin{equation}
\rho_{q}(t)\triangleq\left\{ \begin{array}{ll}
\overline{{\rho}}_{q}(t)\triangleq\underset{\begin{array}{c}
\tiny\,(\mathbf{x}_{q},\,\lambda_{q})\in\wh{\mathcal{Y}}_{q}^{t}\times[0,\lambda^{\max}]\vspace{-0.2cm}\\
\tiny\,(\mathbf{x}_{-q},\pi_{t})\in{\mathcal{Y}}_{-q}\times\mathcal{S}_{t}
\end{array}}{\min}\left\{ \text{\ensuremath{\lambda}}_{\text{{least}}}\left(\mathbf{B}_{q}\left(\mathbf{x},\lambda_{q},\pi_{t}\right)\right)\right\} ,\quad & \mbox{if }\overline{{\rho}}_{q}(t)\geq0,\\
0, & \mbox{otherwise};
\end{array}\right.\label{eq:minimum_eigenvalue_original_matrix}
\end{equation}
and the diagonal matrices $\mathbf{D}_{q}(t,\, c)$ and $\mathbf{E}_{qr}\left(\mathbf{x}\right)$
\begin{equation}
\mathbf{D}_{q}(t,\, c)^{2}\triangleq\left[\begin{array}{cc}
\rho_{q}(t)+c\,\left(\dfrac{{1-1/Q}}{\sqrt{f_{q}}}\right)^{2}, & 0\\
0 & \rho_{q}(t)
\end{array}\right]\,\,\mbox{and}\,\,\mathbf{E}_{qr}\left(\mathbf{x}\right)\triangleq\left[\begin{array}{ll}
\left|\nabla_{\wh{\tau}_{q}\wh{\tau}_{r}}^{2}\theta_{q}\left({\mathbf{x}}\right)\right|, & 0\\
0, & \left\Vert \nabla_{\mathbf{p}_{q}\mathbf{p}_{r}}^{2}\theta_{q}\left({\mathbf{x}}\right)\right\Vert 
\end{array}\right],\label{eq:def_E_and_D_matrices}
\end{equation}
with $\theta_{q}(\cdot)$ defined in (\ref{eq:theta_payoff}). Given
the coefficients 
\begin{equation}
\beta_{qr}(t,\, c)\triangleq\max_{(\mathbf{x}_{q},\,\mathbf{x}_{-q})\in\wh{\mathcal{Y}}_{q}^{t}\times\mathcal{Y}_{-q}}\left\Vert \mathbf{D}_{q}(t,\, c)^{-1}\,\mathbf{E}_{qr}\left(\mathbf{x}\right)\,\mathbf{D}_{r}(t,\, c)^{-1}\right\Vert ,\label{eq:off_diag_Gamma_t}
\end{equation}
for $r,q=1,\ldots,Q$ and $r\neq q,$ we can finally define the $Q\times Q$
matrix  $\boldsymbol{{\Gamma}}(t)$ that plays a key role in studying
contraction properties of the best-response map associated with the
game $\mathcal{G}_{\pi}(\mathcal{X},\boldsymbol{{\theta}})$: 
\begin{equation}
\left[\boldsymbol{{\Gamma}}(t)\right]_{q,r}\triangleq\left\{ \begin{array}{ll}
1, & \mbox{if }r=q,\\
-\beta_{qr}(t,\, c),\quad & \mbox{otherwise. }
\end{array}\right.\label{eq:Gamma_t_matrix_and_Gamma_t_inf}
\end{equation}
It is important to remark here that the off-diagonal entries of the
matrix $\boldsymbol{{\Gamma}}(t)$ depend, among other quantities,
on the cross-channels $\left\{ |\wh{H}_{qr}(k)|^{2}\right\} $ and
$\left\{ |G_{P,q}(k)|^{2}\right\} $. Roughly speaking, this dependence
is such that the $\beta_{qr}(t,\, c)$'s tend to decrease as the aforementioned
cross-channels decrease, meaning that the $\beta_{qr}(t,\, c)$ remains
``small'' as long as the overall MUI in the system remains ``small''.
We will show shortly that this is what one needs to guarantee the
convergence of the distributed best-response based algorithms introduced
in Sec. \ref{sub:Game-with-exogenous}. More formally, by postulating
that $\boldsymbol{{\Gamma}}(t)$ is a P-matrix, Theorem \ref{Theo_Contraction_best-response}
below states the contraction properties of the best-response mapping
of the game $\mathcal{G}_{\pi}(\mathcal{X},\,\boldsymbol{{\theta}})$
with respect to the suitably defined block maximum norm {[}see proof
of the theorem for details{]}. 

\begin{theorem} \label{Theo_Contraction_best-response} Given the
game $\mathcal{G}_{\pi}(\mathcal{X},\,\boldsymbol{{\theta}})$ with
exogenous (fixed) price ${\pi}\geq0$, suppose that $\boldsymbol{{\Gamma}}(t)$
in (\ref{eq:Gamma_t_matrix_and_Gamma_t_inf}) is a P-matrix. Then
the following hold: 
\begin{description}
\item [{(a)}] Each nonconvex optimization problem in (\ref{eq:game_G_t_pi})
has a unique (nontrivial) optimal solution $\overline{{\mathcal{B}}}{}_{q}(\mathbf{x}_{-q})\triangleq\left(\wh{\tau}_{q}^{\star}(\mathbf{x}_{-q}),\,\mathbf{p}_{q}^{\star}(\mathbf{x}_{-q}),\, P_{q}^{\text{{fa}}\star}(\mathbf{x}_{-q})\right)$,
for every given $\mathbf{x}_{-q}\in{\mathcal{Y}}_{-q}$ and ${\pi}\geq0$;
\item [{(b)}] The best-response map ${\mathcal{Y}}\ni\mathbf{x}\rightarrow\overline{{\mathcal{B}}}(\mathbf{x})\triangleq\left(\overline{{\mathcal{B}}}_{q}(\mathbf{x}_{-q})\right)_{q=1}^{Q}$
is a block-contraction; the unique fixed-point of $\overline{{\mathcal{B}}}$
is the unique $\mathbf{x}$-component of the NE of the game.
\end{description}
\end{theorem}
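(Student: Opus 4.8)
The plan is to prove (a) first and then bootstrap it into the block-contraction in (b). For part (a), I would begin by extracting from the P-matrix hypothesis the uniform positive definiteness of the reduced $2\times 2$ matrices $\mathbf{B}_q$. Writing $\boldsymbol{\Gamma}(t)=\mathbf{I}-\mathbf{N}$, where $\mathbf{N}\geq 0$ collects the off-diagonal entries $\beta_{qr}(t,c)$, note that $\boldsymbol{\Gamma}(t)$ has the Z-matrix sign pattern, so being a P-matrix is equivalent (Cottle-Pang-Stone, Berman-Plemmons) to $\boldsymbol{\Gamma}(t)$ being a nonsingular M-matrix, hence to $\rho(\mathbf{N})<1$. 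For the $\beta_{qr}(t,c)$ in (\ref{eq:off_diag_Gamma_t}) to be finite, the normalizing matrices $\mathbf{D}_q(t,c)$ must be invertible, which forces $\rho_q(t)>0$, i.e. $\mathbf{B}_q(\mathbf{x},\lambda_q,\pi)\succ 0$ uniformly on $\wh{\mathcal{Y}}_q^t\times[0,\lambda^{\max}]$. I would then invoke the standard block-comparison fact: if the symmetric $2\times 2$ matrix whose diagonal entries are $\nabla_{\wh{\tau}_q}^2\mathcal{L}_q$ and $\lambda_{\mathrm{least}}(\nabla_{(\mathbf{p}_q,P_q^{\text{fa}})}^2\mathcal{L}_q)$ and whose off-diagonal entry is $-\|\nabla_{\wh{\tau}_q(\mathbf{p}_q,P_q^{\text{fa}})}^2\mathcal{L}_q\|$ is positive definite, then the full Hessian $\nabla_{\mathbf{x}_q}^2\mathcal{L}_q$ is positive definite. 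This transfers $\mathbf{B}_q\succ 0$ into positive definiteness of $\nabla_{\mathbf{x}_q}^2\mathcal{L}_q$ on the relevant set, after which part (a) — existence, uniqueness, and nontriviality of $\overline{\mathcal{B}}_q(\mathbf{x}_{-q})$ — follows directly from Proposition \ref{proposition_uniqueness_opt_sol}.

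For part (b), with each best-response now single-valued, I would derive a per-player Lipschitz estimate and close it with M-matrix theory. By Lemma \ref{Lemma_lower_bound_on_the_power} every best-response lands in the restricted set $\wh{\mathcal{Y}}^t$, so the uniform bounds defining $\mathbf{B}_q$, $\mathbf{D}_q$, $\mathbf{E}_{qr}$, and $\beta_{qr}(t,c)$ hold along the entire trajectory of the map. Fixing rivals' profiles $\mathbf{x}_{-q}$ and $\mathbf{y}_{-q}$, I would write the first-order (VI) optimality conditions that $\overline{\mathcal{B}}_q(\mathbf{x}_{-q})$ and $\overline{\mathcal{B}}_q(\mathbf{y}_{-q})$ satisfy over the convex set $\mathcal{Y}_q$, subtract them, and combine strong monotonicity of $\nabla_{\mathbf{x}_q}\mathcal{L}_q(\cdot,\mathbf{x}_{-q},\pi)$ — whose modulus in the split $(\wh{\tau}_q\mid\mathbf{p}_q,P_q^{\text{fa}})$ coordinates is bounded below by $\mathbf{D}_q(t,c)^2$, with the equisensing gain $c$ entering only the $\wh{\tau}_q$-diagonal — with Lipschitz continuity of the coupling gradient in $\mathbf{x}_{-q}$, bounded block-wise above by $\mathbf{E}_{qr}(\mathbf{x})$ through the mean-value theorem. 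Normalizing by $\mathbf{D}_q(t,c)$ on both sides should produce the two-block estimate
\begin{equation}
\left\| \mathbf{D}_q \big( \overline{\mathcal{B}}_q(\mathbf{x}_{-q}) - \overline{\mathcal{B}}_q(\mathbf{y}_{-q}) \big) \right\|_{\text{blk}} \,\leq\, \sum_{r\neq q}\, \beta_{qr}(t,c)\, \left\| \mathbf{D}_r (\mathbf{x}_r - \mathbf{y}_r) \right\|_{\text{blk}},
\end{equation}
where $\|\cdot\|_{\text{blk}}$ denotes the Euclidean norm of the two stacked sub-blocks and I have abbreviated $\mathbf{D}_q\triangleq\mathbf{D}_q(t,c)$.

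Stacking these $Q$ inequalities yields a nonnegative vector bound of the form $\mathbf{v}\leq\mathbf{N}\,\mathbf{w}$ on the block-error vectors. Since $\boldsymbol{\Gamma}(t)=\mathbf{I}-\mathbf{N}$ is a P-matrix we have $\rho(\mathbf{N})<1$, so by Perron-Frobenius theory there exist a positive weight vector $\mathbf{w}^{\star}$ and a constant $c_{\mathcal{B}}\in[\rho(\mathbf{N}),1)$ such that $\overline{\mathcal{B}}$ is a contraction of modulus $c_{\mathcal{B}}$ in the weighted block-maximum norm $\|\mathbf{x}\|_{\text{blk}}^{\mathbf{w}^{\star}}\triangleq\max_q \|\mathbf{D}_q\,\mathbf{x}_q\|_{\text{blk}}/w_q^{\star}$; this is precisely the norm used in the iterative-waterfilling block-contraction analysis. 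The Banach fixed-point theorem then gives a unique fixed-point of $\overline{\mathcal{B}}$, which by construction is the unique $\mathbf{x}$-component of a NE of $\mathcal{G}_{\pi}(\mathcal{X},\boldsymbol{\theta})$, completing (b).

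The step I expect to be the main obstacle is the displayed per-player estimate: one must process the constrained, \emph{nonconvex} problem through its VI reformulation, keep the sensing coordinate $\wh{\tau}_q$ separated from the power/false-alarm coordinates $(\mathbf{p}_q,P_q^{\text{fa}})$ so that the equisensing penalty contributes only to the $\wh{\tau}_q$-diagonal of $\mathbf{D}_q(t,c)$, and bound the cross-player Jacobian blocks uniformly on $\wh{\mathcal{Y}}^t$ by $\mathbf{E}_{qr}$ via a mean-value argument — all while preserving the two-block structure so that the scalar constants $\beta_{qr}(t,c)$ emerge with exactly the normalization in (\ref{eq:off_diag_Gamma_t}). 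Once that estimate is in hand, the reduction to $\rho(\mathbf{N})<1$ and the contraction conclusion are routine P-matrix/M-matrix theory.
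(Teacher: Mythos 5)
Your proposal is correct and follows essentially the same route as the paper's own proof: part (a) extracts $\rho_q(t)>0$ from the P-property of $\boldsymbol{\Gamma}(t)$, transfers it to positive definiteness of $\nabla_{\mathbf{x}_q}^2\mathcal{L}_q$ via the $2\times2$ comparison matrix $\mathbf{B}_q$ and invokes Proposition \ref{proposition_uniqueness_opt_sol}, while part (b) derives the identical split-norm per-player estimate from the VI optimality conditions plus the mean-value theorem on the restricted set $\wh{\mathcal{Y}}^t$ (Lemma \ref{Lemma_lower_bound_on_the_power}, with the multipliers confined to $[0,\lambda^{\max}]$ by Lemma \ref{Lemma_bounded_multipliers}), stacking it into $\mathbf{e}_{\overline{{\mathcal{B}}},\mathbf{D}}\leq(\mathbf{I}-\boldsymbol{\Gamma}(t))\,\mathbf{e}_{\mathbf{D}}$. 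Your Perron--Frobenius closing step is precisely the equivalence the paper cites (\cite[Lemma 5.2.14]{Cottle-Pang-Stone_bookLCP92} with \cite[Cor. 6.1]{Bertsekas_Book-Parallel-Comp}): since $\boldsymbol{\Gamma}(t)$ has the Z-pattern, its P-property is equivalent to the existence of a weight vector $\bar{\mathbf{w}}>0$ for which $\overline{{\mathcal{B}}}$ contracts in the weighted block-maximum norm.
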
\begin{proof} To prove contraction of the best-response,
we need to specify first under which norm the best-response map contracts.
We will use the following norms: the block-maximum norm on $\mathbb{\mathbb{R}}^{Q(N+2)},$
defined as \cite{Bertsekas_Book-Parallel-Comp} 
\begin{equation}
\left\Vert \mathbf{y}\right\Vert _{\text{block}}^{\mathbf{w}}\triangleq\max_{i=1,\ldots,Q}\frac{\left\Vert \mathbf{y}_{i}\right\Vert _{i}}{w_{i}},\quad\mbox{for}\quad\mathbf{y}=(\mathbf{y}_{i})_{i=1}^{Q}\in\mathbb{R}^{Q(N+2)},\label{block_max_weight_norm}
\end{equation}
where $\left\Vert \mathbf{\cdot}\right\Vert _{i}$ is a valid vector
norm on $\mathbb{R}^{N+2}$ and $\mathbf{w}\triangleq[w_{1},\ldots,w_{Q}]^{T}>\mathbf{0}$
is any given positive weight vector. In particular, we choose $\left\Vert \mathbf{\cdot}\right\Vert _{i}$
as follows: partitioning the vector $\mathbf{y}_{i}\in\mathbb{R}^{N+2}$
as $\mathbf{y}_{i}=(y_{i,1},\mathbf{y}_{i,2:N+2})$, with $\mathbf{y}_{i,2:N+2}$
(or $y_{i,1}$) being the $(N+1)$-length vector containing the last
$N+1$ components (or the first component) of $\mathbf{y}_{i}$, and
given the matrix $\mathbf{D}_{i}(t,\, c)$ as defined in (\ref{eq:def_E_and_D_matrices}),
let the vector norm $\left\Vert \mathbf{\cdot}\right\Vert _{i}$ be
$\left\Vert \mathbf{y}\right\Vert _{i}\triangleq\left\Vert \left(|y_{i,1}|,\,\left\Vert \mathbf{y}_{i,2:N+2}\right\Vert _{2}\right)\right\Vert _{\mathbf{D}_{i}(t,\, c)^{2}}$,
where $\left\Vert \mathbf{x}\right\Vert _{\mathbf{D}_{i}(t,\, c)^{2}}\triangleq\left\Vert \mathbf{D}_{i}(t,\, c)\,\mathbf{x}\right\Vert _{2}$.
As it will be clarified shortly, the choice of such a norm is instrumental
to obtain convergence conditions that can be satisfied for all ranges
of $c\geq0$. We also need to introduce the (weighted) maximum norm
on $\mathbb{\mathbb{R}}^{Q},$ defined as \cite{Horn-Johnson_book}
\begin{equation}
\left\Vert \mathbf{x}\right\Vert _{\infty,\text{vec}}^{\mathbf{w}}\triangleq\max_{i=1,\ldots,Q}\frac{\left\vert x_{i}\right\vert }{w_{i}},\quad\mbox{for}\quad\mathbf{x\in\mathbb{R}}^{Q};\label{weighted_infinity_vector_norm}
\end{equation}
 and the\emph{ }matrix norm $\left\Vert \mathbf{\cdot}\right\Vert _{\infty,\text{mat}}^{\mathbf{w}}$
on $\mathbb{R}^{Q\times Q}$ induced by $\left\Vert \cdot\right\Vert _{\infty,\text{vec}}^{\mathbf{w}},$
given by \cite{Horn-Johnson_book} 
\begin{equation}
\left\Vert \mathbf{A}\right\Vert _{\infty,\text{mat}}^{\mathbf{w}}\triangleq\max_{i}\frac{1}{w_{i}}\sum\limits _{j=1}^{Q}\left\vert [\mathbf{A}]_{ij}\right\vert w_{j},\quad\mbox{for}\quad\mathbf{A\in\mathbb{R}}^{Q\times Q}.\label{H_max_weight_norm}
\end{equation}

We are now ready to prove the theorem. \smallskip

\noindent (a): Given $t\geq0$, the P property of matrix $\boldsymbol{{\Gamma}}(t)$
implies $\rho_{q}(t)>0$ for all $q$, and thus $\nabla_{\mathbf{x}_{q}}^{2}\mathcal{L}_{q}(\mathbf{x},\,{\pi},\,\lambda_{q})\succ\mathbf{0}$
for all $(\mathbf{x}_{q},\,\lambda_{q})\in\wh{\mathcal{Y}}_{q}^{t}\times[0,\lambda^{\max}]$,
$\mathbf{x}_{-q}\in{\mathcal{Y}}_{-q}$, and $\pi\geq0$. According
to Proposition \ref{proposition_uniqueness_opt_sol}, this guarantees
the uniqueness of the optimal solution $\overline{{\mathcal{B}}}_{q}({\mathbf{x}}_{-q})=\left(\wh{\tau}_{q}^{\star}(\mathbf{x}_{-q}),\,\mathbf{p}_{q}^{\star}(\mathbf{x}_{-q}),\, P_{q}^{\text{{fa}}\star}(\mathbf{x}_{-q})\right)$
of each nonconvex problem in (\ref{eq:game_G_t_pi}), for every given
$\pi\geq0$ and $\mathbf{x}_{-q}\in{\mathcal{Y}}_{-q}$. \smallskip

\noindent (b): Given the unique solution $\overline{{\mathcal{B}}}_{q}({\mathbf{x}}_{-q})$,
by Lemma \ref{Lemma_ACQ}, it follows that there exists a multiplier
${\overline{{\lambda}}}_{q}$ associated with the nonconvex constraint
$I_{q}(\mathbf{x}_{q})\leq0$ such that the tuple $(\overline{{\mathcal{B}}}_{q}({\mathbf{x}}_{-q}),{\overline{{\lambda}}}_{q})$
satisfies the KKT optimality conditions of the optimization problem
in (\ref{eq:game_G_t_pi}), or equivalently, the VI$(\mathcal{K}_{q},\mathbf{F}_{q})$
defined in (\ref{eq:VI_ref}), which we rewrite here for the reader's
convenience: 
\begin{equation}
\left[\begin{array}{c}
\mathbf{y}_{q}-\overline{{\mathcal{B}}}_{q}({\mathbf{x}}_{-q})\\
{\lambda}_{q}-\overline{\lambda}_{q}
\end{array}\right]^{T}\left(\begin{array}{c}
\nabla_{\mathbf{x}_{q}}\mathcal{L}_{q}\left((\overline{{\mathcal{B}}}_{q}({\mathbf{x}}_{-q}),\,\overline{\lambda}_{q}),\,\mathbf{x}_{-q},{\pi}\right)\\
-I_{q}\left(\overline{{\mathcal{B}}}_{q}({\mathbf{x}}_{-q})\right)
\end{array}\right)\geq0,\quad\forall\left(\mathbf{y}_{q},\lambda_{q}\right)\in\mathcal{Y}_{q}\times\mathbb{R}_{+}^{M},\label{eq:VI_nonconvex_problem}
\end{equation}
with $\nabla_{\mathbf{x}_{q}}\mathcal{L}_{q}$ defined in (\ref{eq:Lagrangian-1}).
Recall that $\overline{\lambda}_{q}\in[0,\lambda^{\max}]$ (Lemma
\ref{Lemma_bounded_multipliers}) and $\overline{{\mathcal{B}}}_{q}({\mathbf{x}}_{-q})\in\wh{\mathcal{Y}}{}_{q}^{t}$
(Lemma \ref{Lemma_lower_bound_on_the_power}).

Consider now two feasible points $\mathbf{x}^{(1)}\triangleq(\mathbf{x}_{q}^{(1)})_{q=1}^{Q},\,\mathbf{x}^{(2)}\triangleq(\mathbf{x}_{q}^{(2)})_{q=1}^{Q}\in{\mathcal{Y}}$,
with $\mathbf{x}_{q}^{(i)}\triangleq\left(\wh{\tau}_{q}^{(i)},\,\mathbf{p}_{q}^{(i)},\, P_{q}^{\text{{fa}}(i)}\right)$
for $i=1,2$, and $q=1,\ldots,Q$, and let $\overline{{\lambda}}_{q}^{(i)}$'s
be the multipliers associated with the nonconvex constraints $\{I_{q}(\mathbf{x}_{q})\leq0\}$'s
at the optimal solutions $\overline{{\mathcal{B}}}_{q}(\mathbf{x}_{-q}^{(i)})=\left(\wh{\tau}_{q}^{\star}(\mathbf{x}_{-q}^{(i)}),\,\mathbf{p}_{q}^{\star}(\mathbf{x}_{-q}^{(i)}),P_{q}^{\text{{fa}}\star}(\mathbf{x}_{-q}^{(i)})\right)$,
for $i=1,2$. Evaluating (\ref{eq:VI_nonconvex_problem}) first in
the solution $(\overline{{\mathcal{B}}}_{q}(\mathbf{x}_{-q}^{(1)}),\overline{{\lambda}}_{q}^{(1)})$
given $(\mathbf{y}_{q},\,{\lambda}_{q})=(\overline{{\mathcal{B}}}_{q}(\mathbf{x}_{-q}^{(2)}),\overline{{\lambda}}_{q}^{(2)})$,
then in the solution $(\overline{{\mathcal{B}}}_{q}(\mathbf{x}_{-q}^{(2)}),\overline{{\lambda}}_{q}^{(2)})$
given $(\mathbf{y}_{q},\,{\lambda}_{q})=(\overline{{\mathcal{B}}}_{q}(\mathbf{x}_{-q}^{(1)}),\overline{{\lambda}}_{q}^{(1)})$,
and summing the resulting inequalities, we obtain
\begin{equation}
\begin{array}{l}
0\geq\left[\begin{array}{c}
\overline{{\mathcal{B}}}_{q}({\mathbf{x}}_{-q}^{(1)})-\overline{{\mathcal{B}}}_{q}({\mathbf{x}}_{-q}^{(2)})\\
\overline{{\lambda}}_{q}^{(1)}-\overline{{\lambda}}_{q}^{(2)}
\end{array}\right]^{T}\left(\begin{array}{c}
\nabla_{\mathbf{x}_{q}}\mathcal{L}_{q}\left((\overline{{\mathcal{B}}}_{q}({\mathbf{x}}_{-q}^{(1)}),\overline{{\lambda}}_{q}^{(1)}),\,\mathbf{x}_{-q}^{(1)},{\pi}\right)-\nabla_{\mathbf{x}_{q}}\mathcal{L}_{q}\left((\overline{{\mathcal{B}}}_{q}({\mathbf{x}}_{-q}^{(2)}),\overline{{\lambda}}_{q}^{(2)}),\,\mathbf{x}_{-q}^{(2)},{\pi}\right)\\
-I_{q}\left(\overline{{\mathcal{B}}}_{q}({\mathbf{x}}_{-q}^{(1)})\right)-\left(-I_{q}\left(\overline{{\mathcal{B}}}_{q}({\mathbf{x}}_{-q}^{(2)})\right)\right)
\end{array}\right).\end{array}\label{eq:sum_two_VI}
\end{equation}
By the main-value theorem{\small{} }we deduce that there exists a $\delta\in(0,1)$
and a pair $({\mathbf{x}}_{q}(\delta),{\mathbf{x}}_{-q}(\delta),\lambda_{q}(\delta))\triangleq\delta\cdot\left(\overline{{\mathcal{B}}}_{q}({\mathbf{x}}_{-q}^{(1)}),{\mathbf{x}}_{-q}^{(1)},\overline{{\lambda}}_{q}^{(1)}\right)+(1-\delta)\cdot\left(\overline{{\mathcal{B}}}_{q}({\mathbf{x}}_{-q}^{(2)}),{\mathbf{x}}_{-q}^{(2)},\overline{{\lambda}}_{q}^{(2)}\right)$
such that \vspace{-0.2cm}

\begin{eqnarray}
0 & \geq & \left(\overline{{\mathcal{B}}}_{q}({\mathbf{x}}_{-q}^{(1)})-\overline{{\mathcal{B}}}_{q}({\mathbf{x}}_{-q}^{(2)})\right)^{T}\left(\nabla_{\mathbf{x}_{q}}^{2}\mathcal{L}_{q}\left(({\mathbf{x}}_{q}(\delta),\,\lambda_{q}(\delta)),\,{\mathbf{x}}_{-q}(\delta),{\pi}\right)\right)\left(\overline{{\mathcal{B}}}_{q}({\mathbf{x}}_{-q}^{(1)})-\overline{{\mathcal{B}}}_{q}({\mathbf{x}}_{-q}^{(2)})\right)\nonumber \\
 &  & +\left(\overline{{\mathcal{B}}}_{q}({\mathbf{x}}_{-q}^{(1)})-\overline{{\mathcal{B}}}_{q}({\mathbf{x}}_{-q}^{(2)})\right)^{T}{\displaystyle {\sum_{r\neq q}}}\nabla_{\mathbf{x}_{q}\mathbf{x}_{r}}^{2}\theta_{q}\left({\mathbf{x}}_{q}(\delta),\,{\mathbf{x}}_{-q}(\delta)\right)\left({\mathbf{x}}_{r}^{(1)}-{\mathbf{x}}_{r}^{(2)}\right).\label{eq:bet-response_ineq_row2}
\end{eqnarray}
Using the definition of $\mathbf{B}_{q}(\mathbf{x},\,\pi_{t},\,\lambda_{q})$,
$\rho_{q}(t)$, $\xi_{q}^{\sup}$, and $\mathbf{D}_{q}(t,\, c)$ and
$\mathbf{E}_{rq}\left(\mathbf{x}\right)$ as given in (\ref{eq:B_q_def}),
(\ref{eq:minimum_eigenvalue_original_matrix}), (\ref{eq:def_E_and_D_matrices}),
and (\ref{eq:upper_spectrum_nabla_f_q_p_q_p_r}), respectively, let
us introduce for each $q=1,\ldots,Q$, the error vectors: 
\begin{equation}
\mathbf{e}_{\overline{{\mathcal{B}}}_{q}}\triangleq\left[\begin{array}{c}
\left|\wh{\tau}_{q}^{\star}(\mathbf{x}_{-q}^{(2)})-\wh{\tau}_{q}^{\star}(\mathbf{x}_{-q}^{(1)})\right|\\
\left\Vert \begin{array}{c}
\mathbf{p}_{q}^{\star}(\mathbf{x}_{-q}^{(2)})-\mathbf{p}_{q}^{\star}(\mathbf{x}_{-q}^{(1)})\\
P_{q}^{\text{{fa}}\star}(\mathbf{x}_{-q}^{(2)})-P_{q}^{\text{{fa}}\star}(\mathbf{x}_{-q}^{(1)})
\end{array}\right\Vert 
\end{array}\right],\quad\mbox{and}\quad\mathbf{e}_{q}\triangleq\left[\begin{array}{c}
\left|\wh{\tau}_{q}^{(2)}-\wh{\tau}_{q}^{(1)}\right|\\
\left\Vert \begin{array}{c}
\mathbf{p}_{q}^{(2)}-\mathbf{p}_{q}^{(1)}\\
P_{q}^{\text{{fa}}(2)}-P_{q}^{\text{{fa}}(1)}
\end{array}\right\Vert 
\end{array}\right]\label{eq:def_error_q}
\end{equation}
and the matrices{\small 
\begin{align}
\hspace{-0.4cm}\mathbf{C}_{q}\left(\mathbf{x}(\delta),\lambda_{q}(\delta),\pi\right) & \triangleq\left[\begin{array}{ll}
\nabla_{\wh{\tau}_{q}}^{2}\mathcal{L}_{q}\left(({\mathbf{x}}_{q}(\delta),\,\lambda_{q}(\delta)),\,{\mathbf{x}}_{-q}(\delta),{\pi}\right), & -\left\Vert \nabla_{\wh{\tau}_{q}\,(\mathbf{p}_{q},P_{q}^{\text{{fa}}})}^{2}\mathcal{L}_{q}\left(({\mathbf{x}}_{q}(\delta),\,\lambda_{q}(\delta)),\,{\mathbf{x}}_{-q}(\delta),{\pi}\right)\right\Vert \\
-\left\Vert \nabla_{(\mathbf{p}_{q},P_{q}^{\text{{fa}}})\,\wh{\tau}_{q}}^{2}\mathcal{L}_{q}\left(({\mathbf{x}}_{q}(\delta),\,\lambda_{q}(\delta)),\,{\mathbf{x}}_{-q}(\delta),{\pi}\right)\right\Vert , & \lambda_{\min}\left(\nabla_{(\mathbf{p}_{q},P_{q}^{\text{{fa}}})}^{2}\mathcal{L}_{q}\left(({\mathbf{x}}_{q}(\delta),\,\lambda_{q}(\delta)),\,{\mathbf{x}}_{-q}(\delta),{\pi}\right)\right)
\end{array}\right]\nonumber \\
 & =\mathbf{B}_{q}\left(\mathbf{x}(\delta),\lambda_{q}(\delta),\pi\right)+\left[\begin{array}{cc}
c\,\left(\dfrac{{1-1/Q}}{\sqrt{f_{q}}}\right)^{2}, & 0\\
0 & 0
\end{array}\right]\succeq\mathbf{D}_{q}(t,\, c)^{2}\label{eq:def_C_q_r2}
\end{align}
} and
\begin{align}
\mathbf{E}_{qr}\left({\mathbf{x}}(\delta)\right) & =\left[\begin{array}{ll}
\left|\nabla_{\wh{\tau}_{q}\wh{\tau}_{r}}^{2}\theta_{q}\left({\mathbf{x}}(\delta)\right)\right|, & 0\\
0, & \left\Vert \nabla_{\mathbf{p}_{q}\mathbf{p}_{r}}^{2}\theta_{q}\left({\mathbf{x}}(\delta)\right)\right\Vert 
\end{array}\right]=\left[\begin{array}{ll}
c\,\left(\dfrac{{1-1/Q}}{\sqrt{f_{q}}}\right)\left(\dfrac{{1/Q}}{\sqrt{f_{r}}}\right), & 0\\
0, & \left\Vert \nabla_{\mathbf{p}_{q}\mathbf{p}_{r}}^{2}\left(-\log r_{q}({\mathbf{p}}(\delta))\right)\right\Vert 
\end{array}\right]\nonumber \\
 & \leq\left[\begin{array}{ll}
c\,\left(\dfrac{{1-1/Q}}{\sqrt{f_{q}}}\right)\left(\dfrac{{1/Q}}{\sqrt{f_{r}}}\right), & 0\\
0, & {\displaystyle {\max_{k=1,\ldots,N}}}\left\{ {\displaystyle {\frac{|\wh{H}_{qr}(k)|^{2}}{\wh{\sigma}_{q,k}^{4}}}}\right\} \cdot\xi_{q}^{\sup}
\end{array}\right]\triangleq\mathbf{E}_{qr}^{\text{{sup}}},\label{eq:def_E_sup}
\end{align}
where the upper bound in (\ref{eq:def_E_sup}) follows from Lemma
\ref{Lemma_lower_bound_on_the_power} and (\ref{eq:upper_spectrum_nabla_f_q_p_q_p_r}).
Then, from inequality (\ref{eq:bet-response_ineq_row2}), we deduce
\begin{equation}
\mathbf{e}_{\overline{{\mathcal{B}}}_{q}}^{T}\,\mathbf{C}_{q}\left(\mathbf{x}(\delta),\lambda_{q}(\delta),\pi\right)\,\mathbf{e}_{\overline{{\mathcal{B}}}_{q}}\leq\mathbf{e}_{\overline{{\mathcal{B}}}_{q}}^{T}\,{\displaystyle {\sum_{r\neq q}}}\,\mathbf{E}_{qr}\left({\mathbf{x}}(\delta)\right)\,\mathbf{e}_{r},\label{eq:bet-response_ineq_row3-1}
\end{equation}
which, using the bounds in (\ref{eq:def_C_q_r2}) and (\ref{eq:def_E_sup})
and the definition of $\beta_{qr}(t,\, c)$ in (\ref{eq:off_diag_Gamma_t}),
leads 
\begin{equation}
\left\Vert \mathbf{D}_{q}(t,\, c)\,\mathbf{e}_{\overline{{\mathcal{B}}}_{q}}\right\Vert _{2}\leq\sum_{r\neq q}\left\Vert \mathbf{D}_{q}(t,\, c)^{-1}\,\mathbf{E}_{qr}\left({\mathbf{x}}(\delta)\right)\,\mathbf{D}_{r}(t,\, c)^{-1}\right\Vert \left\Vert \mathbf{D}_{r}(t,\, c)\,\mathbf{e}_{r}\right\Vert _{2}\leq\sum_{r\neq q}\beta_{qr}(t,\, c)\left\Vert \mathbf{D}_{r}(t,\, c)\,\mathbf{e}_{r}\right\Vert _{2},\label{eq:bet-response_ineq_row3}
\end{equation}
for all $q=1,\ldots,Q$ (the inequality in (\ref{eq:bet-response_ineq_row2})
is trivially satisfied if $\left\Vert \mathbf{D}_{q}(t,\, c)\,\mathbf{e}_{\overline{{\mathcal{B}}}_{q}}\right\Vert _{2}=0$).
Introducing the vectors $\mathbf{e}_{\overline{{\mathcal{B}}},\mathbf{D}}\triangleq\left(\left\Vert \mathbf{e}_{\overline{{\mathcal{B}}}_{q}}\right\Vert _{\mathbf{D}_{q}(t,\, c)}\right)_{q=1}^{Q}$
and $\mathbf{e}_{\mathbf{D}}\triangleq\left(\left\Vert \mathbf{e}_{q}\right\Vert _{\mathbf{D}_{q}(t,\, c)}\right)_{q=1}^{Q}$,
and the matrix $\mathbf{E}(t)\triangleq\mathbf{I}-\boldsymbol{{\Gamma}}(t)$,
the set of inequalities in (\ref{eq:bet-response_ineq_row3}) can
be written in vectorial form as\vspace{-0.1cm} 
\begin{equation}
\mathbf{e}_{\overline{{\mathcal{B}}},\mathbf{D}}\leq\mathbf{E}(t)\,\mathbf{e}_{\mathbf{D}},\qquad\forall\mathbf{x}^{(1)},\mathbf{x}^{(2)}\mathbf{\in\mathcal{Y}},\label{eq:bet-response_ineq_row4}
\end{equation}
and thus, for any given $\mathbf{w}>0$, we have 
\begin{equation}
\left\Vert \overline{{\mathcal{B}}}({\mathbf{x}}^{(1)})-\overline{{\mathcal{B}}}({\mathbf{x}}^{(2)})\right\Vert _{\text{{block}}}^{\mathbf{w}}=\left\Vert \mathbf{e}_{\overline{{\mathcal{B}}},\mathbf{D}}\right\Vert _{\infty,\text{{vec}}}^{\mathbf{w}}\leq\left\Vert \mathbf{E}(t)\right\Vert _{\infty,\text{{mat}}}^{\mathbf{w}}\left\Vert \mathbf{e}_{\mathbf{D}}\right\Vert _{\infty,\text{{vec}}}^{\mathbf{w}}=\left\Vert \mathbf{E}(t)\right\Vert _{\infty,\text{{mat}}}^{\mathbf{w}}\left\Vert {\mathbf{x}}^{(1)}-{\mathbf{x}}^{(2)}\right\Vert _{\text{{block}}}^{\mathbf{w}},\label{eq:bet-response_ineq_row5}
\end{equation}
for all $\mathbf{x}^{(1)},\mathbf{x}^{(2)}\mathbf{\in\mathcal{Y}}$.
To complete the proof we need to show that $\left\Vert \mathbf{E}(t)\right\Vert _{\infty,\text{{mat}}}^{\mathbf{w}}<1$
for some $\mathbf{w}>0$. Invoking Lemma \cite[Lemma 5.2.14]{Cottle-Pang-Stone_bookLCP92}
and \cite[Cor. 6.1]{Bertsekas_Book-Parallel-Comp}, we obtain the
desired result:
\begin{equation}
\boldsymbol{{\Gamma}}(t)\mbox{ is a P-matrix}\qquad\Leftrightarrow\qquad\exists\,\bar{{\mathbf{w}}}>0\quad\mbox{such that }c_{\mathcal{B}}\triangleq\left\Vert \mathbf{E}(t)\right\Vert _{\infty,\text{mat}}^{\bar{{\mathbf{w}}}}<1.\label{eq:contraction_constant}
\end{equation}
\end{proof}

\subsection{Asynchronous convergence theorem\label{sub:Asynchronous-convergence-theorem}}

Convergence of best-response algorithms solving the game $\mathcal{G}_{\pi}(\mathcal{X},\,\boldsymbol{{\theta}})$
follows readily from the block-contraction properties of the best-response,
as proved in Theorem \ref{Theo_Contraction_best-response} and is
thus guaranteed under the same conditions given in Theorem \ref{Theo_Contraction_best-response}. 

\begin{theorem} \label{Theo-async_best-response_NEP} Given the game
$\mathcal{G}_{\pi}(\mathcal{X},\,\boldsymbol{{\theta}})$ with exogenous
(fixed) ${\pi}\geq0$, suppose that $\boldsymbol{{\Gamma}}(t)$ in
(\ref{eq:Gamma_t_matrix_and_Gamma_t_inf}) is a P-matrix. Then, any
sequence generated by the asynchronous algorithm based on the best-response
$\overline{{\mathcal{B}}}$ and starting from any point in $\mathcal{Y}$
converges to a NE of the game, for any given updating feasible schedule
of the players.\end{theorem}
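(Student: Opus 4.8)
The plan is to deduce the convergence of the asynchronous scheme directly from the block-contraction property of the best-response map $\overline{\mathcal{B}}$ already established in Theorem \ref{Theo_Contraction_best-response}, by invoking the classical Asynchronous Convergence Theorem of Bertsekas and Tsitsiklis \cite{Bertsekas_Book-Parallel-Comp}. First I would recall that, under the P-matrix assumption on $\boldsymbol{{\Gamma}}(t)$, part (b) of Theorem \ref{Theo_Contraction_best-response} guarantees the existence of a positive weight vector $\bar{\mathbf{w}}>0$ and a constant $c_{\mathcal{B}}\in(0,1)$ such that
\[
\left\| \overline{\mathcal{B}}(\mathbf{x})-\overline{\mathcal{B}}(\mathbf{y})\right\|_{\text{block}}^{\bar{\mathbf{w}}}\,\leq\, c_{\mathcal{B}}\,\left\| \mathbf{x}-\mathbf{y}\right\|_{\text{block}}^{\bar{\mathbf{w}}},\qquad \forall\,\mathbf{x},\mathbf{y}\in\mathcal{Y},
\]
where the block-maximum norm is the one defined in (\ref{block_max_weight_norm}). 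In particular $\overline{\mathcal{B}}$ has a unique fixed-point $\mathbf{x}^{\star}\in\mathcal{Y}$, which by Theorem \ref{Theo_Contraction_best-response}(b) is exactly the (unique) $\mathbf{x}$-component of the NE of $\mathcal{G}_{\pi}(\mathcal{X},\boldsymbol{{\theta}})$.

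The crucial structural observation I would then exploit is that the block-maximum norm in (\ref{block_max_weight_norm}) is a weighted supremum over the per-player blocks, so its sublevel sets centered at $\mathbf{x}^{\star}$ are Cartesian products of the individual players' strategy components. Concretely, I would define the nested sequence of sets
\[
X(k)\triangleq\left\{\mathbf{x}\in\mathcal{Y}\,:\,\left\|\mathbf{x}-\mathbf{x}^{\star}\right\|_{\text{block}}^{\bar{\mathbf{w}}}\leq c_{\mathcal{B}}^{\,k}\,\rho_{0}\right\},\qquad k=0,1,2,\ldots,
\]
where $\rho_{0}$ is chosen large enough that the initial point lies in $X(0)$. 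Each $X(k)$ factorizes as $X(k)=\prod_{q=1}^{Q}X_{q}(k)$, with $X_{q}(k)\triangleq\{\mathbf{x}_{q}\in\mathcal{Y}_{q}\,:\,\|\mathbf{x}_{q}-\mathbf{x}_{q}^{\star}\|_{q}\leq c_{\mathcal{B}}^{\,k}\,\rho_{0}\,\bar{w}_{q}\}$; this product structure is precisely the \emph{Box Condition} required by the totally asynchronous framework. I would next verify the \emph{Synchronous Convergence Condition}: the contraction inequality above immediately yields $\overline{\mathcal{B}}(X(k))\subseteq X(k+1)$ for all $k$, and $c_{\mathcal{B}}<1$ ensures $\bigcap_{k}X(k)=\{\mathbf{x}^{\star}\}$.

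With both conditions in hand, the Asynchronous Convergence Theorem \cite[Ch.\ 6]{Bertsekas_Book-Parallel-Comp} applies under the standard total-asynchronism assumptions (bounded but otherwise arbitrary update delays, and each player updating infinitely often), guaranteeing that any sequence generated by the asynchronous best-response iteration, starting from any point in $\mathcal{Y}$ and following any admissible updating schedule, stays in the sets $X(k)$ and converges to $\mathbf{x}^{\star}$. I expect the main (and essentially only nontrivial) obstacle to be the clean verification that the weighted block-maximum norm produces genuinely \emph{box-shaped} sublevel sets compatible with the product set $\mathcal{Y}=\prod_{q}\mathcal{Y}_{q}$, since the asynchronous theorem fundamentally requires this Cartesian structure rather than an arbitrary contraction. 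Once this is granted, the fact that all off-diagonal couplings in $\boldsymbol{{\Gamma}}(t)$ have been absorbed into the single contraction constant $c_{\mathcal{B}}$ makes the conclusion independent of the specific schedule, which is exactly the asserted robustness against missing or outdated updates of the players.
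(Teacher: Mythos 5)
Your proposal is correct and takes essentially the same route as the paper: the paper dispatches Theorem \ref{Theo-async_best-response_NEP} in one line, noting that convergence ``follows readily'' from the block-contraction of $\overline{{\mathcal{B}}}$ in the weighted block-maximum norm established in Theorem \ref{Theo_Contraction_best-response}, which is precisely an implicit appeal to the Bertsekas--Tsitsiklis Asynchronous Convergence Theorem of \cite{Bertsekas_Book-Parallel-Comp} that you make explicit via the nested product sets $X(k)$, the Box Condition, and the Synchronous Convergence Condition. The only minor refinement worth noting is that the totally asynchronous framework the paper invokes does not even require bounded delays (only that each player updates infinitely often and that outdated information is eventually purged), so your parenthetical ``bounded but otherwise arbitrary update delays'' can be weakened without changing the argument.
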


\subsection{On the contraction/convergence conditions}

We derive here easier conditions to be checked implying those in Theorem
\ref{Theo-async_best-response_NEP} (and Theorem \ref{Theo_Contraction_best-response});
this sheds light also on their physical interpretation. The approach
is similar to that followed to prove Corollary \ref{corollary_sf_cond_uniqueness_NE};
we thus provide only a sketch of the proof. 

The main idea is to build a matrix, say $\boldsymbol{{\Gamma}}^{\text{{low}}}(t)$,
such that $\boldsymbol{{\Gamma}}(t)\geq\boldsymbol{{\Gamma}}^{\text{{low}}}(t)$
{[}the inequality has to be intended component-wise{]}, implying that
if $\boldsymbol{{\Gamma}}^{\text{{low}}}(t)$ is a P matrix, then
$\boldsymbol{{\Gamma}}(t)$ is so \cite{Cottle-Pang-Stone_bookLCP92},
which is the condition required by Theorem \ref{Theo-async_best-response_NEP}.
Then, we provide sufficient conditions for $\boldsymbol{{\Gamma}}^{\text{{low}}}(t)$
to be a P matrix. 

To obtain such a $\boldsymbol{{\Gamma}}^{\text{{low}}}(t)$, it is
sufficient to properly upper bound (the modulus of) the off-diagonal
entries $\beta_{qr}(t,c)$ of $\boldsymbol{{\Gamma}}(t)$. Given the
expression of $\beta_{qr}(t,c)$ {[}cf. (\ref{eq:off_diag_Gamma_t}){]},
a way to do that is to find a matrix $\mathbf{B}_{q}^{\text{{low}}}$
such that $\mathbf{B}_{q}\left({\mathbf{x}},\,{\mathbf{\lambda}}_{q},\pi_{t}\right)\geq\mathbf{B}_{q}^{\text{{low}}}$,
and a diagonal matrix $\mathbf{D}_{q}^{\text{{low}}}(t,c)$ such that
$\mathbf{D}_{q}(t,c)\geq\mathbf{D}_{q}^{\text{{low}}}(t,c)$, where
$\mathbf{B}_{q}\left({\mathbf{x}},\,{\mathbf{\lambda}}_{q},\pi_{t}\right)$
and $\mathbf{D}_{q}(t,c)$ are defined in (\ref{eq:B_q_def}) and
(\ref{eq:def_E_and_D_matrices}), respectively. Skipping tedious intermediate
derivations, we obtain the following 
\begin{equation}
\mathbf{B}_{q}^{\text{{low}}}\triangleq\left[\begin{array}{ll}
d_{\wh{\tau}_{q}}^{\min} & -\varsigma_{q}^{\text{{up}}}(t)\\
-\varsigma_{q}^{\text{{up}}}(t) & \lambda_{\text{{least}}}\left(\left[\overline{\nabla_{\mathbf{x}_{q}}^{2}\mathcal{L}_{q}}\right]_{2:N+2}\right)
\end{array}\right],\label{eq:B_q_inf(t)}
\end{equation}
where $d_{\wh{\tau}_{q}}^{\min}$ is defined in (\ref{eq:lamnda_lower_tau_q_and_pfa}),
$\left[\overline{\nabla_{\mathbf{x}_{q}}^{2}\mathcal{L}_{q}}\right]_{2:N+2}$
denotes the $(N+1)$-dimensional lower right block of the matrix $\overline{\nabla_{\mathbf{x}_{q}}^{2}\mathcal{L}_{q}}$
defined in (\ref{eq:def_L_2_inf}), and $\varsigma_{q}^{\text{{up}}}(t)$
is given by 
\begin{equation}
\varsigma_{q}^{\text{{up}}}(t)\triangleq2\,\max\left\{ {t,\,\lambda^{\max}}\right\} \cdot{\displaystyle {\displaystyle {\max_{k=1,\ldots,N}}\left\{ \dfrac{{|{G}_{P,q}(k)|^{2}}}{I^{\,\max}}\right\} }\cdot\left\Vert \begin{array}{l}
\mbox{vect}\left(\boldsymbol{{\omega}}_{\wh{\tau}_{q}}^{\max}\right)\\
\mathbf{1}^{T}\mbox{vect}\left(\boldsymbol{{\omega}}_{\wh{\tau}_{q}P_{q}^{\text{{fa}}}}^{\max}\odot\mathbf{p}_{q}^{\max}\right)
\end{array}\right\Vert ,}\label{eq:zita_up}
\end{equation}
with $\boldsymbol{{\omega}}_{\wh{\tau}_{q}}^{\max}$ and $\boldsymbol{{\omega}}_{\wh{\tau}_{q}P_{q}^{\text{{fa}}}}^{\max}$
defined in (\ref{eq:der_Pmiss_wrt_tau}) and (\ref{eq:sec_der_Pmiss_wrt_tau_pfa}),
respectively. Note that, since the following bounds hold between the
entries of $\mathbf{B}_{q}\left({\mathbf{x}},\,{\mathbf{\lambda}}_{q},\pi_{t}\right)$
and $\mathbf{B}_{q}^{\text{{low}}}$: 
\begin{align*}
\left.\nabla_{\wh{\tau}_{q}^{2}}^{2}\mathcal{L}_{q}\left(\mathbf{x},\,\wh{\boldsymbol{{\pi}}}_{t},\,\boldsymbol{{\lambda}}_{q}\right)\right|_{c=0} & \geq d_{\wh{\tau}_{q}}^{\min},\\
\left\Vert \nabla_{\wh{\tau}_{q}\,(\mathbf{x}_{q},P_{q}^{\text{{fa}}})}^{2}\mathcal{L}_{q}\left(\mathbf{x},\boldsymbol{{\lambda}}_{q},\pi_{t}\right)\right\Vert  & \leq\varsigma_{q}^{\text{{up}}}(t),\\
\lambda_{\text{{least}}}\left(\nabla_{(\mathbf{x}_{q},P_{q}^{\text{{fa}}})}^{2}\mathcal{L}_{q}\left(\mathbf{x},\,\wh{\boldsymbol{{\pi}}}_{t},\,\boldsymbol{{\lambda}}_{q}\right)\right) & \geq\lambda_{\text{{least}}}\left(\left[\overline{\nabla_{\mathbf{x}_{q}}^{2}\mathcal{L}_{q}}\right]_{2:N+2}\right),
\end{align*}
matrix $\mathbf{B}_{q}^{\text{{low}}}$ satisfies the desired property
$\mathbf{B}_{q}\left({\mathbf{x}},\,{\mathbf{\lambda}}_{q},\pi_{t}\right)\geq\mathbf{B}_{q}^{\text{{low}}}$. 

Finally, using $\mathbf{B}_{q}^{\text{{low}}}$, we can introduce
a lower bound of the quantities $\rho_{q}(t)$ in (\ref{eq:minimum_eigenvalue_original_matrix})
\begin{equation}
\rho_{q}^{\text{{low}}}(t)\triangleq\left\{ \begin{array}{ll}
\overline{{\rho}}_{q}^{\text{{low}}}(t)\triangleq\lambda_{\text{{least}}}\left(\mathbf{B}_{q}^{\text{{low}}}\right),\quad & \mbox{if }\overline{{\rho}}_{q}^{\text{{low}}}(t)\geq0,\\
0, & \mbox{otherwise},
\end{array}\right.\label{eq:delta_inf_t}
\end{equation}
and define the matrix $\mathbf{D}_{q}^{\text{{low}}}(t,c)$ as 
\begin{equation}
\mathbf{D}_{q}^{\text{{low}}}(t,c)^{2}\triangleq\left[\begin{array}{cc}
\rho_{q}^{\text{{low}}}(t)+c\,\left(\dfrac{{1-1/Q}}{\sqrt{f_{q}}}\right)^{2}, & 0\\
0 & \rho_{q}^{\text{{low}}}(t)
\end{array}\right],\label{eq:D_low}
\end{equation}
which satisfies $\mathbf{D}_{q}(t,c)\geq\mathbf{D}_{q}^{\text{{low}}}(t,c)$.
Using the above matrices, the desired upper bound $\beta_{qr}^{\text{{up}}}(t,\, c)$
of the coefficients $\beta_{qr}(t,\, c)$ is 
\begin{align}
\beta_{qr}^{\text{{up}}}(t,\, c) & \triangleq\left\Vert \mathbf{D}_{q}^{\text{{low}}}(t,\, c)^{-1}\,\mathbf{E}_{qr}^{\text{{sup}}}\,\mathbf{D}_{r}^{\text{{low}}}(t,\, c)^{-1}\right\Vert \nonumber \\
 & =\max\left\{ \dfrac{c\cdot{1/(Q-1)}}{\sqrt{{\dfrac{\rho_{q}^{\text{{low}}}(t)\, f_{q}}{(1-1/Q)^{2}}}+c}\,\sqrt{{\dfrac{\rho_{r}^{\text{{low}}}(t)\, f_{r}}{(1-1/Q)^{2}}}+c}},\,{\displaystyle {\max_{k=1,\ldots,N}}}\left\{ {\displaystyle {\frac{|\wh{H}_{qr}(k)|^{2}}{\wh{\sigma}_{q,k}^{4}}}}\right\} \,\dfrac{\xi_{q}^{\sup}}{\sqrt{\rho_{q}^{\text{{low}}}(t)}\sqrt{\rho_{r}^{\text{{low}}}(t)}}\right\} \geq\beta_{qr}(t,\, c),\label{eq:beta_qr_up}
\end{align}
with $\xi_{q}^{\sup}$ and $\mathbf{E}_{qr}^{\sup}$ defined in (\ref{eq:upper_spectrum_nabla_f_q_p_q_p_r})
and (\ref{eq:def_E_sup}), respectively. Using these quantities it
is not difficult to see that the matrix $\boldsymbol{{\Gamma}}^{\text{{low}}}(t)$
defined as\textcolor{black}{{} 
\begin{equation}
\left[\boldsymbol{{\Gamma}}^{\text{{low}}}(t)\right]_{q,r}\triangleq\left\{ \begin{array}{ll}
1, & \mbox{if }r=q,\\
-\beta_{qr}^{\text{{up}}}(t,\, c),\quad & \mbox{otherwise, }
\end{array}\right.\label{eq:Gamma_t_inf}
\end{equation}
satisfies the desired property }$\boldsymbol{{\Gamma}}(t)\geq\boldsymbol{{\Gamma}}^{\text{{low}}}(t)$
for any $t\geq0$.

Since $\boldsymbol{{\Gamma}}^{\text{{low}}}(t)$ is a P matrix if
and only if $\rho\left(\mathbf{I}-\boldsymbol{{\Gamma}}^{\text{{low}}}(t)\right)<1$
\cite[Lemma 5.2.14]{Cottle-Pang-Stone_bookLCP92}, imposing that $\mathbf{I}-\boldsymbol{{\Gamma}}^{\text{{low}}}(t)$
is row or column diagonal dominat, leads to the desired sufficient
conditions guaranteeing convergence of asynchronous algorithms based
on the best-response $\overline{{\mathcal{B}}}$. This is made formal
in the corollary below.

\begin{corollary}\textcolor{red}{\small \label{Corollary_SF_Cond_convergence_algo_zero_pricing}}\textcolor{red}{{}
}Statements in Theorem \ref{Theo-async_best-response_NEP} (or Theorem
\ref{Theo_Contraction_best-response}) hold true \textcolor{black}{if
one of the two following conditions is satisfied:}

\noindent\textcolor{black}{-}\textcolor{black}{\emph{ Low received
MUI}}\textcolor{black}{: for all $q=1,\ldots,Q$,
\begin{equation}
\begin{array}{l}
{\displaystyle {\sum_{r\neq q}}\,}\beta_{qr}^{\text{{up}}}(t,\, c)<1,\end{array}\label{eq:sf_cond_uniq_NE_no_outer_prices_row_dominance}
\end{equation}
}

\noindent\textcolor{black}{-}\textcolor{black}{\emph{ Low transmitted
MUI}}\textcolor{black}{: for all $r=1,\ldots,Q$,}
\begin{equation}
{\displaystyle {\sum_{q\neq r}}\,}\beta_{qr}^{\text{{up}}}(t,\, c)<1.\label{eq:sf_cond_uniq_NE_no_outer_prices_col_dominance}
\end{equation}
\end{corollary}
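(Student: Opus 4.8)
The plan is to reduce the P-matrix requirement of Theorem \ref{Theo-async_best-response_NEP} on the ``true'' matrix $\boldsymbol{{\Gamma}}(t)$ to an easily verifiable diagonal-dominance test on the comparison matrix $\boldsymbol{{\Gamma}}^{\text{{low}}}(t)$ defined in (\ref{eq:Gamma_t_inf}). The first fact I would invoke is the monotonicity already established above: both $\boldsymbol{{\Gamma}}(t)$ and $\boldsymbol{{\Gamma}}^{\text{{low}}}(t)$ are Z-matrices (unit diagonal, nonpositive off-diagonal entries $-\beta_{qr}(t,c)$ and $-\beta_{qr}^{\text{{up}}}(t,c)$, respectively), and $\boldsymbol{{\Gamma}}(t)\geq\boldsymbol{{\Gamma}}^{\text{{low}}}(t)$ entrywise because $\beta_{qr}^{\text{{up}}}(t,c)\geq\beta_{qr}(t,c)$ by (\ref{eq:beta_qr_up}). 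By the standard comparison result for Z-matrices \cite{Cottle-Pang-Stone_bookLCP92}, if $\boldsymbol{{\Gamma}}^{\text{{low}}}(t)$ is a P-matrix then so is $\boldsymbol{{\Gamma}}(t)$; hence it suffices to show that either of the two stated conditions forces $\boldsymbol{{\Gamma}}^{\text{{low}}}(t)$ to be a P-matrix.

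Next, since $\boldsymbol{{\Gamma}}^{\text{{low}}}(t)$ is a Z-matrix, its P-matrix property is equivalent, by \cite[Lemma 5.2.14]{Cottle-Pang-Stone_bookLCP92}, to $\rho\bigl(\mathbf{I}-\boldsymbol{{\Gamma}}^{\text{{low}}}(t)\bigr)<1$. The matrix $\mathbf{E}^{\text{{low}}}(t)\triangleq\mathbf{I}-\boldsymbol{{\Gamma}}^{\text{{low}}}(t)$ is entrywise nonnegative with zero diagonal and off-diagonal entries $[\mathbf{E}^{\text{{low}}}(t)]_{qr}=\beta_{qr}^{\text{{up}}}(t,c)$ for $r\neq q$. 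The final step I would carry out is to control $\rho(\mathbf{E}^{\text{{low}}}(t))$ by a suitable induced matrix norm: using that the spectral radius of a nonnegative matrix is dominated by both its maximum row sum (the induced $\infty$-norm) and its maximum column sum (the induced $1$-norm) \cite{Horn-Johnson_book}, I obtain
\[
\rho\bigl(\mathbf{E}^{\text{{low}}}(t)\bigr)\leq\min\left\{\max_{q}\sum_{r\neq q}\beta_{qr}^{\text{{up}}}(t,c),\ \max_{r}\sum_{q\neq r}\beta_{qr}^{\text{{up}}}(t,c)\right\}.
\]
Condition (\ref{eq:sf_cond_uniq_NE_no_outer_prices_row_dominance}) makes the first term strictly below $1$ (row diagonal dominance of $\boldsymbol{{\Gamma}}^{\text{{low}}}(t)$), while condition (\ref{eq:sf_cond_uniq_NE_no_outer_prices_col_dominance}) makes the second term strictly below $1$ (column diagonal dominance); in either case $\rho(\mathbf{E}^{\text{{low}}}(t))<1$, so $\boldsymbol{{\Gamma}}^{\text{{low}}}(t)$, and hence $\boldsymbol{{\Gamma}}(t)$, is a P-matrix, and Theorem \ref{Theo-async_best-response_NEP} (and Theorem \ref{Theo_Contraction_best-response}) applies.

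The argument contains no genuinely hard step: once the comparison matrix $\boldsymbol{{\Gamma}}^{\text{{low}}}(t)$ and the bound (\ref{eq:beta_qr_up}) are in place, the proof is a short chain of classical facts about Z- and P-matrices. The only points requiring care are bookkeeping ones: checking that the off-diagonal entries of $\boldsymbol{{\Gamma}}^{\text{{low}}}(t)$ are indeed nonpositive so that the Z-matrix machinery is legitimately applicable, and matching the row-sum test to the ``low received MUI'' condition and the column-sum test to the ``low transmitted MUI'' condition. Intuitively, the row sum $\sum_{r\neq q}\beta_{qr}^{\text{{up}}}$ aggregates the interference that secondary receiver $q$ absorbs from all other transmitters, whereas the column sum $\sum_{q\neq r}\beta_{qr}^{\text{{up}}}$ aggregates the interference that transmitter $r$ injects into all other receivers, which is exactly the physical reading reflected in the statement's two labels.
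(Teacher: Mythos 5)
Your proposal is correct and follows essentially the same route as the paper: the paper likewise reduces the P-matrix requirement on $\boldsymbol{{\Gamma}}(t)$ to $\boldsymbol{{\Gamma}}^{\text{{low}}}(t)$ via the entrywise comparison $\boldsymbol{{\Gamma}}(t)\geq\boldsymbol{{\Gamma}}^{\text{{low}}}(t)$ for Z-matrices, invokes \cite[Lemma 5.2.14]{Cottle-Pang-Stone_bookLCP92} to equate the P property with $\rho\bigl(\mathbf{I}-\boldsymbol{{\Gamma}}^{\text{{low}}}(t)\bigr)<1$, and obtains the two stated conditions by bounding the spectral radius with the maximum row sum and maximum column sum, respectively. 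Your added bookkeeping (nonnegativity of the $\beta_{qr}^{\text{{up}}}(t,c)$ so the Z-matrix machinery applies, and the matching of row sums to received MUI and column sums to transmitted MUI) is consistent with the paper's own discussion.
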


The physical interpretation of the above conditions is similar to
that  given for the existence/uniqueness of the NE (cf. Section \ref{remark_cond}).
Roughly speaking, conditions (\ref{eq:sf_cond_uniq_NE_no_outer_prices_row_dominance})
or (\ref{eq:sf_cond_uniq_NE_no_outer_prices_col_dominance}) require
``low'' interference in the network, meaning ``small'' values
of the (normalized) cross-channels $|\wh{H}_{qr}(k)|^{2}/\wh{\sigma}_{q,k}^{4}$
as well as large values of coefficients $\rho_{q}^{\text{{low}}}(t)$,
which is met if, among all, the (normalized) cross-channels ${|G_{P,q}(k)|^{2}}/I^{\,{\rm \max}}$
are ``sufficiently small''. An illustrative example is obtained
in the two opposite cases where there is no optimization of the sensing
times (and thus $c=0$) or the sensing times are optimized by imposing
a common optimal sensing time by choosing a (sufficiently) large constant
$c$ (and there are many active SUs). For those two cases, conditions
(\ref{eq:sf_cond_uniq_NE_no_outer_prices_row_dominance}) and (\ref{eq:sf_cond_uniq_NE_no_outer_prices_col_dominance})
reduce respectively to $ $
\begin{equation}
\begin{array}{l}
{\displaystyle {\sum_{r\neq q}}\,}{\displaystyle {\max_{k=1,\ldots,N}}}\left\{ {\displaystyle {\frac{|\wh{H}_{qr}(k)|^{2}}{\wh{\sigma}_{q,k}^{4}}}}\right\} \,\gamma_{qr}<1,\quad\mbox{and}\quad{\displaystyle {\sum_{q\neq r}}\,}{\displaystyle {\max_{k=1,\ldots,N}}}\left\{ {\displaystyle {\frac{|\wh{H}_{qr}(k)|^{2}}{\wh{\sigma}_{q,k}^{4}}}}\right\} \,\gamma_{qr}<1,\end{array}\label{eq:simpler_SF}
\end{equation}
with 
\[
\gamma_{qr}\triangleq\dfrac{\xi_{q}^{\sup}}{\sqrt{\rho_{q}^{\text{{low}}}(t)}\sqrt{\rho_{r}^{\text{{low}}}(t)}}.
\]
Note that $\gamma_{qr}$'s, among all, depend on the cross-channels
${|G_{P,q}(k)|^{2}}/I^{\,{\rm \max}}$, and become ``small'' when
${|G_{P,q}(k)|^{2}}/I^{\,{\rm \max}}$ are small. Conditions (\ref{eq:simpler_SF})
are thus satisfies if there is not ``too much'' interference in
the system.

\section{Convergence of Best-Response Algorithms for $\mathcal{G}(\mathcal{X},\boldsymbol{{\theta}})$\label{app:Convergence-of-Best-Response}}

\subsection{Proof of Theorem \ref{ProxDecAlg_viaGT_conv_theo}\label{Proof_ProxDecAlg_viaGT_conv_theo}}

First of all note that, given $\boldsymbol{{\lambda}}^{0}\geq\mathbf{0}$
and $\pi_{t}^{0}\geq0$ and under the setting of Lemma \ref{Lemma_G_VI_equivalence},
the game $\mathcal{G}_{t}(\mathcal{X},\boldsymbol{{\theta}},\boldsymbol{{\lambda}}^{0},\pi_{t}^{0})$
has a unique NE, denoted by $\left(\mathbf{x}^{\star}(\boldsymbol{{\lambda}}^{0},\pi_{t}^{0}),\boldsymbol{{\lambda}}^{\star}(\boldsymbol{{\lambda}}^{0},\pi_{t}^{0}),\pi_{t}^{\star}(\boldsymbol{{\lambda}}^{0},\pi_{t}^{0})\right)$,
where we made explicit the dependence on the regularization tuple
$(\boldsymbol{{\lambda}}^{0},\pi_{t}^{0})$. This makes the sequence
$\left\{ \left(\mathbf{x}^{\star}(\boldsymbol{{\lambda}}^{n},\pi_{t}^{n}),\boldsymbol{{\lambda}}^{\star}(\boldsymbol{{\lambda}}^{n},\pi_{t}^{n}),\pi_{t}^{\star}(\boldsymbol{{\lambda}}^{n},\pi_{t}^{n})\right)\right\} _{n=0}^{\infty}$
generated by Algorithm \ref{alg:PDA_GT_interpretation} well defined.
The uniqueness of the NE of $\mathcal{G}_{t}(\mathcal{X},\boldsymbol{{\theta}},\boldsymbol{{\lambda}}^{0},\pi_{t}^{0})$
can be proved by exploring the connection between the game and a suitably
defined VI, as briefly outlined next. Under the positive definiteness
of matrix $\mathbf{A}\left(\mathbf{x},\boldsymbol{{\lambda},}\pi_{t}\right)$
(as required by Lemma \ref{Lemma_G_VI_equivalence}), $\mathcal{G}_{t}(\mathcal{X},\boldsymbol{{\theta}},\boldsymbol{{\lambda}}^{0},\pi_{t}^{0})$
is equivalent to the $\text{{VI}}(\mathcal{Z}_{t},\,\boldsymbol{{\Psi}}_{\boldsymbol{{\lambda}}^{0},\pi_{t}^{0}})$,
with $\mathcal{Z}_{t}$ given in (\ref{eq:KKT_Game_Gt}) and the VI
function $\boldsymbol{{\Psi}}_{\boldsymbol{{\lambda}}^{0},\pi_{t}^{0}}(\mathbf{x},{\boldsymbol{{\lambda}}},\pi_{t})$
defined as 
\begin{equation}
\boldsymbol{{\Psi}}_{\boldsymbol{{\lambda}}^{0},\pi_{t}^{0}}(\mathbf{x},{\boldsymbol{{\lambda}}},\pi_{t})=\boldsymbol{{\Psi}}(\mathbf{x},{\boldsymbol{{\lambda}}},\pi_{t})+\epsilon\cdot\left[\begin{array}{c}
\mathbf{0}_{Q(N+2)\times1}\\
\left(\left(\begin{array}{c}
\boldsymbol{{\lambda}}\\
\pi_{t}
\end{array}\right)-\left(\begin{array}{c}
\boldsymbol{{\lambda}}^{0}\\
\pi_{t}^{0}
\end{array}\right)\right)
\end{array}\right].\label{eq:Phi_epsilon_u}
\end{equation}
In other words, the $\text{{VI}}(\mathcal{Z}_{t},\,\boldsymbol{{\Psi}}_{\boldsymbol{{\lambda}}^{0},\pi_{t}^{0}})$
is obtained by the $\text{{VI}}(\mathcal{Z}_{t},\,\boldsymbol{{\Psi}})$
in (\ref{eq:KKT_Game_Gt}) introducing the proximal regularization
of some of the VI variables, namely the $\lambda$-variables and $\pi_{t}$-variable.
The Jacobian matrix of $\boldsymbol{{\Psi}}_{\boldsymbol{{\lambda}}^{0},\pi_{t}^{0}}(\mathbf{x},{\boldsymbol{{\lambda}}},\pi_{t})$
denoted by $\mbox{J}\boldsymbol{{\Psi}}_{\boldsymbol{{\lambda}}^{0},\pi_{t}^{0}}(\mathbf{x},{\boldsymbol{{\lambda}}},\pi_{t})$
is 
\begin{equation}
\mbox{J}\boldsymbol{{\Psi}}_{\boldsymbol{{\lambda}}^{0},\pi_{t}^{0}}(\mathbf{x},{\boldsymbol{{\lambda}}},\pi_{t})\triangleq\left[\begin{array}{c}
\begin{array}{ccc}
\mathbf{A}\left(\mathbf{x},\boldsymbol{{\lambda},}\pi_{t}\right)\, & \nabla_{\mathbf{x}}\mathbf{I}(\mathbf{x})\, & \nabla_{\mathbf{x}}I(\mathbf{x})\\
-\nabla_{\mathbf{x}}\mathbf{I}(\mathbf{x})^{T} & \epsilon\cdot\mathbf{I} & \mathbf{0}\\
-\nabla_{\mathbf{x}}I(\mathbf{x})^{T} & \mathbf{0} & \epsilon
\end{array}\end{array}\right],\label{eq:Jacobian_Psi_u_epsilon}
\end{equation}
where $\nabla_{\mathbf{x}}\mathbf{I}(\mathbf{x})\triangleq\nabla_{\mathbf{x}}[I_{1}(\mathbf{x}_{1}),\cdots,I_{Q}(\mathbf{x}_{Q})]$.
If $\mathbf{A}\left(\mathbf{x},\boldsymbol{{\lambda},}\pi_{t}\right)$
is uniformly positive definite, then so is $\mbox{J}\boldsymbol{{\Psi}}_{\boldsymbol{{\lambda}}^{0},\pi_{t}^{0}}(\mathbf{x},$
${\boldsymbol{{\lambda}}},\pi_{t})$. It turns out that, under the
setting of Lemma \ref{Lemma_G_VI_equivalence}, the regularized VI$(\mathcal{Z}_{t},\,\boldsymbol{{\Psi}}_{\boldsymbol{{\lambda}}^{0},\pi_{t}^{0}})$
is strongly monotone \cite[Prop. 2.3.2(c)]{Facchinei-Pang_FVI03}
and thus has a unique solution \cite[Th. 2.3.3]{Facchinei-Pang_FVI03},
implying the uniqueness of the NE $\left(\mathbf{x}^{\star}(\boldsymbol{{\lambda}}^{0},\pi_{t}^{0}),\boldsymbol{{\lambda}}^{\star}(\boldsymbol{{\lambda}}^{0},\pi_{t}^{0}),\pi_{t}^{\star}(\boldsymbol{{\lambda}}^{0},\pi_{t}^{0})\right)$
of $\mathcal{G}_{t}(\mathcal{X},\boldsymbol{{\theta}},\boldsymbol{{\lambda}}^{0},\pi_{t}^{0})$.

Once we have proved that $\left(\mathbf{x}^{\star}(\boldsymbol{{\lambda}},\pi_{t}),\boldsymbol{{\lambda}}^{\star}(\boldsymbol{{\lambda}},\pi_{t}),\pi_{t}^{\star}(\boldsymbol{{\lambda}},\pi_{t})\right)$
is well defined for any given $\boldsymbol{{\lambda}}\geq\mathbf{0}$
and $\pi_{t}\geq0$, we can derive the main properties of such a tuple
{[}interpreting its components as functions of $(\boldsymbol{{\lambda}},\pi_{t})${]},
along with its connection with the NE of the game $\mathcal{G}_{t}(\mathcal{X},\,\boldsymbol{{\theta}})$
{[}and thus $\mathcal{G}(\mathcal{X},\,\boldsymbol{{\theta}})${]};
these properties will be instrumental to prove Theorem \ref{ProxDecAlg_viaGT_conv_theo}.

\begin{proposition} \label{Prop_partial_regularization}Given $t>0$,
suppose that $\mathbf{A}\left(\mathbf{x},\boldsymbol{{\lambda}},\pi_{t}\right)$
in (\ref{eq:def_A_matrix}) is uniformly positive definite for all
$(\mathbf{x},\boldsymbol{{\lambda}},\pi_{t})\in{\mathcal{Y}}\times[0,\lambda^{\max}]^{Q}\times\mathcal{S}_{t}$,
and let $\epsilon>0$ be given. Then the following hold: \vspace{-0.2cm}
\begin{description}
\item [{(a)}] The mapping associated with the $\lambda$-components and
$\pi$-component of $\left(\mathbf{x}^{\star}(\boldsymbol{{\lambda}},\pi_{t}),\boldsymbol{{\lambda}}^{\star}(\boldsymbol{{\lambda}},\pi_{t}),\pi_{t}^{\star}(\boldsymbol{{\lambda}},\pi_{t})\right)$,
i.e., 
\begin{equation}
\left(\begin{array}{c}
\boldsymbol{{\lambda}}^{\star}(\cdot)\\
\pi_{t}^{\star}(\cdot)
\end{array}\right):[0,\lambda^{\max}]^{Q}\times\mathcal{S}_{t}\ni\left(\boldsymbol{{\lambda}},\pi_{t}\right)\mapsto\left(\begin{array}{c}
\boldsymbol{{\lambda}}^{\star}(\boldsymbol{{\lambda},}\pi_{t})\\
\pi_{t}^{\star}(\boldsymbol{{\lambda},}\pi_{t})
\end{array}\right)\label{eq:lambda_pi_mapping}
\end{equation}
 has a fixed point, and it is nonexpansive on $[0,\lambda^{\max}]^{Q}\times\mathcal{S}_{t}$;
\vspace{-0.2cm}
\item [{(b)}] The mapping associated with the $x$-components of $\left(\mathbf{x}^{\star}(\boldsymbol{{\lambda}},\pi_{t}),\boldsymbol{{\lambda}}^{\star}(\boldsymbol{{\lambda}},\pi_{t}),\pi_{t}^{\star}(\boldsymbol{{\lambda}},\pi_{t})\right)$,
i.e., 
\begin{equation}
\mathbf{x}^{\star}(\cdot):[0,\lambda^{\max}]^{Q}\times\mathcal{S}_{t}\ni\left(\boldsymbol{{\lambda}},\pi_{t}\right)\mapsto\mathbf{x}^{\star}(\boldsymbol{{\lambda},}\pi_{t})\label{eq:x_star_mapping}
\end{equation}
 is Lipschitz continuous on $[0,\lambda^{\max}]^{Q}\times\mathcal{S}_{t}$,
i.e., there exists a constant $0<\nu<+\infty$ such that 
\begin{equation}
\|\mathbf{x}^{\star}(\boldsymbol{{\lambda}}^{(1)},\pi_{t}^{(1)})-\mathbf{x}^{\star}(\boldsymbol{{\lambda}}^{(2)},\pi_{t}^{(2)})\|_{2}\,\leq\,\nu\,\|(\boldsymbol{{\lambda}}^{(1)},\pi_{t}^{(1)})-(\boldsymbol{{\lambda}}^{(2)},\pi_{t}^{(2)})\|_{2},\label{eq:Lip_cont}
\end{equation}
for all $(\boldsymbol{{\lambda}}^{(1)},\pi_{t}^{(1)}),\,(\boldsymbol{{\lambda}}^{(2)},\pi_{t}^{(2)})\in[0,\lambda^{\max}]^{Q}\times\mathcal{S}_{t}$;\vspace{-0.2cm}
\item [{(c)}] For any fixed-point $(\overline{\boldsymbol{{\lambda}}},\overline{\pi}_{t})\in[0,\lambda^{\max}]^{Q}\times\mathcal{S}_{t}$
of $\left(\boldsymbol{{\lambda}}^{\star}(\cdot),\pi_{t}^{\star}(\cdot)\right)$,
the tuple $(\mathbf{x}^{\star}(\overline{\boldsymbol{{\lambda}}},\overline{\pi}_{t}),\,\overline{\boldsymbol{{\lambda}}},\overline{\pi}_{t})$
is a solution of the $\text{{VI}}(\mathcal{Z}_{t},\,\boldsymbol{{\Psi}})$;
therefore, it is a NE of $\mathcal{G}_{t}(\mathcal{X},\,\boldsymbol{{\theta}})$.
\end{description}
\end{proposition}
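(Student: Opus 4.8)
The plan is to read $\bigl(\boldsymbol{{\lambda}}^{\star}(\cdot),\pi_{t}^{\star}(\cdot)\bigr)$ and $\mathbf{x}^{\star}(\cdot)$ as the solution maps of the parametrized strongly monotone $\text{{VI}}(\mathcal{Z}_{t},\boldsymbol{{\Psi}}_{\boldsymbol{{\lambda}}^{0},\pi_{t}^{0}})$ whose well-posedness is already established above, and to extract all three claims from a single sensitivity estimate obtained by comparing the solutions for two regularization centers. Write $\mathbf{z}\triangleq(\mathbf{x},\boldsymbol{{\lambda}},\pi_{t})$, $\mathbf{u}\triangleq(\boldsymbol{{\lambda}},\pi_{t})$, and let $\mathbf{z}^{\star(i)}=(\mathbf{x}^{\star(i)},\mathbf{u}^{\star(i)})$ be the unique solution of $\text{{VI}}(\mathcal{Z}_{t},\boldsymbol{{\Psi}}_{\mathbf{u}^{0(i)}})$ for centers $\mathbf{u}^{0(1)},\mathbf{u}^{0(2)}\in[0,\lambda^{\max}]^{Q}\times\mathcal{S}_{t}$, $i=1,2$. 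First I would write the defining inequality of each VI, test the first at $\mathbf{z}^{\star(2)}$ and the second at $\mathbf{z}^{\star(1)}$, and add them; using $\boldsymbol{{\Psi}}_{\mathbf{u}^{0}}(\mathbf{z})=\boldsymbol{{\Psi}}(\mathbf{z})+\epsilon(\mathbf{0},\mathbf{u}-\mathbf{u}^{0})$ from (\ref{eq:Phi_epsilon_u}) this gives
\[
\bigl(\mathbf{z}^{\star(1)}-\mathbf{z}^{\star(2)}\bigr)^{T}\bigl[\boldsymbol{{\Psi}}(\mathbf{z}^{\star(1)})-\boldsymbol{{\Psi}}(\mathbf{z}^{\star(2)})\bigr]+\epsilon\,\bigl(\mathbf{u}^{\star(1)}-\mathbf{u}^{\star(2)}\bigr)^{T}\bigl[(\mathbf{u}^{\star(1)}-\mathbf{u}^{\star(2)})-(\mathbf{u}^{0(1)}-\mathbf{u}^{0(2)})\bigr]\leq0.
\]
The crucial structural observation is that the symmetric part of the Jacobian $\mbox{J}\boldsymbol{{\Psi}}$, obtained from (\ref{eq:Jacobian_Psi_u_epsilon}) by deleting the $\epsilon$ entries in its lower-right block, is block-diagonal, equal to $\tfrac12(\mathbf{A}+\mathbf{A}^{T})$ in the $\mathbf{x}$-block and zero in the $(\boldsymbol{{\lambda}},\pi_{t})$-block, because the $\mathbf{x}$--$\mathbf{u}$ coupling $\pm\nabla_{\mathbf{x}}\mathbf{I},\pm\nabla_{\mathbf{x}}I$ enters skew-symmetrically. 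Integrating the quadratic form of $\mbox{J}\boldsymbol{{\Psi}}$ along the segment joining $\mathbf{z}^{\star(2)}$ to $\mathbf{z}^{\star(1)}$ (which lies in the convex domain $\mathcal{Y}\times[0,\lambda^{\max}]^{Q}\times\mathcal{S}_{t}$) and invoking the uniform positive definiteness of $\mathbf{A}$ there gives the partial strong-monotonicity bound $(\mathbf{z}^{\star(1)}-\mathbf{z}^{\star(2)})^{T}[\boldsymbol{{\Psi}}(\mathbf{z}^{\star(1)})-\boldsymbol{{\Psi}}(\mathbf{z}^{\star(2)})]\geq c_{\mathbf{A}}\,\|\mathbf{x}^{\star(1)}-\mathbf{x}^{\star(2)}\|^{2}$ for some $c_{\mathbf{A}}>0$. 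Substituting yields the master estimate
\[
c_{\mathbf{A}}\,\|\mathbf{x}^{\star(1)}-\mathbf{x}^{\star(2)}\|^{2}+\epsilon\,\|\mathbf{u}^{\star(1)}-\mathbf{u}^{\star(2)}\|^{2}\leq\epsilon\,\bigl(\mathbf{u}^{\star(1)}-\mathbf{u}^{\star(2)}\bigr)^{T}\bigl(\mathbf{u}^{0(1)}-\mathbf{u}^{0(2)}\bigr).
\]

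From this single estimate the first two claims follow at once. For (a), discarding the nonnegative $\mathbf{x}$-term and applying Cauchy--Schwarz to the right-hand side gives $\|\mathbf{u}^{\star(1)}-\mathbf{u}^{\star(2)}\|\leq\|\mathbf{u}^{0(1)}-\mathbf{u}^{0(2)}\|$, i.e. the $(\boldsymbol{{\lambda}}^{\star},\pi_{t}^{\star})$-map in (\ref{eq:lambda_pi_mapping}) is nonexpansive, hence continuous, on $[0,\lambda^{\max}]^{Q}\times\mathcal{S}_{t}$. Since the $\boldsymbol{{\lambda}}$- and $\pi_{t}$-subproblems in (\ref{eq:G_prox}) constrain their minimizers to $[0,\lambda^{\max}]^{Q}$ and $\mathcal{S}_{t}$, this map sends the nonempty compact convex set $[0,\lambda^{\max}]^{Q}\times\mathcal{S}_{t}$ into itself, so Brouwer's fixed-point theorem furnishes a fixed point. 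For (b), using the nonexpansiveness just proved to bound the right-hand side of the master estimate by $\epsilon\,\|\mathbf{u}^{0(1)}-\mathbf{u}^{0(2)}\|^{2}$ and dropping the $\mathbf{u}$-term on the left gives $\|\mathbf{x}^{\star(1)}-\mathbf{x}^{\star(2)}\|\leq\sqrt{\epsilon/c_{\mathbf{A}}}\,\|\mathbf{u}^{0(1)}-\mathbf{u}^{0(2)}\|$, which is exactly (\ref{eq:Lip_cont}) with $\nu=\sqrt{\epsilon/c_{\mathbf{A}}}$.

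For (c), the key remark is that the proximal perturbation vanishes precisely at a fixed point. If $(\overline{\boldsymbol{{\lambda}}},\overline{\pi}_{t})$ is a fixed point of $(\boldsymbol{{\lambda}}^{\star}(\cdot),\pi_{t}^{\star}(\cdot))$, then the solution $\mathbf{z}^{\star}=(\mathbf{x}^{\star}(\overline{\boldsymbol{{\lambda}}},\overline{\pi}_{t}),\overline{\boldsymbol{{\lambda}}},\overline{\pi}_{t})$ of $\text{{VI}}(\mathcal{Z}_{t},\boldsymbol{{\Psi}}_{\overline{\boldsymbol{{\lambda}}},\overline{\pi}_{t}})$ has $\mathbf{u}^{\star}=(\overline{\boldsymbol{{\lambda}}},\overline{\pi}_{t})$, so $\boldsymbol{{\Psi}}_{\overline{\boldsymbol{{\lambda}}},\overline{\pi}_{t}}(\mathbf{z}^{\star})=\boldsymbol{{\Psi}}(\mathbf{z}^{\star})+\epsilon(\mathbf{0},\mathbf{u}^{\star}-(\overline{\boldsymbol{{\lambda}}},\overline{\pi}_{t}))=\boldsymbol{{\Psi}}(\mathbf{z}^{\star})$. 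Hence the defining inequality of the regularized VI at $\mathbf{z}^{\star}$ is literally that of $\text{{VI}}(\mathcal{Z}_{t},\boldsymbol{{\Psi}})$, so $\mathbf{z}^{\star}$ solves the latter, and Lemma \ref{Lemma_G_VI_equivalence} certifies that $(\mathbf{x}^{\star}(\overline{\boldsymbol{{\lambda}}},\overline{\pi}_{t}),\overline{\pi}_{t})$ is a NE of $\mathcal{G}_{t}(\mathcal{X},\boldsymbol{{\theta}})$, equivalently of $\mathcal{G}(\mathcal{X},\boldsymbol{{\theta}})$.

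The main obstacle I anticipate is the rigorous derivation of the master estimate: isolating the skew-symmetric $\mathbf{x}$--$\mathbf{u}$ coupling in $\mbox{J}\boldsymbol{{\Psi}}$ to justify partial strong-monotonicity of $\boldsymbol{{\Psi}}$, and carefully tracking the sign of the regularization cross-term after summing the two VI inequalities so that the $-\epsilon\|\mathbf{u}^{\star(1)}-\mathbf{u}^{\star(2)}\|^{2}$ contribution lands on the correct side. Once that estimate is in hand, parts (a)--(c) are short consequences.
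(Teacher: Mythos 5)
Your proof is correct, and its engine coincides with the paper's: sum the two regularized VI inequalities written at a pair of centers, and exploit the fact that the $\mathbf{x}$--$(\boldsymbol{{\lambda}},\pi_{t})$ coupling in $\mbox{J}\boldsymbol{{\Psi}}$ enters skew-symmetrically, so that only the quadratic form of $\mathbf{A}$ survives; your ``master estimate'' is precisely the paper's inequality (\ref{eq:nonexpansive_price_map}), and your integral form of the mean-value argument is a legitimate (arguably cleaner) substitute for the paper's pointwise mean-value theorem along the segment joining the two solutions, which stays in $\mathcal{Y}\times[0,\lambda^{\max}]^{Q}\times\mathcal{S}_{t}$ by convexity in both treatments. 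You deviate in two localized ways. First, for the fixed point in (a) the paper does not invoke Brouwer: it takes a solution $(\overline{\mathbf{x}},\overline{\boldsymbol{{\lambda}}},\overline{\pi}_{t})$ of the un-regularized $\text{{VI}}(\mathcal{Z}_{t},\boldsymbol{{\Psi}})$, guaranteed by Lemma \ref{Lemma_G_VI_equivalence}, and observes that it also solves $\text{{VI}}(\mathcal{Z}_{t},\boldsymbol{{\Psi}}_{\overline{\boldsymbol{{\lambda}}},\overline{\pi}_{t}})$, hence by uniqueness is its own image; your route (nonexpansive, hence continuous, self-map of the compact convex set $[0,\lambda^{\max}]^{Q}\times\mathcal{S}_{t}$, then Brouwer) is equally valid and more self-contained, at the price of not exhibiting the fixed point as a solution of the original VI---a fact your part (c) recovers anyway. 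Second, for (b) the paper runs a separate mean-value computation, bounding the coupling block $\left[\nabla_{\mathbf{x}}\mathbf{I}(\mathbf{x}_{\eta}),\nabla_{\mathbf{x}}I(\mathbf{x}_{\eta})\right]$ via Cauchy--Schwarz together with the nonexpansiveness from (a), arriving at the $\epsilon$-independent constant in (\ref{eq:Lipt_x_part}); you instead reuse the master estimate once more and obtain $\nu=\sqrt{\epsilon/c_{\mathbf{A}}}$. Since the proposition asserts only the existence of some finite $\nu$ for the given fixed $\epsilon>0$, both are sufficient (the paper's constant has the advantage of being uniform in $\epsilon$, yours of requiring no second computation). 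Your argument for (c)---at a fixed point the proximal perturbation in (\ref{eq:Phi_epsilon_u}) vanishes, so the point literally solves $\text{{VI}}(\mathcal{Z}_{t},\boldsymbol{{\Psi}})$, and Lemma \ref{Lemma_G_VI_equivalence} converts it into a NE---is exactly the argument the paper leaves implicit behind ``(c) follows similarly.''
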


\begin{proof} We prove next only (a) and (b); (c) follows similarly. 

\noindent(a) Let $\left(\overline{\mathbf{x}},\overline{\boldsymbol{{\lambda}}},\overline{\pi}_{t}\right)\in\mathcal{Z}_{t}$
be a solution of the $\text{{VI}}(\mathcal{Z}_{t},\,\boldsymbol{{\Psi}})$
in (\ref{eq:KKT_Game_Gt}), whose existence is guaranteed by Lemma
\ref{Lemma_G_VI_equivalence}; recall that, by Lemma \ref{Lemma_bounded_multipliers},
it must be $\left(\overline{\mathbf{x}},\overline{\boldsymbol{{\lambda}}},\overline{\pi}_{t}\right)\in\mathcal{Y}\times[0,\lambda^{\max}]^{Q}\times\mathcal{S}_{t}$.
It follows that: i) $(\mathbf{x}^{\star}(\overline{\boldsymbol{{\lambda}}},\overline{\pi}_{t}),\,\boldsymbol{{\lambda}}^{\star}(\overline{\boldsymbol{{\lambda}}},\overline{\pi}_{t}),$
$\pi_{t}^{\star}(\overline{\boldsymbol{{\lambda}}},\overline{\pi}_{t}))$
is the \emph{unique} solution of the $\text{{VI}}(\mathcal{Z}_{t},\,\boldsymbol{{\Psi}}_{\overline{\boldsymbol{{\lambda}}},\overline{\pi}_{t}})$;
and ii) $\left(\overline{\mathbf{x}},\overline{\boldsymbol{{\lambda}}},\overline{\pi}_{t}\right)$
is also a solution of $\text{{VI}}(\mathcal{Z}_{t},\,\boldsymbol{{\Psi}}_{\overline{\boldsymbol{{\lambda}}},\overline{\pi}_{t}})$.
Hence, it must be $\mathbf{x}^{\star}(\overline{\boldsymbol{{\lambda}}},\overline{\pi}_{t})=\overline{\mathbf{x}}$,
$\boldsymbol{{\lambda}}^{\star}(\overline{\boldsymbol{{\lambda}}},\overline{\pi}_{t})=\overline{\boldsymbol{{\lambda}}}$,
and $\pi_{t}^{\star}(\overline{\boldsymbol{{\lambda}}},\overline{\pi}_{t})=$$\overline{\pi}_{t}$,
which implies the existence of a fixed-point of the mapping $\left(\boldsymbol{{\lambda}}^{\star}(\cdot),\pi_{t}^{\star}(\cdot)\right)$
in (\ref{eq:lambda_pi_mapping}); moreover, since $\left(\overline{\mathbf{x}},\overline{\boldsymbol{{\lambda}}},\overline{\pi}_{t}\right)\in\mathcal{Y}\times[0,\lambda^{\max}]^{Q}\times\mathcal{S}_{t}$,
such a fixed point is in $[0,\lambda^{\max}]^{Q}\times\mathcal{S}_{t}$. 

We prove now that $\left(\boldsymbol{{\lambda}}^{\star}(\cdot),\pi_{t}^{\star}(\cdot)\right)$
is nonexpansive on $[0,\lambda^{\max}]^{Q}\times\mathcal{S}_{t}$.
Given two distinct tuples $(\boldsymbol{{\lambda}}^{(1)},\pi_{t}^{(1)}),\,(\boldsymbol{{\lambda}}^{(2)},\pi_{t}^{(2)})\in[0,\lambda^{\max}]^{Q}\times\mathcal{S}_{t}$,
by definition, the tuples $(\mathbf{x}^{\star}(\boldsymbol{{\lambda}}^{(i)},\pi_{t}^{(i)}),\boldsymbol{{\lambda}}^{\star}(\boldsymbol{{\lambda}}^{(i)},\pi_{t}^{(i)}),\pi_{t}^{\star}(\boldsymbol{{\lambda}}^{(i)},\pi_{t}^{(i)}))$,
with $i=1,2$, satisfy the following: 
\begin{equation}
\left(\begin{array}{c}
\mathbf{x}-\mathbf{x}^{\star}(\boldsymbol{{\lambda}}^{(i)},\pi_{t}^{(i)})\\
\boldsymbol{{\lambda}}-\boldsymbol{{\lambda}}^{\star}(\boldsymbol{{\lambda}}^{(i)},\pi_{t}^{(i)})\\
\pi_{t}-\pi_{t}^{\star}(\boldsymbol{{\lambda}}^{(i)},\pi_{t}^{(i)})
\end{array}\right)^{T}\left[\begin{array}{c}
\left(\nabla_{\mathbf{x}_{q}}\mathcal{L}_{q}\left(\mathbf{x}^{\star}(\boldsymbol{{\lambda}}^{(i)},\pi_{t}^{(i)}),\,{\boldsymbol{{\lambda}}}_{q}^{\star}(\boldsymbol{{\lambda}}^{(i)},\pi_{t}^{(i)}),{\pi}_{t}^{\star}(\boldsymbol{{\lambda}}^{(i)},\pi_{t}^{(i)}),\right)\right)_{q=1}^{Q}\\
-\left(\begin{array}{c}
\left(I_{q}\left(\mathbf{x}_{q}^{\star}(\boldsymbol{{\lambda}}^{(i)},\pi_{t}^{(i)})\right)\right)_{q=1}^{Q}\\
I\left(\mathbf{x}^{\star}(\boldsymbol{{\lambda}}^{(i)},\pi_{t}^{(i)})\right)
\end{array}\right)+\epsilon\cdot\left(\begin{array}{c}
\boldsymbol{{\lambda}}^{\star}(\boldsymbol{{\lambda}}^{(i)},\pi_{t}^{(i)})-\boldsymbol{{\lambda}}^{(i)}\\
\pi_{t}^{\star}(\boldsymbol{{\lambda}}^{(i)},\pi_{t}^{(i)})-\pi_{t}^{(i)}
\end{array}\right)
\end{array}\right]\geq0,\label{eq:VI_two_points_prices}
\end{equation}
for all $(\mathbf{x},\,\boldsymbol{{\lambda}},\pi_{t})\in{\mathcal{Y}}^{t}\times[0,\lambda^{\max}]^{Q}\times\mathcal{S}_{t}$
and $i=1,2$. Thus, similar to the proof of Theorem \ref{Theo_Contraction_best-response},
we deduce

\begin{equation}
\begin{array}{l}
\!\!\!\!\!\!\!\!\!\!\left(\begin{array}{c}
\mathbf{x}^{\star}(\boldsymbol{{\lambda}}^{(2)},\pi_{t}^{(2)})-\mathbf{x}^{\star}(\boldsymbol{{\lambda}}^{(1)},\pi_{t}^{(1)})\\
\boldsymbol{{\lambda}}^{\star}(\boldsymbol{{\lambda}}^{(2)},\pi_{t}^{(2)})-\boldsymbol{{\lambda}}^{\star}(\boldsymbol{{\lambda}}^{(1)},\pi_{t}^{(1)})\\
\pi_{t}^{\star}(\boldsymbol{{\lambda}}^{(2)},\pi_{t}^{(2)})-\pi_{t}^{\star}(\boldsymbol{{\lambda}}^{(1)},\pi_{t}^{(1)})
\end{array}\right)^{T}\times\medskip\\
\qquad\qquad\qquad\qquad\qquad\left[\left(\begin{array}{c}
\left(\nabla_{\mathbf{x}_{q}}\mathcal{L}_{q}\left(\mathbf{x}^{\star}(\boldsymbol{{\lambda}}^{(1)},\pi_{t}^{(1)}),\,\boldsymbol{{\lambda}}_{q}^{\star}(\boldsymbol{{\lambda}}^{(1)},\pi_{t}^{(1)}),\,\pi_{t}^{\star}(\boldsymbol{{\lambda}}^{(1)},\pi_{t}^{(1)})\right)\right)_{q=1}^{Q}\\
\left(\begin{array}{c}
\left(I_{q}\left(\mathbf{x}_{q}^{\star}(\boldsymbol{{\lambda}}^{(2)},\pi_{t}^{(2)})\right)\right)_{q=1}^{Q}\\
I\left(\mathbf{x}^{\star}(\boldsymbol{{\lambda}}^{(2)},\pi_{t}^{(2)})\right)
\end{array}\right)+\epsilon\left(\begin{array}{c}
\boldsymbol{{\lambda}}^{\star}(\boldsymbol{{\lambda}}^{(1)},\pi_{t}^{(1)})-\boldsymbol{{\lambda}}^{(1)}\\
\pi_{t}^{\star}(\boldsymbol{{\lambda}}^{(1)},\pi_{t}^{(1)})-\pi_{t}^{(1)}
\end{array}\right)
\end{array}\right)+\right.\\
\left.\qquad\qquad\qquad\qquad\qquad-\left(\begin{array}{c}
\left(\nabla_{\mathbf{x}_{q}}\mathcal{L}_{q}\left(\mathbf{x}^{\star}(\boldsymbol{{\lambda}}^{(2)},\pi_{t}^{(2)}),\,\boldsymbol{{\lambda}}_{q}^{\star}(\boldsymbol{{\lambda}}^{(2)},\pi_{t}^{(2)}),\,\pi_{t}^{\star}(\boldsymbol{{\lambda}}^{(2)},\pi_{t}^{(2)})\right)\right)_{q=1}^{Q}\\
\left(\begin{array}{c}
\left(I_{q}\left(\mathbf{x}_{q}^{\star}(\boldsymbol{{\lambda}}^{(1)},\pi_{t}^{(1)})\right)\right)_{q=1}^{Q}\\
I\left(\mathbf{x}^{\star}(\boldsymbol{{\lambda}}^{(1)},\pi_{t}^{(1)})\right)
\end{array}\right)-\epsilon\left(\begin{array}{c}
\boldsymbol{{\lambda}}^{\star}(\boldsymbol{{\lambda}}^{(2)},\pi_{t}^{(2)})-\boldsymbol{{\lambda}}^{(2)}\\
\pi_{t}^{\star}(\boldsymbol{{\lambda}}^{(2)},\pi_{t}^{(2)})-\pi_{t}^{(2)}
\end{array}\right)
\end{array}\right)\right]\geq0.
\end{array}\label{eq:VI_difference_2}
\end{equation}
By the mean-value theorem, it follows that there exists a tuple $\left(\mathbf{x}_{\delta},\,{\boldsymbol{{\lambda}}}_{\delta},\,{\pi}_{\delta}\right)$
lying on the line segment joining $(\mathbf{x}^{\star}(\boldsymbol{{\lambda}}^{(1)},\pi_{t}^{(1)}),\boldsymbol{{\lambda}}^{\star}(\boldsymbol{{\lambda}}^{(1)},\pi_{t}^{(1)}),\pi_{t}^{\star}(\boldsymbol{{\lambda}}^{(1)},\pi_{t}^{(1)}))$
and $(\mathbf{x}^{\star}(\boldsymbol{{\lambda}}^{(2)},\pi_{t}^{(2)}),\boldsymbol{{\lambda}}^{\star}(\boldsymbol{{\lambda}}^{(2)},\pi_{t}^{(2)}),\pi_{t}^{\star}(\boldsymbol{{\lambda}}^{(2)},\pi_{t}^{(2)}))$
such that {[}see also (\ref{eq:Jacobian_Psi_u_epsilon}){]}
\begin{align}
\left(\mathbf{x}^{\star}(\boldsymbol{{\lambda}}^{(2)},\pi_{t}^{(2)})-\mathbf{x}^{\star}(\boldsymbol{{\lambda}}^{(1)},\pi_{t}^{(1)})\right)^{T}\mathbf{A}\left(\mathbf{x}_{\delta},\,{\boldsymbol{{\lambda}}}_{\delta},\,{\pi}_{\delta}\right)\left(\mathbf{x}^{\star}(\boldsymbol{{\lambda}}^{(2)},\pi_{t}^{(2)})-\mathbf{x}^{\star}(\boldsymbol{{\lambda}}^{(1)},\pi_{t}^{(1)})\right)\hspace{3cm}\quad\quad\quad\quad\medskip\nonumber \\
\leq\,\epsilon\cdot\left(\begin{array}{c}
\boldsymbol{{\lambda}}^{\star}(\boldsymbol{{\lambda}}^{(2)},\pi_{t}^{(2)})-\boldsymbol{{\lambda}}^{\star}(\boldsymbol{{\lambda}}^{(1)},\pi_{t}^{(1)})\\
\pi_{t}^{\star}(\boldsymbol{{\lambda}}^{(2)},\pi_{t}^{(2)})-\pi_{t}^{\star}(\boldsymbol{{\lambda}}^{(1)},\pi_{t}^{(1)})
\end{array}\right)^{T}\left(\begin{array}{c}
\boldsymbol{{\lambda}}^{(2)}-\boldsymbol{{\lambda}}^{(1)}\\
\pi_{t}^{(2)}-\pi_{t}^{(1)}
\end{array}\right)-\epsilon\cdot\left\Vert \left(\begin{array}{c}
\boldsymbol{{\lambda}}^{\star}(\boldsymbol{{\lambda}}^{(2)},\pi_{t}^{(2)})-\boldsymbol{{\lambda}}^{\star}(\boldsymbol{{\lambda}}^{(1)},\pi_{t}^{(1)})\\
\pi_{t}^{\star}(\boldsymbol{{\lambda}}^{(2)},\pi_{t}^{(2)})-\pi_{t}^{\star}(\boldsymbol{{\lambda}}^{(1)},\pi_{t}^{(1)})
\end{array}\right)\right\Vert _{2}^{2}.\label{eq:nonexpansive_price_map}
\end{align}
Applying the Cauchy\textendash{}Schwartz inequality and reorganizing
terms we obtain: 
\[
\begin{array}{l}
\left\Vert \left(\begin{array}{c}
\boldsymbol{{\lambda}}^{\star}(\boldsymbol{{\lambda}}^{(2)},\pi_{t}^{(2)})-\boldsymbol{{\lambda}}^{\star}(\boldsymbol{{\lambda}}^{(1)},\pi_{t}^{(1)})\\
\pi_{t}^{\star}(\boldsymbol{{\lambda}}^{(2)},\pi_{t}^{(2)})-\pi_{t}^{\star}(\boldsymbol{{\lambda}}^{(1)},\pi_{t}^{(1)})
\end{array}\right)\right\Vert _{2}^{2}\bigskip\\
\qquad\qquad\qquad\leq\left\Vert \left(\begin{array}{c}
\boldsymbol{{\lambda}}^{\star}(\boldsymbol{{\lambda}}^{(2)},\pi_{t}^{(2)})-\boldsymbol{{\lambda}}^{\star}(\boldsymbol{{\lambda}}^{(1)},\pi_{t}^{(1)})\\
\pi_{t}^{\star}(\boldsymbol{{\lambda}}^{(2)},\pi_{t}^{(2)})-\pi_{t}^{\star}(\boldsymbol{{\lambda}}^{(1)},\pi_{t}^{(1)})
\end{array}\right)\right\Vert _{2}\cdot\left\Vert \left(\begin{array}{c}
\boldsymbol{{\lambda}}^{(2)}-\boldsymbol{{\lambda}}^{(1)}\\
\pi_{t}^{(2)}-\pi_{t}^{(1)}
\end{array}\right)\right\Vert _{2}\medskip\\
\qquad\quad\qquad\qquad-\dfrac{{1}}{\epsilon}\left(\mathbf{x}^{\star}(\boldsymbol{{\lambda}}^{(2)},\pi_{t}^{(2)})-\mathbf{x}^{\star}(\boldsymbol{{\lambda}}^{(1)},\pi_{t}^{(1)})\right)^{T}\mathbf{A}\left(\mathbf{x}_{\delta},\,{\boldsymbol{{\lambda}}}_{\delta},\,{\pi}_{\delta}\right)\left(\mathbf{x}^{\star}(\boldsymbol{{\lambda}}^{(2)},\pi_{t}^{(2)})-\mathbf{x}^{\star}(\boldsymbol{{\lambda}}^{(1)},\pi_{t}^{(1)})\right)\bigskip\\
\qquad\qquad\qquad\leq\left\Vert \left(\begin{array}{c}
\boldsymbol{{\lambda}}^{\star}(\boldsymbol{{\lambda}}^{(2)},\pi_{t}^{(2)})-\boldsymbol{{\lambda}}^{\star}(\boldsymbol{{\lambda}}^{(1)},\pi_{t}^{(1)})\\
\pi_{t}^{\star}(\boldsymbol{{\lambda}}^{(2)},\pi_{t}^{(2)})-\pi_{t}^{\star}(\boldsymbol{{\lambda}}^{(1)},\pi_{t}^{(1)})
\end{array}\right)\right\Vert _{2}\cdot\left\Vert \left(\begin{array}{c}
\boldsymbol{{\lambda}}^{(2)}-\boldsymbol{{\lambda}}^{(1)}\\
\pi_{t}^{(2)}-\pi_{t}^{(1)}
\end{array}\right)\right\Vert _{2}
\end{array}
\]
where the last inequality follows from the positivity of the quadratic
form, due to the positive definiteness of $\mathbf{A}\left(\mathbf{x}_{\delta},\,{\boldsymbol{{\lambda}}}_{\delta},\,{\pi}_{\delta}\right)$;
which proves the desired nonexpansive property of the mapping $\left(\boldsymbol{{\lambda}}^{\star}(\cdot),\pi_{t}^{\star}(\cdot)\right)$:
\begin{equation}
\begin{array}{l}
\left\Vert \left(\begin{array}{c}
\boldsymbol{{\lambda}}^{\star}(\boldsymbol{{\lambda}}^{(2)},\pi_{t}^{(2)})-\boldsymbol{{\lambda}}^{\star}(\boldsymbol{{\lambda}}^{(1)},\pi_{t}^{(1)})\\
\pi_{t}^{\star}(\boldsymbol{{\lambda}}^{(2)},\pi_{t}^{(2)})-\pi_{t}^{\star}(\boldsymbol{{\lambda}}^{(1)},\pi_{t}^{(1)})
\end{array}\right)\right\Vert _{2}\leq\left\Vert \left(\begin{array}{c}
\boldsymbol{{\lambda}}^{(2)}-\boldsymbol{{\lambda}}^{(1)}\\
\pi_{t}^{(2)}-\pi_{t}^{(1)}
\end{array}\right)\right\Vert _{2}.\end{array}\label{eq:nonexpansive}
\end{equation}
\medskip{}

\noindent (b) Following similar steps as in (a) and using the Cartesian
structure of the set $\mathcal{Z}_{t}$ we deduce that, for any given
$(\boldsymbol{{\lambda}}^{(1)},\pi_{t}^{(1)}),\,(\boldsymbol{{\lambda}}^{(2)},\pi_{t}^{(2)})\in[0,\lambda^{\max}]^{Q}\times\mathcal{S}_{t}$,
there exists a tuple $\left(\mathbf{x}_{\eta},\,{\boldsymbol{{\lambda}}}_{\eta},\,{\pi}_{\eta}\right)$
lying on the segment joining $(\mathbf{x}^{\star}(\boldsymbol{{\lambda}}^{(1)},\pi_{t}^{(1)}),\boldsymbol{{\lambda}}^{\star}(\boldsymbol{{\lambda}}^{(1)},\pi_{t}^{(1)}),\pi_{t}^{\star}(\boldsymbol{{\lambda}}^{(1)},\pi_{t}^{(1)}))$
and $(\mathbf{x}^{\star}(\boldsymbol{{\lambda}}^{(2)},\pi_{t}^{(2)}),\boldsymbol{{\lambda}}^{\star}(\boldsymbol{{\lambda}}^{(2)},\pi_{t}^{(2)}),\pi_{t}^{\star}(\boldsymbol{{\lambda}}^{(2)},\pi_{t}^{(2)}))$
such that
\begin{equation}
\begin{array}{l}
\left(\mathbf{x}^{\star}(\boldsymbol{{\lambda}}^{(2)},\pi_{t}^{(2)})-\mathbf{x}^{\star}(\boldsymbol{{\lambda}}^{(1)},\pi_{t}^{(1)})\right)^{T}\mathbf{A}\left(\mathbf{x}_{\eta},\,{\boldsymbol{{\lambda}}}_{\eta},\,{\pi}_{\eta}\right)\left(\mathbf{x}^{\star}(\boldsymbol{{\lambda}}^{(2)},\pi_{t}^{(2)})-\mathbf{x}^{\star}(\boldsymbol{{\lambda}}^{(1)},\pi_{t}^{(1)})\right)\hspace{3cm}\quad\quad\quad\quad\medskip\\
\hspace{1cm}\leq\left(\mathbf{x}^{\star}(\boldsymbol{{\lambda}}^{(2)},\pi_{t}^{(2)})-\mathbf{x}^{\star}(\boldsymbol{{\lambda}}^{(1)},\pi_{t}^{(1)})\right)^{T}\,\left[\nabla_{\mathbf{x}}\mathbf{I}(\mathbf{x}_{\eta}),\nabla_{\mathbf{x}}I(\mathbf{x}_{\eta})\right]\,\left(\begin{array}{c}
\boldsymbol{{\lambda}}^{\star}(\boldsymbol{{\lambda}}^{(1)},\pi_{t}^{(1)})-\boldsymbol{{\lambda}}^{\star}(\boldsymbol{{\lambda}}^{(2)},\pi_{t}^{(2)})\\
\pi_{t}^{\star}(\boldsymbol{{\lambda}}^{(1)},\pi_{t}^{(1)})-\pi_{t}^{\star}(\boldsymbol{{\lambda}}^{(2)},\pi_{t}^{(2)})
\end{array}\right)\medskip\\
\hspace{1cm}\leq\left\Vert \mathbf{x}^{\star}(\boldsymbol{{\lambda}}^{(2)},\pi_{t}^{(2)})-\mathbf{x}^{\star}(\boldsymbol{{\lambda}}^{(1)},\pi_{t}^{(1)})\right\Vert _{2}\cdot\left\Vert \left[\nabla_{\mathbf{x}}\mathbf{I}(\mathbf{x}_{\eta}),\nabla_{\mathbf{x}}I(\mathbf{x}_{\eta})\right]\right\Vert \cdot\left\Vert \left(\begin{array}{c}
\boldsymbol{{\lambda}}^{(2)}-\boldsymbol{{\lambda}}^{(1)}\\
\pi_{t}^{(2)}-\pi_{t}^{(1)}
\end{array}\right)\right\Vert _{2},
\end{array}\label{eq:inequality_partial}
\end{equation}
where the last inequality follows from the Cauchy\textendash{}Schwartz
inequality and the nonexpansive property of $\left(\boldsymbol{{\lambda}}^{\star}(\cdot),\pi_{t}^{\star}(\cdot)\right)$
{[}cf. (\ref{eq:nonexpansive}){]}, and $\nabla_{\mathbf{x}}\mathbf{I}(\mathbf{x})\triangleq\nabla_{\mathbf{x}}[I_{1}(\mathbf{x}_{1}),\cdots,I_{Q}(\mathbf{x}_{Q})]$.
Invoking the uniform positive definiteness of $\mathbf{A}\left(\mathbf{x}_{\eta},\,{\boldsymbol{{\lambda}}}_{\eta},\,{\pi}_{\eta}\right)$
and the boundedness of the set $\mathcal{Y}$, we deduce from (\ref{eq:inequality_partial})
\begin{equation}
\left\Vert \mathbf{x}^{\star}(\boldsymbol{{\lambda}}^{(2)},\pi_{t}^{(2)})-\mathbf{x}^{\star}(\boldsymbol{{\lambda}}^{(1)},\pi_{t}^{(1)})\right\Vert \,\leq\dfrac{{\left\Vert \left[\nabla_{\mathbf{x}}\mathbf{I}(\mathbf{x}_{\eta}),\nabla_{\mathbf{x}}I(\mathbf{x}_{\eta})\right]\right\Vert }}{\lambda_{\text{{least}}}\left(\mathbf{A}\left(\mathbf{x}_{\eta},\,{\boldsymbol{{\lambda}}}_{\eta},\,{\pi}_{\eta}\right)\right)}\,\left\Vert \left(\begin{array}{c}
\boldsymbol{{\lambda}}^{(2)}-\boldsymbol{{\lambda}}^{(1)}\\
\pi_{t}^{(2)}-\pi_{t}^{(1)}
\end{array}\right)\right\Vert \leq\nu\cdot\left\Vert \left(\begin{array}{c}
\boldsymbol{{\lambda}}^{(2)}-\boldsymbol{{\lambda}}^{(1)}\\
\pi_{t}^{(2)}-\pi_{t}^{(1)}
\end{array}\right)\right\Vert \label{eq:Lipt_x_part}
\end{equation}
for some finite positive $\nu$, which proves the desired Lipschitz
continuity of $\mathbf{x}^{\star}(\cdot)$ on $[0,\lambda^{\max}]^{Q}\times\mathcal{S}_{t}$.\end{proof}\medskip

\noindent \textbf{Proof of Theorem \ref{ProxDecAlg_viaGT_conv_theo}.}
We are now ready to prove the theorem. The outer loop of Algorithm
\ref{ProxDecAlg_viaGT_conv_theo} {[}see (\ref{eq:price_multipliers_update_in_Algorithm_prox})
in Step 3{]} is an instance of the Jacobi Over Relaxation, JOR, method
\cite{Ortega-Rheinboldt_book} applied to the mapping $\left(\boldsymbol{{\lambda}}^{\star}(\cdot),\pi_{t}^{\star}(\cdot)\right)$;
which, using the notation introducing above, can be equivalently rewritten
as: 
\begin{equation}
\left(\begin{array}{c}
\boldsymbol{{\lambda}}^{(n+1)}\\
\pi_{t}^{(n+1)}
\end{array}\right)=(1-\epsilon)\cdot\left(\begin{array}{c}
\boldsymbol{{\lambda}}^{(n)}\\
\pi_{t}^{(n)}
\end{array}\right)+\epsilon\cdot\left(\begin{array}{c}
\boldsymbol{{\lambda}}^{\star}(\boldsymbol{{\lambda}}^{(n)},\pi_{t}^{(n)})\\
\pi_{t}^{\star}(\boldsymbol{{\lambda}}^{(n)},\pi_{t}^{(n)})
\end{array}\right).\label{eq:JOR}
\end{equation}
 Since $\left(\boldsymbol{{\lambda}}^{\star}(\cdot),\pi_{t}^{\star}(\cdot)\right)$
is nonexpansive on $[0,\lambda^{\max}]^{Q}\times\mathcal{S}_{t}$
{[}Proposition \ref{Prop_partial_regularization}(a){]}, the sequence
$\{(\boldsymbol{{\lambda}}^{(n)},\pi_{t}^{(n)})\}_{n=1}^{\infty}$
generated by the JOR scheme (\ref{eq:JOR}) converges to a fixed-point
$\left(\overline{\boldsymbol{{\lambda}}},\overline{\pi}_{t}\right)$
of $\left(\boldsymbol{{\lambda}}^{\star}(\cdot),\pi_{t}^{\star}(\cdot)\right)$
\cite[Th. 12.3.7]{Ortega-Rheinboldt_book}. By Proposition \ref{Prop_partial_regularization}(b)
{[}see (\ref{eq:Lip_cont}){]}, the convergence of $\{(\boldsymbol{{\lambda}}^{(n)},\pi_{t}^{(n)})\}_{n=1}^{\infty}$
implies also the convergence of the sequence $\{\mathbf{x}^{\star}(\boldsymbol{{\lambda}}^{(n)},\pi_{t}^{(n)})\}_{n=1}^{\infty}$
in the inner loop of Algorithm \ref{ProxDecAlg_viaGT_conv_theo} to
$\mathbf{x}^{\star}\left(\overline{\boldsymbol{{\lambda}}},\overline{\pi}_{t}\right)$;
the limit point $ $$\left(\mathbf{x}^{\star}\left(\overline{\boldsymbol{{\lambda}}},\overline{\pi}_{t}\right),\,\overline{\boldsymbol{{\lambda}}},\overline{\pi}_{t}\right)$
is the claimed NE of $\mathcal{G}_{t}(\mathcal{X},\,\boldsymbol{\theta})$
{[}Proposition \ref{Prop_partial_regularization}(c){]}, and thus
$\mathcal{G}(\mathcal{X},\,\boldsymbol{\theta})$, if $t>\lambda^{\max}$
(Theorem \ref{Theo_Existence-and-uniqueness_NE_G}). \hfill$\square$ 

\textcolor{black}{\small \bibliographystyle{IEEEtran}
\bibliography{scutari_refs}
}
\end{document}